\title{Computing change of level and isogenies between abelian varieties}
\author{Antoine Dequay}
\address{Univ Rennes, CNRS, IRMAR, UMR 6625,F-35000 Rennes, France}
\email{antoine.dequay@ens-rennes.fr}
\author{David Lubicz}
\address{Univ Rennes, CNRS, IRMAR, UMR 6625,F-35000 Rennes, France}
\address{DGA Maîtrise de l'information, BP 7419, F-35174 Bruz, France}
\email{david.lubicz@univ-rennes.fr}
\author{David Lubicz et Antoine Dequay}
\author{David Lubicz and Antoine Dequay}
\begin{document}
\begin{abstract}
Let $m,n,d >1$ be integers such that $n=md$. In this paper, we present an efficient change of level
algorithm that takes as input $(B, \bpol, \Thetabpol)$ a marked abelian variety of
level $m$ over the base field $k$ of odd characteristic and a basis of $B[n]$, and returns $(B, \bpol^d, \Thetabpold)$ a marked abelian variety of level
$n$ at the expense of $O(n^{2g} \log(d))$ operations in $k$. A similar algorithm allows us to compute $d$-isogenies: from $(B, \bpol,
\Thetabpol)$ a marked abelian variety of level $m$, $K\subset B[d]$ isotropic
for the commutator pairing isomorphic to $(\Z/d\Z)^g$ defined over $k$, the isogeny algorithm returns
$(A, \pol, \Thetapol)$ of level $m$ such that $A=B/K$ with $O(n^g \log(d))$
operations in $k$. Our algorithms extend previously
known results in the case that $d \wedge m=1$ and $d$ odd. In this paper, we lift
these restrictions.
We use the same general approach as in the literature in
conjunction with the notion of symmetric compatibility that we introduce, study and link
to previous results of Mumford. For practical computation, most of the time $m$ is $2$ or $4$ so that our
algorithms allow us in particular to compute $2^e$-isogenies, which are
important for the theory of theta functions but also for computational applications
such as isogeny-based cryptography.
\end{abstract}
\maketitle

\section{Introduction}\label{sec:intro}
This paper aims to expand the computational tools for abelian varieties represented in the coordinate system provided by theta functions. Let $(A,
\pol)$ be a $g$-dimensional abelian variety over the algebraically closed field $\overk$ of odd
characteristic together with an ample line bundle. We will also
suppose that $\pol$ is symmetric, which means that there exists an isomorphism
$(-1)^*(\pol) \rightarrow \pol$.
In this paper, we use the formalism of
algebraic theta functions developed by Mumford in a series of papers
\cite{MumfordOEDAV1,MumfordOEDAV2,MumfordOEDAV3}. To simplify the notations, we will
suppose that $\pol$ is a power of a principal line bundle.
For $n \in \N^*$, let $Z(n)=(\frac1n\Z/\Z)^g$ and denote by $\dZ(n)$ its dual group. Denote
by $\Kpol$ the finite (because $\pol$ is ample) kernel of the isogeny
$A \rightarrow \hatA$, $x \mapsto \tau^*_x \pol \otimes \pol^{-1}$, where $\tau_x$ is the
translation by $x$ map on $A$. We say that $\pol$ is of
level $n$ if $\Kpol$ is isomorphic to $Z(n) \times \dZ(n)$. 
Denote by $\Gpol$
the set of pairs $(\tau_x, \psi_{\tau_x})$ where $x \in \Kpol$ and $\psi_{\tau_x}: \pol \rightarrow
\tau^*_x(\pol)$ is an isomorphism. Together with the composition law $$(\tau_1,
\psi_{\tau_1}) \circ (\tau_2, \psi_{\tau_2})=(\tau_1 \circ \tau_2,
{\tau_2}^*(\psi_{\tau_1})\circ \psi_{\tau_2}),$$ it forms a group called the Theta group.
Let $\pigpol: \Gpol \rightarrow \Kpol$,
$(\tau_x, \psi_x) \mapsto x$, be the canonical projection.
The Theta group is not commutative, so that the commutator $\Gpol
\times \Gpol \rightarrow \overk^*$, $(g_1,g_2) \mapsto g_1 g_2 g_1^{-1} g_2^{-1}$ is not trivial. It
only depends on $(\pigpol(g_1), \pigpol(g_2))$, and thus endows $\Kpol$
with a perfect pairing that we denote by $e_\pol$. A symplectic structure is the data of a
symplectic basis of $\Kpol$ for $e_\pol$. If $K$ is a subgroup of $\Gpol$
isotropic for $e_\pol$, then one can lift it to a so-called
level subgroup $\tildeK \subset \Gpol$ such that
$\pigpol(\tildeK)=K$. A theta structure is the data of a decomposition of
$\Kpol$ into maximal rank $g$ isotropic for $e_\pol$ subgroups $K_1(\pol) \times K_2(\pol)$ and level
subgroups
$\tildeK_1(\pol), \tildeK_2(\pol)
\subset \Gpol$.
As explained by Mumford in \cite{MumfordOEDAV1}, a theta structure $\Thetabpol$ determines a canonical basis
$(\theta_i^\Thetapol)_{i \in \Zn}$ of sections of $\pol$ and thus a canonical projective embedding
$A \rightarrow \proj^{Z(n)}=\proj([k[x_i], i \in \Zn])$ (see \cite{hartshorne2000algebraic} for the definition of
the projective spectrum). We are interested in two kinds of algorithms which are 
closely related. Let $m,n,d>1$ be integers such that $n=md$. A change of level algorithm, 
or more precisely a $d$-change of level algorithm is
an algorithm that takes as input $x \in A(\overk) \in \proj^{\Gamma(A,\pol)}$, where
$\Gamma(A,\pol)$
is the vector space of global sections of $\pol$, and outputs $x \in A(\overk)$ in
$\proj^{\Gamma(A,\pol^d)}$. 
In theta coordinates, it means that the algorithm takes as input
$(\theta_i^\Thetapol(x))_{i \in Z(m)}$, and outputs
$(\theta_i^\Thetapold(x))_{i \in \Zn}$. An isogeny algorithm takes as input: an abelian variety 
with a theta structure of level $m$,
$(A,
\pol, \Thetapol)$; a subgroup $K \subset A(\overk)$ isomorphic to $Z(d)$ and isotropic for
the Weil pairing $e_\pol$, defining an isogeny $f: A \rightarrow B =A /K$; and a point $x\in
A(\overk)$. It outputs $f(x) \in B(\overk)$. As before, inputs and output are given in theta coordinates. The paper \cite{lubicz:hal-03738315} develops efficient algorithms 
to compute isogenies and change of level in the case that $d \wedge
m=1$ and $d$ odd. The aim of this paper is to lift all these restrictions to obtain 
completely general algorithms.

We explain what the specific hurdles are when dealing with the case $d$ even or $d
\wedge m \ne 1$. We assume $d=2$ to simplify the
presentation. Suppose that we have $(A, \pol,
\Thetapol)$ a marked abelian variety of level $m$ even, that we are given a basis $(e_i)_{i=1,
\ldots, 2g}$ of $B[n]$ and we would like to compute the theta null point of $(A, \pol^d, \Thetapold)$. The
symmetry of $\pol$ allows us to define over any $x \in \Kpol$ almost canonical lifts
$\tildex \in \Gpol$ such that $\pigpol(\tildex)=x$, that Mumford calls symmetric elements
\cite{MumfordOEDAV1}. Actually, above each $x \in \Kpol$, there are exactly two symmetric
lifts $\tildex
\in \Gpol$. Let $\Kpol= K_1(\pol) \times K_2(\pol)$ be a decomposition of $\Kpol$ into
maximal isotropic subgroups for $e_\pol$. We say that a theta structure is symmetric if
the lifts $\tildeK_1(\pol)$ and $\tildeK_2(\pol)$ of respectively $K_1(\pol)$ and
$K_2(\pol)$ are made of symmetric elements of $\Gpol$. On the other hand, there is a morphism of
theta groups
$\epsilon_d(\pol): \Gpol \rightarrow G(\pol^d)$, $(\tau_x, \psi_x) \mapsto (\tau_x, \psi_x^{\otimes
d})$ introduced in \cite{MumfordOEDAV1} which allows to define the relation between the theta
structures involved in a change of level algorithm. Precisely, in order to have a change of level
algorithm, we would like to define a theta structure $\Thetapold$ for $(A, \pol^d)$ which extends
$\Thetapol$. By this, we mean that if $\tildeK_i(\pol)$ (resp. $\tildeK_i(\pol^d)$) are the level subgroups
defining $\Thetapol$ (resp. $\Thetapold$), we have:
\begin{equation}\label{eq:ext}
\epsilon(\pol)(\tildeK_i(\pol)) \subset
\tildeK_i(\pol^d).
\end{equation}

We can now explain the distinctions between the case $d \wedge m=1$ and $d$ odd and the
general case.
In the case $d$ odd, it is proved in \cite{lubicz:hal-03738315} that there exists a unique symmetric theta
structure for $(A, \pol^d)$ extending $\Thetapol$. This unique theta structure is used to
define the change of level algorithm. But in the case $d$ even, there does not always exists
a theta structure extending $\Thetapol$, and if it exists, it is not unique.
This was already reported in \cite{MumfordOEDAV1}, where Mumford proves that a level $2m$ symplectic
structure for $(A, \pol^2)$ induces by the way of a morphism of theta groups $\eta_2(\pol^2): G(\pol^2)
\rightarrow \Gpol$ a unique level $m$ theta structure $\Thetaupol$ for $(A, \pol)$. 
Note that the definition of $\eta_2(\pol^2)$ is more subtle than that of $\epsilon_2(\pol)$, since it involves
the symmetry of $\pol$. The problem we have
here is that we would like $\Thetaupol$ to be equal to the theta structure $\Thetapol$ that
we get as input of the change of level algorithm, which is not always the case.
The underlying problem is that if $x \in \Kpol$, the two possible symmetric lifts
$\tildex_1, \tildex_2 \in \Gpol$ of $x$ are such that $2 \tildex_1 = 2 \tildex_2$, so
they define the same symmetric element of $2 G(\pol^2)$, which may not be in
$\epsilon_2(\pol)(\tildeK_i(\pol))$, where $\tildeK_i(\pol)$ are the level subgroups
defining $\Thetapol$.

A first objective of our work is to study the obstruction to have an extension of a
theta structure of level $m$ to a certain $2m$ symplectic structure and, when
such an extension is not possible, to provide
algorithms to either change the symplectic structure or the theta structure to make the
extension possible. For this, we introduce the notion that a torsion point $x \in A(\overk)$ is
symmetric compatible with a certain symmetric level subgroup $\tildeH$ of $\Gpol$. If $x \in
\Kpol$, the definition is very simple: we put $\ell = \min\{ \ell_0 \in \N^*, \ell_0 x
\in \pigpol(\tildeH)\}$ and we say that $x$ is symmetric compatible with
$\tildeH$ if either $\ell = \infty$ or if there
exists a symmetric lift $\tildex \in \Gpol$ above $x$ such that $\ell \tildex \in
\tildeH$. In general, $x \notin \Kpol$, and we have to pull-back $\pol$ by an
isogeny to come back to the case where $x \in \Kpol$ (see Definition
\ref{def:symcompat}). An important property that we prove (Proposition \ref{sec2:propgroup}) is that 
the symmetric compatibility property is additive:
if $x_1, x_2 \in \Kpol$
are such that $e_\pol(x_1, x_2)=1$, then if $\tildex_1$ and $\tildex_2$ are symmetric
compatible with $\tildeH$, then $\tildex_1 + \tildex_2$ is also symmetric
compatible with $\tildeH$.

The preceding notion of symmetric compatibility is not very effective, because the theta
groups
and level subgroups are not given as such in the input of the algorithmic
problem we are interested in. Instead, the theta structure is given by the way of the theta
null point $(\theta_i^{\Thetapol}(0))_{i \in Z(m)}$. There is
an action of $\Gpol$ on $\Gamma(A,\pol)$, the global sections of $\pol$, given by $(\tau_x, \psi_x)(s)=
\psi_x^{-1}(\tau^*_x(s))$ for $(\tau_x, \psi_x) \in \Gpol$ and $s \in \Gamma(A, \pol)$.
This action translates into an action on affine points. 
If $\pi_{\proj^{Z(m)}}: \Aff^{Z(m)}-\{0\}
\rightarrow \proj^{Z(m)}$ is the canonical projection
and $x
\in A(\overk) \subset \proj^{Z(m)}(\overk)$, an affine lift $\tildex$ of $x$ is just a point $\tildex \in
\pi_{\proj^{Z(m)}}^{-1}(x)$. The idea of representing level subgroups of a theta
group by their actions on affine points was introduced in \cite{DRmodular} to compute
modular correspondences
and we follow closely the strategy of this paper. This notion of affine points has
been revisited and generalized with the formalism of cubic torsors by Robert in \cite{cryptoeprint:2024/517}.

There is an arithmetic on affine points which goes well beyond the computation of the group law on abelian
varieties \cite{DRarithmetic}. It comes from Riemann equations on the first hand, from 
symmetry relations which allows to compute $\inv(\tildex)$ such that
$\pi_{\proj^{Z(m)}}(\inv(\tildex)) +  \pi_{\proj^{Z(m)}}(\tildex)=0_A$,
and the action of the level subgroups $\tildeK_1(\pol)$ and $\tildeK_2(\pol)$ defining $\Thetapol$.
As in \cite{DRisogenies}, if $K_1$ is a torsion subgroup of $A(\overk)$ containing $K_1(\pol)$ and
isomorphic to $Z(dm)$, we say that
$\tildeK_1$, an affine lift of $K_1$, is a good lift if it verifies all Riemann relations,
the action of $\Gpol$ and the inversion (meaning that if $\tildex \in \tildeK_1$,
then $\inv (\tildex) \in \tildeK_1$). The action of $\Gpol$ on $x \in
A(\overk)$ gives $x + A[m]$, in particular $\Kpol$ modulo the action of the theta group is
isomorphic to $Z(d)$. Here again there is a difference between the
case $d$ odd and $d$ even. If $d$ is odd, a good lift always exists: there
are actually even several of them classifying possible $d$-isogeneous abelian varieties to $A$
together with a theta structure compatible (in a certain way which will
be made precise later on) with $\Thetapol$. In the case $d$ even, there is an
obstruction to the computation of a good lift: it comes from the fact that, $\Kpol$ modulo the
action of $\Gpol$ being isomorphic to $Z(d)$, the action of $\inv$ on $\Kpol$
has fixed points modulo the action of $\Gpol$, which is the $2$-torsion of $Z(d)$.
It means that there are two ways to compute certain good lifts of points of $K$, and they have to agree if we
want to be able to compute a good lift of $K$. This allows us to introduce a new
definition of symmetric compatibility using affine lifts in Definition \ref{def:symcompat2}.
We show in Proposition \ref{prop:propclef} that the two definitions of symmetric compatibility
for affine points and for a level subgroup are in fact equivalent. This allows us to prove (see
Corollary \ref{cor:symbasis}) in
particular that the property of symmetric compatibility for affine points is additive, because
we have this property for symmetric compatibility for a level subgroup: we could not prove
simply this property of additivity for symmetric compatibility for affine points using the
arithmetic provided by Riemann formulas because we do not have an addition for affine
points, but only a pseudo-addition (to compute $\widetilde{x+y}$, we need the knowledge of
$\tildex$, $\tildey$, but also $\widetilde{x-y}$).

Once we have understood this condition of symmetric compatibility to extend a theta
structure, we would like to be able to either change the theta structure or the symplectic
structure in order to make them compatible. The Propositions \ref{prop:changesym1} and
\ref{prop:changesym2}
and subsequent Algorithms \ref{algo:changesym1} and \ref{algo:changesym2} explain how to do so.
In our way to obtain the preceding algorithms, we have to obtain a general effective
transformation formula (see \cite{BiLaCAV}) in the context of algebraic theta function in Proposition
\ref{prop:trans} and Algorithm \ref{algo:trans}: for this we
use the formalism of semi-characters developed in \cite{DRmodular} to describe the action of the
metaplectic group (the group of automorphisms of Heisenberg groups) on theta null points.

In \cite{MumfordOEDAV3}, in order to prove the duplication formula, Mumford introduces the notion of
pair of theta structures for $(A, \pol)$ of respective level $m$ and $2m$. The level
subgroups of such a pair
of theta structures are related by the morphisms $\epsilon_2(\pol)$ and $\eta_2(\pol^2)$. In
order to explain this, let $(A, \pol, \Thetapol)$ (resp. $(A, \pol^2, \Thetapoltwo$)) be an abelian variety together with a
level $m$ (resp. $2m$) theta structure. For $i=1,2$, let $\tildeK_i(\pol)$ (resp.
$\tildeK_i(\pol^2)$) be the maximal level subgroup
of $\Gpol$ (resp. $G(\pol^2)$) defining $\Thetapol$ (resp. $\Thetapoltwo$). 
Then, after Mumford, $(\Thetapol, \Thetapoltwo)$ is a pair of theta structures if we have
for $i=1,2$, $\epsilon_2(\pol)(\tildeK_i(\pol))\subset \tildeK_i(\pol^2)$ and
$\eta_2(\pol^2)(\tildeK_i(\pol^2))=\tildeK_i(\pol)$.
We explain in Proposition \ref{prop:mumfordequiv} that the map $\eta_2(\pol^2)$ in the theory of Mumford
plays exactly the same role as the notion of symmetric compatibility in our approach.
This means that if we suppose
$\Thetapol$ and $\Thetapoltwo$ compatible for $\epsilon_2(\pol)$, that is
$\epsilon_2(\pol)(\tildeK_i(\pol))\subset \tildeK_i(\pol^2)$ for $i=1,2$, then the condition 
$\eta_2(\pol^2)(\tildeK_i(\pol^2))=\tildeK_i(\pol)$ is equivalent to the condition that for all $\tildex
\in \tildeK_i(\pol^2)$, $\tildex$ is symmetric compatible with
$\epsilon_2(\pol)(\tildeK_i(\pol))$.
This allows us to compare our notion of compatibility with that of Mumford in Theorem
\ref{thm:compatthetas} and show that they are the same.

Once we have developed all the formalism related to the symmetric compatibility notion, it
is not difficult to extend the results of \cite{lubicz:hal-03738315} by adapting the
techniques of \cite{DRmodular}.
We obtain the main results of this paper. First, a change of level theorem of which we
present here a simplified statement (for the complete statement see Theorem \ref{th:main7}):
\begin{theorem}\label{th:main7simpl}
    Let $m,n,d>1$ be positive integers such that $n=md$.
    Let $(B, \bpol, \Thetabpol)$ be a marked abelian variety of type $K(m)$
    given by its (affine) theta null point $\tildenullbpol$. Suppose given
    a decomposition $G_1 \times G_2$ of $B[n]$ into subgroups isomorphic to $Z(n)$,
    isotropic for the Weil pairing $e_{B,n}$, verifying certain properties.

    Suppose that there exists $(a_j)_{j=1, \ldots, r}$ positive integers such that $d=
    \sum_{j=1}^r a_j^2$ and $gcd(a_j, n)=1$. Then
    there exists a theta structure $\Thetabpoldij$ of type $K(n)$ for $\otimes_{j=1}^r
    [a_j]^*\bpol\simeq \bpol^d$ compatible with $\Thetabpol$.

Fix good lifts $\tildeG_1$ and
    $\tildeG_2$ of respectively $G_1$ and $G_2$ with respect to $\tildenullbpol$. For $x \in B(\overk)$ and all $(P,Q) \in G_1
    \times G_2$, fix an affine lift
    $\tildex$, good lifts $\widetilde{x + Q}$ with respect to $\tildex$ and $\tildeG_2$
    and good lifts $\widetilde{x + P + Q}$ with respect to $\widetilde{x + Q}$ and
    $\tildeG_1$.
    Compute $a_j (\widetilde{x + P + Q})$ using $\scalmult$.

Let $U$ be an affine open subset of $B$ containing $G_1+G_2$,
    $\lambda x+G_1+G_2$ for $\lambda =1, \ldots, d$ and choose an
    isomorphism $\bpol(U) \simeq \stsheaf_B(U)$ so that for all $s \in \Gamma(B,\bpol)$ and
    all $x \in U(\overk)$, we can evaluate $s$ in $x$: we denote by $s(x) \in \overk$ the
    evaluation. Then, for $\alpha \in Z(m)$, there exists a constant $C \in \overk$ such that:
    \begin{equation}\label{eq:th7:eq1simpl}
	\theta_0^{\Thetabpoldij}(x)=C \sum_{\tildeQ \in \tildeG_2} \prod_{i=1}^r (a_i
	(\widetilde{x +Q}))_\alpha,
    \end{equation}
    and if $j \in \Zn$, by choosing $j_0 \in Z(m)$ and setting $P=\Thetabpolbar((j-j_0,0))$, we
    have:
    \begin{equation}\label{eq:th7:eq2simpl}
	\theta^{\Thetabpoldij}_j(x)= C \sum_{\tildeQ \in \tildeG_2} \prod_{i=1}^r
	(a_i(\widetilde{x+ P + Q}))_{a_i j_0+\alpha}.
    \end{equation}
\end{theorem}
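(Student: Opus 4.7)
The plan is to construct the theta structure $\Thetabpoldij$ by equipping the tensor product $\otimes_{j=1}^r [a_j]^*(\bpol)$ with the natural product theta structure built from the pullbacks $[a_j]^*(\Thetabpol)$, and then transporting it along the canonical isomorphism $\otimes_{j=1}^r [a_j]^*(\bpol) \simeq \bpol^d$ coming from $\sum a_j^2 = d$ together with the theorem of the cube applied to the symmetric line bundle $\bpol$. The nontrivial part is the compatibility with $\Thetabpol$: one must check that $\epsilon_d(\bpol)$ sends the level subgroups of $\Thetabpol$ into those defining $\Thetabpoldij$. This is where the hypotheses are used crucially. The coprimality $\gcd(a_j,n)=1$ ensures that each $[a_j]$ induces an automorphism of $B[n]$, so that the pullback by $[a_j]$ of a level subgroup of $\Thetabpol$ can be compared with the original on the nose. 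The symmetric compatibility of $\tildeG_1$ and $\tildeG_2$ with $\tildenullbpol$, combined with Proposition \ref{prop:propclef} and Corollary \ref{cor:symbasis}, then guarantees that the candidate tensor-product level subgroups actually glue into a bona fide theta structure on $\bpol^d$.

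Once $\Thetabpoldij$ is in hand, its theta coordinates $\theta_i^{\Thetabpoldij}$ are characterized as the unique (up to a global scalar) basis of $\Gamma(B,\bpol^d)$ on which the level subgroup associated to $G_2$ acts by the canonical character. To obtain the explicit formulas \eqref{eq:th7:eq1simpl}--\eqref{eq:th7:eq2simpl}, I would proceed as follows. Under the identification of the first step, a natural family of sections of $\otimes_{j=1}^r[a_j]^*(\bpol)$ is given by $\otimes_i [a_i]^*(\theta_{\beta_i}^{\Thetabpol})$, which when evaluated at an affine lift $\tildex$ returns the product $\prod_i (a_i \tildex)_{\beta_i}$, since $[a_i]^*(\theta)(\tildex)=\theta(a_i \tildex)$. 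Averaging such a monomial over the good lift $\tildeG_2$ of $G_2$ projects onto the $\tildeG_2$-invariant subspace, hence onto a multiple of some $\theta_i^{\Thetabpoldij}$. The index assignment $\alpha$ in \eqref{eq:th7:eq1simpl} and $a_i j_0+\alpha$ in \eqref{eq:th7:eq2simpl} corresponds to the prescribed compatibility between $\Thetabpol$ and $\Thetabpoldij$, which dictates which monomial contributes to which coordinate; the extra translation by $P=\Thetabpolbar((j-j_0,0))$ takes care of the case $j\notin Z(m)$ by shifting the monomial along the $G_1$-direction, which is why we need the further good lift $\widetilde{x+P+Q}$ with respect to $\tildeG_1$. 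The constant $C$ absorbs the projective ambiguity in the isomorphism $\otimes_j [a_j]^*(\bpol)\simeq \bpol^d$ and in the trivialization chosen on $U$.

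The main obstacle is clearly the first step, i.e.\ the existence of $\Thetabpoldij$ compatible with $\Thetabpol$: as recalled in the introduction around the map $\eta_2(\pol^2)$, in the even or non-coprime case extensions of theta structures can fail to exist or fail to be unique, and the whole earlier machinery on symmetric compatibility (Definitions \ref{def:symcompat} and \ref{def:symcompat2}, Proposition \ref{sec2:propgroup} and Proposition \ref{prop:propclef}) was developed precisely to resolve this obstruction. The trick of writing $\bpol^d$ as $\otimes_{j=1}^r [a_j]^*(\bpol)$ reduces the problem to a situation where, on each factor, one is effectively in a coprime-to-$n$ setting, where the existence is straightforward; then the symmetric compatibility framework is exactly what lets us assemble these per-factor theta structures into a single theta structure on $\bpol^d$. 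Once this has been established, the derivation of \eqref{eq:th7:eq1simpl}--\eqref{eq:th7:eq2simpl} becomes an evaluation exercise, generalizing the formulas of \cite{lubicz:hal-03738315} by substituting the uniqueness argument there with the existence provided by the symmetric-compatibility analysis, and using the arithmetic of affine points ($\scalmult$, $\diffadd$, $\threeway$) to express each $(a_i(\widetilde{x+P+Q}))_{\cdots}$ in terms of $\tildenullbpol$, $\tildeG_1$ and $\tildeG_2$.
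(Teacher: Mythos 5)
Your outline has the right shape — existence of a compatible level-$n$ theta structure via the symmetric-compatibility machinery, $\bpol^d\simeq\otimes_j[a_j]^*(\bpol)$ to get cheap sections, then project by averaging over $\tildeG_2$ — but two load-bearing pieces are missing or misdescribed.

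First, the construction of $\Thetabpoldij$. You propose building it as a ``product theta structure'' out of the pullbacks $[a_j]^*(\Thetabpol)$, and you suggest that the $\sum a_j^2$ decomposition ``reduces the problem to a coprime-to-$n$ setting where the existence is straightforward.'' This is not how the paper proceeds, and it does not work as stated: the map $\eaj(\bpol):G(\bpol)\to G(\otimesbpol)$ of Lemma~\ref{lem:prodiso}/Corollary~\ref{cor:genmum} is only an \emph{injection}, giving a partial theta structure of type $K(m)$ on $\otimesbpol$, not the full $K(n)$ structure one needs. The decomposition into a sum of squares plays no role in the existence question — it is a purely computational device that later lets one evaluate $[a_j]^*(\theta^\Thetabpol_\alpha)$ at a point as $\theta^\Thetabpol_\alpha(a_j x)$ via $\scalmult$. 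The existence of $\Thetabpoldij$ is instead obtained indirectly: by Proposition~\ref{prop:cond}, the symmetric-compatibility hypothesis on $G_1$ gives a marked abelian variety $(A,\pol,\Thetapol)$ that is $f$-compatible with $(B,\bpol,\Thetabpol)$, and then $\Thetabpoldij$ is defined through $\fsharp(\polaj)\circ\eaj(\pol)\circ\Thetaupol$ (Definition~\ref{def:fcompatibleC}); uniqueness comes from Proposition~\ref{prop:unicity}/Corollary~\ref{cor:unicity}. You need this route because the level subgroups of $\Thetabpoldij$ over the part of $K(n)$ not in $K(m)$ are \emph{not} in the image of $\eaj(\bpol)$ and cannot be recovered from $\Thetabpol$ alone.

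Second, and more seriously, you treat the identification of the averaged monomials with the $\theta^{\Thetabpoldij}_j(x)$ as ``an evaluation exercise.'' In fact this is where nearly all of the technical work lies, and it is the content of Proposition~\ref{prop:thetadaction}: one must show that the quantities produced by the affine-point arithmetic ($\scalmult$ applied to good lifts of $x+P+Q$) actually equal, up to a single global constant independent of $(\alpha,j_1,j_2)$, the evaluation of $\Thetabpoldij((1,j_{1,n},j_2))$ acting on $\prod_j[a_j]^*(\theta^\Thetabpol_{a_jj_{1,m}+\alpha})$ at $x$. This requires lifting the computation to the $f$-compatible $(A,\pol,\Thetapol)$ (so as to express the theta group action), Theorem~\ref{th:main1} to know that the chosen good lifts determine such an $(A,\pol,\Thetapol)$ in the first place, Lemma~\ref{lem:tildefscalmult} for the compatibility of $\scalmult$ with $\tildef$, Corollary~\ref{cor:diffaddaction} for the compatibility of $\scalmult$ with the theta group action, and, for the $G_2$ direction, a Fourier-transform change of theta structure $\Thetabpoldel=\Thetabpol\circ\Delta(m)$ together with the dual-$f$-compatibility formalism. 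None of this is replaceable by the general observation that averaging over a level subgroup projects onto an invariant line; that observation gives the existence of \emph{some} scalar relating the two sides, but not its independence of the indices, which is the actual assertion of the theorem.

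In short, your proposal identifies the correct high-level ingredients, but (a) the existence of $\Thetabpoldij$ must be obtained via the $f$-isogeneous $(A,\pol,\Thetapol)$ and the $\fsharp$ construction rather than as a tensor product of per-factor theta structures, and (b) the crucial bridge between the affine-lift arithmetic and the theta-structure action — Proposition~\ref{prop:thetadaction} — is missing and needs a full argument, not just the projection heuristic.
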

From the preceding Theorem,
we deduce immediately the change of level algorithm Algorithm
\ref{algo:changelevel} as well as the Corollary:
\begin{corollary}\label{cor:changeoflevelsimpl}
    Let $m,n,d >1$ be integers such that $n=md$.
    There exists a deterministic algorithm that takes as input
    the theta null point $0_{\Thetabpol}$ of a $g$-dimensional marked abelian variety $(B,
	    \bpol, \Thetabpol)$ of type $K(m)$, a basis of $B[n]$,
	    $(\theta_i^\Thetabpol(x))_{i\in Z(m)}$ for $x \in B(\overk)$ and outputs
	    $(\theta_i^\Thetabpold(x))_{i \in \Zn}$ where $\Thetabpold$ is a theta
	    structure of type $K(n)$ in time $O(n^{2g} \log(d))$ operations
	    in the base field of $(B, \pol, \Thetabpol)$.
\end{corollary}
We also have a Theorem to compute isogenies. We give a simplified statement of it, for the
full statement see Theorem \ref{th:mainisog}:
\begin{theorem}\label{th:mainisogsimpl}
    Let $m,n,d>1$ be integers such that $n=md$.
Suppose that there exists $(a_j)_{j=1, \ldots, r}$
positive integers such that $d= \sum_{j=1}^r a_j^2$ and $gcd(a_j, n)=1$.

    Let $(B, \bpol, \Thetabpol)$ be a
    marked abelian variety of type $K(m)$ given by its (affine) theta null point
    $\tildenullbpol$. Let $K=\Thetabpolbar(\mu_{d,m}(Z(d)) \times \{ 0 \})$.

    Let $G_1$ be a subgroup of $B[n]$
    isomorphic to $Z(n)$, isotropic for the Weil pairing $e_{B,n}$ and such that
    $\Thetabpolbar(Z(m)
    \times \{0\}) \subset G_1$. We suppose moreover that for all $x \in G_1$, $x$ is symmetric compatible with $\Thetabpol( \{1 \} \times Z(m)
    \times \{ 0 \})$. We fix a numbering $G_1=\{ g_1(i), i \in \Zn\}$ verifying certain
    properties.
    Let $\tildeG_1 = \{ \tildeg_1(i), i \in \Zn \}$ be a good lift of $K$. 
    Let $A=B/K$ and $f: B\rightarrow
    A$ be the isogeny. Let $\pol
    =\bpol^d/\tildeK$. Denote by $\rho_{n,m}: \Zn
    \rightarrow Z(m)\simeq \Zn/\mu_{d,n}(Z(d))$ the canonical projection.

    Let $x \in B(\overk)$ and let $\tildex$ be an affine lift of $x$. For $P \in G_1$, let
    $\widetilde{x+P}$ be a good lift of $x+P$ with respect to $\tildeG_1$.
       Let $U$ be an affine open subset of $B$ containing $G_1$,
    $0_{\Thetabpol}$, $\lambda x+G_1$ for $\lambda =1, \ldots, d$, and choose an
    isomorphism $\bpol(U) \simeq \stsheaf_B(U)$ so that for all $s \in \Gamma(B,\bpol)$ and
    all $x \in U(\overk)$ we can evaluate $s$ in $x$: we denote by $s(x) \in \overk$ the
    evaluation.

       There exists a theta structure $\Thetapol$ for $(A, \pol)$ of type $K(m)$ and a constant $C \in
       \overk$ such that for $\alpha \in Z(m)$ and $j_0 \in Z(m)$, if we choose
$j_1 \in \Zn$ and $j_2 \in Z(m)$ such that $\rho_{n,m}(j_1 + \mu_{m,n}(j_2))=j_0$,
       we have:
    \begin{equation}\label{eq:mainisogsimpl}
	\theta_{j_0}^{\Thetapol}(f(x))= C  \sum_{P \in \tildeK} \prod_{i=1}^r (a_i
	(\widetilde{x+P+ g_1(j_1)}))_{j_2}.
    \end{equation}
\end{theorem}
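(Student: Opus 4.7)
The plan is to reduce Theorem \ref{th:mainisogsimpl} to a combination of the change of level theorem (Theorem \ref{th:main7}) and a standard descent argument along the isogeny $f: B \to A$. First, I would apply Theorem \ref{th:main7} to $(B, \bpol, \Thetabpol)$ with the given $G_1$ completed to a decomposition $G_1 \times G_2 = B[n]$, obtaining a theta structure $\Thetabpoldij$ of type $K(n)$ for $\bpol^d \iso \otimes_{j=1}^r [a_j]^*\bpol$ on $B$ extending $\Thetabpol$, together with the explicit evaluation formula \eqref{eq:th7:eq2simpl} for its theta coordinates in terms of the $\scalmult$ outputs $a_i(\widetilde{x+P+Q})$ and sums over $\tildeG_2$.

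Second, I would check that the hypotheses on $G_1$, combined with Proposition \ref{prop:propclef} and Corollary \ref{cor:symbasis}, imply that $\tildeK$ is a symmetric level subgroup of $G(\bpol^d)$ contained in the first Lagrangian defining $\Thetabpoldij$. This is where the symmetric compatibility assumption on $G_1$ with respect to $\Thetabpol(\{1\} \times Z(m) \times \{0\})$ is essential: without it, $K$ lifts ambiguously to $G(\bpol^d)$ and no well-defined quotient line bundle descends to $A$. With this in place, $\pol \defby \bpol^d/\tildeK$ is a well-defined line bundle on $A = B/K$ satisfying $f^*\pol = \bpol^d$, and $\Thetabpoldij$ descends to a theta structure $\Thetapol$ of type $K(m)$ on $(A, \pol)$. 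Under this descent, $f^*$ identifies $\Gamma(A, \pol)$ with the $\tildeK$-invariant subspace of $\Gamma(B, \bpol^d)$, and each theta coordinate $\theta_{j_0}^\Thetapol \circ f$ is (up to a uniform scalar) a sum $\sum_{k \in \mu_{d,n}(Z(d))} \theta_{\tilde{j_0}+k}^\Thetabpoldij$ over the coset of the descent subgroup in $Z(n)$ determined by $j_0 \in Z(n)/\mu_{d,n}(Z(d)) = Z(m)$.

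Third, I would substitute the formula \eqref{eq:th7:eq2simpl} into this descent sum. Writing $P = g_1(j_1)$ so that $j_1 \in Z(n)$ encodes the Lagrangian component of the chosen lift of $j_0$, the outer descent sum (over translates of $x$ by elements of $K$, parametrized by $\tildeK$ via the good-lift numbering $\tildeg_1$) combines with the inner sum over $\tildeG_2$; the symmetric-compatibility bookkeeping ensures this double sum collapses so that only the $\tildeK$-translates of $x+g_1(j_1)$ survive, producing the single sum $\sum_{P \in \tildeK}$ appearing in \eqref{eq:mainisogsimpl}. The index identity $\pi_{n,m}(j_1 + \mu_{m,n}(j_2)) = j_0$ encodes the pairing between the shift index on the $B$-side and the coordinate index $j_2$ on the $A$-side, matching the product subscript $a_i j_0 + \alpha$ of Theorem \ref{th:main7} (with $\alpha$ absorbed into the shift) to $j_2$ here. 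The main obstacle will be tracking these index identifications precisely and verifying that the collapse of the double sum is compatible with the symmetric compatibility choices built into both $\Thetabpoldij$ and $\tildeK$, so that the normalizing constant $C$ can be extracted uniformly in $j_0$ and $x$.
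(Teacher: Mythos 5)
Your plan takes a genuinely different route from the paper, and while the first steps are conceptually sound, the final reduction is asserted rather than demonstrated, and as stated it would not close.

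The paper does not invoke Theorem~\ref{th:main7} at all, nor does it ever introduce a complementary subgroup $G_2$. Instead, it builds a \emph{partial} theta structure $\Thetabpoldij$ only on the centralizer $G^*(n)=\pi_{G(n)}^{-1}(\overk^*\times Z(n)\times\dnu_{m,n}(\dZ(m)))$ of $\eaj(\tildeK)$: the $Z(n)$-part comes from $G_1$ (Proposition~\ref{prop:unicity} plus Remark~\ref{rk:extension}), while the $\dZ(m)$-part is defined directly as $\eaj(\bpol)\circ\Thetabpol((1,0,\dnu_{m,n}(\cdot)))$. The theta structure $\Thetapol$ on $A$ is then the factoring of $\fsharp(\otimesbpol)\circ\Thetabpoldij$ through the quotient by $\tildeK$. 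This avoids any choice of $G_2$ and the attendant need to prove independence of that choice, which your proposal does not address.

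The central gap is the claimed collapse of your double sum. After applying \eqref{eq:th7:eq2simpl} (an $n^g$-term sum over $\tildeG_2$) inside the dual-$f$-compatible descent formula (a $d^g$-term coset sum), you would have $n^g d^g$ terms and would need these to reduce to the $d^g$ terms of \eqref{eq:mainisogsimpl} — a cancellation by a factor $n^g$. You say ``the symmetric-compatibility bookkeeping ensures this double sum collapses,'' but symmetric compatibility controls sign ambiguities of lifts, not cancellation of sums; it has no mechanism to produce this. The paper's collapse is instead a character-orthogonality computation: after interchanging the outer sums, the inner sum
\begin{equation}
\sum_{\dnu\in\dZ(m)}\Thetabpoldij\bigl((1,0,\dnu_{m,n}(\dnu))\bigr)(s_0)
=\sum_{\dnu\in\dZ(m)}\dnu\Bigl(-\alpha\sum_{j=1}^r a_j\Bigr)\,\prod_{j=1}^r[a_j]^*(\theta^\Thetabpol_\alpha)
\end{equation}
vanishes unless $\alpha=0$ (or $\sum a_j\equiv 0$), where it equals $m^g\prod_j[a_j]^*(\theta^\Thetabpol_0)$. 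This is what removes the $m^g$-sized piece of the sum and leaves exactly the $d^g$ terms indexed by $\tildeK$. Without identifying an analogous orthogonality argument for the $\tildeG_2$-sum, and a proof that the result is independent of the auxiliary $G_2$, your proposal does not establish the theorem.
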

In the preceding Theorem, the kernel of the isogeny $K$ must be contained in one of the
level subgroups defining the theta structure $\Thetabpol$. If this is not the case, we
explain how to change $\Thetabpol$ so that this condition is fulfilled.

From the preceding Theorem, we deduce Algorithm \ref{algo:isogcomp} to compute isogenies as well as the
following Corollary which gives the complexity of the Algorithm:
\begin{corollary}\label{cor:isogcompsimpl}
    Let $m,n,d >1$ be integers such that $n=md$.
    There exists a deterministic algorithm that takes as input
    the theta null point $0_{\Thetabpol}$ of a $g$-dimensional marked abelian variety $(B,
	    \bpol, \Thetabpol)$ of type $K(m)$; a basis of $B[n]$; a subgroup $K$ of
	    $B[d]$, isomorphic to $Z(d)$ and isotropic for the Weil pairing $e_{B,n}$,
	    defining the isogeny $f:B \rightarrow A=B/K$ and
	    $(\theta_i^\Thetabpol(x))_{i\in Z(m)}$ for $x \in B(\overk)$, and outputs
	    $(\theta_i^\Thetapol(x))_{i \in Z(m)}$ where $(A, \pol, \Thetapol)$ is a
	    marked abelian variety of type $K(m)$ in time $O(n^{g} \log(d))$ operations
	    in the base field of $(B, \bpol, \Thetabpol)$.
\end{corollary}

We conclude this introduction by explaining why the cases put aside in \cite{lubicz:hal-03738315} are significant.
As the dimension of the ambient space where we
embed the abelian variety $(A, \pol)$ is $m^g$, where $m$ is the level of $\pol$, in order
to limit time and memory consumption, we want to compute with the smallest possible level. As an even level
is needed to be able to use Riemann relations which encode the arithmetic of $A(\overk)$,
in most applications we use an embedding
provided by a level $2$ or $4$ ample line bundle. In the case of a level $2$ embedding, we
obtain the Kummer variety associated to $A$.
So it is necessary to take into account
the case $d \wedge m \ne 1$ in order to compute $2$-isogenies which are of particular importance for the
theory of theta functions, but also for computational and cryptography applications: the
basic reason is that, for a given dimension of abelian varieties, $2$-isogenies are the smallest degree
isogenies that one can consider and thus the most simple in a certain way. On the
theoretical side, it should be remarked that $2$-isogenies play a central role since the
duplication formula can be viewed as a $2$-change of level algorithm or an algorithm to
compute $2$-isogenies. This formula is a corner stone of the
theory of algebraic theta functions developed by Mumford in
\cite{MumfordOEDAV1,MumfordOEDAV2,MumfordOEDAV3}, since it allows
to express the product of two theta functions in the canonical basis of theta functions
associated to a theta structure: this product formula is an essential tool 
to study the structure of the ring of theta functions in
\cite{MumfordOEDAV1}. From the duplication formula, one also deduces easily Riemann
relations,
which give a complete set of equations for the projective embedding of the abelian variety defined by a power of the theta divisor. Riemann
relations together with symmetry relations also give a complete set of equations for 
the moduli space of abelian varieties together with a theta structure.
From duplication formula, it is easy to obtain formulas to compute the image of a point
by an isogeny or change of level algorithm if one have beforehand compute the image theta
null point from the knowledge of the origin theta null point. However, it should be
remarked that duplication formula alone do not provide with an algorithm to compute the
image theta null for an isogeny or change of level algorithm. Indeed, as we will see, part
the image theta structure cannot be uniquely determined, as multiple choices are possible and one
need to rebuild it to compute the image theta null point. This indetermination translate
into choice of signs in square root computation when trying to recover the image theta
null point with duplication formula. In order to lift the indetermination, one can use as
in \cite{dartois_fast} the data of a sub-module $G$ of the $8$-torsion isotropic for the Weil pairing
which defines from classical results from Mumford a unique symmetric level subgroup above
$2G$. In the present paper, we give a global framework to compute isogenies and change of
level independent of the degree and the level encompassing known theta function based
algorithms. On the
practical side, $2$-isogenies, because they are the smallest degree isogenies for a certain
dimension, are very useful in higher dimension isogeny based cryptography
\cite{cryptoeprint:2024/760,SIDH,CasDe2022,MMP_SIDH,Robert_SIDH,Robert22applications}. We should
also mention some other important applications for the generalisation of the AGM method to
compute period matrix \cite{dupont2006moyenne} or for point counting \cite{carlsPointCounting}.

The paper is organized as follows: in Section \ref{sec:notations}, we gather the main
results and notations that we are going to use in this paper. In Section \ref{sec:trans},
we study the action of the metaplectic group on theta null points and give a general and
effective transformation formula in the context of algebraic theta functions which will be
used in the isogeny algorithm. In Section \ref{sec:struct}, we classify the abelian
varieties together with a theta structure which are compatible with a given one up to an
isogeny. This allows us to have a change of level algorithm by taking an isogeny which is
given in Section \ref{sec:isogchange}. Then in
Section \ref{sec:main}, we present the main results of this papers which are the Theorem
comparing Mumford's notion of pair of theta structure and our definition of compatible
theta structures, the change of level and isogeny algorithms.

\section*{Acknowledgments}

The authors would like to thank Pierrick Dartois and Damien Robert for insightful
discussions concerning the first version of this article.

\section{Notations and basic facts}\label{sec:notations}

In this section, we recall some notations and well known facts that we use in this paper. The main general
references for this section are \cite{BiLaCAV,MumfordOEDAV1,milne1986abelian,MumfordAV}.

Let $A$ be a $g$-dimensional abelian variety over a field $k$ of characteristic $p \ne 2$.
For $x$ a geometric point of $A$, we denote by $\tau_x$ the translation by $x$ map on
$A$. If $\pol$ is an ample line bundle on $A$, we let $\phi_\pol: A \rightarrow \hatA, x\mapsto
\tau_x^* \pol \otimes \pol^{-1}$. It is well known that $\phi_\pol$ characterizes the 
algebraic class of $\pol$, which is a polarization of $A$. We denote by $\Kpol$ the kernel
of $\phi_\pol$, which is finite (because $\pol$ is ample), and assume that $\phi_\pol$ is separable. The degree of
$\pol$ is the degree of $\phi_\pol$. A principal line bundle is a degree $1$
line bundle.

For $\ell \in \Z^*$, we denote by $[\ell]: A \rightarrow A, x\mapsto \ell x$ the
multiplication by $\ell$ isogeny.
A line bundle $\pol$ on $A$ is said to be symmetric if there is an isomorphism  $[-1]^*
\pol \simeq \pol$. If $n$
is a positive integer, we say that $\pol$ is a level $n$ line bundle if $\pol= \pol_0^n$
for $\pol_0$ a principal line bundle. From now on, we suppose that $\pol$ is a level
$n$ symmetric line bundle defined over $k$. 
If $\Autg_A$ is a subgroup of the group of automorphisms of $A$, considered as an algebraic
variety
such that for all $\tau \in \Autg_A$, there exists an isomorphism $\psi_\tau: \pol \rightarrow \tau^*
\pol$, we can consider the set of such pairs $(\tau, \psi_\tau)$. If we
endow this set with the
composition law  
$$(\tau, \psi_\tau) \circ (\tau', \psi_{\tau'})=(\tau \circ \tau',
{\tau'}^*(\psi_{\tau})\circ \psi_{\tau'}),$$ 
it becomes a group.
By taking $\Autg_A = \{\tau_x, x \in \Kpol\}$,
where $\tau_x$ is an automorphism of $A$ in the preceding general
construction, we obtain the theta group $\Gpol$ associated to $\pol$. We
know that $\Gpol$ is a central extension of $\Kpol$ by $k^*$ (see \cite{MumfordOEDAV1}).

The commutator
pairing $\Gpol \times \Gpol \rightarrow k^*, (g_x,g_y) \mapsto g_xg_y g_x^{-1} g_y^{-1}$ descends
to a skew-symmetric pairing $\epol: \Kpol \times \Kpol \rightarrow k^*$, which is perfect.
A level subgroup
above $K\subset \Kpol$ is a subgroup $\tildeK$ of $\Gpol$ such that $\pigpol: \tildeK \rightarrow K$ is
an isomorphism ; it exists if and only if $K$ is isotropic for the commutator pairing.

We gather in the following Proposition the results on Weil and commutator pairings that we will use (\cite[p.
228]{MumfordAV}):
\begin{proposition}\label{def:weil}
    Let $\ell$ be a positive integer, 
    denote by $e_{A, \ell}: A[\ell]^2 \rightarrow \overk^*$ the Weil pairing and by
    $e_\pol: \Kpol^2 \rightarrow \overk^*$ the commutator pairing. Suppose that there
    exists $\ell_0$ a positive integer such that $K(\pol^{\ell_0})= A[\ell]$.
We have:
    \begin{enumerate}
	\item For $x_1, x_2
    \in A[\ell] \times A[\ell]$:
    \begin{equation}
	e_{A, \ell}(x_1, x_2)= e_{\pol^{\ell_0}}(x_1, x_2).
    \end{equation}
	\item
If $f: B \rightarrow A$ is an isogeny, for all $x,y \in f^{-1}(\Kpol)$:
\begin{equation}\label{eq:coupcompat}
e_{f^*(\pol)}(x,y) =e_\pol(f(x), f(y)).
\end{equation}
\item For $\kappa$ a positive integer, all $x \in \Kpol$ and $y \in [\kappa]^{-1}(\Kpol)$:
    \begin{equation}
    e_{\pol^\kappa}(x,y)=e_{\pol}(x, \kappa y).
    \end{equation}
\item If $\pol_1$ and $\pol_2$ are algebraically equivalent, then
$e_{\pol_1}=e_{\pol_2}$.
    \end{enumerate}
\end{proposition}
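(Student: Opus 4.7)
The four assertions are classical, and the plan is to deduce each one from the construction of the theta group $G(\pol)$ and the definition of the commutator pairing, essentially following \cite[\S 23]{MumfordAV}. Nothing here is new; the goal is to assemble known facts into a form convenient for the rest of the paper.

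For (1), the plan is to use one of the standard equivalent definitions of the Weil pairing: the isogeny $\phi_{\pol^{\ell_0}} : A \to \hat A$ sends $A[\ell] = K(\pol^{\ell_0})$ into $\hat A[\ell]$ (since the degree of $\phi_{\pol^{\ell_0}}$ is a power of $\ell$), and $e_{A,\ell}$ can be recovered from $\phi_{\pol^{\ell_0}}$ via the canonical pairing $A[\ell] \times \hat A[\ell] \to \mu_\ell$. Combined with the identification of $\phi_{\pol^{\ell_0}}$ with the map $x \mapsto [g_x, \cdot]$ on $G(\pol^{\ell_0})/k^*$, this gives the equality $e_{A,\ell} = e_{\pol^{\ell_0}}$ on $A[\ell] \times A[\ell]$.

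For (2), I would use the canonical morphism of theta groups $G(f^*\pol) \to G(\pol)$ that sends $(\tau_x, \psi_x)$ above $x \in f^{-1}(K(\pol))$ to $(\tau_{f(x)}, \bar\psi_x)$, where $\bar\psi_x$ descends $\psi_x$ through $f$. This morphism is a group homomorphism lifting $f|_{f^{-1}(K(\pol))}$, and a group homomorphism of central extensions sends the commutator pairing on the source into the commutator pairing on the target, yielding (\ref{eq:coupcompat}). For (3), the cleanest route is to apply (2) to the isogeny $[\kappa]: A \to A$: indeed $[\kappa]^*\pol \simeq \pol^{\kappa^2}$ by the theorem of the cube when $\pol$ is symmetric, and the resulting identity, combined with bilinearity of $e_{\pol^\kappa}$ and the relation $\phi_{\pol^\kappa} = \kappa \phi_\pol$, gives $e_{\pol^\kappa}(x,y) = e_\pol(x,\kappa y)$ for $x \in K(\pol)$ and $y \in [\kappa]^{-1}K(\pol)$.

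For (4), I would use that two algebraically equivalent line bundles differ by a translation, i.e., $\pol_2 \simeq \tau_z^*(\pol_1)$ for some $z \in A(\overk)$. One checks directly that $K(\pol_1) = K(\tau_z^*\pol_1) = K(\pol_2)$, and that the pullback map $\tau_z^* : G(\pol_1) \to G(\tau_z^*\pol_1)$ is an isomorphism of central extensions inducing the identity on $K(\pol_1)$; the commutator pairing is intrinsic to this extension and thus unchanged. The only step that requires any care is (1): one must check that the chosen normalization of $e_{A,\ell}$ coincides with the one induced by $\phi_{\pol^{\ell_0}}$ rather than differing by a sign or an inverse; this is the main potential pitfall and will be settled once by fixing conventions consistent with \cite{MumfordAV}.
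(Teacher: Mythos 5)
The paper states this proposition without proof, simply citing Mumford's \emph{Abelian Varieties}, so there is no in-paper argument to compare against. Of your four sketches, (2) and (4) are sound; for (2), the point worth making explicit is that for $x \in f^{-1}(K(\pol))$ (and only there; $f^{-1}(K(\pol))$ is strictly contained in $K(f^*\pol)$ when $\deg f > 1$) the pullback $f^* : \Hom_A(\pol, \tau_{f(x)}^*\pol) \to \Hom_B(f^*\pol, \tau_x^*(f^*\pol))$ is an isomorphism of one-dimensional spaces, so $\bar\psi_x$ exists and is unique, and the resulting homomorphism of central extensions is the identity on $\overk^*$ and hence preserves commutators. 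For (3), applying (2) to $[\kappa]$ yields only $e_{\pol^{\kappa^2}}(x,y) = e_\pol(\kappa x, \kappa y)$; the passage to the stated $\kappa$-version really needs $\phi_{\pol^\kappa} = \kappa \phi_\pol$ together with a form of (1), and the chain of reductions deserves to be written out rather than asserted.

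Item (1) contains a genuine error. You claim that $\phi_{\pol^{\ell_0}}$ sends $A[\ell] = K(\pol^{\ell_0})$ into $\hat A[\ell]$ and propose to recover $e_{A,\ell}$ by pairing against $\phi_{\pol^{\ell_0}}$; but $K(\pol^{\ell_0})$ is by definition $\ker \phi_{\pol^{\ell_0}}$, so this map annihilates $A[\ell]$ and the proposed pairing $e_\ell(x, \phi_{\pol^{\ell_0}}(y))$ is identically $1$. Likewise, "identifying $\phi_{\pol^{\ell_0}}$ with $x \mapsto [g_x, \cdot]$" conflates the isogeny $A \to \hat A$ (whose kernel is $K(\pol^{\ell_0})$) with the commutator pairing, which is defined \emph{on} $K(\pol^{\ell_0})$ and valued in $\overk^*$; these have different domains and the one you need vanishes on the other's. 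To relate the Weil pairing on $A[\ell]$ to $e_{\pol^{\ell_0}}$ you must use a polarization that does not vanish on $A[\ell]$: write $\pol = \pol_0^n$ with $\pol_0$ principal (so the hypothesis $K(\pol^{\ell_0}) = A[\ell]$ forces $\ell = n\ell_0$), interpret $e_{A,\ell}$ as the polarized Weil pairing $e_\ell(x, \phi_{\pol_0}(y))$, and then invoke the classical relation between $e^L$, $\phi_L$ and the canonical pairing from Mumford's Section 23 applied with the appropriate power of $\pol_0$. As written, your plan for (1) produces the trivial pairing.
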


\begin{remark}\label{rm:intro}
    Let $x \in A(\overk)$ be a torsion point and $\pol$ an ample line bundle on $A$. We are going to show that there always
    exists an isogeny $f_0: B \rightarrow A$ and $y \in B[\overk]$ such that $f_0(y)=x$
    and $y \in K(f_0^*(\pol))$. Actually, there exists a positive integer $\ell$ such
    that $\ell x \in \Kpol$. As
    $\pol$ is symmetric, $[\ell]^*(\pol)=\pol^{\ell^2}$ (see \cite{MumfordAV}). So we have
    $K([\ell]^*(\pol)) = [\ell^2]^{-1}(\Kpol)$ (see \cite[Proposition 4]{MumfordOEDAV1}).
    Take $y \in f_0^{-1}(x)(\overk)$, then $y \in [\ell^2]^{-1}(\Kpol) \subset K([\ell]^*(\pol))$. 
\end{remark}

Since $\pol$ is symmetric, let $\psii: \pol \rightarrow [-1]^* (\pol)$ be an isomorphism.
We normalize $\psii$ so that $([-1], \psii) \circ ([-1], \psii)=1$. Denote by
$G_0(\pol)$ the group generated by $\Gpol$ and $([-1], \psii)$. It is clear that
$G_0(\pol)=\Z / 2 \Z \rtimes \Gpol$.
Following Mumford \cite{MumfordOEDAV1}, we define the group morphism:
\begin{equation}
    \begin{split}
	\delta_{-1}(\pol): \Gpol & \rightarrow \Gpol \\
	(\tau_x, \psi_x) & \rightarrow ([-1], \psii) \circ (\tau_x, \psi_x) \circ ([-1],
	\psii).
    \end{split}
\end{equation}
When no confusion is possible, we will abbreviate $\delta_{-1}(\pol)$ by $\delta_{-1}$.
With an easy computation, we get that $\delta_{-1}((\tau_x, \psi_x))=(\tau_{-x}, \tau^*_{-x}(\psii)^{-1} \circ (-1)^*(\psi_x) \circ
\psii)$.

\begin{definition}
An element $g_z \in \Gpol$ is symmetric if $\delta_{-1}(g_z)=g_z^{-1}$.
A level subgroup is symmetric if all its elements are symmetric. 
\end{definition}

We have the following important Lemma (see \cite[p. 308]{MumfordOEDAV1}):
\begin{lemma}\label{intro:lemma1}
If $x \in \Kpol$, there always exists exactly two $\psi_x$ such that $(\tau_x, \psi_x)$ is
symmetric: if $g_x=(\tau_x, \psi_x) \in \Gpol$ is symmetric, then $-g_x=(\tau_x, -\psi_x)$ is the other symmetric
element over $x$.
\end{lemma}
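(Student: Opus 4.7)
The plan is to reduce the existence and uniqueness (up to sign) of symmetric lifts to solving a quadratic equation in $\overk^*$. I fix any lift $g_x = (\tau_x, \psi_x) \in \Gpol$ above $\tau_x$, which exists because $\pigpol : \Gpol \to \Kpol$ is surjective with kernel the central $k^*$. Both $\delta_{-1}(g_x)$ and $g_x^{-1}$ project to $\tau_{-x}$ under $\pigpol$, and since the fibres of $\pigpol$ are $k^*$-torsors, there is a unique $\chi(g_x) \in k^*$ with
\[
\delta_{-1}(g_x) = \chi(g_x) \cdot g_x^{-1}.
\]
By construction, $g_x$ is symmetric if and only if $\chi(g_x) = 1$; this is the defect of symmetry I want to control.

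Next I establish the transformation rule for $\chi$ under rescaling $g_x \mapsto \lambda g_x$ with $\lambda \in k^*$. Scalars are central in $\Gpol$, and the normalisation $([-1], \psii)^2 = 1$ gives $\delta_{-1}(\lambda) = ([-1], \psii)\, \lambda\, ([-1], \psii) = \lambda$. Hence $\delta_{-1}(\lambda g_x) = \lambda\, \delta_{-1}(g_x) = \lambda \chi(g_x)\, g_x^{-1}$, whereas $(\lambda g_x)^{-1} = \lambda^{-1} g_x^{-1}$, so that
\[
\chi(\lambda g_x) = \lambda^2\, \chi(g_x).
\]

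To conclude, the symmetric lifts above $\tau_x$ are exactly those $\lambda g_x$ with $\lambda^2 = \chi(g_x)^{-1}$. Over $\overk$, which is algebraically closed of characteristic $\ne 2$, this equation has precisely two solutions differing by $-1$. The two corresponding lifts are the only symmetric elements above $\tau_x$, and they are related by $g_x \mapsto -g_x = (\tau_x, -\psi_x)$, as required.

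The step I expect to require the most care is the transformation rule $\chi(\lambda g_x) = \lambda^2 \chi(g_x)$: it is the engine that converts the symmetry condition into a quadratic one, and it rests entirely on the normalisation $([-1], \psii)^2 = 1$ chosen when defining $\delta_{-1}$, which forces $\delta_{-1}$ to act trivially on the central scalars. Without this normalisation, $\delta_{-1}$ could twist $k^*$ by a nontrivial character, and the clean counting argument would collapse. Once the quadratic rule is in place, existence and the ``exactly two'' count follow from the fact that every element of $\overk^*$ has exactly two square roots.
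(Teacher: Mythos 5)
Your proof is correct, and it reproduces the standard argument of Mumford; the paper itself does not prove this lemma but simply cites \cite[p.~308]{MumfordOEDAV1}. Your reduction of symmetry to the quadratic $\lambda^2\chi(g_x)=1$ via the cocycle $\chi(g_x)$ measuring the defect $\delta_{-1}(g_x)=\chi(g_x)\,g_x^{-1}$, together with the observation that the normalisation $([-1],\psii)^2=1$ makes $\delta_{-1}$ a genuine conjugation (hence trivial on the central $\overk^*$), is exactly the engine behind Mumford's count and is complete as written; the only cosmetic point is to write $\overk^*$ consistently in place of $k^*$, since the two square roots you need live in the algebraic closure.
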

From the Lemma, we deduce that
if $H$ is a finite
subgroup of $\Kpol$ isotropic for $e_\pol$, there always exists a symmetric level
subgroup $\tildeH$ such that $\pigpol(\tildeH)=H$.

\begin{definition}
    Let 
    $K(A)$ be the Kummer variety of $A$, that is the quotient of $A$ by $[-1]$. Let
    $\pi_{K(A)}: A \rightarrow K(A)$ be the canonical projection.
    We say that $\pol$ is totally symmetric if there exists an ample line bundle $\pol_0$
    on $K(A)$ such that $\pi_{K(A)}^{-1}(\pol_0)=\pol$.
\end{definition}
If $\pol$ is totally symmetric, then $[-1]^*(\pol)=\pol$ and if $([-1], \psii) \in
G_0(\pol)$, $\psii$ is the identity morphism.

\begin{definition}
For all $n \in \N^*$, let $\Zn$ be the group $(\frac1n \Z /\Z)^g$ and denote by $\dZn$ its
dual group, that is the group of characters of $\Zn$. If $m,n,d>1$ are integers such that
$n=md$, we denote by $\mu_{m,n}: Z(m) \rightarrow \Zn$ the canonical
injection. There is also a surjection $\nu_{n,m}: \Zn \rightarrow Z(m), x \mapsto dx$. We
denote by $\dnu_{m,n}: \dZ (m) \rightarrow \dZ (n)$ the one on one dual of $\nu_{n,m}$
and by $\dmu_{n,m}: \dZ (n) \rightarrow \dZ (m)$ the surjective dual of $\mu_{m,n}$.
In the following, to ease the notations when there is no ambiguity, we will often consider
$Z(m)$ (resp. $\dZ(m)$) as a subgroup of $Z(n)$ via $\mu_{m,n}$ (resp. $\dZ (n)$ via
$\dnu_{m,n}$). 
\end{definition}

Let $\Kn=\Zn \times \dZn$. 
In the following, we will consider $Z(n)$ and $\dZn$ as subgroups of $\Kn$ in the
obvious manner.
Denote by $G(n)$ the
level $n$ Heisenberg group, that is the set $k^* \times \Zn \times \dZn$ together with the
group law given by $(\alpha_1, x_1, y_1).(\alpha_2, x_2, y_2)=(\alpha_1 \alpha_2 y_2(x_1),
x_1+x_2, y_1+y_2)$. With the canonical projection $\pi_{G(n)}:G(n)\rightarrow \Kn$, $G(n)$ is a
central extension of $\Kn$ by $k^*$. We denote by $e_n: K(n) \times K(n) \rightarrow k^*$
the pairing induced by the commutator pairing on $G(n)$.
One can see that 
\begin{equation}\label{eq:commutator}
e_n((\alpha_1, \beta_1), 
(\alpha_2, \beta_2)) = \beta_1(\alpha_2) / \beta_2(\alpha_1). 
\end{equation}
Denote by  $D_{-1}: G(n) \rightarrow G(n)$, $(\alpha, x, y) \mapsto (\alpha, -x, -y)$ the
morphism of Heisenberg group.
\begin{definition}
A theta structure for $(A, \pol)$ of type $K(n)$ is an isomorphism $\Thetapol: G(n)
\rightarrow \Gpol$ compatible with the structures of central extension of $G(n)$ and
$\Gpol$. 
    A symplectic structure for $(A, \pol)$ of type $K(n)$ is a symplectic isomorphism (for
    $e_\pol$) $\Spol: K(n)
    \rightarrow \Kpol$. 

    A theta structure $\Thetapol$ is said to be symmetric if $\delta_{-1} \circ \Thetapol
    = \Thetapol \circ D_{-1}$. It is equivalent to the fact that
$\Thetapol(\{1\}\times\Zn\times\{0\})$ and 
$\Thetapol(\{1\}\times\{0\}\times\dZn)$ are symmetric level subgroups of $\Gpol$. 

A triple $(A, \pol,
\Thetapol)$ given by an abelian variety together with an ample totally symmetric line bundle and a symmetric
    theta structure of type $K(n)$ is called a marked abelian variety of type $K(n)$.

    If $G^* \subset G(n)$ is a subgroup containing $\overk^*$, a partial theta structure (resp.
    partial symmetric theta structure) of type $G^*$ is an injective group morphism $\Thetapol^*:
    G^* \rightarrow \Gpol$ (resp. verifying $\delta_{-1} \circ \Thetapol^* = \Thetapol^*
    \circ D_{-1}$).
\end{definition}
\begin{remark}
    A theta structure $\Thetapol: G(n) \rightarrow \Gpol$ is
    equivalent to the data of partial theta structures $\Theta^1_\pol: \overk^*\times \Zn
    \rightarrow \Gpol$ and  $\Theta^2_\pol: \overk^*\times\dZn \rightarrow \Gpol$.
\end{remark}
It is clear that a theta structure induces via the canonical projections $\pi_{G(n)}:G(n) \rightarrow
\Kn$ and $\pigpol:\Gpol \rightarrow \Kpol$ a symplectic structure $\Thetapolbar: K(n)
\rightarrow \Kpol$. Note
that if $\pol$ is totally symmetric of type $K(n)$, then $2|n$. By \cite[Proposition 2.4.2]{DRphd}, if $(A,
\pol)$ is of type $K(n)$ with $2|n$, there always exists a unique totally symmetric line
bundle in the algebraic class of $\pol$.

There is an action of $\Gpol$ on the group of global sections $\Gamma(A,\pol)$ given by $(\tau_x,
\psi_x)(s) \mapsto \psi_x^{-1} \tau^*_x(s)$, for $(\tau_x, \psi_x) \in \Gpol$ and $s \in
\Gamma(A,\pol)$. An important property of a theta structure is that it defines a basis of
$\Gamma(A,\pol)$.

\begin{proposition}[Basis of $\Gamma(A,\pol)$ associated to a theta structure]\label{prop:base}
One can associate to each theta
structure for $(A, \pol)$ a basis of $\Gamma(A,\pol)$ defined (up to a constant multiple) as
follows. Consider the linear endomorphism $\pi_\Thetapol: \Gamma(A,\pol) \rightarrow
\Gamma(A,\pol)$ defined
by $s \mapsto \sum_{\lambda \in \Thetapol(\{1\}\times\{0\}\times\dZn)} \lambda.s$. Then $\pi_\Thetapol$ is a
projection whose image is a $1$-dimensional subspace of $\Gamma(A,\pol)$. Let
$\theta_0^\Thetapol$ be any generator of this subspace, then $(\theta_i^\Thetapol)_{i \in \Zn}:=
(\Thetapol((1,i,0)).\theta_0^\Thetapol)_{i \in \Zn}$ forms a basis of $\Gamma(A,\pol)$ canonically associated to the
theta structure $\Thetapol$.
\end{proposition}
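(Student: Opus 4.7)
The plan is to combine Mumford's irreducibility result with the Stone--von Neumann classification. The key input is that $\Gamma(A,\pol)$ is, up to isomorphism, the unique irreducible representation of $\Gpol$ on which the central copy of $\overk^*$ acts by the standard character~\cite{MumfordOEDAV1}. Via the isomorphism of central extensions $\Thetapol$, this pulls back to the unique analogous representation of $G(n)$, which by Stone--von Neumann can be realised as the Schr\"odinger model on $V = \overk^{Z(n)}$, where $(1,x,0)$ acts by translation $f(\cdot) \mapsto f(\cdot - x)$ and $(1,0,y)$ acts by pointwise multiplication by the character $y$.

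First, $H \defby \{1\} \times \{0\} \times \dZn \subset G(n)$ is a commutative subgroup of order $n^g$, and $\Thetapol(H)$ is a level subgroup of $\Gpol$. Since $H$ is a group, for every $\lambda_0 \in \Thetapol(H)$ one has $\lambda_0 \circ \pi_\Thetapol = \pi_\Thetapol$, so the image of $\pi_\Thetapol$ lies in the subspace $V_0 \subset \Gamma(A,\pol)$ of $\Thetapol(H)$-invariants. Conversely, on $V_0$ every $\lambda \in \Thetapol(H)$ acts as the identity, hence $\pi_\Thetapol$ acts by the scalar $|H| = n^g$, which is invertible in $\overk$ because $\phi_\pol$ is separable of degree $n^{2g}$, forcing $\gcd(n, \car k) = 1$. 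Therefore the image of $\pi_\Thetapol$ equals $V_0$, and $\pi_\Thetapol$ coincides with $n^g$ times the projector onto $V_0$.

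It remains to show that $\dim V_0 = 1$ and that the stated family is a basis. In the Schr\"odinger model, $f \in V$ is fixed by every $(1,0,y)$ iff $y(t)\, f(t) = f(t)$ for all $t \in Z(n)$ and $y \in \dZn$; since $\dZn$ separates points of $Z(n)$, this forces $f$ to be supported at $0$, so $V_0 = \overk \cdot \delta_0$ is one-dimensional. Choosing a generator $\theta_0^\Thetapol$ of $V_0$ and setting $\theta_i^\Thetapol \defby \Thetapol((1,i,0)) \cdot \theta_0^\Thetapol$, the Schr\"odinger identification sends $\theta_i^\Thetapol$ to a nonzero scalar multiple of $\delta_i$, so the family $(\theta_i^\Thetapol)_{i \in Z(n)}$ has $n^g = \dim \Gamma(A,\pol)$ entries and is linearly independent, hence a basis. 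The only real subtlety is the appeal to Mumford's irreducibility theorem together with Stone--von Neumann; once the Schr\"odinger model is available, the remainder is a direct computation.
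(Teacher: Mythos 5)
Your proof is correct and matches the standard Mumford argument---irreducibility of $\Gamma(A,\pol)$ as a $G(\pol)$-module together with the Stone--von Neumann realization of the unique Heisenberg representation as the Schr\"odinger model---which is exactly what the paper tacitly invokes from~\cite{MumfordOEDAV1} without reproducing a proof. The scalar bookkeeping (the averaging operator $\pi_\Thetapol$ is $n^g$ times a genuine idempotent, well-defined since separability of $\phi_\pol$ forces $p \nmid n$) and the one-dimensionality of the $\Thetapol(\{1\}\times\{0\}\times\dZn)$-invariants via $\delta_0$ in the Schr\"odinger model are both handled correctly.
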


\begin{definition}\label{def:thetanull}
    Let $(A, \pol, \Thetapol)$ be an abelian variety together with a theta structure of
    type $K(n)$. We define $\proj^{Z(n)}$ as $\projf(k[X_i, i \in \Zn])$ the projective
    space associated to the graded ring $k[X_i, i \in \Zn]$ (see \cite[Section II-2]{hartshorne2000algebraic}).

    The canonical basis of $\Gamma(A, \pol)$ defines an embedding 
    \begin{equation}
	e_\Thetapol: A  \rightarrow \proj^{Z(n)},
    \end{equation}
    such that for all $i \in \Zn$, $e_\Thetapol^*(X_i)= \theta_i^\Thetapol$. 
    Let $0_\Thetapol \in A(k)$ be the neutral point of $A$, the projective point
    $e_\Thetapol(0_\Thetapol)$
    with projective coordinates $(\theta_i^\Thetapol(0_\Thetapol))_{i \in \Zn} \in \proj^{Z(n)}$
is called the theta null point of $(A, \pol, \Thetapol)$.
\end{definition}
\begin{remark}
Let $\gi=([-1], \psii) \in G_0(\pol)$, we have $\gi.
    \theta_0^\Thetapol=\mu \theta_0^\Thetapol$ for $\mu \in \overk$. Indeed, by
    Proposition \ref{prop:base}, there exists
$s \in \Gamma(A, \pol)$ and $C \in \overk^*$ such that 
$\theta_0^\Thetapol= \sum_{\lambda \in
    \Thetapol(\{1\}\times\{0\}\times\dZn)} \lambda.s$. Thus
    we have:
    \begin{gather}
    \begin{aligned}
	\gi.\theta_0^\Thetapol &=
    C \gi.\sum_{\lambda \in
    \Thetapol(\{1\}\times\{0\}\times\dZn)} \lambda.s \\
    & =C \sum_{\lambda \in \Thetapol(\{1\}\times\{0\}\times\dZn)}
    \delta_{-1}(\lambda).(\gi s) \\ 
						     & =C \sum_{\lambda \in
						     \Thetapol(\{1\}\times\{0\}\times\dZn)}
						     \lambda^{-1}. (\gi s)\\
						     & = \mu \theta_0^{\Thetapol},
\end{aligned}
\end{gather}
for $\mu \in \overk^*$.
    As $\gi$ is involutive, $\mu=\pm 1$ and by
    taking $s$ invariant by $\gi$, we obtain that $\mu=1$. 
    Moreover, $\gi. \theta_i^\Thetapol = \gi.\Thetapol((1,i,0)).\theta_0^\Thetapol=
    \delta_{-1}(\Thetapol((1,i,0))).\theta_0^\Thetapol=\Thetapol((1,-i,0)).\theta_0^\Thetapol=\theta_{-i}^\Thetapol$.
We have obtained that for all $i \in \Zn$:
    \begin{equation}
\gi. \theta_i^\Thetapol =\theta_{-i}^\Thetapol.
    \end{equation}
\end{remark}
    In the case that $\pol$ is totally symmetric, the preceding equation simplify to
    the inverse formula (see also \cite[p. 331]{MumfordOEDAV1} for a less direct proof):
\begin{lemma}\label{sec1:leminv}
    Let $(A, \pol, \Thetapol)$ be a marked abelian variety, we have:
    \begin{equation}\label{eq:inverse}
	[-1]^* \theta_i^\Thetapol =\theta_{-i}^\Thetapol.
    \end{equation}
\end{lemma}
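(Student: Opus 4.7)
The plan is to combine the remark immediately preceding the lemma with the special feature of totally symmetric line bundles, namely that the normalized isomorphism $\psii : \pol \to [-1]^*\pol$ is the identity.

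First, recall that the action of $G_0(\pol)$ on sections is defined by $(\tau, \psi).s = \psi^{-1} \circ \tau^*(s)$. In particular, for $\gi = ([-1], \psii) \in G_0(\pol)$ we have
\[
\gi. s = \psii^{-1} \circ [-1]^*(s)
\]
for every $s \in \Gamma(A, \pol)$. Because $(A, \pol, \Thetapol)$ is by definition a \emph{marked} abelian variety, the line bundle $\pol$ is totally symmetric. As recalled just after the definition of totally symmetric, this forces $\psii$ to be the identity morphism $\pol \to [-1]^*\pol = \pol$. Hence the action of $\gi$ on $\Gamma(A, \pol)$ reduces to
\[
\gi.s = [-1]^*(s).
\]

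Next, I invoke the computation carried out in the preceding remark, which is valid for an arbitrary symmetric $\pol$ (not necessarily totally symmetric): it establishes, using that $\pi_\Thetapol$ can be taken with an invariant generator and that $\delta_{-1}\bigl(\Thetapol((1,i,0))\bigr) = \Thetapol((1,-i,0))$, that
\[
\gi . \theta_i^\Thetapol = \theta_{-i}^\Thetapol
\]
for every $i \in Z(n)$. Substituting the identification of $\gi$'s action from the previous paragraph, this becomes $[-1]^*\theta_i^\Thetapol = \theta_{-i}^\Thetapol$, which is the desired formula.

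There is essentially no obstacle here beyond bookkeeping: the entire content of the lemma is the passage from the general symmetric identity $\gi.\theta_i^\Thetapol = \theta_{-i}^\Thetapol$ of the remark to the cleaner pullback formula, and this passage is precisely where the total symmetry hypothesis is used to kill the otherwise present twist by $\psii^{-1}$. The only point that deserves care is ensuring the normalization $([-1], \psii) \circ ([-1], \psii) = 1$ is compatible with taking $\psii = \mathrm{id}$, which it is since $[-1]^*\pol = \pol$ on the nose for totally symmetric $\pol$.
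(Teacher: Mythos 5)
Your proof is correct and is precisely the argument the paper intends: the remark preceding the lemma establishes $\gi.\theta_i^\Thetapol = \theta_{-i}^\Thetapol$ for general symmetric $\pol$, and you then use that a marked abelian variety has a totally symmetric $\pol$, for which $\psii = \mathrm{id}$, so the action of $\gi$ is just $[-1]^*$. This matches the paper's treatment essentially word for word.
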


Using the theta structure and the action of $\Gpol$ on $\theta_i^\Thetapol$, we get
an action of $G(n)$ on $\theta_i^\Thetapol$. For $(\alpha,x,y) \in G(n)$, it is
given by:
\begin{gather}
    \begin{aligned}\label{eq:trans}
	(\alpha,x,y).\theta_i^\Thetapol & = y(-x) (\alpha,x,0) (1, 0, y)(1, i, 0).
	\theta_0^\Thetapol\\ & =y(-x) y(-i) (\alpha, i+x, 0) (1,0,y).\theta_0^\Thetapol \\
			     & =\alpha y(-i-x) \theta_{i+x}^\Thetapol.
\end{aligned}
\end{gather}
It is clear that by acting by $G(n)$ on $(\theta_i^\Thetapol(0_\Thetapol))_{i \in \Zn}$, we recover all points of
$A[n]$.

\begin{definition}\label{def:affinepoint}
Recall that $\pi_{\proj^{Z(n)}}: \Aff^{Z(n)}-\{0\}
\rightarrow \proj^{Z(n)}$ is the canonical projection. If we identify $x\in A(\overk)$ with
    $e_\Thetapol(x)$, we say that $\tildexthetapol$ is an affine lift of $x$ if
    $\tildexthetapol \in
    \Aff^{Z(n)}(\overk)$ and
$\pi_{\proj^{Z(n)}}(\tildexthetapol)=x$. When no confusion is possible, we will abbreviate
    $\tildexthetapol$ to $\tildex$. If $\tildexthetapol$ is an affine point, for $i \in
    \Zn$, we will denote by
    $(\tildexthetapol)_i$ its $i^{th}$-coordinate.
\end{definition}

Let $(A, \pol,  \Thetapol)$ be a marked abelian variety. Let $\stsheaf_A$ be the
structural sheaf of $A$ and $x \in A(\overk)$ be a geometric point. A rigidification of
$\pol$ in $x$ is a choice of an isomorphism $\rho^\pol_x: \pol(x)=\pol \otimes
\stsheaf_A(x)
\rightarrow \stsheaf_A(x)$. Any such morphism can be obtained by taking local
trivialisation $\pol_x
\rightarrow \stsheaf_{A,x}$ and doing a base change by the canonical evaluation morphism
$\stsheaf_{A,x} \rightarrow \stsheaf_{A}(x)$. Thus it can be seen as a way to evaluate $s
\in \pol_x$ in $x$ to obtain $\rho^\pol_x(s) \in \overk$. A morphism $\psi: (\pol, \rho^\pol_x) \rightarrow 
(\bpol, \rho^\bpol_x)$ of rigidified line bundles is a morphism $\psi: \pol \rightarrow
\bpol$ such that 
$\rho^\bpol_x \circ \psi(x) \circ (\rho^\pol_x)^{-1}$ is the identity of $\stsheaf_A(x)$.
Note that such a morphism, if it exists, is unique, although the set of morphisms $\psi:
\pol \rightarrow \bpol$ is a principal homogeneous space over $\overk^*$.
If
${\rho'}^\pol_x$ is any other rigidification of $\pol$ in $x$ then there exists $\lambda \in
\overk$ such that ${\rho'}^\pol_x=\lambda \rho^\pol_x$.

\begin{definition}\label{def:marked2} 
    The data of $(A, \pol, \Thetapol, \theta_0^\Thetapol, \rho^\pol_{x})$, a marked abelian variety of type
    $K(n)$ together with:
    \begin{itemize}
	\item a generator $\theta_0^\Thetapol$ of the image subspace of endomorphism
	    $\pi_\Thetapol: \Gamma(A,\pol) \rightarrow \Gamma(A,\pol)$ of Proposition
	    \ref{prop:base};
	\item a rigidification $\rho^\pol_x$ of $\pol$ in  $x \in A(k)$ 
    \end{itemize}
    is called a marked rigidified abelian variety or more simply a rigidified abelian variety (of type $K(n)$).
\end{definition}

\begin{remark}\label{sec:rmequiv}
    From $(A, \pol, \Thetapol, \theta_0^\Thetapol, \rho^\pol_{x})$, following
    Proposition \ref{prop:base}, one recovers the unique basis $({\theta_i}^\Thetapol)_{i \in \Zn}$ of
    $\Gamma(A,\pol)$ defined by the theta structure and $\theta_0^\Thetapol$ and then the
    affine lift $(\rho^\pol_{x}(\theta^\Thetapol_i(x)))_{i \in \Zn}$ of $x
    \in A(\overk)$. Reciprocally, the data of the affine lift
    $\tildex$ of $x \in A(\overk)$ is equivalent to the data of a rigidification
    $\rho^\pol_x$ once we have fixed $\theta_0^\Thetapol$ a generator of the image of
    $\pi_\Thetapol$ (defined in Proposition \ref{prop:base}). Note however that if $(A, \pol, \Thetapol, \theta_0^\Thetapol,
    \rho^\pol_{x})$ gives the affine lift $\tildex$, then any other rigidified
    abelian variety of the form $(A, \pol,
    \Thetapol, \lambda.\theta_0^\Thetapol,
    \frac1\lambda \rho^\pol_x)$ for $\lambda \in \overk^*$ will give the same
    affine lift $\tildex$. Two rigidified abelian
    varieties in $x$ having the same affine lift $\tildex$ are called equivalent.
\end{remark}

Let $\rho^\pol_x: \pol(x)  \rightarrow \stsheaf_A(x) \subset \overk$ be a rigidification of $\pol$ in $x \in
A(\overk)$. For $(\tau_y, \psi_y) \in \Gpol$, we obtain a rigidification
$\rho^\pol_{x+y}$ of $\pol$ in $x+y$:
\begin{equation}\label{eq:rigid}
  \begin{tikzpicture}
      \matrix [column sep={1cm}, row sep={1cm}]
    { 
      \node(a){$\pol(x+y)$}; & \node(b){$\pol(x)$}; & \node(c){$\overk$.}; \\
     };
     \draw [->] (a) -- (b) node[above, midway]{$\psi_y^{-1}$};
      \draw [->] (b) -- (c) node[above, midway]{$\rho^\pol_x$};
  \end{tikzpicture}
\end{equation}
In particular, we have an action of $\Gpol$ on affine lifts of geometric points of $A$:
if $\tildex=(\rho^\pol_x(s_i))_{i \in \Zn}$, for
$(s_i)_{i\in \Zn}$ a basis of the $k$-vector space $\Gamma(A,\pol)$,
is an affine lift of $x \in
A(\overk)$ and
$(\tau_y, \psi_y) \in \Gpol$, we get an affine lift of $x+y$:
\begin{equation}\label{eq:actthetagroup}
\tildexpy=
(\rho^\pol_x(\psi_y^{-1}(\tau^*_y(s_i))))_{i \in \Zn}.
\end{equation}

We can gather all these remarks in the following Lemma:
\begin{lemma}\label{sec1:lemact}
    Let $\rho^\pol_{x}: \pol_x  \rightarrow \stsheaf_A(x) \subset \overk$ be a
    rigidification of $\pol$ in $x \in
    A(\overk)$.
For $g_y=(\tau_y, \psi_y) \in \Gpol$, the Diagram (\ref{eq:rigid}) gives a
    rigidification 
    $\rho^\pol_{x+y}$ of $\pol$ in $x+y$ such that for all $s \in \Gamma(A,\pol)$:
    \begin{equation}
     \rho^\pol_{x+y}(s)=\rho^\pol_x(g_y.s).
    \end{equation}
\end{lemma}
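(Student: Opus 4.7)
The plan is to unpack the definitions on both sides of the claimed equality and verify they coincide via a short diagram chase, with no substantial difficulty expected.

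First, I would make explicit what the rigidification $\rho^\pol_{x+y}$ is. From diagram (\ref{eq:rigid}), it is the composition $\rho^\pol_x \circ \psi_y^{-1}|_{x+y}$, where one uses the canonical identification $(\tau_y^*\pol)(x) = \pol(x+y)$ to view the fiber of $\psi_y^{-1} : \tau_y^*\pol \to \pol$ at $x$ as an isomorphism $\pol(x+y) \to \pol(x)$. Since $\rho^\pol_x$ is an isomorphism $\pol(x) \to \overk$, the composition is indeed a rigidification of $\pol$ at $x+y$, which takes care of the first part of the claim.

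Next, I would evaluate both sides of the desired formula on a global section $s \in \Gamma(A,\pol)$. The shorthand $\rho^\pol_{x+y}(s)$ stands for $\rho^\pol_{x+y}(s(x+y))$, which by construction equals $\rho^\pol_x(\psi_y^{-1}|_x(s(x+y)))$. On the other hand, by the definition of the $G(\pol)$-action on sections recalled just before Proposition \ref{prop:base}, one has $g_y . s = \psi_y^{-1}(\tau_y^* s) \in \Gamma(A,\pol)$. Evaluating at $x$ and using $(\tau_y^* s)(x) = s(x+y)$ viewed in $(\tau_y^*\pol)(x) = \pol(x+y)$, one obtains $(g_y . s)(x) = \psi_y^{-1}|_x(s(x+y))$, and therefore $\rho^\pol_x(g_y . s) = \rho^\pol_x(\psi_y^{-1}|_x(s(x+y)))$. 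The two sides coincide, which is exactly the asserted identity.

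The only delicate point is notational bookkeeping: one must carefully distinguish fibers, stalks, and global sections, consistently interpret the evaluation shorthand $\rho^\pol_x(s)$ as $\rho^\pol_x(s(x))$, and keep track of the identification $(\tau_y^*\pol)(x) = \pol(x+y)$ under which $\psi_y^{-1}$ becomes a map between the relevant fibers. Once these identifications are fixed the verification is a one-line chase through diagram (\ref{eq:rigid}), and I do not anticipate any real obstacle beyond taking care of these conventions.
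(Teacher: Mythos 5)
Your proof is correct and matches what the paper does: the lemma is presented there as a gathering of the remarks immediately preceding it (the construction of $\rho^\pol_{x+y}$ via diagram \eqref{eq:rigid} and the formula \eqref{eq:actthetagroup}), with no separate argument, and your diagram chase is exactly the definitional unwinding that is being left implicit. The care you take with the identification $(\tau_y^*\pol)(x)\simeq\pol(x+y)$ and with the shorthand $\rho^\pol_x(s)=\rho^\pol_x(s(x))$ is appropriate and resolves the only potential ambiguity.
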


\begin{remark}\label{def:descendrigid}
Suppose now that there is an isogeny $f: A \rightarrow B$. Let $\bpol$ be an ample line
bundle on $B$ such that $\pol = f^*(\bpol)$. Let $x \in A(\overk)$ and choose
$\rho^\pol_{x}:
\pol_x \rightarrow \overk$ a rigidification. Let $x_0 =f(x)$.
Note that as $\pol(x) = f^*(\bpol(x_0))$,
$\rho^\pol_{x}$ defines a rigidification that we denote by $f(\rho^\pol_{x})$ of $\bpol$
in $x_0$ such that
$f(\rho^\pol_{x})(s)=\rho^\pol_x(f^*(s))$ for all $s \in \bpol_{x_0}$. 
With this definition, we note that Diagram (\ref{eq:rigid}) gives an action of $\Gpol$
not only on affine lifts of $A$, but also on affine lifts of $B$.

    For $x \in A(\overk)$ and $y \in \Kpol$, there is a unique $g_y=(\tau_y, \psi_y) \in \Gpol$ such
    that $g_y.\tildex = \tildexpy$. This means that computing with affine lifts on $B$
    allows to fix elements of $\Gpol$. This idea, which was present in \cite{DRmodular}, is also
    one of the main technique which will be used in the present paper.
\end{remark}

From the knowledge of $(A, \pol, \Thetapol, \theta_0^\Thetapol, \rho^\pol_{0_\Thetapol})$ a rigidified abelian variety, we get
$\tildenullpol$ its affine theta null point but also for all $i \in \Zn$ (resp. $j \in \dZn$)
the affine lift $\Thetapol((1, i,0)).\tildex$ (resp.
$\Thetapol((1,0,j)).\tildex$) over
$\Thetapolbar((i,0)).x$ (resp. $\Thetapolbar((0,j)).x$).

In the following, if
$s \in \Gamma(A,\pol)$ and 
$x_1, \ldots, x_k$ are points of $U$ an open affine subspace of $A$, we denote by $s(x_1), \ldots, s(x_k)$ the evaluation
of $s$ in $x_1, \ldots, x_k$ obtained by choosing a local trivialisation $\pol | U \simeq
\stsheaf_A|U$ where $\stsheaf_A$ is the structural sheaf of $A$.
In order to state the Riemann equations, we pose the following Definition.
\begin{definition}\label{sec1:defrieman}
    Let $G$ be a group, we say that points $(x_1, \ldots, x_1; y_1, \ldots, y_4)$ are in Riemann position if
    there exists $z \in G$ such that $-x_1 + x_2 + x_3 + x_4 = 2z$ and $y_1 = x_1+z$,
    $y_2=x_2-z$, $y_3=x_3-z$, $y_4=x_4-z$.
\end{definition}
\begin{theorem}{\cite[Theorem 1]{DRpairing}}\label{sec:thriemann}
    Let $(A, \pol, \Thetapol)$ be a marked abelian variety of type $K(n)$ with $n$ a
    positive even integer. We consider $Z(2)$ as a subgroup of $Z(n)$ via $\mu_{2,n}$. Let
    $(x_1, \ldots, x_4; x_5, \ldots, x_8)$ be elements of
    $A(\overk)$ (resp. let $(i_1, \ldots, i_4; i_5, \ldots, i_8)$ be elements of $Z(n)$)
    in Riemann position. For any $\chi \in \dZ(2)$, $i,j \in \Zn$, $x,y \in A(\overk)$,
    we set:
    \begin{align*}
	L(\Thetapol,\chi,i,j,x,y) &= \sum_{\eta \in Z(2)} \chi(\eta) \theta_{i +
	\eta}^\Thetapol(x) \theta_{j + \eta}^\Thetapol(y).
    \end{align*}
Then we have:
    \begin{equation}\label{eq:riemaneq}
	L(\Thetapol,\chi,i_1, i_2, x_1, x_2) L(\Thetapol, \chi, i_3, i_4, x_3, x_4)
	=L(\Thetapol,\chi,i_5, i_6, x_5, x_6)
	L(\Thetapol, \chi, i_7, i_8, x_7, x_8).
    \end{equation}
    By summing (\ref{eq:riemaneq}) over all $\chi \in \dZ(2)$, we obtain another form
    of the Riemann relations:
    \begin{equation}\label{eq:riemaneq1}
	\sum_{\eta \in Z(2)} \prod_{j=1}^4 \theta_{i_j+\eta}^\Thetapol(x_j)=
\sum_{\eta \in Z(2)} \prod_{j=5}^8 \theta_{i_j+\eta}^\Thetapol(x_j).
    \end{equation}
\end{theorem}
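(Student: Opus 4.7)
The plan is to reduce (\ref{eq:riemaneq}) to the symmetric form (\ref{eq:riemaneq1}), which I would prove first by the classical algebraic isogeny argument, and then derive (\ref{eq:riemaneq}) from it by a short combinatorial manipulation exploiting that $\chi(\eta)^2=1$ for $\eta\in Z(2)$.

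For (\ref{eq:riemaneq1}), I would use the Hadamard isogeny $\Sigma:A^4\to A^4$ defined by
\[
\Sigma(w_1,w_2,w_3,w_4)=(w_1+w_2+w_3+w_4,\; w_1+w_2-w_3-w_4,\; w_1-w_2+w_3-w_4,\; w_1-w_2-w_3+w_4).
\]
Its defining integer matrix squares to $4 I_4$, so $\Sigma\circ\Sigma=[4]_{A^4}$. A direct calculation shows that the Riemann position hypothesis translates exactly into $\Sigma(x_5,x_6,x_7,x_8)=(2x_1,2x_2,2x_3,2x_4)=[2]_{A^4}(x_1,x_2,x_3,x_4)$, i.e., the two tuples are $\Sigma$-related through $[2]$. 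By the theorem of the square applied to the symmetric line bundle $\pol$, one obtains a canonical isomorphism $\Sigma^*(\pol^{\boxtimes 4})\simeq(\pol^2)^{\boxtimes 4}$, and Mumford's isogeny theorem expresses the pullback of a product section $\theta_{i_1}^{\Thetapol}\boxtimes\cdots\boxtimes\theta_{i_4}^{\Thetapol}$ in the canonical level-$2n$ basis as a sum over $\ker\Sigma$. The index Riemann position ($-i_1+i_2+i_3+i_4=2j$ with $i_5=i_1+j,\dots,i_8=i_4-j$) is exactly the compatibility needed so that the $Z(2)$-averaged section $\sum_{\eta\in Z(2)}\prod_j\theta_{i_j+\eta}^{\Thetapol}$ factors through $\Sigma$ and thus takes the same value at the two $\Sigma$-related tuples, yielding (\ref{eq:riemaneq1}).

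For (\ref{eq:riemaneq}), expand
\begin{align*}
&L(\Thetapol,\chi,i_1,i_2,x_1,x_2)\,L(\Thetapol,\chi,i_3,i_4,x_3,x_4)\\
&\qquad=\sum_{\eta_1,\eta_2\in Z(2)}\chi(\eta_1+\eta_2)\,\theta_{i_1+\eta_1}^{\Thetapol}(x_1)\theta_{i_2+\eta_1}^{\Thetapol}(x_2)\theta_{i_3+\eta_2}^{\Thetapol}(x_3)\theta_{i_4+\eta_2}^{\Thetapol}(x_4),
\end{align*}
substitute $\eta=\eta_2$ and $\eta'=\eta_1-\eta_2$, and use $\chi(\eta)^2=1$ (since $2\eta=0$ in $Z(2)$) to rewrite this as $\sum_{\eta'\in Z(2)}\chi(\eta')\,S(\eta')$, where
\[
S(\eta')=\sum_{\eta\in Z(2)}\theta_{i_1+\eta'+\eta}^{\Thetapol}(x_1)\theta_{i_2+\eta'+\eta}^{\Thetapol}(x_2)\theta_{i_3+\eta}^{\Thetapol}(x_3)\theta_{i_4+\eta}^{\Thetapol}(x_4).
\]
The index Riemann condition $-i_1+i_2+i_3+i_4=2j$ is invariant under the shift $\eta'$ on the first two indices, so (\ref{eq:riemaneq1}) applies to $S(\eta')$ with partner tuple $(i_5+\eta',i_6+\eta',i_7,i_8)$ on the $x_5,\dots,x_8$ side. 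A second change of variables $\eta''=\eta'+\eta$ then recombines the sum into $L(\Thetapol,\chi,i_5,i_6,x_5,x_6)\,L(\Thetapol,\chi,i_7,i_8,x_7,x_8)$, as claimed.

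The main obstacle lies in Step~1: setting up the isomorphism $\Sigma^*(\pol^{\boxtimes 4})\simeq(\pol^2)^{\boxtimes 4}$ rigorously (via the theorem of the square and an appropriate rigidification at the origin of $A^4$), and identifying the specific $\ker\Sigma$-orbit of level-$2n$ basis sections that recombines to yield the $Z(2)$-averaged quantity appearing in (\ref{eq:riemaneq1}). Once (\ref{eq:riemaneq1}) is established, the derivation of the $\chi$-twisted form (\ref{eq:riemaneq}) is essentially formal index-chasing.
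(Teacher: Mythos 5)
The paper does not actually prove this statement; it is cited from [DRpairing, Theorem 1], and the surrounding remark indicates that it follows from the duplication formula (the isogeny $\xi:A^2\to A^2$, $(a,b)\mapsto(a+b,a-b)$, with $\xi^*(\pol\boxtimes\pol)\simeq\pol^2\boxtimes\pol^2$). Your strategy is in the same family --- Mumford's isogeny theorem plus index bookkeeping --- but Step~1 contains a concrete error in the choice of isogeny, and the descent mechanism you invoke is not the right one.

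The claimed isomorphism $\Sigma^*(\pol^{\boxtimes 4})\simeq(\pol^2)^{\boxtimes 4}$ is false. For the full $4\times 4$ Hadamard matrix $M$ that defines $\Sigma$, one has $M=M^{\mathrm t}$ and $M^2=4I_4$, so $M^{\mathrm t}M=4I_4$ and therefore $\Sigma^*(\pol^{\boxtimes 4})\simeq(\pol^4)^{\boxtimes 4}$. Thus the isogeny theorem applied to $\Sigma$ lands you in the level-$4n$ basis, not the level-$2n$ basis, and the proposed descent does not close. Moreover, "the $Z(2)$-averaged section factors through $\Sigma$ and takes the same value at the two $\Sigma$-related tuples" is not a valid inference: a section factoring through $\Sigma$ as $s=g\circ\Sigma$ gives $s(x_{5:8})=g(\Sigma(x_{5:8}))=g([2]x_{1:4})$ and $s(x_{1:4})=g(\Sigma(x_{1:4}))=g([2]x_{5:8})$, which are values of $g$ at two \emph{different} points, not automatically equal. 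The isogeny that actually does the work is the block-diagonal one $\xi\times\xi$ (matrix $\mathrm{diag}\bigl(\begin{smallmatrix}1&1\\1&-1\end{smallmatrix},\begin{smallmatrix}1&1\\1&-1\end{smallmatrix}\bigr)$, with $M^{\mathrm t}M=2I_4$, giving the isomorphism you wanted). The Riemann position hypothesis is equivalent to the statement that $(\xi\times\xi)(x_1,\ldots,x_4)$ and $(\xi\times\xi)(x_5,\ldots,x_8)$ differ by the permutation swapping the second and third factors (e.g.\ $x_5+x_6=x_1+x_2$, $x_5-x_6=x_3+x_4$, etc.), so both sides of (\ref{eq:riemaneq1}) are expressed by the isogeny theorem as the same sum of products of level-$2n$ theta values, and equality follows. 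This is the duplication-formula route the paper's remark alludes to.

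Your Step~2 --- deriving (\ref{eq:riemaneq}) from (\ref{eq:riemaneq1}) by expanding the double sum, changing variables $\eta'=\eta_1-\eta_2$, $\eta=\eta_2$ (using $\chi(2\eta)=1$), applying (\ref{eq:riemaneq1}) at the shifted index tuple $(i_1+\eta',i_2+\eta',i_3,i_4;i_5+\eta',i_6+\eta',i_7,i_8)$, and recombining --- is correct and a nice observation; it shows the two forms of the Riemann relations are mutually equivalent, not merely that (\ref{eq:riemaneq1}) is a specialization of (\ref{eq:riemaneq}). Note, though, that the theorem as stated goes the opposite way (deducing (\ref{eq:riemaneq1}) by summing (\ref{eq:riemaneq}) over $\chi$), and the standard proof establishes (\ref{eq:riemaneq}) directly from the duplication formula without needing your back-and-forth.
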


\begin{remark}
    Riemann relations are an easy consequence of the duplication formulas (see \cite{MumfordOEDAV1}),
    as explained in \cite[Theorem 1]{DRpairing}. There are several formulations of Riemann
    relations in the literature. Our version is a variation of \cite{MumfordOEDAV1}
    Equation (C') but the former is only valid for theta null values since it uses
    symmetry relations. Our version is exactly \cite{MumfordOEDAV1} (C) or Theorem 1 of
    \cite{DRpairing}.
\end{remark}

Denote by
$\Modu_n$ the locus of theta null points associated to the marked abelian varieties $(A,
\pol, \Thetapol)$ of a fixed type $K(n)$ for $n$ a positive integer. By setting $x_i =0$ in Riemann
relations, we obtain equations satisfied by $\Modu_n$. Because of Lemma \ref{sec1:leminv}, theta null points also
verify the symmetry relations:
\begin{proposition}\label{sec:propsym}
For all $i \in \Zn$, we have $\theta_i^\Thetapol(0_\Thetapol)=\theta_{-i}^\Thetapol(0_\Thetapol)$.
\end{proposition}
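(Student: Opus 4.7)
The plan is to derive the symmetry relation for theta null values as an immediate consequence of Lemma \ref{sec1:leminv}, which states that $[-1]^*\theta_i^\Thetapol = \theta_{-i}^\Thetapol$ for all $i \in Z(n)$ when $\pol$ is totally symmetric. Since a marked abelian variety comes equipped with a totally symmetric line bundle (by the definition recalled just before Proposition \ref{prop:base}), this lemma is directly applicable.

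The key observation is that the neutral point $0_\Thetapol$ is fixed by the $[-1]$ map: indeed $[-1](0_\Thetapol) = -0_\Thetapol = 0_\Thetapol$ since $0_\Thetapol$ is the identity of the group law of $A$. Therefore, evaluating the pullback $[-1]^*\theta_i^\Thetapol$ at $0_\Thetapol$, one finds, after fixing a local trivialisation of $\pol$ around $0_\Thetapol$ compatible with the isomorphism $[-1]^*(\pol) \simeq \pol$ used to define the pullback:
\begin{equation*}
([-1]^* \theta_i^\Thetapol)(0_\Thetapol) = \theta_i^\Thetapol([-1](0_\Thetapol)) = \theta_i^\Thetapol(0_\Thetapol).
\end{equation*}
Combining this with the identity $[-1]^*\theta_i^\Thetapol = \theta_{-i}^\Thetapol$ of Lemma \ref{sec1:leminv} then yields $\theta_{-i}^\Thetapol(0_\Thetapol) = \theta_i^\Thetapol(0_\Thetapol)$, which is precisely the claim.

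The only subtle point, which is not really an obstacle but deserves a word, is the choice of trivialisation: the equality $[-1]^*\theta_i^\Thetapol = \theta_{-i}^\Thetapol$ holds as an equality of global sections of $\pol$ under the identification provided by the normalized isomorphism $\psi_{-1}: \pol \to [-1]^*\pol$ (with $([-1],\psi_{-1})\circ ([-1],\psi_{-1})=1$). Since the rigidification at $0_\Thetapol$ is, up to the normalization of $\psi_{-1}$, invariant under this identification (as $[-1]$ fixes $0_\Thetapol$ and the involutivity condition forces the induced automorphism of the fibre $\pol(0_\Thetapol)$ to be $\pm 1$, which produces only a well-determined sign that matches the totally symmetric normalization), the evaluation at $0_\Thetapol$ is unambiguous. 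Consequently the proposition follows by a one-line specialization of Lemma \ref{sec1:leminv}.
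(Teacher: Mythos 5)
Your proof is correct and is precisely what the paper intends: the symmetry of theta null values is stated immediately after Lemma \ref{sec1:leminv} with the remark ``Because of Lemma \ref{sec1:leminv}, theta null points also verify the symmetry relations,'' i.e.\ the same one-line specialization at the $[-1]$-fixed point $0_\Thetapol$. One minor simplification you could make: since $\pol$ is totally symmetric, the paper notes that $\psii$ is the \emph{identity} morphism, so the sign discussion in your last paragraph collapses trivially.
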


Denote by $\bModu_n$ the closed subvariety of $\proj^{Z(n)}$ given in the projective
coordinates $(\theta_i^\Thetapol(0_\Thetapol))_{i \in \Zn}$ by Riemann and symmetry relations. By
\cite{MumfordOEDAV2,kempf1989linear}:
\begin{theorem}\label{theo:moduli}
If $4|n$, $\Modu_n$ is a
quasi-projective variety which is an open dense subset of $\bModu_n$.
\end{theorem}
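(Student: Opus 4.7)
The plan breaks into two directions. The inclusion $\Modu_n \subseteq \bModu_n$ is immediate: specializing Theorem~\ref{sec:thriemann} at all $x_i = 0$ shows that theta null points satisfy the Riemann relations, and Proposition~\ref{sec:propsym} gives the symmetry relations. The substantive content is the reverse inclusion on an open dense subset, which I would establish following Mumford~\cite{MumfordOEDAV2} and Kempf~\cite{kempf1989linear}.

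Given a point $(a_i)_{i \in Z(n)} \in \bModu_n$, the strategy is to reconstruct a marked abelian variety of type $K(n)$ having $(a_i)$ as its theta null point. Let $G(n)$ act on $\proj^{Z(n)}$ by the translation formula (\ref{eq:trans}), and let $X$ denote the Zariski closure of the $G(n)$-orbit of $(a_i)$. Using the Riemann relations to rewrite products of coordinates, one shows that on a suitable open locus $U$ of $\bModu_n$: (i) the translation action of $Z(n) \times \dZ(n)$ on $X$ extends to an abelian group scheme structure on $X$ with identity $(a_i)$; (ii) $X$ is smooth of dimension $g$; (iii) $\stsheaf_X(1)$ is ample, totally symmetric, of type $K(n)$; and (iv) the Heisenberg action on coordinates packages into a symmetric theta structure whose theta null point is $(a_i)$. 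This identifies $U$ with a subset of $\Modu_n$. Openness of $\Modu_n$ in $\bModu_n$ then follows because smoothness, correct dimension, and existence of a group law are open conditions in flat families; density follows from a dimension count, $\bModu_n$ having the expected dimension of the moduli of marked abelian varieties of type $K(n)$.

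The hypothesis $4 \mid n$ is essential in this reconstruction: Mumford's duplication formula---the engine that lets one recover quadratic monomials in the coordinates and hence check projective normality of $X$---requires a level divisible by $4$ to avoid a quadratic ambiguity coming from the two symmetric lifts of a $2$-torsion point (cf.\ Lemma~\ref{intro:lemma1}). The main obstacle in a self-contained proof is precisely this reconstruction step, which rests on Mumford's equations (A), (B), (C) of \cite{MumfordOEDAV1} and their systematic use in \cite{MumfordOEDAV2}; in practice one cites these results rather than reproving them, which is what I would do here as well.
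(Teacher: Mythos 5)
The paper does not prove this theorem; it is stated as a citation to \cite{MumfordOEDAV2,kempf1989linear}, and your proposal is in effect an exposition of what those references establish, which you yourself say you would cite rather than reprove. So there is no conflict with the paper's approach, and your two-part decomposition (the easy inclusion $\Modu_n \subseteq \bModu_n$ via Theorem~\ref{sec:thriemann} at $x_i=0$ and Proposition~\ref{sec:propsym}, plus the substantive reconstruction step from Mumford--Kempf) is the right way to think about it.

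One small caveat on your parenthetical about why $4\mid n$ is needed: the mechanism is less about ``avoiding a quadratic ambiguity from the two symmetric lifts of a $2$-torsion point'' and more about projective normality and the generation of the homogeneous ideal of $e_{\Thetapol}(A)$ in degree $2$. Mumford's result that Riemann relations give a \emph{complete} set of equations (the theorem quoted just after \ref{theo:moduli} in the paper) requires $4\mid n$ so that $\pol$ is generated by its global sections and the multiplication maps $\Gamma(\pol)\otimes\Gamma(\pol)\to\Gamma(\pol^2)$ are surjective; one also needs $2\mid n$ so that the line bundle can be taken totally symmetric and the duplication formula applies in the form that relates levels $n$ and $2n$. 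Lemma~\ref{intro:lemma1} is about the existence of exactly two symmetric elements above a point of $K(\pol)$, which is a separate issue (it underlies the symmetric-compatibility analysis elsewhere in the paper, not the moduli-space identification). Otherwise the outline is sound.
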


A point $x \in \bModu_n(\overk)$ gives a theta null point $(\theta_i(0))_{i \in \Zn}$
and Theorem \ref{sec:thriemann} gives a set of homogeneous equations satisfied by the
variety $e_\Thetapol(A)$. Indeed, let $(i_1, \ldots, i_4; i_5, \ldots, i_8)$ be elements
of $Z(n)$ in Riemann position, we have the relation:
    \begin{equation}\label{eq:riemaneqv}
	L(\Thetapol, \chi,i_1, i_2, x, x) L(\Thetapol, \chi, i_3, i_4, 0, 0) =
	L(\Thetapol,\chi,i_5, i_6, x, x)
	L(\Thetapol, \chi, i_7, i_8, 0, 0).
    \end{equation}
    We have the following result of Mumford \cite{MumfordOEDAV1}:
\begin{theorem}{\cite{MumfordOEDAV1}}
    If $4|n$ the relations (\ref{eq:riemaneqv}) is a complete set of homogeneous equations for $e_\Thetapol(A)$.
\end{theorem}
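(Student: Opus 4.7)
The plan is to follow Mumford's original strategy from \cite{MumfordOEDAV1}. The argument breaks into three stages: projective normality of $e_\Thetapol(A)$, which reduces the problem to the degree-$2$ piece of the ideal; the geometric origin of the Riemann relations, which shows they belong to $I_2(e_\Thetapol(A))$; and a dimension count together with a Heisenberg-equivariance argument, which shows they span it.

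First, I would show that the embedding $e_\Thetapol$ is projectively normal when $4 \mid n$. Since $\pol$ is ample of type $K(n)$ with $n \geq 4$, it is very ample (this is already used implicitly in Definition \ref{def:thetanull}), and the natural maps $\Gamma(A,\pol) \otimes \Gamma(A,\pol^d) \to \Gamma(A,\pol^{d+1})$ are surjective for all $d \geq 1$. This classical fact for abelian varieties implies that the homogeneous ideal $I(e_\Thetapol(A))$ is generated by its degree-$2$ component, so it suffices to identify that component. The relations (\ref{eq:riemaneqv}) themselves arise by specializing the general Riemann relations of Theorem \ref{sec:thriemann} to points in Riemann position of the form $(x,x,0,0;x,x,0,0)$ (noting that $-x+x+0+0 = 0 = 2 \cdot 0$), and so by construction each such relation vanishes on $e_\Thetapol(A)$ and lies in $I_2(e_\Thetapol(A))$. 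Geometrically, these equations originate from the isogeny $\xi : A \times A \to A \times A$, $(x,y) \mapsto (x+y, x-y)$ with kernel the diagonal $A[2]$; decomposing $\xi^*(\pol \boxtimes \pol) = \pol^2 \boxtimes \pol^2$ under characters $\chi \in \dZ(2)$ yields exactly the bilinear expressions $L(\Thetapol,\chi,i,j,x,y)$.

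The core of the proof, and the main obstacle, is showing that the relations (\ref{eq:riemaneqv}) are complete, i.e. span all of $I_2(e_\Thetapol(A))$. The approach is a dimension count combined with a Heisenberg-equivariant analysis: $\dim \Gamma(A,\pol) = n^g$ and $\dim \Gamma(A,\pol^2) = (2n)^g$, so the kernel of $\mathrm{Sym}^2 \Gamma(A,\pol) \to \Gamma(A,\pol^2)$ has a predictable codimension. The group $G(n)$ acts on both the ambient $\mathrm{Sym}^2$ and on the subspace spanned by the relations (\ref{eq:riemaneqv}); indexing the relations by tuples $(i_1,\dots,i_8)$ in Riemann position modulo the $G(n)$-action organizes them into orbits and reduces the completeness assertion to a finite combinatorial verification. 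The hypothesis $4 \mid n$ is essential at this point, since it ensures $Z(2) \subset Z(n)$ sits inside the theta structure in a way compatible with the duplication formula, so that the characters $\chi \in \dZ(2)$ used in (\ref{eq:riemaneqv}) yield independent relations. Once the codimension of the span of the Riemann relations matches that of the kernel of the multiplication map, completeness in degree $2$, and hence the theorem, follow.
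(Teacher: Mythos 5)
The paper does not prove this theorem; it is quoted directly from Mumford's \cite{MumfordOEDAV1} with no argument supplied, so there is no in-paper proof against which to check the details. Your outline does track Mumford's strategy: projective normality of the embedding reduces the problem to the degree-$2$ piece of the ideal, the relations (\ref{eq:riemaneqv}) manifestly lie in $I_2(e_\Thetapol(A))$ because they are Theorem \ref{sec:thriemann} specialized to the Riemann-position tuple $(x,x,0,0;x,x,0,0)$, and completeness is to be established by an equivariant identification of the span of the Riemann quadratics with the kernel of $\Sym^2\Gamma(A,\pol)\to\Gamma(A,\pol^2)$.

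The gap is that the last, and only substantial, step is entirely asserted rather than carried out. Saying that ``a dimension count combined with a Heisenberg-equivariant analysis organizes the relations into orbits and reduces completeness to a finite combinatorial verification'' is a correct description of the task, not a solution of it: Mumford's proof requires an explicit $G(n)$-isotypic decomposition of the kernel of the multiplication map, a matching of each isotypic component against a Riemann quadratic, and an argument that none of the relations degenerates for a given theta null point. None of this appears. Your explanation of why $4\mid n$ is needed (that $Z(2)\subset Z(n)$ sits ``compatibly'' in the theta structure) is also too loose to carry weight; the hypothesis enters both in securing projective normality and in preventing degenerations of (\ref{eq:riemaneqv}) coming from vanishing theta constants in the $n\equiv 2\pmod 4$ case, and the duplication-formula mechanism you invoke only requires $2\mid n$. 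As submitted, this is an accurate table of contents for Mumford's argument rather than a proof; to make it one you would need to supply the explicit codimension computation and the linear-independence/spanning argument for the Riemann relations.
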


Let $(\univAb_n, \poluniv, \Thetapoluniv)$ be the universal
marked abelian variety of type $K(n)$ which is an abelian variety over $\bModu_n$ whose
fiber over any point $x$ of $\bModu$ is the marked abelian variety of type $K(n)$ defined
by $x$. The preceding Theorem tells us that relations (\ref{eq:riemaneqv}) is a complete set of
equations for $e_\Thetapol(\univAb_n)$.

Let $(A, \pol)$ be an abelian variety together with an ample line bundle.
Let $f: A \rightarrow B$ be a separable isogeny with kernel $K \subset \Kpol$ that we
suppose isotropic for $\epol$. Let $\pigpol: \Gpol \rightarrow \Kpol$ be the canonical
map. By the descent theory of Grothendieck, there is a bijection between the set of level
subgroups $\tildeK$ such that $\pigpol(\tildeK)=K$ and the set of pairs $(\bpol, \psi)$
where $\bpol$ is an ample line bundle on $B$ and $\psi: f^*(\bpol) \rightarrow \pol$ is
an isomorphism. It sends $\tildeK$ to the unique pair $(\bpol, \psi)$ such that for all
$(\tau_x, \psi_x) \in \tilde K$ the following Diagram commutes:
\begin{equation}\label{eq:descentdata}
  \begin{tikzpicture}
      \matrix [column sep={1cm}, row sep={1cm}]
    { 
      \node(a){$f^*(\bpol)$}; & \node(b){$\pol$}; \\
      \node(c){$\tau_x^*(f^*(\bpol))$}; & \node(d){$\tau_x^*(\pol)$}; \\
     };
      \draw [->] (a) -- (b) node[above, midway]{$\psi$};
      \draw [->] (c) -- (d) node[above, midway]{$\tau^*_x(\psi)$};
      \draw [->] (b) -- (d) node[right, midway] {$\psi_x$};
      \draw [double distance = 2 pt] (a) -- (c);
  \end{tikzpicture}
\end{equation}
And reciprocally, if $(\bpol, \psi)$ is a pair, we recover $\tildeK$ by saying that for
all $x \in K$, $(x, \psi_x) \in \tildeK$ if $\psi_x$ is the isomorphism making Diagram
(\ref{eq:descentdata}) commutative.

We can say, in other words, that $(\bpol, \psi)$ is the quotient of $\pol$ by $\tildeK$.
In the following, we say that $\tildeK$ is a descent data of $\pol$ to
$\bpol$. 
If $(\bpol, =)$ is the quotient of $\pol$ by $\tildeK$, we will write $\bpol = \pol /
\tildeK$.

\begin{remark}\label{rk:descent}
    Using Diagram (\ref{eq:descentdata}), we see that
    for $s \in \Gamma(A,\pol)$, there exists $s_0 \in \Gamma(B,\bpol)$ such that $ \psi(f^*(s_0))=s$ 
    if and only if
    $x.s = s$ for all $x \in \tildeK$.
\end{remark}

\begin{definition}\label{def:fsharp}
    Let $(A, \pol)$ be an abelian variety together with an ample line bundle.
    Let $f: A \rightarrow B$ be a separable isogeny with kernel $K$. 
Let $\psi: f^*(\bpol) \rightarrow
    \pol$ be an isomorphism where $\bpol$ is an ample line bundle on $B$ and denote by
    $\tildeK \subset \Gpol$ the descent data of $\pol$ to
    $\bpol$. Let $\Gspol$ be the centralizer of $\tildeK \subset
    \Gpol$.
    We define the group morphism:
    \begin{equation}
	\begin{split}
	    \fsharp(\pol): \Gspol & \rightarrow G(\bpol) \\
	    (\tau_x, \psi_x) & \rightarrow (\tau_y, \psi_y)
	\end{split}
    \end{equation}
    where $(\tau_y, \psi_y)$ is such that $f(x)=y$ and 
$\psi_y
    : \bpol \rightarrow \tau^*_y  \bpol$  is the unique isomorphism satisfying
    $f^*(\psi_y)=\tau^*_x(\psi^{-1}) \circ \psi_x  \circ\psi$.

    When no confusion is possible, we will replace $\fsharp(\pol)$ by $\fsharp$.
\end{definition}

To see that there exists $\psi_y: \bpol \rightarrow \tau^*_y
\bpol$ such that $f^*(\psi_y)=\psi'_y:=\tau_x^*(\psi^{-1}) \circ \psi_x  \circ\psi$, it suffices to show that for all $z$ closed
    point of $K$, $\tau^*_z(\psi'_y)=\psi'_y$, which 
    is an immediate consequence of the fact
    that $\Gspol$ commutes with $\tildeK$ and the following Diagram where $(\tau_z,
    \psi_z) \in \tildeK$:
\begin{equation}\label{diag:descent}
    \begin{tikzpicture}
      \matrix [column sep={1.5cm}, row sep={1cm}]
    { 
	\node(a){$f^*(\bpol)$}; & \node(b){$\pol$}; & \node(c) {$\tau^*_x \pol$}; & \node(d)
	{$\tau^*_x f^*(\bpol)$}; \\
	\node(ap){$\tau^*_z f^*(\bpol)$}; & \node(bp){$\tau^*_z\pol$}; & \node(cp)
	{$\tau^*_{x+z} \pol$}; &
	\node(dp)
	{$\tau^*_{x+z} f^*(\bpol)$}; \\
     };
	\draw[->] (a) -- (b) node[above, midway]{$\psi$};
	\draw[->] (b) -- (c) node[above, midway]{$\psi_x$};
	\draw[->] (c) -- (d) node[above, midway]{$\tau^*_x(\psi^{-1})$};
	\draw[->] (ap) -- (bp) node[above, midway]{$\tau^*_z(\psi)$};
	\draw[->] (bp) -- (cp) node[above, midway]{$\tau^*_z(\psi_x)$};
	\draw[->] (cp) -- (dp) node[above, midway]{$\tau^*_{x+z}(\psi^{-1})$};
	\draw[double distance = 2 pt] (a) -- (ap);
	\draw[double distance = 2 pt] (d) -- (dp);
	\draw[->] (b) -- (bp) node[right, midway]{$\psi_z$};
	\draw[->] (c) -- (cp) node[right, midway]{$\tau^*_x(\psi_z)$};
    \end{tikzpicture}
\end{equation}

\begin{remark}\label{rm:sym}
    Keeping the notations of Definition \ref{def:fsharp}, we note that if $\bpol$ is
    a symmetric ample line bundle, then $f^*(\bpol)$ is a symmetric line bundle. But it is not
    true that if $\pol$ is symmetric then its descent by $\tildeK$ is also symmetric. Let
    $\psii: \pol
    \rightarrow (-1)^*(\pol)$ be an isomorphism. Then $\bpol$ is symmetric if and only if
    $\psii$ descend by $\tildeK$ to an isomorphism $\bpol \rightarrow (-1)^*(\bpol)$. This
    is equivalent to $\psii$ commutes with the descent data $\tildeK$ which means that for
    all $(\tau_x, \psi_x), (\tau_{-x}, \psi_{-x}) \in \Gpol$ the following Diagram is
    commutative:
\begin{equation}\label{diag:descentsym}
    \begin{tikzpicture}
      \matrix [column sep={1.5cm}, row sep={1cm}]
    { 
	\node(a){$\pol$}; & \node(b){$(-1)^*(\pol)$}; & \\
	\node(c){$\tau^*_{-x} (\pol)$}; & \node(d){$(-1)^*
	\tau^*_x(\pol)=\tau^*_{-x}(-1)^*(\pol)$}; \\
     };
	\draw[->] (a) -- (b) node[above, midway]{$\psii$};
	\draw[->] (a) -- (c) node[left, midway]{$\psi_{-x}$};
	\draw[->] (b) -- (d) node[left, midway]{$(-1)^*(\psi_x)$};
	\draw[->] (c) -- (d) node[above, midway]{$\tau^*_{-x}(\psii)$};
    \end{tikzpicture}
\end{equation}
Thus we see that $\bpol$ is symmetric is equivalent to $\tildeK$ symmetric.
\end{remark}
\begin{proposition}[Compatibility of the action with isogeny,
    \cite{MumfordOEDAV1}]\label{prop:compat} 
    Keeping the notations of Definition \ref{def:fsharp}, $\fsharp(\pol)$ induces a canonical group morphism:
    \begin{equation}
	\fsharp(\pol): \Gspol/\tildeK \rightarrow \Gbpol.
    \end{equation}
    which is an isomorphism. In particular, we have $\pigpol(\Gspol)=f^{-1}(K(\bpol))$ and
    \begin{equation}
	f(K^{\perp_{\epol}})= K(\bpol),
    \end{equation}
    where $K^{\perp_{\epol}}$ is the $\epol$-orthogonal of $K$ in $\Kpol$.
    %    \begin{equation}
%	\begin{tikzpicture}
%\matrix [column sep={1cm}]
%	    {
%		\node(a){$f^*(\bpol)$}; & \node(b){$\pol$}; & \node(c){$\tau^*_x \pol$}; & \node(d){$\tau^*_x f^*(\bpol)$};\\
%	    };

%	    \draw [->] (a) -- (b) node[above, midway]{$\psi$};
%	\draw [->] (b) -- (c) node [above, midway]{$\psi_x$};
%	    \draw [->] (c) -- (d) node [above, midway]{$\psi^{-1}$};

%	\end{tikzpicture}
%    \end{equation}
%    In other words, for all $s \in H^0(\bpol)$ and $t \in \Gspol$, we have:
%    \begin{equation}
%	t.\psi(f^*(s))= \psi(f^* (\fsharp(t).s)).
%    \end{equation}
\end{proposition}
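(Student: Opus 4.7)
The plan is to establish four assertions in sequence: (a) $\tildeK\subset\ker\fsharp$, so that $\fsharp$ factors through $\Gspol/\tildeK$; (b) injectivity of the resulting map; (c) surjectivity, which also yields $\pigpol(\Gspol)=f^{-1}(K(\bpol))$; and (d) $f(K^{\perp_{\epol}})=K(\bpol)$. For (a), I would take $(\tau_z,\psi_z)\in\tildeK$; the defining diagram \eqref{eq:descentdata} of the descent data reads $\tau_z^*(\psi)=\psi_z\circ\psi$, which, substituted into the formula $f^*(\psi_y)=\tau_z^*(\psi^{-1})\circ\psi_z\circ\psi$ of Definition \ref{def:fsharp}, yields $f^*(\psi_y)=\id_{f^*(\bpol)}$. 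Since pullback along the surjective $f$ is faithful on line bundle isomorphisms, $\psi_y=\id_\bpol$, so $\fsharp((\tau_z,\psi_z))=1\in\Gbpol$. For (b), suppose $\fsharp((\tau_x,\psi_x))=1$. Then $f(x)=0$, so $x\in K$, and the defining equation with $\psi_y=\id_\bpol$ forces $\psi_x=\tau_x^*(\psi)\circ\psi^{-1}$, which is precisely the formula from \eqref{eq:descentdata} characterising the unique element of $\tildeK$ above $x$; hence $(\tau_x,\psi_x)\in\tildeK$.

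For (c), given $(\tau_y,\psi_y)\in\Gbpol$ I pick any $x\in f^{-1}(y)$ and define $\psi_x=\tau_x^*(\psi)\circ f^*(\psi_y)\circ\psi^{-1}$; by construction $(\tau_x,\psi_x)\in\Gpol$ and $\fsharp((\tau_x,\psi_x))=(\tau_y,\psi_y)$. The crux is checking that $(\tau_x,\psi_x)$ lies in $\Gspol$, i.e., that it commutes with every $(\tau_z,\psi_z)\in\tildeK$. This is where I would invoke Proposition \ref{def:weil}(2): the commutator of the two lifts in $\Gpol$ is the central scalar $\epol(x,z)=e_{f^*(\bpol)}(x,z)=e_\bpol(f(x),f(z))=e_\bpol(y,0)=1$, so the two lifts do commute. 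This simultaneously gives surjectivity and the inclusion $f^{-1}(K(\bpol))\subset\pigpol(\Gspol)$; the reverse inclusion is immediate from the definition of $\fsharp$, whose output always projects to $f(x)\in K(\bpol)$.

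For (d), one inclusion follows directly from (c): if $x\in K^{\perp_{\epol}}$, every lift of $x$ in $\Gpol$ commutes with $\tildeK$ because the commutator equals $\epol(x,z)=1$; hence any such lift lies in $\Gspol$, so $x\in\pigpol(\Gspol)=f^{-1}(K(\bpol))$ and $f(x)\in K(\bpol)$. Conversely, for $y\in K(\bpol)$, the lift $(\tau_x,\psi_x)\in\Gspol$ produced in (c) commutes with $\tildeK$, which forces $\epol(x,z)=1$ for every $z\in K$, i.e., $x\in K^{\perp_{\epol}}$ and $f(x)=y$. The main obstacle is the commutativity check in step (c): it requires converting an \emph{a priori} non-trivial question about commutators in $\Gpol$ into the trivial identity $e_\bpol(y,0)=1$ via the functoriality of the Weil/commutator pairings under isogenies, as provided by Proposition \ref{def:weil}(2); everything else is formal diagram chasing once this compatibility is in hand.
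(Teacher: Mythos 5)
Your proof is correct, and since the paper itself gives no argument for this proposition (it is cited directly from Mumford), your work supplies the missing details rather than competes with a proof in the text. The four-step decomposition (kernel containment, injectivity, surjectivity with the identification $\pigpol(\Gspol)=f^{-1}(K(\bpol))$, and then $f(K^{\perp_{\epol}})=K(\bpol)$) is the standard route, and you have correctly identified the one non-formal ingredient: in step~(c), once you define $\psi_x=\tau_x^*(\psi)\circ f^*(\psi_y)\circ\psi^{-1}$, you must show $(\tau_x,\psi_x)\in\Gspol$, and the reduction of the commutator computation to $\epol(x,z)=e_\bpol(f(x),0)=1$ via Proposition~\ref{def:weil}(2) is exactly what makes the argument close. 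Two small points worth making explicit, though neither is a gap: first, in step~(c) you should observe that the constructed $\psi_x$ is already an isomorphism $\pol\to\tau_x^*(\pol)$, which is what puts $x$ in $K(\pol)$ and so $(\tau_x,\psi_x)$ in $G(\pol)$ to begin with; second, in step~(a) the identity $\tau_z^*(\psi)=\psi_z\circ\psi$ is indeed the content of diagram~\eqref{eq:descentdata}, but it is worth noting that the faithfulness of $f^*$ on line-bundle morphisms (used to pass from $f^*(\psi_y)=\id$ to $\psi_y=\id$) is where the separability of $f$ is invoked. With those remarks, the argument is complete and matches Mumford's in essence.
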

The following Corollary results from the fact that $\fsharp(\pol)$ is
canonical:
\begin{corollary}\label{cor:canonical}
    Keeping the notations of \cref{prop:compat}, for all $s \in \Gamma(B,\bpol)$ and $t \in \Gspol$, we have:
    \begin{equation}
	t.\psi(f^*(s))= \psi(f^* (\fsharp(\pol)(t).s)).
    \end{equation}
\end{corollary}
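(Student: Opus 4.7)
The plan is to unpack both sides of the asserted equality by applying the definition of the $G(\pol)$-action on global sections and the defining relation for $\fsharp(\pol)$, and then to verify that they coincide.

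More precisely, write $t=(\tau_x,\psi_x)\in \Gspol$ and $\fsharp(\pol)(t)=(\tau_y,\psi_y)\in G(\bpol)$, where $y=f(x)$. Recall that the action of an element of a theta group on a section is given by the pullback composed with the inverse of the attached isomorphism. So the left hand side reads
\[
t.\psi(f^*(s)) \;=\; \psi_x^{-1}\!\bigl(\tau_x^*(\psi(f^*(s)))\bigr),
\]
while the right hand side reads
\[
\psi\bigl(f^*(\fsharp(\pol)(t).s)\bigr)\;=\;\psi\bigl(f^*(\psi_y^{-1}(\tau_y^*s))\bigr).
\]
First I will use functoriality of pullback of sections: since $f\circ\tau_x = \tau_y\circ f$, one has $\tau_x^*\circ f^* = f^*\circ\tau_y^*$, and for any morphism of sheaves $\varphi$ one has $\tau_x^*(\varphi(\cdot))=\tau_x^*(\varphi)(\tau_x^*\cdot)$. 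Applying this to $\varphi=\psi$ rewrites the left hand side as
\[
\psi_x^{-1}\bigl(\tau_x^*(\psi)\bigl(f^*(\tau_y^*s)\bigr)\bigr).
\]

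Next, I will use the characterization of $\fsharp(\pol)$ in Definition \ref{def:fsharp}, namely $f^*(\psi_y)=\tau_x^*(\psi^{-1})\circ\psi_x\circ\psi$, which inverts to
\[
\psi_x^{-1}\;=\;\psi\circ f^*(\psi_y^{-1})\circ\tau_x^*(\psi^{-1}).
\]
Substituting this expression for $\psi_x^{-1}$ in the rewritten left hand side causes the factors $\tau_x^*(\psi^{-1})$ and $\tau_x^*(\psi)$ to cancel, leaving exactly
\[
\psi\bigl(f^*(\psi_y^{-1})(f^*(\tau_y^*s))\bigr)\;=\;\psi\bigl(f^*(\psi_y^{-1}(\tau_y^*s))\bigr),
\]
which is the right hand side.

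There is no real obstacle here: the statement is a book-keeping consequence of Definition \ref{def:fsharp}, expressing that $\fsharp(\pol)(t)$ has been built precisely so that its action on $\Gamma(B,\bpol)$ is compatible with the action of $t$ on $\Gamma(A,\pol)$ through the isomorphism $\psi$ and the pullback $f^*$. The only point that requires a bit of attention is to correctly track the distinction between $\psi_x^{-1}\circ\tau_x^*$ (acting on sections of $\pol$) and the analogous operator for $\bpol$, and to use $f\tau_x=\tau_yf$ to move $f^*$ past $\tau_x^*$; this is exactly what the defining formula for $\fsharp(\pol)(t)$ was designed to make work.
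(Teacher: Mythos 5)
Your proof is correct and is precisely the detailed unwinding of the paper's one-line justification ("results from the fact that $\fsharp(\pol)$ is canonical"): you express the action on both sides via $\psi_x^{-1}\circ\tau_x^*$ and $\psi_y^{-1}\circ\tau_y^*$, use $f\circ\tau_x=\tau_y\circ f$ and the defining relation $f^*(\psi_y)=\tau_x^*(\psi^{-1})\circ\psi_x\circ\psi$ from Definition~\ref{def:fsharp}, and the factors $\tau_x^*(\psi^{\pm 1})$ cancel. This is essentially the same argument the paper intends, merely spelled out.
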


The two preceding Propositions, although quite elementary, have as an immediate consequence the isogeny
theorem \cite{MumfordOEDAV1} which is a cornerstone of the theory of algebraic theta
functions of Mumford. They
will be used in the following to prove other results of the same flavour.

Let $n,d,m>1$ be integers such that $n=dm$.
Let $(A, \pol, \Thetapol)$ be a marked abelian variety of 
type $K(n)$. Let $f: A
\rightarrow B$ be an isogeny with kernel $K\subset \Kpol$ isotropic for $\epol$ and
isomorphic as a group to $\dZ(d)$. Let $\tildeK$ be a level subgroup above $K$
and let $\bpol = \pol / \tildeK$. By \cref{prop:compat}, we have $K(\bpol) \simeq
K^{\perp_{\epol}}/K$. If necessary, by changing $\Thetapol$, we can suppose that $K=
\Thetapolbar(\{0\}\times \dnu_{d,n}(\dZ(d)))$ or that $K=\Thetapolbar(\mu_{d,n}(Z(d))
\times \{0 \})$. In the first case, we have $K^{\perp_{\epol}} \simeq
Z(m) \times \dZn$, so
that $K(\bpol) \simeq (Z(m) \times \dZn)/(\{0\}\times \dnu_{d,n}(\dZ(d))) \simeq Z(m)
\times \dZ(m)=K(m)$ and in the second case, $K^{\perp_{\epol}} \simeq
Z(n) \times \dZm$, so
that $K(\bpol) \simeq (Z(n) \times \dZm)/(\mu_{d,n}(Z(d))\times \{0 \}) \simeq Z(m)
\times \dZ(m)=K(m)$. This motivates the following Definition:

\begin{definition}\label{sec1:def1}
    Let $n,d,m>1$ be integers such that $n=dm$.
Let $(A, \pol, \Thetapol)$ and $(B, \bpol, \Thetabpol)$ be marked abelian varieties of
    respective types $K(n)$ and $K(m)$. Let $f:
A \rightarrow B$ be an isogeny with kernel $K\simeq \dZ(d)$ isotropic for $\epol$.
    Denote by $\Spol: \Kn \rightarrow \Kpol$ the
    symplectic structure defined by $\Thetapol$.

    We say that $(A, \pol, \Thetapol)$ and $(B, \bpol, \Thetabpol)$ are isog-$f$-compatible
    (resp. dual-isog-$f$-compatible) if:
    \begin{enumerate}
	\item $K =\Spol(\{0\} \times \dnu_{d,n}(\dZ(d)))$ (resp. $K=\Spol(\mu_{d,n}(Z(d))
	    \times \{0 \})$);
	\item $f^*(\bpol) = \pol$ and the level subgroup above $K$ associated to $(f^*(\bpol), =)$
	    is $\tildeK=\Thetapol(\{1\}\times\{0\}\times\dnu_{d,n}(\dZ(d)))$ (resp.
	    $\tildeK= \Thetapol(\{1\} \times \mu_{d,n}(Z(d)) \times \{0 \})$);
	\item If $\fsharp$ is the isomorphism of \cref{prop:compat}, for all $x \in Z(m)$,
	    $\fsharp(\Thetapol((1,\mu_{m,n}(x),0))) = \Thetabpol((1,x,0))$ (resp. for all
	    $x \in \Zn$, $\fsharp(\Thetapol((1, x,0)))=\Thetabpol((1, \rho_{n,m}(x), 0))$, where
	    $\rho_{n,m}: \Zn \rightarrow Z(m)\simeq \Zn/\mu_{d,n}(Z(d))$ is the canonical projection);
	\item For all $y \in \dZn$, $\fsharp(\Thetapol((1,0,y)))=
	    \Thetabpol((1,0,\rho_{n,m}(y)))$, where $\drho_{n,m}: \dZn\rightarrow \dZ(m)\simeq \dZn/\dnu_{d,n}(\dZ(d))$
	    is the canonical projection (resp. for all $y \in \dZm$, $\fsharp((\Thetapol(1,
	    0, \dnu_{m,n}(y))))=\Thetabpol((1, 0, y))$).
    \end{enumerate}
\end{definition}

The following Proposition tells that in the coordinate system given by
isog-$f$-compatible theta structures, the image of a point is obtained just by dropping certain
coordinates and we have a slightly more complex formula in the case of dual-isog-$f$-compatible
theta structures. Note that the \cref{sec1:def1} and \cref{sec1:prop1} are similar to respectively
\cite[Section 3.1]{DRmodular} and \cite[Proposition 7]{DRmodular}, except that we do not suppose that $d$ is prime to $n$.
\begin{proposition}\label{sec1:prop1}
    Let $n,m,d >1$ be integers such that $n=md$.
 Let $(A, \pol, \Thetapol)$ and $(B, \bpol, \Thetabpol)$ marked abelian
    varieties with respective theta null points $(\theta_i^\Thetapol(0_\Thetapol))_{i
    \in \Zn}$
    and
    $(\theta_i^\Thetabpol(0_\Thetabpol))_{i \in Z(m)}$. If they are isog-$f$-compatible, there exists a constant factor
    $\lambda  \in \overk$ such that we have for all $i \in Z(m)$, 
    \begin{equation}
	f^*(\theta^\Thetabpol_i)=\lambda \theta^\Thetapol_{\mu_{m,n}(i)}.
    \end{equation}
    If they are dual-isog-$f$-compatible, there exists a constant factor $\lambda \in \overk$
    such that for all $i \in Z(m)$,
    \begin{equation}
	f^*(\theta^\Thetabpol_i)= \lambda \sum_{j \in\rho_{n,m}^{-1}(i)}
	\theta^\Thetapol_{j}.
    \end{equation}
\end{proposition}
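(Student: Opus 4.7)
The plan is to reduce everything to two tools: \cref{rk:descent}, which identifies $f^*(\Gamma(B,\bpol))$ as the $\tildeK$-invariants in $\Gamma(A,\pol)$, and \cref{cor:canonical}, which transports the action of $t\in\Gspol$ across $f^*$ via $\fsharp$. In each case, I first produce a candidate pullback section, then pin down its image in $\Gamma(B,\bpol)$ up to a scalar by using the theta-group actions encoded in conditions (3) and (4) of \cref{sec1:def1}.

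In the $f$-compatible case, $\tildeK=\Thetapol(\{1\}\times\{0\}\times\dnu_{d,n}(\dZ(d)))$ sits inside the abelian subgroup $\Thetapol(\{1\}\times\{0\}\times\dZn)$, and a direct manipulation of $\pi_\Thetapol=\sum_\lambda \lambda$ shows that $\theta_0^\Thetapol$ is fixed by this whole subgroup, in particular by $\tildeK$. So $\theta_0^\Thetapol=f^*(s_0)$ for a unique $s_0\in\Gamma(B,\bpol)$. I then apply \cref{cor:canonical} to $\Thetapol((1,0,y))\in\Gspol$ (for $y\in\dZn$): condition (4) turns the invariance of $\theta_0^\Thetapol$ into $\Thetabpol((1,0,\drho_{n,m}(y)))\cdot s_0 = s_0$, and surjectivity of $\drho_{n,m}$ forces $s_0$ to be fixed by all of $\Thetabpol(\{1\}\times\{0\}\times\dZm)$, a condition which by \eqref{eq:trans} (and the distinctness of the characters $y\mapsto y(-j)$) picks out the line spanned by $\theta_0^\Thetabpol$. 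Hence $s_0=\lambda^{-1}\theta_0^\Thetabpol$ for some $\lambda\in\overk$. To extend to arbitrary $i\in Z(m)$, I apply $\Thetapol((1,\mu_{m,n}(i),0))\in\Gspol$ to both sides of $f^*(\theta_0^\Thetabpol)=\lambda\theta_0^\Thetapol$ and use condition (3) together with \eqref{eq:trans}.

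In the dual case, $\tildeK=\Thetapol(\{1\}\times\mu_{d,n}(Z(d))\times\{0\})$ acts on $(\theta_j^\Thetapol)_{j\in Z(n)}$ by translation of the index, so each fiber-sum $S_i:=\sum_{j\in\rho_{n,m}^{-1}(i)}\theta_j^\Thetapol$ is $\tildeK$-invariant and equals $f^*(s_i)$ for some $s_i$. Applying \cref{cor:canonical} to $\Thetapol((1,0,\dnu_{m,n}(y)))\in\Gspol$ with condition (4), and noting via \eqref{eq:trans} that $\dnu_{m,n}(y)$ is trivial on $\mu_{d,n}(Z(d))$ so that the eigenvalue $\dnu_{m,n}(y)(-j)=y(-\rho_{n,m}(j))$ is constantly equal to $y(-i)$ along the fiber, shows that $s_i$ is an eigenvector of $\Thetabpol((1,0,y))$ of eigenvalue $y(-i)$. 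By \eqref{eq:trans} this forces $s_i=\lambda_i\theta_i^\Thetabpol$. Applying the same corollary to $\Thetapol((1,x,0))\in\Gspol$ with condition (3) sends $S_i$ to $S_{i+\rho_{n,m}(x)}$ on the $A$-side and $\lambda_i\theta_i^\Thetabpol$ to $\lambda_i\theta_{i+\rho_{n,m}(x)}^\Thetabpol$ on the $B$-side, forcing $\lambda_i=\lambda_{i+\rho_{n,m}(x)}$, uniform in $i$ by surjectivity of $\rho_{n,m}$.

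The only delicate bookkeeping is checking that each theta-group element we apply actually lies in $\Gspol$; this is immediate from \eqref{eq:commutator} together with the fact that $\dnu_{m,n}(\dZm)$ is precisely the annihilator of $\mu_{d,n}(Z(d))$ under the Heisenberg pairing. No assumption on $\gcd(d,m)$ or parity intervenes, so the uniform treatment really is uniform.
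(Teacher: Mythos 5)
Your proof is correct, and it rests on the same two pillars as the paper's --- \cref{rk:descent} to descend $\tildeK$-invariant sections and \cref{cor:canonical} to transport theta-group actions across $f^*$ --- but the route through them is organized differently, and it is worth recording what each version buys. The paper's proof of the $f$-compatible case starts from a generic $s_\pol\in\Gamma(A,\pol)$, constructs $s_\bpol$ with $f^*(s_\bpol)=\sum_{g_d}\Thetapol(g_d).s_\pol$, expands $\theta_0^\Thetapol$ as an average over $\Thetapol(\{1\}\times\{0\}\times\dZn)$ via \cref{prop:base}, groups the average along coset representatives $H$, and collapses the inner sums to $f^*(\theta_0^\Thetabpol)$ using \cref{cor:canonical}; it then translates by $\Thetapol((1,\mu_{m,n}(i),0))$. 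You instead observe directly (from (\ref{eq:trans})) that $\theta_0^\Thetapol$ is already $\tildeK$-invariant, so it descends outright to some $s_0$, and you pin down $s_0$ up to scalar on the $B$-side by its eigencharacter under $\Thetabpol(\{1\}\times\{0\}\times\dZm)$; the translation step is the same. The eigencharacter argument is a modest streamlining --- no need to introduce a generic $s_\pol$ or choose coset representatives --- and it scales cleanly to the dual case, where the candidate invariant sections are the fiber sums $S_i=\sum_{j\in\rho_{n,m}^{-1}(i)}\theta_j^\Thetapol$ rather than a single basis vector; the dual case is precisely what the paper declines to spell out. One small quibble: your closing remark about $\dnu_{m,n}(\dZ(m))$ being the annihilator of $\mu_{d,n}(Z(d))$ under the Heisenberg commutator is correct, but the more elementary observation $e_n((\mu_{m,n}(i),0),(0,\dnu_{d,n}(y)))=1$ (both needed centralizer facts reduce to $mZ(m)=0$ or $dZ(d)=0$) would be enough and does not require counting orders.
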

\begin{proof}
    The result is an immediate consequence of Mumford's isogeny theorem \cite[Theorem 4]{MumfordOEDAV1}.
    But it can be obtained with an easy direct computation that we explain for the case
    where the theta structures are isog-$f$-compatible. We first prove that $f^*(\theta^\Thetabpol_0)=\lambda \theta^\Thetapol_{0}$
    for $\lambda \in \overk$. For $s_\pol \in \Gamma(A,\pol)$ a general section, using
    \cref{rk:descent},
    there exists $s_\bpol \in \Gamma(B,\bpol)$ such that:
    \begin{equation}\label{eq:sec1eq1}
	f^*(s_\bpol)=\sum_{g_d \in \{1\}\times\{0\}\times\dnu_{d,n}(\dZ(d))} \Thetapol(g_d).s_\pol.
    \end{equation}
    Let $H$ be a set of representatives of the classes of
    $\Thetapol(\{1\}\times\{0\}\times\dZn)/\Thetapol(\{1\}\times\{0\}\times\dnu_{d,n}(\dZ(d)))$, we have:
\begin{align*}
    \theta^\Thetapol_0 & = \lambda. \sum_{g_n \in \{1\}\times\{0\}\times\dZn} \Thetapol(g_n).s_\pol
     &  \text{(by \cref{prop:base})}\\
    & = \lambda \sum_{h \in H} h.\sum_{g_d \in \{1\}\times\{0\}\times\dnu_{d,n}(\dZ(d))} \Thetapol(g_d).s_\pol
    & \\
    & = \lambda \sum_{h \in H} h.f^*(s_\bpol) &  (\text{using (\ref{eq:sec1eq1})}) \\
    & = \lambda \sum_{h \in \{1\}\times\{0\}\times\dZ(m)} f^*(\Thetabpol(h).s_\bpol) & (\text{applying
    \cref{cor:canonical}})  \\
    & = \lambda'  f^*(\theta_0^\Thetabpol), & \text{(by \cref{prop:base})} \\ 
\end{align*}
for $\lambda' \in \overk$. Next, for $i \in Z(m)$, we have:
    \begin{align*}
	\theta^\Thetapol_{\mu_{m,n}(i)} & = \Thetapol((1, \mu_{m,n}(i), 0)) \theta_0^\Thetapol
	& \text{(following \cref{prop:base})} \\
	& = \lambda' \Thetapol((1, \mu_{m,n}(i), 0)) f^*(\theta_0^\Thetabpol) &
	\text{(using the preceding)} \\
	& = \lambda' f^*(\Thetabpol((1, i, 0)) \theta_0^\Thetabpol) & \text{(using
	\cref{cor:canonical})}\\
	&= \lambda' f^*(\theta_i^\Thetabpol).
    \end{align*}
\end{proof}
The preceding Proposition allows us to extend Definition \ref{sec1:def1} for rigidified abelian
varieties.
\begin{definition}\label{sec1:def2}
    Let $n,d,m>1$ be integers such that $n=dm$.
    Let $(A, \pol, \Thetapol, \theta_0^\Thetapol, \rho^\pol_{0_\Thetapol})$ and $(B, \bpol,
    \Thetabpol, \theta_0^\Thetabpol, \rho^\bpol_{0_{\Thetabpol}})$ be rigidified abelian varieties of
    respective types $K(n)$ and $K(m)$. Let $f:
A \rightarrow B$ be an isogeny with kernel $K\simeq Z(d)$ isotropic for $\epol$.
    We say that $(A, \pol, \Thetapol, \theta_0^\Thetapol, \rho^\pol_{0_\Thetapol})$ and 
$(B, \bpol,
    \Thetabpol, \theta_0^\Thetabpol, \rho^\bpol_{0_{\Thetabpol}})$ are isog-$f$-compatible
    (resp. dual-isog-$f$-compatible) if they satisfy the conditions
    of Definition \ref{sec1:def1} and furthermore:
    \begin{enumerate}
	    \setcounter{enumi}{4}
	\item $f^*(\theta_0^\Thetabpol)=\theta_0^\Thetapol$ (resp.
	    $f^*(\theta_0^\Thetabpol)= \sum_{j \in \rho_{n,m}^{-1}(0)} \theta_{j}^\Thetapol$) (by Proposition
	    \ref{sec1:prop1} this
	    equality is always true up to a constant factor);
	\item
	    $\rho^\bpol_{0_{\Thetabpol}}(\theta_0^\Thetabpol)=\rho^\pol_{0_\Thetapol}(\theta_0^\Thetapol)$
	    (resp.
	    $\rho^\bpol_{0_{\Thetabpol}}(\theta_0^\Thetabpol)=\rho^\pol_{0_\Thetapol}(\sum_{j\in\rho_{m,n}^{-1}(0)}\theta_j^\Thetapol)$).
    \end{enumerate}
\end{definition}
\begin{remark}\label{rk:affinefcompat}
    Suppose that $(A, \pol, \Thetapol)$ and $(B, \bpol, \Thetabpol)$ are isog-$f$-compatible
    marked abelian varieties. We can always endow $(B, \bpol, \Thetabpol)$ with a
    rigidification $(B, \bpol, \Thetabpol, \theta_0^\Thetabpol, \rho^\bpol_{0_{\Thetabpol}})$ and
    then choose
    $\theta_0^\Thetapol$ and $\rho^\pol_{0_\Thetapol}$ such that 
$(B, \bpol, \Thetabpol, \theta_0^\Thetabpol, \rho^\bpol_{0_{\Thetabpol}})$ and  $(A, \pol,
    \Thetapol, \theta_0^\Thetapol, \rho^\pol_{0_\Thetapol})$ are rigidified isog-$f$-compatible
    abelian varieties.
\end{remark}
\begin{corollary}\label{sec1:cor1}
    Let $n,d,m>1$ be integers such that $n=dm$.
If $(A, \pol, \Thetapol, \theta_0^\Thetapol, \rho^\pol_{0_\Thetapol})$ and $(B, \bpol,
    \Thetabpol, \theta_0^\Thetabpol, \rho^\bpol_{0_{\Thetabpol}})$ are isog-$f$-compatible, then for all $i \in Z(m)$,
    \begin{equation}
	f^*(\theta^\Thetabpol_i)=\theta^\Thetapol_{\mu_{m,n}(i)},\quad \rho^\bpol_{0_{\Thetabpol}}(\theta_i^\Thetabpol(0_\Thetabpol))=\rho^\pol_{0_\Thetapol}(\theta_{\mu_{m,n}(i)}^\Thetapol(0_\Thetapol)).
    \end{equation}
If $(A, \pol, \Thetapol, \theta_0^\Thetapol, \rho^\pol_{0_\Thetapol})$ and $(B, \bpol,
    \Thetabpol, \theta_0^\Thetabpol, \rho^\bpol_{0_{\Thetabpol}})$ are dual-isog-$f$-compatible, then for all $i \in Z(m)$,
    \begin{equation}
	f^*(\theta^\Thetabpol_i)=\sum_{j \in \rho_{n,m}^{-1}(i)}
	\theta^\Thetapol_{j},\quad
	\rho^\bpol_{0_{\Thetabpol}}(\theta_i^\Thetabpol(0_\Thetabpol))=\rho^\pol_{0_\Thetapol}(\sum_{j
	\in \rho_{n,m}^{-1}(i)} \theta_{j}^\Thetapol(0_\Thetapol)).
    \end{equation}

\end{corollary}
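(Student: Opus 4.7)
The corollary is essentially a matter of pinning down the constants in Proposition \ref{sec1:prop1} using the additional rigidification data provided by Definition \ref{sec1:def2}. My plan splits into handling the section identity first, and then transferring this to an identity between rigidifications.

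First I would apply Proposition \ref{sec1:prop1} to get $f^*(\theta^\Thetabpol_i)=\lambda\,\theta^\Thetapol_{\mu_{m,n}(i)}$ in the $f$-compatible case (and the analogous sum identity in the dual case), for some a priori unspecified $\lambda\in\overk$. Specializing at $i=0$ and comparing with condition $(5)$ of Definition \ref{sec1:def2} gives $\lambda\,\theta_0^\Thetapol = f^*(\theta_0^\Thetabpol) = \theta_0^\Thetapol$, which forces $\lambda=1$. This establishes the first asserted equality.

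Next, for the rigidification identity, I would use Remark \ref{def:descendrigid}: since $\pol = f^*(\bpol)$ and $f(0_\Thetapol) = 0_\Thetabpol$, the rigidification $\rho^\pol_{0_\Thetapol}$ induces a rigidification $f(\rho^\pol_{0_\Thetapol})$ of $\bpol$ at $0_\Thetabpol$ characterized by
\begin{equation*}
 f(\rho^\pol_{0_\Thetapol})(s) = \rho^\pol_{0_\Thetapol}(f^*(s)) \quad \text{for all } s \in \bpol_{0_\Thetabpol}.
\end{equation*}
Applying this to $s = \theta_0^\Thetabpol$ and using the already established $f^*(\theta_0^\Thetabpol) = \theta_0^\Thetapol$ together with condition $(6)$, one gets $f(\rho^\pol_{0_\Thetapol})(\theta_0^\Thetabpol(0_\Thetabpol)) = \rho^\pol_{0_\Thetapol}(\theta_0^\Thetapol(0_\Thetapol)) = \rho^\bpol_{0_{\Thetabpol}}(\theta_0^\Thetabpol(0_\Thetabpol))$. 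The space of rigidifications of $\bpol$ at $0_\Thetabpol$ is one-dimensional, and $\theta_0^\Thetabpol(0_\Thetabpol)$ is a nonzero element of $\bpol(0_\Thetabpol)$ (because the theta null point is a genuine affine point in $\Aff^{Z(m)}\setminus\{0\}$, and in fact both sides must be nonzero by condition $(5)$), so the two rigidifications coincide: $f(\rho^\pol_{0_\Thetapol}) = \rho^\bpol_{0_{\Thetabpol}}$.

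Finally, I would evaluate this equality of rigidifications on $\theta_i^\Thetabpol(0_\Thetabpol)$ for arbitrary $i \in Z(m)$ and substitute the first identity $f^*(\theta_i^\Thetabpol) = \theta^\Thetapol_{\mu_{m,n}(i)}$ to conclude
\begin{equation*}
 \rho^\bpol_{0_{\Thetabpol}}(\theta_i^\Thetabpol(0_\Thetabpol)) = \rho^\pol_{0_\Thetapol}(f^*(\theta_i^\Thetabpol)(0_\Thetapol)) = \rho^\pol_{0_\Thetapol}(\theta^\Thetapol_{\mu_{m,n}(i)}(0_\Thetapol)).
\end{equation*}
The dual-$f$-compatible case proceeds along identical lines, simply replacing the single section $\theta^\Thetapol_{\mu_{m,n}(i)}$ by the sum $\sum_{j\in\rho_{n,m}^{-1}(i)} \theta^\Thetapol_j$ everywhere. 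There is no real obstacle here; the argument is pure unwinding of definitions, and the only delicate point is keeping track of rigidifications across the isogeny, which is handled cleanly by Remark \ref{def:descendrigid}.
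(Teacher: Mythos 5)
Your argument is the natural expansion of the paper's one-line proof (which simply cites Proposition~\ref{sec1:prop1} and Definition~\ref{sec1:def2}): pin $\lambda=1$ at $i=0$ via condition~(5), then transport the rigidification identity through $f(\rho^\pol_{0_\Thetapol})$ using Remark~\ref{def:descendrigid} and condition~(6). The structure of the proof is correct.

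One flaw worth flagging: both of your parenthetical justifications for $\theta_0^\Thetabpol(0_\Thetabpol)\neq 0$ are wrong. The theta null point lying in $\Aff^{Z(m)}\setminus\{0\}$ only guarantees that \emph{some} coordinate is nonzero, not the $0$-th one specifically, and condition~(5) is an equality of \emph{sections}, which says nothing about the value at the origin. If $\theta_0^\Thetabpol(0_\Thetabpol)=0$ your comparison takes place at the zero vector and does not pin down the scalar relating the two rigidifications, so the step $f(\rho^\pol_{0_\Thetapol})=\rho^\bpol_{0_\Thetabpol}$ would fail. The statement is presumably intended on the locus where $\theta_0^\Thetabpol(0_\Thetabpol)\neq 0$ (which is the generic case), and your argument is sound there; but the reasons you give in the parenthetical are not valid and should be replaced by an explicit hypothesis or an appeal to genericity.
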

\begin{proof}
    This is an immediate consequence of Proposition \ref{sec1:prop1} and Definition
    \ref{sec1:def2}.
\end{proof}

We put the following Definition:
    \begin{definition}\label{def:affinemap}
	If $(B, \bpol, \Thetabpol)$ and $(A, \pol, \Thetapol)$ are isog-$f$-compatible (resp.
	dual-isog-$f$-compatible) abelian
	varieties, we denote by 
	$\tildef: \Aff^{Z(n)}=\Spec(k[X_i],i \in \Zn) \rightarrow
    \Aff^{Z(m)} =\Spec(k[Y_i],i \in Z(m))$ the affine map
    such that $\tildef^*(Y_i) =X_{\mu_{m,n}(i)}$ (resp. such that  $\tildef^*(Y_i)
    = \sum_{j \in \rho_{n,m}^{-1}(i)} X_{j}$).
\end{definition}
    With this Definition, we can rephrase Corollary \ref{sec1:cor1}, by saying that if
    $\tildenullpol$ and
    $\tildenullbpol$ are the
    affine theta null points of, respectively, $(A, \pol, \Thetapol, \theta_0^\Thetapol, \rho^\pol_{0_\Thetapol})$ and $(B, \bpol,
    \Thetabpol, \theta_0^\Thetabpol)$, then
    $\tildef(\tildenullpol)=\tildenullbpol$.

Keeping the notation of \cref{sec1:prop1}, we have a map: 
\begin{align*}
    \pibar^0_{n,m}: \bModu_n \subset \proj([k[x_i], i \in \Zn]) & \rightarrow \bModu_m
    \subset \proj([k[y_i], i \in Z(m)]),\\
    (\pibar^0_{n,m})^*(y_i) & = x_{\mu_{m,n}(i)}.
\end{align*}
The map $\pibar^0_{n,m}$, extends that map $\pi^0_{n,m}: \Modu_n \rightarrow \Modu_m$
which is the $0$-section of the map:
\begin{align*}
    \pi_{n,m}: \univAb_n \subset \proj^{Z(n)} & \rightarrow \univAb_m \subset \proj^{Z(m)},\\
    \pi_{n,m}^*((\theta_i)_{i \in \Z(m)}) & = (\theta_{\mu_{m,n}(i)})_{i \in Z(m)}.
\end{align*}

Let $(B, \bpol, \Thetabpol)$ be a marked abelian variety of type $K(m)$ with theta null point
$$\xtheta=(\theta^\Thetabpol_i(0_\Thetabpol))_{i\in Z(m)} \in \Modu_m(\overk) \subset
\bModu_m(\overk).$$ We would like to be able
to compute the fiber of $\pibar_{n,m}$ over $\xtheta$. The case $d$ odd and $d$ prime to $n$ has been treated
in \cite{DRmodular}. In this paper, we want to take on the general case. 

So from now on, we
only suppose that $2|m$. First, we would like to compute the fiber of
$\pibar^0_{n,m}$.
We consider the ideal $\Jx$ of the polynomial ring $k[X_i,i \in \Zn]$
generated:
\begin{itemize}
    \item by the Riemann and symmetry relations of \cref{sec:thriemann} and
	\cref{sec:propsym};
    \item by the specialisation relations
	$X_{\mu_{m,n}(i)}=\theta_{\mu_{m,n}(i)}^\Thetabpol(0_\Thetabpol)$ for all $i \in Z(m)$.
\end{itemize}
Let $V_{\Jx}$ be the closed subvariety of $\proj^{Z(n)}$ defined by $\Jx$. It is clear that
$V_{\Jx}$
is the fiber $(\pibar^0_{n,m})^{-1}(\xtheta)$. We know from \cite{DRmodular} that
$V_{\Jx}$ is a reduced zero dimensional
variety and so a sum of geometric points (with possible multiplicities). Denote by
$V_{\Jx}^0$ the
subvariety of $V_{\Jx}$ such that if $y \in V_{\Jx}^0(\overk)$, $y$ is a valid level $n$ theta
null point. We have the Proposition:
\begin{proposition}\label{prop:charac}
    We suppose that $4|n$.
    Let $y \in \Modu_n(\overk)$ representing a level $n$ marked abelian variety $(A, \pol,
    \Thetapol)$. Then $y \in \vjxz (\overk)$ if and only if there exists an isogeny
    $f:
    A\rightarrow B$ such that $(A, \pol, \Thetapol)$ and
    $(B, \bpol, \Thetabpol)$ are isog-$f$-compatible.

    Suppose that $y \in \vjxz (\overk)$, set $K_0=\Sbpol(\mu_{d,m}(Z(d))\times\{0\}) \subset
    B$, then we have $A= B/K_0$ up to an
    isomorphism. Let $\hatf: B
    \rightarrow A$ be the quotient by $K_0$ isogeny, then $f$ is the contragredient isogeny of
    $\hatf$.
\end{proposition}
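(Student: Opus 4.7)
I plan to prove both directions of the equivalence and then deduce the description of $\hatf$ and $f$ as mutually contragredient isogenies.

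For the direction $(\Leftarrow)$: assume $(A, \pol, \Thetapol)$ and $(B, \bpol, \Thetabpol)$ are $f$-compatible. By \cref{rk:affinefcompat} I choose rigidifications of the two triples making them into $f$-compatible rigidified abelian varieties, and then \cref{sec1:cor1} yields $\theta^\Thetapol_{\mu_{m,n}(i)}(0_\Thetapol)=\theta^\Thetabpol_i(0_\Thetabpol)$ for each $i\in Z(m)$. The Riemann and symmetry relations are automatic because $y\in\Modu_n(\overk)$, while the equalities above are precisely the specialisation relations defining $\Jx$, so $y\in\vjxz(\overk)$.

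For the direction $(\Rightarrow)$: start from $y\in\vjxz(\overk)$ representing $(A,\pol,\Thetapol)$ and build $f$ by descent. Set $K\defby\Thetapolbar(\{0\}\times\dnu_{d,n}(\dZ(d)))$, which is isotropic for $e_\pol$ by \eqref{eq:commutator}, and $\tildeK\defby\Thetapol(\{1\}\times\{0\}\times\dnu_{d,n}(\dZ(d)))$, a level subgroup above $K$. The descent bijection recalled before \cref{def:fsharp} produces an isogeny $f\colon A\to B''\defby A/K$ and an ample line bundle $\bpol''\defby\pol/\tildeK$ on $B''$ of type $K(m)$ (by \cref{prop:compat}); pushing $\Thetapol$ through $\fsharp$ gives a theta structure $\Thetabpolp$ on $(B'',\bpol'')$ for which the four conditions of \cref{sec1:def1} are satisfied by construction, so $(A,\pol,\Thetapol)$ and $(B'',\bpol'',\Thetabpolp)$ are $f$-compatible. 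Combining \cref{sec1:cor1} with the specialisation relations in $\Jx$ then shows that the theta null points of $(B'',\bpol'',\Thetabpolp)$ and $(B,\bpol,\Thetabpol)$ coincide in $\Modu_m(\overk)$. Since $\Modu_m$ parametrises isomorphism classes of marked abelian varieties of type $K(m)$, there is an isomorphism $\phi\colon(B'',\bpol'',\Thetabpolp)\fliso(B,\bpol,\Thetabpol)$, and $\phi\circ f$ is the desired isogeny.

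The contragredient $\hatf\colon B\to A$ of $f$ satisfies $\hatf\circ f=[d]_A$, hence $\ker\hatf=f(A[d])$. The identifications $A[d]=\Thetapolbar(\mu_{d,n}(Z(d))\times\dnu_{d,n}(\dZ(d)))$ and $K=\Thetapolbar(\{0\}\times\dnu_{d,n}(\dZ(d)))$, together with condition~$(3)$ of \cref{sec1:def1} applied to $\fsharp$, give $f(A[d])=\Thetabpolbar(\mu_{d,m}(Z(d))\times\{0\})=K_0$; thus $\hatf$ realises $A$ as $B/K_0$, and $f$ is the contragredient of $\hatf$. The main subtlety in the proof is matching normalisations in the $\Rightarrow$ direction: \cref{sec1:prop1} only gives $f^*(\theta^\Thetabpol_i)=\lambda\,\theta^\Thetapol_{\mu_{m,n}(i)}$ up to a scalar, so one must use the freedom in the choice of $\theta_0^{\Thetabpolp}$ and of the rigidification at $0_{\Thetabpolp}$ (\cref{rk:affinefcompat}) to pass from this projective equality to the affine equality of theta null points needed to invoke the moduli interpretation of $\Modu_m$.
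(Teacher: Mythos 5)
Your proof is correct, and for the first (equivalence) claim it takes a genuinely different route: the paper simply invokes \cite{DRmodular}[Proposition 11] for the equivalence, whereas you prove both directions from scratch. Your $(\Leftarrow)$ uses Proposition~\ref{sec1:prop1} / Corollary~\ref{sec1:cor1} to see that the sub-tuple $(\theta^\Thetapol_{\mu_{m,n}(i)}(0))_{i\in Z(m)}$ reproduces $\xtheta$; your $(\Rightarrow)$ rebuilds $f$ by descent along $\tildeK=\Thetapol(\{1\}\times\{0\}\times\dnu_{d,n}(\dZ(d)))$, pushes $\Thetapol$ through $\fsharp$ to get a theta structure on $A/K$, and then matches theta null points in $\Modu_m$. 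That direct construction is informative (it makes the $\fsharp$-mechanism of Definition~\ref{sec1:def1} concrete) but is redundant relative to the cited result. For the second claim your argument is essentially the paper's: both identify $K_0=f(\Spol(\mu_{d,n}(Z(d))\times\{0\}))$ using $f$-compatibility and then compare $B/K_0$ with $A/A[d]\simeq A$; you phrase it as $\ker\hatf=f(A[d])$ via $\hatf\circ f=[d]_A$, which is a cleaner way to organize the same computation, and your appeal to condition~(3) of Definition~\ref{sec1:def1} is the right ingredient (the paper attributes this to Proposition~\ref{prop:compat}, which only gives $f(K^{\perp_{\epol}})=K(\bpol)$ and is a less precise reference).

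One remark: the ``main subtlety'' you flag at the end --- matching normalisations so as to pass from a projective to an affine equality of theta null points --- is not actually needed. The variety $\Modu_m$ is a locus of \emph{projective} points, so the projective coincidence of the theta null points of $(B'',\bpol'',\Thetabpolp)$ and $(B,\bpol,\Thetabpol)$ already forces the isomorphism of marked abelian varieties. The rigidification machinery of Remark~\ref{rk:affinefcompat} and Corollary~\ref{sec1:cor1} is harmless here but a detour; Proposition~\ref{sec1:prop1}, which works up to a scalar, suffices on its own.
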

\begin{proof}
    The first claim of the Proposition is just \cite[Proposition 11]{DRmodular}. Suppose
    that $y \in \vjxz (\overk)$, as $(A, \pol, \Thetapol)$ and
    $(B, \bpol, \Thetabpol)$ are isog-$f$-compatible, we know that $\Ker f= \Spol(\{0\}
    \times \dnu_{d,n}(\dZ(d)))$.
    Moreover, since $\Ker f \cap \Spol(\mu_{d,n}(Z(d))\times\{0\}) = \{ 0 \}$ and $d| m$, by Proposition
    \ref{prop:compat}, $f(\Spol(\mu_{d,n}(Z(d)) \times \{0\}))=\Sbpol(\mu_{d,m}(Z(d)) \times \{0\})$ so that
    $B/K_0=B/\Sbpol(\mu_{d,m}(Z(d)) \times \{0\})=A/\Spol(\mu_{d,n}(Z(d)) \times
    \dnu_{d,n}(\dZ(d)))=A/A[d]$, which is isomorphic to $A$.
\end{proof}
\begin{remark}
    The preceding Proposition shows a first striking difference between the case $d$ prime
    to $n$ and the case $d|n$. In the case $d$ prime to $n$, \cite[Proposition 20]{DRmodular}
     shows that any abelian variety of the form $B/K$ where $K$ is isomorphic to $Z(d)$ 
    can be equipped with a theta structure
    $(A, \pol, \Thetapol)$ such that the associated theta null point $y$ is in
    $\vjxz(\overk)$. In the case $d | n$, on the contrary, the only abelian variety
    appearing in $\vjxz(\overk)$ is $A/K_0$ with $K_0=\Sbpol(\mu_{d,m}(Z(d))\times \{0\})
    \subset B$.
\end{remark}

\section{Action of the metaplectic group}\label{sec:trans}
We would like to have a better understanding of the set $\vjxz(\overk)$. In this section, we follow closely \cite[Seciton 5.2]{DRmodular}.

Recall that an automorphism of $G(n)$ for $n$ a positive integer is a
group automorphism which respect the structure of central extension of $G(n)$. We
denote by $\Aut(G(n))$ the set of automorphisms of $G(n)$, which is called the level $n$ metaplectic
group. We say that $g_s \in \Aut(G(n))$
is symmetric if $g_s \circ D_{-1}=D_{-1} \circ g_s$. We denote by $\Auts(G(n))$ the set of symmetric automorphisms of
$G(n)$. Note that, by \cref{theo:moduli}, $\Aut(G(n))$ (resp. $\Auts(G(n))$) acts freely and transitively on the set of
theta structures (resp. of symmetric theta structures) of type $K(n)$ by $g' \mapsto
\Thetapol \circ g'$. In particular, there is an action of $\Auts(G(n))$ on $\Modu_n$. 

Recall \cite[Definition 16]{DRmodular},
\begin{definition}\label{def:compat}
    We say that $g_c \in \Auts(G(n))$ is compatible with $G(m)$ if and only if:
    \begin{enumerate}
	\item $g_c(\{1\}\times\{0\} \times \dnu_{d,n}(\dZ(d)))=\{1\}\times\{0\} \times
	    \dnu_{d,n}(\dZ(d))$;
	\item for all $x \in \{1\}\times\{0\} \times \dZn$, $g_c(x) = x \mod (\{1\}\times\{0\}
	    \times \dnu_{d,n}(\dZ(d)))$;
	\item for all $x \in \{1\}\times \mu_{m,n}(Z(m)) \times\{0\}$, $g_c(x) = x \mod (\{1\}\times\{0\}
	    \times \dnu_{d,n}(\dZ(d)))$.
    \end{enumerate}
\end{definition}

\begin{lemma}{\cite[Lemma 17]{DRmodular}}\label{lem:action}
Let $\subG(n)$ be the biggest subgroup of $\Auts(G(n))$ which acts on
$\vjxz(\overk)$. Then $\subG(n)$ is exactly the subgroup of elements of
    $\Auts(G(n))$ which are compatible with $G(m)$.
In particular, $\subG(n)$ is independent of $(B, \bpol, \Thetabpol)$.
\end{lemma}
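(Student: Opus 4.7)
The plan is to prove the two inclusions separately, using the parametrisation of $\vjxz(\overk)$ provided by Proposition \ref{prop:charac} to translate back and forth between Definition \ref{def:compat} (compatibility of $g_c$ with $G(m)$) and Definition \ref{sec1:def1} ($f$-compatibility of marked abelian varieties).

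For the inclusion $(\supseteq)$, fix $g_c \in \Auts(G(n))$ satisfying the three items of Definition \ref{def:compat} and an arbitrary $y \in \vjxz(\overk)$. By Proposition \ref{prop:charac}, $y$ is the theta null point of some marked abelian variety $(A, \pol, \Thetapol)$ equipped with an isogeny $f: A \to B$ such that $(A, \pol, \Thetapol)$ and $(B, \bpol, \Thetabpol)$ are $f$-compatible. I then verify that $(A, \pol, \Thetapol \circ g_c)$ is again $f$-compatible with $(B, \bpol, \Thetabpol)$, which will show $g_c \cdot y \in \vjxz(\overk)$. Items (1) and (2) of Definition \ref{sec1:def1} follow from item (1) of Definition \ref{def:compat}: since $g_c$ preserves $\{1\}\times\{0\}\times\dnu_{d,n}(\dZ(d))$ setwise, the kernel and the level subgroup $\tildeK$ associated with $\Thetapol \circ g_c$ coincide with those of $\Thetapol$. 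For items (3) and (4), I use that $\fsharp$ factors through the quotient $\Gspol/\tildeK$: items (3) and (2) of Definition \ref{def:compat} assert precisely that $g_c$ is the identity modulo $\{1\}\times\{0\}\times\dnu_{d,n}(\dZ(d))$ on $\{1\}\times\mu_{m,n}(Z(m))\times\{0\}$ and on $\{1\}\times\{0\}\times\dZn$ respectively, so after applying $\fsharp$ the images of the generators agree with those for $\Thetapol$.

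For the converse inclusion $(\subseteq)$, fix $g_c \in \Auts(G(n))$ preserving $\vjxz(\overk)$, pick a single $y \in \vjxz(\overk)$ with associated data $(A, \pol, \Thetapol, f)$, and run the argument in reverse. The translated point $g_c \cdot y$ lies in $\vjxz(\overk)$ and corresponds to $(A, \pol, \Thetapol \circ g_c)$ together with an isogeny $f' : A \to B$ realising $f'$-compatibility. The specialisation relations in $J_x$ together with Proposition \ref{sec1:prop1} fix the kernel of this isogeny, forcing $f'=f$ and the same descent data $\tildeK$; item (1) of Definition \ref{def:compat} is exactly this equality. Items (3) and (4) of Definition \ref{sec1:def1} applied to $\Thetapol \circ g_c$ then read $\fsharp(\Thetapol(g_c(x))) = \fsharp(\Thetapol(x))$ for $x$ in either $\{1\}\times\mu_{m,n}(Z(m))\times\{0\}$ or $\{1\}\times\{0\}\times\dZn$; by injectivity of $\fsharp$ on $\Gspol/\tildeK$ this translates back into items (3) and (2) of Definition \ref{def:compat}.

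The main obstacle compared to the setting of \cite{DRmodular} is that in the general case $d \wedge m \ne 1$ treated here, the remark following Proposition \ref{prop:charac} shows that $\vjxz(\overk)$ is smaller, since only the abelian variety $A = B/K_0$ appears. One must check that a single element of the fiber still carries enough information to pin down all three conditions of Definition \ref{def:compat}; this works because the kernel and the descent data of the underlying isogeny are already determined intrinsically by the specialisation ideal $J_x$. Since the resulting conditions involve only the central extensions $G(n)$ and $G(m)$ and not the specific marked abelian variety $(B, \bpol, \Thetabpol)$, the independence statement follows automatically.
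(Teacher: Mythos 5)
The paper does not actually prove this lemma itself: it imports it verbatim from \cite[Lemma 17]{DRmodular} and then, in the proof of the Proposition that immediately follows (the principal-homogeneous-space statement), re-derives only the $(\subseteq)$ direction (given two $f$-compatible theta structures on the same $(A,\pol)$, the automorphism $g_c=\Thetapol^{-1}\circ\Theta'_\pol$ is compatible with $G(m)$). Your proposal is correct and is essentially the same argument, spelled out in both directions. The translation back and forth between items (1)--(4) of Definition \ref{sec1:def1} and items (1)--(3) of Definition \ref{def:compat}, and the observation that $\fsharp$ kills exactly $\tildeK$ so that ``equal after $\fsharp$'' is the same as ``equal modulo $\{1\}\times\{0\}\times\dnu_{d,n}(\dZ(d))$,'' matches what the paper does.

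Two small points worth tightening if you write this up. First, in the $(\subseteq)$ direction you should say explicitly why $f'=f$: it is not really the specialisation ideal $J_x$ that does the work, but the second part of Proposition \ref{prop:charac}, which identifies $A$ with $B/K_0$ and $f$ with the contragredient of the quotient isogeny $\hatf$; since $g_c$ only changes the theta structure and not $(A,\pol)$, the isogeny is the same for $y$ and $g_c\cdot y$. Second, to deduce item (1) of Definition \ref{def:compat} you need the equality of level subgroups $\tildeK$ (item (2) of Definition \ref{sec1:def1}), not merely of kernels; equality of the kernels $\Spol(\{0\}\times\dnu_{d,n}(\dZ(d)))$ only gives the statement at the level of $K(n)$, while the full item (1) lives in $G(n)$. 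You do invoke the descent data, so the idea is there, but it should be the level subgroup, not the kernel, that is identified as being preserved.
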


\begin{proposition}
The variety $\vjxz(\overk)$ is a principal homogeneous space over $\subG(n)$.
\end{proposition}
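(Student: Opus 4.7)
The plan is to check that the action of $\subG(n)$ on $\vjxz(\overk)$ provided by \cref{lem:action} is both free and transitive, mirroring the approach of \cite[Proposition 20]{DRmodular} in our more general setting, using \cref{prop:charac} to rigidify the moduli data in place of the classification available when $d \wedge n = 1$.

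For freeness, recall that $\Auts(G(n))$ acts freely on the set of symmetric theta structures of type $K(n)$ for a fixed pair $(A,\pol)$, and that the theta null point together with $(A, \pol)$ determines the canonical basis of $\Gamma(A,\pol)$ (and hence the theta structure) through \cref{prop:base}. Consequently, if $g \in \subG(n)$ fixes $y \in \vjxz(\overk)$ associated with $(A, \pol, \Thetapol)$, the theta structures $\Thetapol$ and $\Thetapol \circ g$ must coincide and $g = \id$.

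For transitivity, take $y_1, y_2 \in \vjxz(\overk)$ with associated $f_i$-compatible marked abelian varieties $(A_i, \pol_i, \Thetapol^{(i)})$. By \cref{prop:charac}, both $A_i$ are canonically isomorphic to $B/K_0$ with $K_0 = \Sbpol(\mu_{d,m}(Z(d)) \times \{0\})$, with $f_i$ identified with the contragredient of the quotient $B \to B/K_0$. Since condition (2) of \cref{sec1:def1} gives $\pol_i = f_i^*(\bpol)$, after these identifications we obtain a single pair $(A, \pol)$ equipped with two symmetric theta structures $\Thetapol^{(1)}, \Thetapol^{(2)}$ both $f$-compatible with $(B, \bpol, \Thetabpol)$. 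Setting $g = (\Thetapol^{(1)})^{-1} \circ \Thetapol^{(2)} \in \Auts(G(n))$, we have $g \cdot y_1 = y_2$ by construction, so it only remains to verify $g \in \subG(n)$, that is, the three conditions of \cref{def:compat}.

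Each condition is obtained by unwinding the corresponding condition of \cref{sec1:def1}: condition (1) of \cref{def:compat} follows from condition (2) of \cref{sec1:def1} because both $\Thetapol^{(i)}$ induce the same descent level subgroup $\tildeK$ for $\pol = f^*(\bpol)$, so $g$ stabilises $\{1\}\times\{0\}\times\dnu_{d,n}(\dZ(d))$; condition (2) of \cref{def:compat} follows from condition (4) of \cref{sec1:def1}, since both $\Thetapol^{(i)}((1,0,y))$ have the same image $\Thetabpol((1,0,\drho_{n,m}(y)))$ under $\fsharp$ and therefore differ by an element of $\Ker \fsharp = \tildeK$; condition (3) is obtained identically from condition (3) of \cref{sec1:def1}. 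The main technical obstacle I anticipate is justifying the identification step, namely that the isomorphisms $A_1 \iso B/K_0 \iso A_2$ from \cref{prop:charac} can be chosen compatibly so that $\pol_1$ and $\pol_2$ match and the automorphism $g$ of $G(n)$ is genuinely well defined.
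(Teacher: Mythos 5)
Your proof is correct and follows essentially the same route as the paper: use Proposition~\ref{prop:charac} to pin down $(A,\pol)$ as $B/K_0$ with $\pol=f^*(\bpol)$, reduce $\vjxz(\overk)$ to the set of symmetric theta structures on this fixed $(A,\pol)$ that are $f$-compatible with $\Thetabpol$, take $g=(\Thetapol^{(1)})^{-1}\circ\Thetapol^{(2)}$, and verify the three conditions of Definition~\ref{def:compat} by unwinding conditions (2), (4), (3) of Definition~\ref{sec1:def1} and the fact that $\Ker(\fsharp)=\tildeK$. The explicit freeness paragraph is a sensible addition the paper leaves implicit (it follows from the remark preceding the proposition that $\Auts(G(n))$ acts freely and transitively on theta structures). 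As for the ``technical obstacle'' you flag at the end, it is not a genuine issue: condition (2) of Definition~\ref{sec1:def1} forces the equality $\pol_i=f_i^*(\bpol)$ with a specified descent datum $\tildeK$, so once $A$ is identified with $B/K_0$ via the contragredient isogeny, the line bundles and theta groups coincide on the nose rather than merely up to isomorphism, and $g$ is well defined.
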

\begin{proof}
By Proposition \ref{prop:charac},
there is a one-to-one correspondence between $\vjxz(\overk)$ and the set of marked abelian
varieties $(A, \pol, \Thetapol)$ which are isog-$f$-compatible with $(B, \bpol,
\Thetabpol)$. But the same Proposition tells that if $(A, \pol, \Thetapol)$ is
a marked abelian variety isog-$f$-compatible with $(B, \bpol, \Thetabpol)$, $A$ is fixed up to an
isomorphism and by
definition $\pol=f^*(\bpol)$. Thus, we see that $\vjxz(\overk)$ is in
one-to-one correspondence with the set of symmetric theta structures $\Thetapol$ such that $(A, \pol,
\Thetapol)$ is isog-$f$-compatible with $(B, \pol, \Thetabpol)$.

Now, let $(A, \pol, \Thetapol)$ and $(A, \pol, \Theta'_\pol)$, be two marked abelian
varieties which are isog-$f$-compatible with $(A, \pol, \Thetabpol)$. Let $g_c = \Thetapol^{-1}
\circ \Theta'_\pol$. Because $\Thetapol$ and $\Theta'_\pol$
are symmetric, we have $g_c\in\Auts(G(n))$. Thanks to conditions (1) of Definition \ref{sec1:def1}, we have
$\Thetapolbar(\{0\} \times \dnu_{d,n}(\dZ(d)))=\bar{\Theta'_\pol}(\{0\} \times \dnu_{d,n}(\dZ(d)))$.
Condition (2) of Definition \ref{sec1:def1} allows us to conclude that
$g_c(\{1\}\times\{0\} \times \dnu_{d,n}(\dZ(d)))=\{1\}\times\{0\} \times \dnu_{d,n}(\dZ(d))$.
To verify points (2) and (3) of Definition \ref{def:compat}, we just need to apply conditions (4) and (3)
of Definition \ref{sec1:def1} (see Proposition \ref{prop:compat} for the definition of $\fsharp$).
We have proved that $g_c$ is compatible with $G(m)$. Since $\Theta'_\pol=\Thetapol \circ g_c$, we are done.
\end{proof}

Let $(A, \pol, \Thetapol)$ be an element of $\vjxz(\overk)$ so that there exists $f: A
\rightarrow B$ such that $(A, \pol, \Thetapol)$ and $(B, \bpol, \Thetabpol)$ are
isog-$f$-compatible.
\begin{definition}\label{def:G0}
    We denote by $\subzG(n)$ the subgroup of $\subG(n)$ such that for all $g_0 \in \subzG(n)$,
    if we set $\Theta'_\pol = \Thetapol \circ g_0$, then $f(\Thetappolbar(Z(n) \times
    \{0\}))$ is a fixed subgroup $G$ of $B[n]$.
\end{definition}
It is clear that $\subzG(n)$ is a subgroup of $\subG(n)$ and we will see in Proposition
\ref{prop:action2} that it is independent of $G$. Recall that $\pi_{G(n)}:G(n)\rightarrow K(n)$ is the canonical projection.

Denote by $\Sp(K(n))$ the group of symplectic (for the commutator pairing) automorphisms of $K(n)$.
Recall that from \cite{DRmodular}, there is an exact sequence:
\begin{equation}\label{eq:exactaut}
  \begin{tikzpicture}
      \matrix [column sep={1cm}, row sep={1cm}]
    { 
      \node(a){$0$}; & \node(b){$K(n)$}; & \node(c){$\Aut(G(n))$}; & \node(d){$\Sp(K(n))$};
      & \node(e){$0$}; \\
     };
      \draw [->] (a) -- (b);
      \draw [->] (b) -- (c) node[above, midway] {$\Phi$};
      \draw [->] (c) -- (d) node[above, midway] {$\Psi$};
    \draw [->] (d) -- (e);
  \end{tikzpicture}
\end{equation}
where $\Phi$ is defined as
\begin{equation}\label{eq:defphi}
    \begin{split}
    \Phi: K(n) & \rightarrow \Aut(G(n)) \\
	c & \mapsto g_c: (\alpha, x, y) \mapsto (\alpha e_n(c, (x,y)), x,y),
    \end{split}
\end{equation}
and $\Psi(g')$ is the unique symplectic automorphism $\bar{g}$ of $K(n)$ such that $\pi_{G(n)}
\circ g'=\bar{g} \circ \pi_{G(n)}$. Moreover, $\Phi^{-1}(\Auts(G(n)))=K(2)$.

From the preceding result, the computation of the action of $\Auts(G(n))$ on $\Modu_n$
reduces to the computation of $\Phi(K(2))$ and the action of a (set) section of $\Psi$.
The following simple Lemma takes care of the action by $\Phi(K(2))$.
\begin{lemma}\label{lem:actionK}
    Let $(A, \pol, \Thetapol)$ be a marked abelian variety of type $K(n)$.
    Let $(\theta^\Thetapol_i)_{i \in \Zn}$ be the associated basis of $\Gamma(A,\pol)$
    according to Proposition \ref{prop:base}. Let $g_c = \Phi(c)$ for $c=(c_1, c_2) \in
    K(n)=Z(n) \times \dZn$. Then, for all $i \in \Zn$, we have:
    \begin{equation}
	\theta^{\Thetapol \circ g_c}_i = c_2(i) \theta^\Thetapol_{i -c_1}.
    \end{equation}
\end{lemma}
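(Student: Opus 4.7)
The plan is a direct computation that unpacks Proposition \ref{prop:base} for the twisted theta structure $\Thetapol \circ g_c$ and compares with the original basis. First I would rewrite $\Phi(c)$ explicitly using equation (\ref{eq:commutator}): for $c=(c_1,c_2)\in Z(n)\times\dZn$,
\begin{equation*}
g_c(1,x,y)= (c_2(x)/y(c_1),\, x,\, y),
\end{equation*}
so in particular $g_c(1,0,y)=(y(-c_1),0,y)$ and $g_c(1,i,0)=(c_2(i),i,0)$.

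Next I would identify $\theta_0^{\Thetapol\circ g_c}$ up to scalar. By Proposition \ref{prop:base}, it spans the image of $\pi_{\Thetapol\circ g_c}$, which is the line of sections $s\in\Gamma(A,\pol)$ fixed by all $(\Thetapol\circ g_c)(1,0,y)= y(-c_1)\,\Thetapol(1,0,y)$, equivalently the $y\mapsto y(c_1)$-eigenline for the action of $\Thetapol(\{1\}\times\{0\}\times\dZn)$. Using the Heisenberg law and $\Thetapol(1,0,y).\theta_0^\Thetapol=\theta_0^\Thetapol$, I would check
\begin{equation*}
\Thetapol(1,0,y).\theta_{-c_1}^\Thetapol
=\Thetapol(1,0,y)\Thetapol(1,-c_1,0).\theta_0^\Thetapol
=y(c_1)\,\Thetapol(1,-c_1,0)\Thetapol(1,0,y).\theta_0^\Thetapol
=y(c_1)\,\theta_{-c_1}^\Thetapol,
\end{equation*}
so $\theta_{-c_1}^\Thetapol$ lies in the $y(c_1)$-eigenline. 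Since this line is one-dimensional (it is the image of a nonzero projection) and $\theta_{-c_1}^\Thetapol\neq 0$, we may take $\theta_0^{\Thetapol\circ g_c}=\theta_{-c_1}^\Thetapol$ (equality up to the free overall scalar built into Proposition \ref{prop:base}).

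Finally, applying the definition of the basis to $\Thetapol\circ g_c$ and using $g_c(1,i,0)=(c_2(i),i,0)$ together with the Heisenberg law,
\begin{equation*}
\theta_i^{\Thetapol\circ g_c}
=(\Thetapol\circ g_c)(1,i,0).\theta_0^{\Thetapol\circ g_c}
=c_2(i)\,\Thetapol(1,i,0)\Thetapol(1,-c_1,0).\theta_0^\Thetapol
=c_2(i)\,\Thetapol(1,i-c_1,0).\theta_0^\Thetapol
=c_2(i)\,\theta^\Thetapol_{i-c_1},
\end{equation*}
which is the claimed formula. The only real subtlety is the bookkeeping of character evaluations and of the $k^*$-factors coming from the Heisenberg cocycle, since the formula $e_n((\alpha_1,\beta_1),(\alpha_2,\beta_2))=\beta_1(\alpha_2)/\beta_2(\alpha_1)$ and the non-commutativity of $G(n)$ must be tracked consistently; once the conventions are fixed, the rest is a one-line manipulation.
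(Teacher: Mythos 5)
Your proof is correct and follows essentially the same route as the paper's: both first identify $\theta_0^{\Thetapol\circ g_c}=\theta_{-c_1}^{\Thetapol}$ (the paper by applying the averaging projection to a generic section and using character orthogonality, you by characterizing the image of $\pi_{\Thetapol\circ g_c}$ as the $y\mapsto y(c_1)$-eigenline and checking $\theta_{-c_1}^{\Thetapol}$ lies there via a commutator computation), and then both conclude by applying $(\Thetapol\circ g_c)(1,i,0)$. The only substantive variation is in how you pin down $\theta_0^{\Thetapol\circ g_c}$; the cocycle bookkeeping matches the paper's conventions throughout.
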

\begin{proof}
    Let $\Theta'_\pol=\Thetapol \circ g_c$.
    We use the construction of the basis of $\Gamma(A,\pol)$ provided by Proposition
    \ref{prop:base}. For, $(\alpha, x,y) \in G(n)$, using (\ref{eq:defphi}), we have $\Theta'_\pol ((1,x,y))= 
    \Thetapol((c_2(x)/y(c_1), x,y))$. Thus,
    $\Theta'_\pol((\alpha,x,y)).\theta_i^\Thetapol=\alpha c_2(x) y(-x-i-c_1)
    \theta_{i+x}^\Thetapol$ following the computation of (\ref{eq:trans}).
    We have:
    \begin{equation}
    \begin{split}
	\theta_0^{\Theta'_\pol} & = \sum_{y \in \dZn} \Theta'_\pol((1,0,y)). \sum_{i \in
    \Zn} \theta_i^\Thetapol \\
	& =\sum_{i \in \Zn} \sum_{y \in \dZn} y(-i-c_1) \theta_i^\Thetapol \\
	& = \theta_{-c_1}^\Thetapol.
    \end{split}
    \end{equation}
    Then for $x \in \Zn$, 
    \begin{equation}
	\begin{split}
	    \theta_x^{\Theta'_\pol} &= \Theta'_\pol((1,x,0)).\theta_0^{\Theta'_\pol} \\
	    & = c_2(x)  \theta_{x-c_1}^\Thetapol.
	\end{split}
    \end{equation}
\end{proof}
We would like to compute the action of a section of $\Psi$ on the basis of $\Gamma(A,\pol)$
defined by $\Thetapol$. For the convenience of the
reader, we recall the definition and result from
\cite{DRmodular} which state that the sections of $\Psi$ are in one-to-one correspondence with
semi-characters.

\begin{definition}\label{def:semicharacter}
    Let $\overpsi \in \Sp(K(n))$. A $\overpsi$-semi-character is a map $\chipsi: 
    K(n) \rightarrow k^*$ such that for $(x_1,x_2),(x_1',x_2') \in K(n)=Z(n) \times
    \dZn$, 
  \begin{equation}\label{eq:defchi}
  \chipsi( (x_1+x_1',x_2+x_2') )=\chipsi( (x_1,x_2) ).\chipsi(
  (x_1',x_2')
).[\overpsi((x_1',x_2'))_2(\overpsi((x_1,x_2))_1)].x'_2(x_1)^{-1},
\end{equation}
where $\overpsi((x_1,x_2))=(\overpsi((x_1,x_2))_1,
\overpsi((x_1,x_2))_2)$ (resp.
$\overpsi((x_1',x_2'))=(\overpsi((x_1',x_2'))_1, \overpsi((x_1',x_2'))_2)$) in the canonical
    decomposition of $K(n)=Z(n) \times \dZn$.
A semi-character $\chipsi$ is said to
    be symmetric if for all $(x_1,x_2) \in K(n)$,
$\chipsi(-(x_1,x_2))=\chipsi ( (x_1,x_2) )$.
\end{definition}

The preceding Definition is motivated by the Lemma:
\begin{lemma}
    Let $\psi \in \Aut(G(n))$ and let $\overpsi=\Psi(\psi)$. There exists a unique
  semi-character $\chipsi$ such that for all
    $(\alpha,(x_1,x_2)) \in G(n)$,
  \begin{equation}\label{eq@moneq1}
    \psi: (\alpha,(x_1,x_2) ) \mapsto
(\alpha \chipsi( (x_1,x_2) ), \overpsi( (x_1,x_2 ))).
\end{equation}
    Moreover, $\psi \in \Auts(G(n))$ if and only if $\chipsi$ is a symmetric
    semi-character.

    As a consequence, if $\overpsi \in \Sp(K(n))$, there is a one-to-one
    correspondence between the set of extensions of $\overpsi$ to $\Aut(G(n))$ (resp.
    to $\Auts(G(n))$)
    and the set of semi-characters (resp. symmetric semi-characters).
\end{lemma}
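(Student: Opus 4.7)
The plan is to exhibit $\chipsi$ directly from $\psi$, then derive the semi-character identity from the fact that $\psi$ is a group homomorphism. First I would use that $\psi$ preserves the central extension structure: this means $\psi$ is the identity on the center $k^*$, i.e.\ $\psi(\alpha,0,0)=(\alpha,0,0)$, and $\pi_{G(n)}\circ\psi=\overpsi\circ\pi_{G(n)}$. Writing $(\alpha,x_1,x_2)=(\alpha,0,0)\cdot (1,x_1,x_2)$ in $G(n)$ and using that $\psi$ is a homomorphism, $\psi(\alpha,x_1,x_2)=(\alpha,0,0)\cdot\psi(1,x_1,x_2)$. Since $\pi_{G(n)}(\psi(1,x_1,x_2))=\overpsi(x_1,x_2)$, there must exist a unique scalar $\chipsi(x_1,x_2)\in k^*$ such that $\psi(1,x_1,x_2)=(\chipsi(x_1,x_2),\overpsi(x_1,x_2))$. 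Multiplying on the left by $(\alpha,0,0)$ using the group law of $G(n)$ then yields the formula~\eqref{eq@moneq1}. Uniqueness of $\chipsi$ is immediate: the first coordinate of $\psi(1,x_1,x_2)$ is forced.

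Next I would derive the semi-character identity~\eqref{eq:defchi}. Apply $\psi$ to both sides of the product $(\alpha_1,x_1,y_1)\cdot(\alpha_2,x_2,y_2)=(\alpha_1\alpha_2\,y_2(x_1),x_1+x_2,y_1+y_2)$ in $G(n)$, insert the formula from~\eqref{eq@moneq1}, and use the group law of $G(n)$ again together with linearity of $\overpsi$. Comparing first coordinates gives
\[
y_2(x_1)\,\chipsi(x_1+x_2,y_1+y_2)=\chipsi(x_1,y_1)\chipsi(x_2,y_2)\,\overpsi(x_2,y_2)_2\bigl(\overpsi(x_1,y_1)_1\bigr),
\]
which is exactly~\eqref{eq:defchi} after rearranging the factor $y_2(x_1)^{-1}$. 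Conversely, given any $\overpsi\in\Sp(K(n))$ and any function $\chipsi:K(n)\to k^*$ satisfying~\eqref{eq:defchi}, one defines $\psi$ by~\eqref{eq@moneq1} and reverses the computation to see that $\psi$ is a homomorphism; it preserves the center by construction and is bijective because $\overpsi$ is and the first coordinate is multiplicatively invertible. Thus $\psi\mapsto\chipsi$ is the announced bijection between lifts of a fixed $\overpsi$ to $\Aut(G(n))$ and semi-characters relative to $\overpsi$.

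For the symmetric part, recall that $D_{-1}(\alpha,x,y)=(\alpha,-x,-y)$ and that $\overpsi$ commutes with negation since it is linear. A direct computation gives
\[
\psi\circ D_{-1}(\alpha,x,y)=(\alpha\,\chipsi(-x,-y),\overpsi(-x,-y)),\qquad D_{-1}\circ\psi(\alpha,x,y)=(\alpha\,\chipsi(x,y),\overpsi(-x,-y)),
\]
so $\psi\in\Auts(G(n))$ is equivalent to $\chipsi(-x,-y)=\chipsi(x,y)$ for all $(x,y)\in K(n)$, i.e.\ to $\chipsi$ being symmetric. Combining with the first part gives the final bijection between extensions of $\overpsi$ to $\Auts(G(n))$ and symmetric semi-characters.

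The only real subtlety is the bookkeeping in the semi-character identity: the cocycle twist $y_2(x_1)$ coming from the multiplication in $G(n)$ on the left-hand side and the twist $\overpsi(x_2,y_2)_2(\overpsi(x_1,y_1)_1)$ coming from the multiplication of the images on the right-hand side must be carefully separated, which is the reason for the somewhat unusual form of~\eqref{eq:defchi}. Once this computation is done cleanly everything else is essentially formal.
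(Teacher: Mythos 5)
Your proof is correct and follows exactly the route the paper indicates: the paper's entire justification for this lemma is the one-line remark that condition \eqref{eq:defchi} ``just means that $\psi$ as defined by \eqref{eq@moneq1} is a group morphism for the group law of $G(n)$,'' and you have simply carried that computation out explicitly, including the easy symmetric case. The only point you gloss over slightly is that in the converse direction one should check $\chipsi(0,0)=1$ (which follows from \eqref{eq:defchi} with $(x_1,x_2)=(x_1',x_2')=0$) to see that $\psi$ fixes the center $k^*$; ``preserves the center by construction'' is true but deserves that one-line verification.
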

Actually, the condition (\ref{eq:defchi}) just means that $\psi$ as defined by (\ref{eq@moneq1}) is a group morphism for the group law of $G(n)$.

Using the preceding Lemma, one can show that there always exists a section of $\Psi$. We
have to adapt a little bit the result \cite[Lemma 15]{DRmodular} to the case of even
level.

\begin{proposition}\label{lem:semi}
    Let $n \in \N$, let $\mB=\{v_k, v_{g+k}\}_{k=1, \ldots, g}$ be a basis of $K(n)=Z(n)\times \dZn$. 
    Let $\overpsi \in \Sp(K(n))$. There exists a
    unique $\overpsi$-semi-character $\chipsi$ such that 
    $\chipsi(v_k)=t_k$ for $k=1, \ldots, 2g$, where
    \begin{itemize}
	\item 
    $t_k^n=\overpsi(v_k)_2(\overpsi(v_k)_1)^{n/2} \in \{-1, 1\}$ if $n$ is even,
\item $t_k^n=1$ if $n$ is odd,
    \end{itemize}
	    and all
    $\overpsi$-semi-characters are obtained in that way. Moreover $\chipsi$ is symmetric if and
    only if $t_k^2=\overpsi(v_k)_2(\overpsi(v_k)_1)$ for $k=1, \ldots, 2g$.
\end{proposition}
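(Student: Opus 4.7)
The plan is to recognize relation \eqref{eq:defchi} as a twisted $2$-cocycle identity on $K(n)$, reduce existence and uniqueness to a consistency check on the basis $\mB$, and isolate the arithmetic obstruction in the orders of the basis vectors.

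Setting $c(a,b) := \overpsi(b)_2(\overpsi(a)_1) \cdot b_2(a_1)^{-1}$, the semi-character relation reads $\chipsi(a+b) = \chipsi(a)\chipsi(b)\, c(a,b)$. I would first check that $c$ is a normalized symmetric $2$-cocycle on $K(n)$: the symmetry $c(a,b)/c(b,a) = e_n(\overpsi(a),\overpsi(b))^{-1} e_n(a,b)$ equals $1$ because $\overpsi$ preserves the pairing, and associativity $c(a,b)c(a+b,c) = c(a,b+c)c(b,c)$ is a similar direct manipulation. Uniqueness of $\chipsi$ given $(t_k)$ is then immediate by induction on the length of a word in $\mB$, and the cocycle conditions define an unambiguous extension to the free abelian group $\Z^{2g}$ generated by $\mB$.

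Next I would descend to $K(n) = (\Z/n\Z)^{2g}$, for which it suffices to require $\chipsi(n v_k) = 1$ for each $k$. Telescoping the cocycle along $\ell \mapsto \ell v_k$, and using $v_{k,2}(v_{k,1}) = 1$ since each basis vector sits in $Z(n)\times\{0\}$ or $\{0\}\times\dZn$, I expect to obtain
\begin{equation*}
\chipsi(\ell v_k) = t_k^{\ell} \cdot \xi_k^{\ell(\ell-1)/2}, \qquad \xi_k := \overpsi(v_k)_2(\overpsi(v_k)_1).
\end{equation*}
Setting $\ell = n$ and using $\xi_k^n = 1$: when $n$ is odd, $n \mid n(n-1)/2$ so $\xi_k^{n(n-1)/2} = 1$, whence $t_k^n = 1$; when $n$ is even, $n(n-1)/2 = (n/2)(n-1)$ with $n-1$ odd and $\xi_k^{n/2} \in \{\pm 1\}$, so $\xi_k^{n(n-1)/2} = \xi_k^{n/2}$ and one needs $t_k^n = \xi_k^{n/2}$. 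Such $t_k$ always exist since $\overk$ is algebraically closed, and any admissible choice produces a distinct $\chipsi$.

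For the symmetry part, evaluating $1 = \chipsi(a + (-a))$ via the cocycle gives $\chipsi(-a) = \chipsi(a)^{-1} \cdot \overpsi(a)_2(\overpsi(a)_1) \cdot a_2(a_1)^{-1}$, so $\chipsi$ is symmetric iff $\chipsi(a)^2 = \overpsi(a)_2(\overpsi(a)_1) \cdot a_2(a_1)^{-1}$ holds for every $a$. A short check using the semi-character relation on both sides, together with $\overpsi$ being symplectic, shows this identity is multiplicative in $a$, so verifying it on $\mB$ is enough; there it reduces exactly to $t_k^2 = \xi_k$. The main obstacle is the descent step: carefully tracking $\xi_k^{\ell(\ell-1)/2}$ modulo the order-$n$ relation is precisely where the parity of $n$ enters and produces the even/odd dichotomy in the statement, while everything else is formal cocycle bookkeeping.
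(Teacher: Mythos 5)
Your proof is correct and mirrors the paper's argument: propagate $\chipsi$ from the basis via the semi-character relation, obtain $\chipsi(\ell v_k)=t_k^{\ell}\,\xi_k^{\ell(\ell-1)/2}$ with $\xi_k=\overpsi(v_k)_2(\overpsi(v_k)_1)$, read the existence constraint off $\chipsi(n v_k)=1$ using that $\xi_k$ is an $n^{\mathrm{th}}$-root of unity, and reduce symmetry to the basis. You also supply two pieces of bookkeeping the paper leaves implicit (deferring instead to an external lemma of DRmodular): the symmetric-$2$-cocycle consistency, which requires $c(a,b)=c(b,a)$ and follows from $\overpsi\in\Sp(K(n))$, and the explicit check that the criterion $\chipsi(a)^2=\overpsi(a)_2(\overpsi(a)_1)\,a_2(a_1)^{-1}$ is multiplicative in $a$, again via symplecticity, which justifies verifying it only on $\mB$.
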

\begin{proof}
    For $k=1, \ldots, 2g$, let $t_k$ be any element of $\xg_{m,k}(\overk)$. We prove the
    unicity of a $\overpsi$-semi-character such that $\chipsi(v_k)=t_k$ exactly as in
    \cite[Lemma 15]{DRmodular} by using the relation (\ref{eq:defchi}) to deduce the value of
    $\chipsi(v)$ for all $v \in K(n)$.

    We remark that $\chipsi(0)=1_{\xg_{m,k}}$ and that for all $v \in K(n)$,
    $\chipsi(-v)=\chipsi(v)^{-1} \overpsi(v)_2(\overpsi(v)_1) v_2(v_1)^{-1}$ (where
    $v=(v_1, v_2) \in \Zn \times \dZn$). Moreover an easy induction shows that for
    all $k \in \N$ and $v \in K(n)$, 
    \begin{equation}
	\chipsi(k.v)=\chipsi(v)^k \overpsi(v)_2(\overpsi(v)_1)^{k (k-1)/2}v_2(v_1)^{-k
	(k-1)/2}.
    \end{equation}
    We deduce that there exists a $\overpsi$-semi-character $\chipsi$ such that
    $\chipsi(v_k)=t_k$ at the condition that $\chipsi(n.v_k)=1_{\xg_{m,k}}$. As
    $\overpsi(v)_2(\overpsi(v)_1)$ is a $n^{th}$-root of unity, this means that
    \begin{equation}\label{eq:lift1}
	\begin{split}
	t_k^n = \overpsi(v_k)_2(\overpsi(v_k)_1)^{n/2}\ \text{if}\ n\ \text{is even},\\
	t_k^n=1\ \text{if}\ n\ \text{is odd}.
	\end{split}
    \end{equation}

    Moreover, $\chipsi$ is symmetric if and only if for all $v \in K(n)$,
    $\chipsi(-v)=\chipsi(v)^{-1} \overpsi(v)_2(\overpsi(v)_1)v_2(v_1)^{-1}=\chipsi(v)$. This means that for $k=1,
    \ldots, 2g$:
    \begin{equation}\label{eq:lift2}
	t_k^2= \overpsi(v_k)_2(\overpsi(v_k)_1).
    \end{equation}
    Note that (\ref{eq:lift2}) implies (\ref{eq:lift1}) if $n$ is even.
\end{proof}
\begin{remark}
In the previous Proposition, choices of $\overpsi$-semi-characters representing two
    elements of $\Psi^{-1}(\overpsi)$ differ by choices of $n^{th}$-root of
    $\overpsi(v_k)_2(\overpsi(v_k)_1)^{n/2}$, thus by an action of $\Phi(K(n))$, which is
    consistent with the exact sequence (\ref{eq:exactaut}). In the same way, two choices
    of symmetric $\overpsi$-semi-characters representing two elements of
    $\Psi^{-1}(\overpsi)$ differ by an action of $\Phi(K(2))$.
\end{remark}
As seen previously, there is an action of $\Auts(G(n))$ on $\Modu_n$ which decomposes via
the exact sequence (\ref{eq:exactaut}) into an action of $K(2)$ described in Lemma \ref{lem:actionK} and an action of
$\Sp(K(n))$, which is defined up to the action of $K(2)$. In the classical analytic theory
of theta functions, there is an action of $\Sp_{2g}(\Z)$ on $\mathcal{H}_g$, the $g$-dimensional Ziegel 
upper half space, which induces an action on classical theta function. This action is given by the so
called transformation formula (see \cite{BiLaCAV, MR85h:14026}). The following result makes explicit the action of
$\Sp(K(n))$ on $\Modu_n$ and thus, can be seen as an analog of the transformation formula.

    Let $\mB=\{v_k, v_{k+g} \}_{k=1, \ldots, g}$ be the canonical basis of $K(n)=Z(n)\times
    \dZn$. Denote by $\Sp_{2g}(\Z/n\Z)$ the group of symplectic matrices of dimension $2g$ with coefficients
    in $\ZZn$. If $M \in \Sp_{2g}(\ZZn)$, we denote by $\psimb(M)$ the element of
    $\Sp(K(n))$ whose matrix is $M$ in the basis $\mB$.
    For $A \in GL_g(\ZZn)$, we denote by $B_g(A) \in \Sp_{2g}(\ZZn)$ the matrix of the form
    $\begin{pmatrix} A & 0_g \\ 0_g & ^t A^{-1} \end{pmatrix}$. 
	For $C \in M_g(\ZZn)$ a
	symmetric matrix, we denote by $S_g(C) \in Sp_{2g}(\ZZn)$ the matrix of the form
	$\begin{pmatrix} 1_g & 0_g \\ C & 1_g \end{pmatrix}$. Finally, let
	    $H_g=\begin{pmatrix} 0_g & 1_g
	    \\ -1_g & 0_g \end{pmatrix} \in \Sp_{2g}(\ZZn)$.

\begin{proposition}\label{prop:trans}
    Let $(A, \pol, \Thetapol)$ be a marked abelian variety of type $K(n)$. The group
    $\Sp_{2g}(\ZZn)$ is generated by the matrices $B_g(A)$ for $A \in GL_g(\ZZn)$,
    $S_g(C)$ for $C \in M_g(\ZZn)$ symmetric and $H_g$. In order to describe the action
    of $\psimb(\Sp_{2g}(\ZZn))$ on $\Gamma(A,\pol)$, it is thus enough to do it for matrices of
    the form $B_g(A)$, $S_g(C)$ and $H_g$. This is given by:
    \begin{enumerate}
	\item Let $A \in GL_g(\ZZn)$, denote by $\gamma_A \in GL(Z(n))$ the automorphism with
	    matrix $A$ in the basis $\{v_1, \ldots, v_g\}$. We can choose $\gamma \in
	    \Psi^{-1}(\psimb(B_g(A)))$ so that there exists $\lambda \in  \overk^*$ such
	    that for all $i \in \Zn$:
	    \begin{equation}\label{eq:prop10:1}
		\theta_i^{\Thetapol \circ \gamma}=\lambda \theta^\Thetapol_{\gamma_A(i)}.
	    \end{equation}
	\item Let $C \in M_g(\ZZn)$ be a symmetric matrix, denote by $\gamma_C: \Zn
	    \rightarrow \dZn$ the morphism given in the basis $\{v_1, \ldots, v_g \}$
	    of $Z(n)$ and $\{ v_{g+1}, \ldots, v_{2g} \}$ of $\dZn$ by the matrix $C$. We can
	    choose $\gamma \in \Psi^{-1}(\psimb(S_g(C)))$ so that there exists $\lambda \in
	    \overk^*$ such that for $i \in \Zn$:
	    \begin{equation}\label{eq:prop10:2}
            \theta_i^{\Thetapol \circ \gamma}= \lambda {(\gamma_C(i)(i))}^{-1/2}	
            \theta^\Thetapol_{i}.
	    \end{equation}
	    Let $\ell$ such that $\gamma_C(i)(i)$ is a primitive $\ell^{th}$-root of
	    unity. If $\ell$ is odd $\sqrt{\gamma_C(i)(i)}=\gamma_C(i)(i)^{(\ell +1)/2}$
	    is uniquely determined. If $\ell$ is even, 
	    the sign of the square roots are chosen in the following manner: we choose
	    signs for $\sqrt{\gamma_C(v_k)(v_k)}$ for $k=1, \ldots, g$ arbitrarily and we compute the
	    other signs using the composition law:
	    \begin{equation}\label{eq:choicesign}
		\sqrt{\gamma_C(i+j)(i+j)}=\sqrt{\gamma_C(i)(i)} \sqrt{\gamma_C(j)(j)}
		\gamma_C(i)(j),
	    \end{equation}
	    for all $i,j \in \Zn$.
	\item Denote by $\gamma_{H_g}: \Zn \rightarrow \dZn$ the morphism such that
	    $\gamma_{H_g}(v_k)= v_{k+g}$ for $k=1,\ldots,g$. We can choose $\gamma \in
	    \Psi^{-1}(\psimb(H_g))$ so that there exists
	    $\lambda \in \overk^*$ such that for $i \in
	    \Zn$:
	    \begin{equation}\label{eq:prop10:3}
		\theta_i^{\Thetapol \circ \gamma}= \lambda \sum_{j \in \Zn}
		\gamma_{H_g}(i)(j)
		\theta_j^\Thetapol.
	    \end{equation}
    \end{enumerate}
\end{proposition}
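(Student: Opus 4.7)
The plan is to first establish the generation statement, then for each of the three generator types pick an explicit lift $\gamma \in \Psi^{-1}(\psi(M))$ via a symmetric semi-character as allowed by Lemma~\ref{lem:semi}, compute $\theta_0^{\Thetapol\circ\gamma}$ as the image of $\theta_0^\Thetapol$ under $\pi_{\Thetapol\circ\gamma}$ (Proposition~\ref{prop:base}), and finally read off $\theta_i^{\Thetapol\circ\gamma}$ by applying $(\Thetapol\circ\gamma)((1,i,0))$ via the action formula (\ref{eq:trans}). The generation of $\Sp_{2g}(\Z/n\Z)$ by the matrices $B_g(A)$, $S_g(C)$ and $H_g$ is classical and reduces to elementary symplectic row/column reductions over a commutative ring.

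For $M = B_g(A)$, the automorphism $\overpsi$ preserves the decomposition $Z(n)\oplus\dZn$, acting as $\gamma_A$ on the first factor and as $(\gamma_A^*)^{-1}$ on the second, where $\gamma_A^*$ is the dual map. A direct substitution in (\ref{eq:defchi}) using the identity $(\gamma_A^*)^{-1}(x_2')(\gamma_A(x_1)) = x_2'(x_1)$ shows that $\chipsi \equiv 1$ is an admissible symmetric semi-character; the resulting lift $\gamma$ fixes $\{1\}\times\{0\}\times\dZn$ pointwise, hence $(\Thetapol\circ\gamma)((1,0,y)).\theta_0^\Thetapol = \theta_0^\Thetapol$ and $\theta_0^{\Thetapol\circ\gamma} = \lambda\theta_0^\Thetapol$. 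Applying $\Thetapol((1,\gamma_A(i),0))$ and formula (\ref{eq:trans}) immediately yields (\ref{eq:prop10:1}).

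For $M = S_g(C)$, the relation $\overpsi((x,y)) = (x, y+\gamma_C(x))$ and the conditions $t_k^2 = \gamma_C(v_k)(v_k)$ for $k \leq g$, $t_{k+g}=1$ for $k \leq g$, determine, after propagating signs according to (\ref{eq:choicesign}), the semi-character $\chipsi(x,y) = \gamma_C(x)(x)^{1/2}$. The main technical point is to check that the cocycle (\ref{eq:defchi}) is equivalent, for this choice, to the multiplicative rule (\ref{eq:choicesign}); this follows from the identity $\gamma_C(x+x')(x+x') = \gamma_C(x)(x)\gamma_C(x')(x')\gamma_C(x)(x')^2$ using the symmetry of $C$ (so that $\gamma_C(x)(x') = \gamma_C(x')(x)$). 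Again $\gamma$ fixes $\{1\}\times\{0\}\times\dZn$, so $\theta_0^{\Thetapol\circ\gamma} = \lambda\theta_0^\Thetapol$, and applying $\gamma((1,i,0)) = (\gamma_C(i)(i)^{1/2}, i, \gamma_C(i))$ via (\ref{eq:trans}) combines the semi-character value $\chipsi(i,0)$ with the commutator factor $\gamma_C(i)(-i) = \gamma_C(i)(i)^{-1}$ to produce the prefactor $\gamma_C(i)(i)^{-1/2}$ of (\ref{eq:prop10:2}).

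For $M = H_g$, one has $\overpsi((x,y)) = (\gamma_{H_g}^{-1}(y), -\gamma_{H_g}(x))$, so $\overpsi(v_k)_2(\overpsi(v_k)_1) = 1$ for every basis vector and we may take $t_k = 1$. The cocycle then forces $\chipsi(x,y) = y(x)^{-1}$, which is symmetric. In contrast to the previous cases, $\gamma((1,0,y)) = (1,\gamma_{H_g}^{-1}(y),0)$ does not fix $\theta_0^\Thetapol$; instead, Proposition~\ref{prop:base} applied to $\theta_0^\Thetapol$ gives
\begin{equation*}
\theta_0^{\Thetapol\circ\gamma} = \lambda \sum_{y \in \dZn} \theta_{\gamma_{H_g}^{-1}(y)}^\Thetapol = \lambda \sum_{j \in Z(n)} \theta_j^\Thetapol.
\end{equation*}
Applying $\gamma((1,i,0)) = (1,0,-\gamma_{H_g}(i))$ to this sum via (\ref{eq:trans}) yields character values $(-\gamma_{H_g}(i))(-j) = \gamma_{H_g}(i)(j)$, giving (\ref{eq:prop10:3}). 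The hardest part of the whole proof is the sign bookkeeping in case (2), where the consistency between the cocycle (\ref{eq:defchi}) and the chosen square roots in (\ref{eq:choicesign}) must be maintained throughout.
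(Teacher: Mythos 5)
Your proof is correct and takes essentially the same approach as the paper: for each generator type you fix a lift $\gamma\in\Psi^{-1}(\psi(M))$ via a symmetric semi-character (Proposition~\ref{lem:semi}), identify $\theta_0^{\Thetapol\circ\gamma}$ via the projector of Proposition~\ref{prop:base}, and propagate to $\theta_i^{\Thetapol\circ\gamma}$ by the action formula (\ref{eq:trans}), which is exactly the paper's strategy. The one place you are notably lighter is the generation statement: asserting that generation of $\Sp_{2g}(\Z/n\Z)$ by $B_g(A)$, $S_g(C)$, $H_g$ is ``classical'' glosses over the case where both the $A$- and $C$-blocks of $M$ are singular over the non-domain $\Z/n\Z$, which requires an extra Smith-normal-form/Gaussian-elimination step that the paper carries out explicitly following \cite{newman_integral_1972}; you should either reproduce that reduction or cite it precisely.
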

\begin{proof}
    We follow the proof of \cite{newman_integral_1972}.
    In order to prove that $\Sp_{2g}(\ZZn)$ is generated by $B_g(A)$ for $A \in GL_g(\ZZn)$,
    $S_g(C)$ for $C \in M_g(\ZZn)$ symmetric and $H_g$, since their inverse is of same form, it
    suffices to prove that starting from $M = \begin{pmatrix} A & B \\ C & D \end{pmatrix}
	\in \Sp_{2g}(\ZZn)$, by acting on $M$ by matrices of the form $B_g(A)$, $S_g(C)$ and
	$H_g$, we recover $1_{2g} \in \Sp_{2g}(\ZZn)$. If $A$ is invertible, then $M_1=B_g(A^{-1})M$ 
    is a matrix of the form $\begin{pmatrix} 1_g & B_1 \\ C_1 & D_1 \end{pmatrix}$ with $C_1$ symmetric. 
	Then $M_2=S_g(-C_1) M_1=\begin{pmatrix} 1_g &  B_1 \\ 0_g & 1_g \end{pmatrix}$,
	and by multiplying $M_2$ by $-H_g S_g(B_1) H_g= \begin{pmatrix} 1_g & -B_1 \\
		0 & 1_g \end{pmatrix}$, we obtain $1_{2g} \in \Sp_{2g}(\ZZn)$.
    If $A$ is non invertible but $C$ is, then $H_g M$ is of the form $\begin{pmatrix} C & D \\ -A & -B \end{pmatrix}$ 
    with $C$ is invertible, and we can proceed as before.
	If $A$ and $C$ are non invertible, since $M$ is invertible, $A$ is of rank $r$ such that $0<r<g$,
    and there exists $U,V\in GL_g(\ZZn)$, $E\in M_r(\ZZn)$ diagonal and non singular (computable thanks to Gaussian elimination)
    such that $A_0:= UAV=\begin{pmatrix} E & 0_{r,g-r} \\ 0_{g-r,r} & 0_{g-r} \end{pmatrix}$. Then,
    $B_g(U)MB_g(V)$ is of the form $\begin{pmatrix} A_0 & B_0 \\ C_0 & D_0 \end{pmatrix}$.
    Let's partition $C_0$ in the same way as $A_0$ ; $C_0=\begin{pmatrix} \tilde C_1 & \tilde C_2 \\ \tilde C_3 & \tilde C_4 \end{pmatrix}$.
    Because $B_g(U)MB_g(V)\in\Sp_{2g}(\ZZn)$, ${}^tA_0C_0$ is symmetric, hence $\tilde C_2 = 0_{r,g-r}$, and therefore $\tilde C_4$ is invertible.
    As a result, $A_0+XC_0$ is invertible with $X = \begin{pmatrix} 0_r & 0_{r,g-r} \\ 0_{g-r,r} & 1_{g-r} \end{pmatrix}$, and we have
    $-H_g S_g(-X) H_g M$ of the form $\begin{pmatrix} A_0+XC_0 &  \tilde B_0 \\ 
	C_0 & D_0 \end{pmatrix}$, with $A_0+XC_0$ invertible, and we can apply the previous computation.

Next, we use Proposition \ref{prop:base} to compute
$\theta_i^{\Thetapol \circ \gamma}$. We remark that if $\psi'=\psimb(B_g(A))$ or
$\psi'=\psimb(S_g(C))$ then $\psi'(\dZn)=\dZn$. We deduce that there exists $\lambda \in \overk^*$
such that $\theta_0^{\Thetapol \circ \gamma} = \lambda
\theta_0^\Thetapol$ if $\gamma \in \Psi^{-1}(\psi')$.

For case (1), let $\overpsi'= \psimb(B_g(A))$, following Proposition \ref{lem:semi}, we choose $\gamma \in \Psi^{-1}(\overpsi')$ defined by the
$\overpsi'$-semi-character such that $\chi_{\overpsi'}(v_k)=1$ for $k=1,\ldots,2g$.
By Proposition \ref{prop:base}, we have for $i \in \Zn$, $\theta_i^{\Thetapol \circ
\gamma}= (\Thetapol \circ \gamma(i)). \theta_0^{\Thetapol\circ \gamma}=\Thetapol((1,
\gamma_A(i), 0)). \lambda \theta_0^\Thetapol= \lambda \theta_{\gamma_{A(i)}}^\Thetapol$ which is
(\ref{eq:prop10:1}).

    For case (2), let $\overpsi'= \psimb(S_g(C))$, according to Proposition \ref{lem:semi},
    we choose $\gamma \in \Psi^{-1}(\overpsi')$ defined by the
    $\overpsi'$-semi-character such that $\overpsi'(v_k)=\sqrt{
    \overpsi(v_k)_2(\overpsi(v_k)_1)}$ for $k=1, \ldots, g$. Let $\ell$
     be such that
$\overpsi(v_k)_2(\overpsi(v_k)_1)$ is a primitive $\ell^{th}$-root of unity. If $\ell$ is odd,
    $\sqrt{\overpsi(v_k)_2(\overpsi(v_k)_1)}$ 
    is uniquely determined, if $\ell$ is even we choose arbitrarily the
signs of the square roots.
    By definition, for $k=1, \ldots, g$, we have $\overpsi(v_k)_2=\gamma_C(v_k)$ and
    $\overpsi(v_k)_1=v_k$, thus we have $\gamma((1,
v_k,0))=(\sqrt{\gamma_C(v_k)(v_k)}, v_k, \gamma_C(v_k))$ for $k=1, \ldots, g$. Then, 
    for all $i \in \Zn$, one can set $\gamma((1,
i,0))=(\sqrt{\gamma_C(i)(i)}, i, \gamma_C(i))$ where the sign of the square root is
    computed using the group law of $G(n)$ (in case there is an ambiguity).
Indeed,
    for $i,j \in \Zn$, we have 
    \begin{equation}
	\begin{split}
    (\sqrt{\gamma_C(i)(i)}, i, \gamma_C(i))
	    (\sqrt{\gamma_C(j)(j)}, j, \gamma_C(j)) & =
    (\sqrt{\gamma_C(i)(i) \gamma_C(j)(j)} \gamma_C(j)(i), i+j, \gamma_C(i+j)) \\
	    & = (\sqrt{\gamma_C(i+j)(i+j)}, i+j, \gamma_C(i+j)).
	\end{split}
	\end{equation}
    For the last equality, we use the fact that for all $i, j \in \Zn$,
    $\gamma_C(j)(i)=\gamma_C(i)(j)$ because of the
symmetry of $C$.
Thus, we have, for $i \in \Zn$, $\theta_i^{\Thetapol \circ \gamma}=(\Thetapol\circ
\gamma(i)). \lambda \theta_0^{\Thetapol \circ \gamma}=\lambda \Thetapol((\sqrt{\gamma_C(i)(i)},
    i, \gamma_C(i))). \theta_0^\Thetapol= \lambda \sqrt{\gamma_C(i)(i)} \gamma_C(i)(-i) \theta_i^\Thetapol$
    for $\lambda \in  \overk^*$ which
    is (\ref{eq:prop10:2}).

    For case (3), let $\overpsi'= \psimb(H_g)$. Following Proposition \ref{lem:semi}, we
 choose $\gamma \in \Psi^{-1}(\overpsi')$ defined by the $\overpsi'$-semi-character such
    that $\chi_{\overpsi'}(v_k)=1$ for $k=1, \ldots, 2g$. Recall from Proposition
    \ref{prop:base}  that $Vect(\theta_0^{\Thetapol \circ \gamma}) \subset \Gamma(A,\pol)$ is
    the image of the projector $s \mapsto \sum_{j \in \dZn}
    \Thetapol \circ \gamma((1,0,j))(s)$. But we have $\sum_{j \in \dZn} \Thetapol \circ
    \gamma((1,0,j))= \sum_{i \in \Zn} \Thetapol((1,i,0))$. Thus 
    we have $\theta_0^{\Thetapol \circ \gamma}= \lambda
    \sum_{i \in \Zn}
    \theta_i^\Thetapol$ for $\lambda \in \overk$. Then, $\theta_i^{\Thetapol\circ \gamma}=\lambda \Thetapol\circ
    \gamma((1,i,0)). (\sum_{j \in \Zn} \theta_j^\Thetapol)= \lambda \sum_{j \in \Zn}
    \Thetapol \circ \gamma_{H_g}(i) \theta_j^\Thetapol=\lambda \sum_{j \in \Zn} \gamma_{H_g}(i)(j)
    \theta_j^\Thetapol$ which is (\ref{eq:prop10:3}).
\end{proof}
\begin{remark}
    The choices of sign in the square root for the case (2) are impossible to avoid since
    they correspond exactly to the action of $K(2)$ in the exact sequence
    (\ref{eq:exactaut}). The cases (1) and (3) are of course also defined up to an action
    of $K(2)$.
\end{remark}
From the previous Proposition, we deduce Algorithm \ref{algo:trans} and the following
Corollary:
\begin{corollary}
    There exists an algorithm which takes as input $(B, \bpol, \Thetabpol)$ a marked
    abelian variety of type $K(m)$ given by its theta null point $0_{\Thetabpol}$ and a
    symplectic matrix $M \in \Sp_{2g}(K(m))$ and outputs $0_{\Thetabpol \circ \gamma}$
    where $\gamma \in \Auts(G(m))$ is such that $\Psi(\gamma)=M$ with running time $O(m
    g^3+m^g)$ base field operations.
\end{corollary}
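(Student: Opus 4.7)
The plan is to follow exactly the constructive decomposition given in the proof of Proposition \ref{prop:trans} and turn it into an algorithm. First, I would decompose the input symplectic matrix $M \in \Sp_{2g}(\ZZn)$ as a product of a bounded number of elementary symplectic matrices of the three special types $B_g(A)$, $S_g(C)$, and $H_g$, using the Gaussian-elimination-based procedure described in the first paragraph of the proof of Proposition \ref{prop:trans}. Concretely: reduce to the case where the upper-left $g\times g$ block is invertible (multiplying on the left by $H_g$ and/or by $-H_g S_g(-X) H_g$ if necessary, after computing a Smith-like normal form $UAV$), then kill the lower-left block with an $S_g$, then kill the remaining upper-right block with a conjugated $S_g$. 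This produces $M$ as a word in the three generators of length bounded by an absolute constant, at a cost of $O(g^3)$ ring operations in $\ZZn$, each of which costs $O(m)$ base field operations (so $O(mg^3)$ in total).

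Next, given such a factorisation $M = \gamma_1 \gamma_2 \cdots \gamma_N$ with $N=O(1)$, I would pull it back to $\Auts(G(m))$ one factor at a time. For each elementary factor I apply the formula of Proposition \ref{prop:trans} directly to the current affine theta null point:
\begin{itemize}
\item For a $B_g(A)$ factor, use \eqref{eq:prop10:1}: simply permute the coordinates of the theta null point via $i \mapsto \gamma_A(i)$; this is $O(m^g)$ operations (plus $O(g^2)$ operations in $\ZZn$ per coordinate to evaluate $\gamma_A(i)$, absorbed in $O(m^g)$ once $g$ is fixed).
\item For a $S_g(C)$ factor, use \eqref{eq:prop10:2}: multiply the $i$-th coordinate by $\sqrt{\gamma_C(i)(i)}^{-1}$. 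The square roots are computed incrementally from the fixed (arbitrary) choices on the basis $v_1,\dots,v_g$ using the composition rule \eqref{eq:choicesign}, which gives all $m^g$ signs in $O(m^g)$ operations.
\item For the $H_g$ factor, use \eqref{eq:prop10:3}: replace each coordinate by $\sum_{j\in Z(n)} \gamma_{H_g}(i)(j)\,\theta_j^{\Thetabpol}$, which is just an $m$-ary discrete Fourier transform on $Z(m)=(\Z/m\Z)^g$ and can be done by $g$ successive one-dimensional transforms for $O(m^g)$ operations in total (naively $O(m^{g+1})$, which is still absorbed into the overall $O(mg^3 + m^g)$ bound since $N=O(1)$).
\end{itemize}

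Combining the decomposition cost $O(mg^3)$ and the $O(1)$ applications of elementary transformations at cost $O(m^g)$ each yields the claimed complexity $O(mg^3 + m^g)$. The output $0_{\Thetabpol \circ \gamma}$ has $\Psi(\gamma)=M$ by Proposition \ref{prop:trans}, since each elementary factor was lifted to an explicit element of $\Auts(G(m))$ via a symmetric semi-character built from the basis values prescribed in Proposition \ref{lem:semi}. The main obstacle is the matrix-decomposition step, because $\ZZn$ is not a field when $m$ is not prime: entries may fail to be invertible, and the generic Gaussian elimination trick used to force the $A$-block invertible (via the auxiliary matrix $X$ and the identity $-H_g S_g(-X) H_g$) has to be implemented carefully over the ring $\ZZn$ — this is precisely where the factor $m$ in the $O(mg^3)$ term appears, reflecting the cost of working with non-unit entries (e.g.\ computing Smith normal forms modulo $m$). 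Once the decomposition is available, the transformation step is a straightforward coordinate-wise application of the three formulas of Proposition \ref{prop:trans}.
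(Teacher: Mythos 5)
Your proposal takes exactly the route the paper intends: Algorithm~\ref{algo:trans} is literally ``compute a decomposition of $M$ in the generators $B_g(A)$, $S_g(C)$, $H_g$, then use the formulas of Proposition~\ref{prop:trans},'' and the factorisation procedure you describe (force the $A$-block invertible via $H_g$ and $-H_g S_g(-X) H_g$, then clear the remaining blocks with $S_g$-type factors, giving an $O(1)$-length word) is exactly the argument in the proof of Proposition~\ref{prop:trans}. So the algorithmic content matches the paper.

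The one place your write-up slips is the complexity accounting for the $H_g$ factor. Applying \eqref{eq:prop10:3} is an $m^g$-point discrete Fourier transform over $Z(m)=(\Z/m\Z)^g$; done as $g$ successive one-dimensional $m$-point transforms this costs $O(g\,m^{g+1})$ naively, or $O(g\,m^g\log m)$ with an FFT, and in no regime is it $O(m^g)$. Your parenthetical claim that $O(m^{g+1})$ ``is still absorbed into the overall $O(mg^3+m^g)$ bound since $N=O(1)$'' is simply false: for $m\geq 2$ the term $m^{g+1}$ dominates both $mg^3$ and $m^g$. (This looseness is arguably already present in the corollary as stated, so you have inherited rather than introduced it, but since the running-time bound is part of the assertion to be proved you cannot wave it away; an honest bound for the second phase is $O(g\,m^g\log m)$.) Similarly, your explanation that the $mg^3$ term ``reflects the cost of working with non-unit entries'' via Smith normal form over $\Z/m\Z$ is a guess that does not hold up as stated: Smith form and Gaussian elimination over $\Z/m\Z$ cost $O(g^3\,\mathrm{polylog}(m))$ integer operations, not $O(mg^3)$; a more plausible source of the factor $m$ is the one-time $O(m)$ base-field precomputation of the powers of a primitive $m$-th root of unity used to evaluate the characters in \eqref{eq:prop10:2} and \eqref{eq:prop10:3}.
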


\begin{algorithm}
\SetKwInOut{Input}{input}\SetKwInOut{Output}{output}

\SetKwComment{Comment}{/* }{ */}
\Input{
    \begin{itemize}
	\item A marked abelian variety $(B, \bpol, \Thetabpol)$ given by its theta null
	    point $0_{\Thetabpol}$;
	\item $M \in \Sp_{2g}(\Z/m\Z)$.
    \end{itemize}
}
\Output{
    \begin{itemize}
	\item $0_{\Thetapbpol}$ with $\Thetapbpol = \Thetabpol \circ \gamma$ for $\gamma
	    \in \Auts(G(m))$ such that $\Psi(\gamma)=M$.
    \end{itemize}
}

\BlankLine
Compute a decomposition of $M$ in the basis $B_g(A)$, $S_g(C)$, $H_g$
for $A \in GL_g(\Z/m\Z)$ and $C \in M_g(\Z/m\Z)$ as in the proof of Proposition \ref{prop:trans}\;
Use formulas of Proposition \ref{prop:trans} to compute $0_{\Thetapbpol}$.\\
\caption{Algorithm to compute the action of $\Sp_{2g}(\Z/m\Z)$ on theta null points.}
  \label{algo:trans}
\end{algorithm}

By Lemma \ref{lem:action}, $\subG(n)$ is exactly the group of elements of $\Auts(G(n))$
which are compatible with $G(m)$. Using Proposition \ref{prop:trans}, we are ready to
compute their action on $\vjxz(\overk)$.

\begin{proposition}\label{prop:action2}
    Let $\mB=\{v_k, v_{k+g}\}$ be the canonical basis of $K(n)$. Let $(A, \pol, \Thetapol)$
    be a marked abelian variety of type $K(n)$.
    The group $\subG(n) \subset \Auts(G(n))$ is generated by the subgroups:
    \begin{enumerate}
	\item $\subG_0= \Phi(K(2)) \cap \subG(n)$.
	    If $\mu_{m,n}(Z(m)) \subset 2Z(n)$ then $\subG_0=\Phi(Z(2))$ and the action is
	    given by Lemma \ref{lem:actionK}, otherwise $\subG_0=\{ 0 \}$.
	\item $\subG_1$ composed of $g_1 \in \Auts(G(n))$ such that $\psimb^{-1}(\Psi(g_1)) \in \Sp_{2g}(\ZZn)$ is a matrix
	    of the form $\begin{pmatrix} A & 0_g \\ 0_g & ^t A^{-1} \end{pmatrix}$ in the
		basis $\mB$ with $A \in GL_n(\Z/n \Z)$ such that $A = 1_g \mod m$. If
		$g_1 \in \subG_1$, its action on $\theta_i^\Thetapol$ is given by Proposition
		\ref{prop:trans} (1).
	\item $\subG_2$ composed of $g_2 \in \Auts(G(n))$ such that $\psimb^{-1}(\Psi(g_2)) \in \Sp_{2g}(\ZZn)$ is a matrix
	    of the form $\begin{pmatrix} 1_g & 0_g \\ C & 1_g \end{pmatrix}$ in the
		basis $\mB$ with $C$ a symmetric matrix with coefficients in $\Z / n \Z$
		such that $C = 0_g \mod m'$ where $m'$ is the positive integer such
		that $m=dm'$. If $g_2 \in \subG_2$, its action on
		$\theta_i^\Thetapol$ is given by Proposition \ref{prop:trans} (2) where
		the signs of $\gamma_C(i)(i)^{-1/2}$ are chosen so that for all $i \in
		Z(m)$:
		\begin{equation}
		    \gamma_C(\mu_{m,n}(i))(\mu_{m,n}(i))^{-1/2}=1.
		\end{equation}
    \end{enumerate}
    Moreover, $\subzG(n)$ is generated by $\subG_0$, $\subG_1$ and $\subG'_2$ composed by elements
    $g_2' \in \Auts(G(n))$ such that $\psimb^{-1}(\Psi(g_2')) \in \Sp_{2g}(\ZZn)$ is a matrix
    of the form $\begin{pmatrix} 1_g & 0_g \\ C & 1_g \end{pmatrix}$ in the
    basis $\mB$ with $C$ a symmetric matrix with coefficients in $\ZZn$,
    such that $C = 0_g \mod m$. If $g_2' \in \subG'_2$, its action on
    $\theta_i^\Thetapol$ is given by Proposition \ref{prop:trans} (2).
\end{proposition}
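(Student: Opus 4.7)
The plan is to exploit the exact sequence (\ref{eq:exactaut}) restricted to $\Auts(G(n))$, which presents $\subG(n)$ as an extension
\[
0 \to \Phi(K(2)) \cap \subG(n) \to \subG(n) \xrightarrow{\Psi} \Psi(\subG(n)) \to 0.
\]
The claim then splits into three tasks: identify the kernel with $G_0$; identify the image in $\Sp_{2g}(\Z/n\Z)$ as the subgroup generated by $\Psi(G_1)$ and $\Psi(G_2)$; and lift this generation statement back to $\Auts(G(n))$ via the explicit symmetric sections of Propositions \ref{lem:semi} and \ref{prop:trans}.

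For the kernel, I take $c = (c_1, c_2) \in K(2) \subset K(n) = Z(n) \times \dZn$. Because $\Phi(c)$ acts only on the $k^*$-scalar and fixes the $K(n)$-component, condition (1) of Definition \ref{def:compat} is automatic, while conditions (2) and (3) reduce respectively to $y(c_1) = 1$ for all $y \in \dZn$ (forcing $c_1 = 0$) and $c_2(x) = 1$ for all $x \in dZ(n) = \mu_{m,n}(Z(m))$ (forcing $c_2 \in \dnu_{d,n}(\dZ(d)) = m\dZn$). Intersecting with $K(2)$ leaves only the classes of $m\dZn \cap (n/2)\dZn$, which is nontrivial exactly when $d$ is even, i.e., when $\mu_{m,n}(Z(m)) \subset 2Z(n)$; this identifies $G_0$, and its action on theta null points is then read off from Lemma \ref{lem:actionK}.

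For the image in $\Sp_{2g}(\Z/n\Z)$, write $M = \Psi(g_c) = \begin{pmatrix} A & B \\ C & D\end{pmatrix}$ in the canonical basis $\mB$. Testing conditions (2) and (3) of Definition \ref{def:compat} on the generators $(0, y)$ and $(x_0, 0)$ with $x_0 \in dZ(n)$ forces $B = 0$, $A \equiv D \equiv 1_g \pmod m$, and $Cd \equiv 0 \pmod m$ (equivalently $C \equiv 0 \pmod{m'}$); condition (1) is then automatic since $B = 0$ and $D$ is invertible. The symplectic relations collapse to ${}^tAD = 1_g$ and ${}^tAC$ symmetric, which permits the factorization $M = B_g(A) \cdot S_g({}^tAC)$, exhibiting $M$ as a product of images of $G_1$ and $G_2$ (with $A \equiv 1_g \pmod m$ and symmetric ${}^tAC \equiv C \equiv 0 \pmod{m'}$). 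The reverse containment — that every such $B_g(A)$ and $S_g(C)$ satisfies all four compatibility conditions — is a direct verification on the matrix form.

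To promote the statement to $\Auts(G(n))$, Proposition \ref{lem:semi} furnishes symmetric $\overpsi$-semi-character lifts of $B_g(A)$ and $S_g(C)$, and Proposition \ref{prop:trans}(1)–(2) gives their explicit action on the canonical basis of $\Gamma(A,\pol)$. I expect the main subtlety to be the sign ambiguity of $\sqrt{\gamma_C(i)(i)}$ in the $S_g$ case: the stated normalization $\gamma_C(\mu_{m,n}(i))(\mu_{m,n}(i))^{-1/2} = 1$ for $i \in Z(m)$ is precisely the choice making $g_2$ act trivially modulo $\dnu_{d,n}(\dZ(d))$ on the theta sections indexed by $\mu_{m,n}(Z(m))$ — exactly what Definition \ref{def:compat}(3) demands at the level of sections — while the residual sign ambiguities lie in $\Phi(K(2))$ and are absorbed into the $G_0$ factor, consistently with the exact sequence. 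Finally, for $\subzG(n)$ one imposes the additional requirement that $\Psi(g_0)(Z(n)\times\{0\})$ stay in a fixed coset of $K_0 = \{0\} \times \dnu_{d,n}(\dZ(d))$; this reads $Cx \in m\dZn$ for every $x \in Z(n)$, strengthening the $G_2$-condition $C \equiv 0 \pmod{m'}$ to the $G_2'$-condition $C \equiv 0 \pmod m$, while the analyses of $G_0$ and $G_1$ carry over unchanged.
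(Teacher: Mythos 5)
Your proposal is essentially the same argument as the paper's proof, following the identical structure: analyze the kernel $\Phi(K(2)) \cap \subG(n)$ by translating conditions (2)--(3) of Definition \ref{def:compat} into constraints on $c = (c_1, c_2)$, show the matrix of $\Psi(g')$ has lower-triangular block form $\begin{pmatrix}A & 0\\ C & {}^tA^{-1}\end{pmatrix}$ with $A \equiv 1_g\bmod m$, $C \equiv 0 \bmod m'$, factor it as $B_g(A)\cdot S_g(\cdot)$, lift through the semi-character sections of Propositions \ref{lem:semi}--\ref{prop:trans}, and tighten $C\equiv 0\bmod m'$ to $C\equiv 0\bmod m$ for $\subzG(n)$. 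The only cosmetic difference is that you factor $M = B_g(A)\cdot S_g({}^tAC)$ where the paper sets $M_1 = B_g(A^{-1})M$; these are the same decomposition.

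Two small imprecisions worth flagging, though neither breaks the argument. First, you assert condition (1) is ``automatic'' for $\Phi(c)$ before having established $c_1 = 0$; logically one should first extract $c_1 = 0$ from condition (2) (as the paper does by treating (1) and (2) together) and then note (1) becomes vacuous. Second, for $\subzG(n)$, the condition is not that $\Psi(g_0)(Z(n)\times\{0\})$ ``stay in a fixed coset of $K_0$'' but that $\Psi(g_0)(Z(n)\times\{0\}) \subset (Z(n)\times\{0\}) + K_0$ as a subgroup, so that its image under $f$ is unchanged; the conclusion $C\equiv 0\bmod m$ that you extract is correct.
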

\begin{proof}
    Let $g' \in \subG(n)$. Let $\psi_{g'}= \Psi(g') \in Sp(K(n))$ and denote
    $M_{\psi_{g'}}=\psimb^{-1}(\Psi(g'))$ its
    matrix in the basis $\mB$ of $K(n)$. By Lemma \ref{lem:action} and conditions (1) and
    (2) of Definition \ref{def:compat}, $M_{\psi_{g'}}$ is a matrix of the form
    $\begin{pmatrix} A & 0 \\ C_0 & ^{t}A^{-1} \end{pmatrix}$. Condition (2) means
	furthermore that for $k=1, \ldots, g$, $\psi_{g'}(v_{g+k})=v_{g+k} + \sum_{i=1}^{g}
	a_{k,i} v_{g+i}$, where $a_{k,i} \in \Z / n \Z$ and $m a_{k,i}=0$, which means that
	$A = 1_g\mod m$. Set $M_1 = B_g(A^{-1}) M_{\psi_{g'}}$, then $M_1=S_g(C)$
	for $C=(c_{ij})$ a symmetric matrix.

    Let $\psi_{M_1}=\psi(M_1)$ be the morphism whose matrix in $\mB$ is $M_1$. For $k=1, \ldots, g$,
    we have:
    \begin{equation}
	\psi_{M_1}(v_k)=v_k+ \sum_{i=1}^g c_{ki} v_{g+i}.
    \end{equation}
    We remark that if $\{ v_1, \ldots, v_g \}$ is a basis of $Z(n)$ then $\{ dv_1, \ldots,
    d v_g \}$ is a basis of $\mu_{m,n}(Z(m))$. So condition (3) of Definition
    \ref{def:compat} is fulfilled if and only if for all $k=1,\ldots, g$:
    $d\sum_{i=1}^g c_{k,i} v_{g+i} \in \dnu_{d,n}(\dZ(d))$. But this means that $m' | c_{k,i}$
    for all $i,k=1, \ldots, g$.

In the case that $\Psi(g')=0$, let $g_0 \in \subG_0$. Let $c=(c_1, c_2) \in K(2) \subset K(n)= \Zn\times \dZn$ such that $g_0 = \Phi(c)$.
    We know that $\pi_{G(n)}\circ g_0 = \pi_{G(n)}$. Moreover, by Lemma \ref{lem:action}, and
    conditions (1) and (2) of Definition \ref{def:compat}, we deduce that for all $x \in
    \{1\}\times\{0\} \times \dZn$, $g_0(x)=x$. This means that $e_n(c,(0,y))=1$ for all $y \in
    \dZn$, thus $c_1 = 0$. By condition (3) of Definition \ref{def:compat}, we have
    $e_n(c,(x,0))=1$ for all $x \in \mu_{m,n}(Z(m))$. If $\mu_{m,n}(Z(m)) \subset 2Z(n)$,
    this condition is fulfilled for all $c \in K(2)$, and if not we have $c=0$.

    We have just proved that $\subG(n)$ is generated by the groups $\subG_0$, $\subG_1$ and $\subG_2$. 

    The proof that $\subzG(n)$ is generated by $\subG_0$, $\subG_1$ and $\subG'_2$ is exactly the
    same except that we have the extra-condition for $\psi_{M_1}$ that $\psi_{M_1}(v_k) -
    v_k \in \mu_{d,n}(Z(d))$, but it means that $m | c_{k,i}$.
\end{proof}

\section{Structure of \texorpdfstring{$\vjxz(\overk)$}{V0Jx(k)}}\label{sec:struct}
Keeping the notation of Proposition \ref{sec1:prop1},
we can introduce the main object of study of this section:
\begin{definition}\label{sec1:defG}
Let $f: A \rightarrow B$ be the isogeny such that $(A, \pol, \Thetapol)$
    and $(B, \bpol, \Thetabpol)$ are isog-$f$-compatible. Let $x \in A(\overk)$,
    we denote by $G(x)$ the set
    $f(x+\Thetapolbar(Z(n) \times\{0\}))$.
\end{definition}

It is clear that $G(0_\Thetapol)$ is a subgroup of $B[n]$
    isomorphic to $Z(n)$, since
    $\Thetapolbar(Z(n) \times\{0\}) \cap \Ker(f)=\{0\}$ by Definition \ref{sec1:def1}.
    By the same definition, $\Thetabpolbar(Z(m) \times\{0\}) \subset G(0_\Thetapol)$.
    More generally, for all $x \in A(\overk)$, $f(x) + \Thetabpolbar(Z(m) \times\{0\})
    \subset G(x) \subset f(x) + B[n]$.
Using \cref{sec1:prop1}, we can compute the theta coordinates of the geometric points
of $G(x)$ for all $x \in A(\overk)$:

\begin{proposition}\label{sec3:prop9}
    Let $e_{\Thetabpol}: B \rightarrow \proj^{Z(m)}$ be the embedding of Definition
    \ref{def:thetanull}. Let $x \in A(\overk)$ and suppose that we have chosen 
    $\rho^\pol_x: \pol(x)=\pol \otimes
\stsheaf_A(x)
\rightarrow \stsheaf_A(x)$ a rigidification of $\pol$ in $x$. For all $s \in \Gamma(A,\pol)$,
    we denote by $s(x)$ the evaluation of $\rho_x^\pol(s)$ in $x$. Let $\rho^\bpol_{f(x)}: \bpol(f(x))
    \rightarrow \stsheaf_B(f(x))$ be the unique rigidification such that
    $f^*(\rho^\bpol_{f(x)})=\rho^\pol_x$, so that we have $f^*(s)(x)=s(f(x))$ for all $s \in
    \Gamma(B,\bpol)$. With these settings, there is a unique group isomorphism $\lambda_x: \Zn \rightarrow
    G(x)$ such that for $j \in \Zn$, $\lambda_x(j)$ is the point of $B \subset \proj^{Z(m)}$ with coordinates:
    \begin{equation}\label{eq:sec3:prop9}
	(\theta^\Thetapol_{\mu_{m,n}(i)+j}(x))_{i \in Z(m)}.
    \end{equation}
For all $j \in Z(m)$,
    $\lambda_x(\mu_{m,n}(j))=f(x)+\Thetabpolbar((j, 0))$, thus
for all $x, y \in A(\overk)$ and $j \in Z(m)$, we have
    $\lambda_x(\mu_{m,n}(j))=\lambda_y(\mu_{m,n}(j))+f(x-y)$.\end{proposition}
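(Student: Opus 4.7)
The plan is to compute the projective coordinates of $f(x + y_j) \in B \subset \proj^{Z(m)}$ for $y_j = \Thetapolbar((j, 0))$ and $j \in Z(n)$, by combining the $f$-compatibility with the action of the theta group on sections. Set $g_j = \Thetapol((1, j, 0)) \in G(\pol)$. From the chosen rigidification $\rho^\pol_x$ I would define $\rho^\pol_{x + y_j}$ by applying \cref{sec1:lemact} to $g_j$, and propagate to $f(x + y_j)$ via $\rho^\bpol_{f(x + y_j)} := f(\rho^\pol_{x + y_j})$ as in \cref{def:descendrigid}. The core computation is then
\begin{equation*}
\theta_i^\Thetabpol(f(x + y_j)) = f^*(\theta_i^\Thetabpol)(x + y_j) = \theta_{\mu_{m,n}(i)}^\Thetapol(x + y_j) = \theta_{\mu_{m,n}(i) + j}^\Thetapol(x),
\end{equation*}
where the first equality is the compatibility of pullback with evaluation from \cref{def:descendrigid}, the second is \cref{sec1:cor1}, and the third combines \cref{sec1:lemact} with the identity $g_j.\theta_k^\Thetapol = \theta_{k + j}^\Thetapol$ (itself a direct consequence of the group law in $G(n)$ together with \cref{prop:base}). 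Since the equality holds on the nose for compatible rigidifications, the tuple $(\theta_{\mu_{m,n}(i) + j}^\Thetapol(x))_{i \in Z(m)}$ does define a valid point of $\proj^{Z(m)}$, establishing (\ref{eq:sec3:prop9}).

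Next I would define $\lambda_x : Z(n) \to G(x)$ by $\lambda_x(j) = f(x + y_j)$ and verify it is a group isomorphism. Surjectivity is the very definition of $G(x) = f(x + \Thetapolbar(Z(n) \times \{0\}))$; injectivity reduces to $\Thetapolbar(Z(n) \times \{0\}) \cap \Ker(f) = \{0\}$. The latter follows from \cref{sec1:def1}(1), which identifies $\Ker(f) = \Spol(\{0\} \times \dnu_{d,n}(\dZ(d)))$, and from the trivial intersection of the two factors $Z(n) \times \{0\}$ and $\{0\} \times \dnu_{d,n}(\dZ(d))$ inside $K(n) = Z(n) \times \dZn$, transported by $\Spol$ to $K(\pol)$.

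For the identification $\lambda_x(\mu_{m,n}(j)) = f(x) + \Thetabpolbar((j, 0))$ with $j \in Z(m)$, I would invoke \cref{sec1:def1}(3), which asserts $\fsharp(\Thetapol((1, \mu_{m,n}(j), 0))) = \Thetabpol((1, j, 0))$; projecting to the quotients $K(\pol)$ and $K(\bpol)$ as in \cref{prop:compat} yields $f(\Thetapolbar((\mu_{m,n}(j), 0))) = \Thetabpolbar((j, 0))$, whence the claim. The relation $\lambda_x(\mu_{m,n}(j)) = \lambda_y(\mu_{m,n}(j)) + f(x - y)$ then follows by direct subtraction. No serious obstacle is expected; the only point requiring care is to transport the rigidifications along the correct morphisms so that the intermediate identifications in the core computation hold as equalities and not merely as proportionalities, which is precisely what \cref{sec1:lemact} and \cref{def:descendrigid} are designed to guarantee.
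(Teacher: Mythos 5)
Your proof is correct and follows essentially the same route as the paper's: act by $\Thetapol((1,j,0))$ to translate the theta coordinates of $x$ to those of $x+\Thetapolbar((j,0))$, then push forward through the $f$-compatible isogeny via \cref{sec1:prop1} to land on the stated formula, and use \cref{sec1:def1}(3) for the compatibility claim. The only difference is that you spell out the rigidification bookkeeping and the bijectivity of $\lambda_x$ explicitly, whereas the paper works directly at the projective level and leaves the injectivity argument to the remark immediately following \cref{sec1:defG}; both are sound.
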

\begin{proof}
    The point $e_\Thetapol(x)$ of $A$ has projective coordinates
    $(\theta^\Thetapol_i(x))_{i \in \Zn}$.
    Following (\ref{eq:trans}), for $j \in \Zn$, $x + \Thetapolbar((j, 0))$ has projective
    coordinates $(\theta^\Thetapol_{i+j}(x))_{i \in \Zn}$ and then by \cref{sec1:prop1}
    we have that $f(x+\Thetapolbar((j, 0)))$ has projective coordinates
    $(\theta^\Thetapol_{\mu_{m,n}(i)+j}(x))_{i \in Z(m)}$. This shows that $\lambda_x:
    \Zn \rightarrow G(x)$,  $j \mapsto f(x+\Thetapolbar((j, 0)))$ is the unique group
    isomorphism such that $\lambda_x(j)$ has coordinates given by (\ref{eq:sec3:prop9}).
    Now for $j \in Z(m)$, we have
    $\lambda_x(\mu_{m,n}(j))=f(x +\Thetapolbar((\mu_{m,n}(j), 0)))=f(x)+\Thetabpolbar((j, 0))$ by
    point (3) of Definition \ref{sec1:def1}.
\end{proof}

We would like to characterize the subgroups of $B[n]$ that can arise
as $G(\tildenullpol)$ for $\tildenullpol \in \vjxz(\overk)$. For this, we need the following Definition:

\begin{definition}\label{def:symcompat}
    Let $(A, \pol, \Thetapol)$ be a level $n$ marked abelian variety, and recall that $\pigpol:\Gpol
    \rightarrow \Kpol$ is the canonical projection.
    Let $\tildeH$ be a
    symmetric level subgroup of $\Gpol$ and $H =  \pigpol(\tildeH)$. Let $x \in
    A(\overk)$ a torsion point.
    If $x \in \Kpol( \overk)$, we say that $x$ is
    symmetric compatible with $\tildeH$
    if there exist $g_x \in \Gpol$ symmetric such that $(g_x)$ is a level
    subgroup above $(x)$ and $\pigpol((g_x) \cap \tildeH)=(x) \cap H$. 

    In general (we don't suppose that $x \in \Kpol(\overk)$), we say that $x$ is symmetric
    compatible with $\tildeH$ if there exists: 
    \begin{enumerate}
	\item 
    a separable isogeny $f_0: B_0 \rightarrow A$ with kernel $K_0$ and $y \in
	    K(f_0^*(\pol))(\overk)$ such that $f_0(y)=x$. Note that $H'=f_0^{-1}(H)$ is
	    isotropic for $e_{f_0^*(\pol)}$ since $H$ is isotropic for $e_\pol$, and denote by
	    $\tildeK_0$ the (symmetric by Remark \ref{rm:sym}) descent data of $f_0^*(\pol)$ to $\pol$ ; 
\item a symmetric level subgroup $\tildeH'$ of $G(f_0^*(\pol))$ above $H'$
    such that $\tildeK_0 \subset \tildeH'$ and
	     $f_0^\sharp(\tildeH')=\tildeH$ ;
    \end{enumerate}
	 such that $y$ is symmetric compatible with $\tildeH'$.
%	    let $H'=f_0^{-1}(H)$ (it is isotropic for the $e_{f_0(\pol)}$ by Proposition
%	\ref{prop:compat}), denote by
%	$\tildeH' \subset G(f_0^*(\pol))$ 
%	    such that $\pifzgpol(\tildeH')=H'$ and 
%	    $f_0^\sharp(\tildeH')=\tildeH$, we say that $x$ is symmetric compatible
%	    with $\tildeH$ is
%	    there exists a symmetric lift $\tildey$ of $y$ such that $\pifzgpol((\tildey) \cap
%	    \tildeH')=(y) \cap (H')$.
%    \end{enumerate}
\end{definition}
The following Proposition shows that the property « $x \in A(\overk)$ is symmetric
compatible with $\tildeH$ » does not depend on the choices of $y$ and $f_0$ in Definition
\ref{def:symcompat}.
\begin{proposition}\label{prop:compat1}
    In the preceding Definition: 
    \begin{itemize}
	\item There always exists a separable isogeny $f_0: B_0 \rightarrow A$ and $y \in B_0$
	    such that $f_0(y)=x$ and $y \in K(f^*(\pol))(\overk)$;
	\item If $f_0: B_0 \rightarrow A$ and $y \in B_0(\overk)$ are as before, there
	    always exists $\tildeH'$ a symmetric level subgroup over $H'=f_0^{-1}(H)$ such
	    that $\tildeK_0 \subset \tildeH'$ and
	     $f_0^\sharp(\tildeH')=\tildeH$;
	\item Validity of property (2) of the Definition \ref{def:symcompat} does not depend on
	    the choices of $y$ and the isogeny satisfying (1).
    \end{itemize}
\end{proposition}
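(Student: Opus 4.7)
The plan is to address the three claims in order, the last being the most delicate. For claim (1), I would follow the sketch already in Remark \ref{rm:intro}: since $x$ is a torsion point, choose $\ell\in\N^*$ coprime to the characteristic with $\ell x\in K(\pol)$, and set $f_0=[\ell]$. The symmetry of $\pol$ gives $[\ell]^*(\pol)\simeq\pol^{\ell^2}$, so $K([\ell]^*(\pol))=[\ell^2]^{-1}(K(\pol))$. Any $y\in[\ell]^{-1}(x)$ then satisfies $\ell^2 y=\ell x\in K(\pol)$, whence $y\in K(f_0^*(\pol))$ and $f_0$ is separable.

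For claim (2), I would define $\tildeH'$ directly via the isomorphism of Proposition \ref{prop:compat}: set
\[
\tildeH' := (f_0^\sharp)^{-1}(\tildeH)\subset G^*(f_0^*(\pol)),
\]
so that $\tildeK_0\subset\tildeH'$ and $f_0^\sharp(\tildeH')=\tildeH$ hold by construction. Three properties remain. First, $\tildeH'$ projects onto $H'=f_0^{-1}(H)$, from the compatibility of $f_0^\sharp$ with the projections onto $K(\pol)$ and $K(f_0^*(\pol))$ furnished by Proposition \ref{prop:compat}. Second, this projection is an isomorphism, by the count $|\tildeH'|=|\tildeK_0|\cdot|\tildeH|=|\ker f_0|\cdot|H|=|H'|$. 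Third, $\tildeH'$ is symmetric; this reduces to showing that $f_0^\sharp$ intertwines the two involutions $\delta_{-1}$, which I would extract from the canonical nature of $f_0^\sharp$ together with the symmetry of $\tildeK_0$ (Remark \ref{rm:sym}), ensuring that $\psii$ on $\pol$ lifts compatibly to one on $f_0^*(\pol)$.

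For claim (3), the strategy is a reduction to a common cover. Any two choices $(f_0,y)$ and $(f'_0,y')$ are dominated by a single separable isogeny $g:B''\to A$ — for instance, take $g=[\ell]$ for $\ell$ killing both $\ker f_0$ and $\ker f'_0$, or a component of the normalisation of $B_0\times_A B'_0$ — so it suffices to treat the case $f'_0=f_0\circ g_1$ with $g_1(y')=y$. Applying claim (2) to $g_1$ and $\tildeH'$ produces $\tildeH''=(g_1^\sharp)^{-1}(\tildeH')$, which by the functoriality $(f_0\circ g_1)^\sharp=f_0^\sharp\circ g_1^\sharp$ coincides with the subgroup given by claim (2) applied directly to $f'_0$ and $\tildeH$. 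I would then show that a symmetric lift $g_y\in G(f_0^*(\pol))$ of the required form for $\tildeH'$ exists if and only if a symmetric lift $g_{y'}\in G((f'_0)^*(\pol))$ of the required form for $\tildeH''$ exists: in one direction, push $g_{y'}$ forward via $g_1^\sharp$; in the other, lift $g_y$ along $g_1^\sharp$ and correct the ambiguity using the two symmetric lifts of Lemma \ref{intro:lemma1} inside $\tildeK_{g_1}$. The main obstacle will be matching the cyclic-subgroup intersection condition $\pi((g_y)\cap\tildeH')=(y)\cap H'$ across $g_1$: since the order of $y'$ can exceed that of $y$ by a factor dividing the exponent of $\ker g_1$, the bookkeeping identifying which coset of $\tildeK_{g_1}$ inside $\tildeH''$ contains the desired symmetric lift requires care, and the symmetry of $\tildeK_{g_1}$ is exactly what makes this adjustment feasible.
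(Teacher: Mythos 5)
Your claims (1) and (2) match the paper's argument. Claim (1) is indeed a restatement of Remark~\ref{rm:intro}, and for claim (2) your explicit identification $\tildeH' = (f_0^\sharp)^{-1}(\tildeH)$ is exactly the unique level subgroup the paper invokes; the counting argument and the discussion of symmetry via $\tildeK_0$ and Remark~\ref{rm:sym} are both sound.

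For claim (3) you adopt the same global strategy as the paper --- dominate both choices by a common cover and exploit the functoriality of the quotient morphisms $f_0^\sharp$ --- but the argument as written has a gap precisely where you acknowledge the ``bookkeeping'' becomes delicate. Your stated reduction, ``it suffices to treat the case $f'_0 = f_0 \circ g_1$ with $g_1(y')=y$,'' does not dispose of the case $f'_0=f_0$ with $y'\neq y$ two different preimages of $x$; and even if one takes a component of $B_0\times_A B'_0$ as the common cover and lifts along one projection, the image under the other projection need not equal the prescribed point, so the same-isogeny--different-preimages case cannot be circumvented. The paper treats it as a preliminary step: for $y_1,y_2\in f_0^{-1}(x)$ one has $h'=y_2-y_1\in K_0$, which has a \emph{unique} lift $g'_h\in\tildeK_0$, automatically symmetric since $\tildeK_0$ is symmetric (Remark~\ref{rm:sym}); one then checks directly that replacing a witness $g_{y,1}$ by $g_{y,1}+g'_h$ preserves the intersection condition defining symmetric compatibility. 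Your proposed mechanism --- ``correct the ambiguity using the two symmetric lifts of Lemma~\ref{intro:lemma1} inside $\tildeK_{g_1}$'' --- is not quite the right appeal: $\tildeK_{g_1}$ is a level subgroup, so over each element of its base it contains exactly one lift, not both symmetric lifts furnished by Lemma~\ref{intro:lemma1}. The correct tool is that unique symmetric lift in the descent-data subgroup, used as in the preliminary step above; once that shifting lemma is stated and proved, the remainder of your sketch for claim~(3) goes through along the same lines as the paper.
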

\begin{proof}
The first claim is just Remark \ref{rm:intro}.

Let $H'=f_0^{-1}(H)$, it contains $K_0$ the kernel of
    $f_0$. Let $\tildeK_0$ be the level subgroup over $K_0$, which is the descent data of
    $f_0^*(\pol)$ to $\pol$. By Remark \ref{rm:sym}, $\tildeK_0$ is symmetric. Together
    with the fact that $f_0^\sharp$ is surjective, it means that there exists 
    $\tildeH'$ a unique symmetric level subgroup above $H'$ such that 
    $f_0^{\sharp}(\tildeH')=\tildeH$ and $\tildeK_0 \subset \tildeH'$.

    Let $y_1, y_2 \in f_0^{-1}(x)$, then $h'=y_2 - y_1 \in K_0 \subset f_0^{-1}(H)$. 
 Let $g'_h
    \in \tildeK_0$ be a lift of $h'$. As $\tildeK_0$ is symmetric, $g'_h$ is symmetric. Moreover, if
    $g_{y,1}$ is a symmetric lift of $y_1$ such that $(g_{y,1})$ is a level subgroup
    and $\pifzgpol((g_{y,1}) \cap
	    \tildeH')=(y_1) \cap (H')$ then $g_{y,2}=g_{y,1} + g'_h$ is a symmetric lift of
	    $y_2$ such that  $(g_{y,2})$ is a level subgroup and $\pifzgpol((g_{y,2}) \cap
	    \tildeH')=(y_2) \cap (H')$ and the other way around. We have obtained that $y_1$
	    is symmetric compatible with $\tildeH'$ if and only if $y_2$ is.

    Suppose now that there are two isogenies $f_0: B_0 \rightarrow A$ and $f_1: B_1
    \rightarrow A$ with $y_0  \in B_0$ and $y_1 \in B_1$ verifying condition (1) of
    Definition \ref{def:symcompat}. For $i=0,1$, let $H'_i = f_i^{-1}(H)$ and let $\tildeH'_i$ verifying condition (2) of
    Definition \ref{def:symcompat}. We are going to show that if $y_0$ is symmetric
    compatible with $\tildeH'_0$, then $y_1$ is symmetric compatible with $\tildeH'_1$.

    Let $C$ be an abelian variety together with the
    isogenies $f'_1: C \rightarrow B_0$ and $f'_0: C \rightarrow B_1$ such that the
    following Diagram commutes:
\begin{equation}\label{diag:isogcompat}
    \begin{tikzpicture}
      \matrix [column sep={1.5cm}, row sep={1cm}]
    { 
	\node(a){$C$}; & \node(b){$B_0$}; & \\
	\node(c){$B_1$}; & \node(d){$A$}; \\
     };
	\draw[->] (a) -- (b) node[above, midway]{$f'_1$};
	\draw[->] (a) -- (c) node[left, midway]{$f'_0$};
	\draw[->] (b) -- (d) node[left, midway]{$f_0$};
	\draw[->] (c) -- (d) node[above, midway]{$f_1$};
    \end{tikzpicture}
\end{equation}
(The abelian variety $C$ always exists, as the pullback of $f_0$ and $f_1$).
Denote by $\polC={f'}_1^*(f_0^*(\pol))$.
    Suppose that there exists $g_{y,0} \in G(f_0^*(\pol))$ such that $(g_{y,0})$ is a
    level subgroup and $\pifzgpol((g'_{y,0}) \cap
    \tildeH'_0)=(y_0) \cap H'_0$. Let $H_C={f'}_1^{-1}(H'_0)$ and $\tildeH_C$ verifying
    condition (2) for $\tildeH'_0$. Let $y_C$ such that $f'_1(y_C)=y_0$. Note that $y_0 \in K(f_0^*(\pol))$
    so there exist $g_{y,C} \in G(\polC)$ such that
    ${f'}_1^{\sharp}(g_{y,C})=g_{y,0}$ and we know that
    ${f'}_1^{\sharp}(\tildeH_C)=\tildeH_0$ so that
    $\tildeH_C={{f'}_1^{\sharp}}^{-1}(\tildeH_0)$. It is then clear that $y_C$ is symmetric
    compatible with $\tildeH_C$.

    The commutativity of the Diagram shows that $f'_0(H_C)=H'_1$. Moreover, compatibility of
    descent data (because $\polC={f'}_1^*(f_0^*(\pol))={f'}_0^*(f_1^*(\pol))$) and
    unicity of $\tildeH'_1$ verifying condition (2) of Definition \ref{def:symcompat} for
    $f_1$ entails
    that ${f'}_0^{\sharp}(\tildeH_C)=\tildeH'_1$. Let $y'_1=f'_0(y_C)$ and
    $g'_{y,1}={f'}_0^{\sharp}(g_{y,C})$. We have that $y_1-y'_1=z\in K_1$ the kernel
    of $f_1$. Let $\tildeK_1$ be the descent data of $f_1^*(\pol)$ to $\pol$ and let
    $g_z$ be the unique lift of $z$ in $\tildeK_1$. Set $g_{y,1}= g'_{y,1} + g_z$. 
    As $y_C$ is symmetric compatible with $\tildeH_C$,
    $(g_{y,1})=f_0'^{\sharp}((g_{y,C}))$ is a level subgroup. Moreover,
    $\pi_{G(f_1^*(\pol))}((g_{y,1}) \cap \tildeH_1)=
\pi_{G(f_1^*(\pol))} (f_0'^{\sharp}((g'_{y,C}) \cap
    \tildeH_C))=f'_0((y_C) \cap H_C)=(y_1)\cap H_1$ so that $y'_1$ is symmetric compatible
    with $\tildeH'_1$. But it implies that $g_{y,1}$ is also symmetric compatible with
    $\tildeH'_1$.
\end{proof}

For $x \in A(\overk)$ a torsion point and $\tildeH$ a level subgroup $\Gpol$, we let
$\expo(x,\tildeH)=\min\{ \lambda \in \N^* |  \lambda x \in \pi_{\Gpol}(\tildeH) \}$. If $x
\in \Kpol$, it is clear that $x$ is symmetric compatible with $\tildeH$ if and only if
there exists $g_x \in \Gpol$ symmetric such that $\expo(x, \tildeH) g_x=g_y \in \tildeH$.
As $\expo(x, \tildeH) g_x$ is symmetric, we know that there exists $\kappa_0(x, \tildeH) \in \{ -1,
1\}$, such that $\expo(x, \tildeH) g_x = \kappa_0(x, \tildeH) g_y$. In general, we
put the following Definition:

\begin{definition}\label{def:kappa}
    Let $x \in A(\overk)$ a torsion point and $\tildeH$ a level subgroup of $\Gpol$. We
    let $\kappa_0(x, \tildeH)=1$ if $x$ is symmetric compatible with $\tildeH$ and else we let
    $\kappa_0(x, \tildeH)=-1$.
\end{definition}
With this setting, we have the following Proposition which says that the symmetric compatibility
property is additive:
\begin{proposition}\label{sec2:propgroup}
    Let $x_1, x_2 \in A(\overk)$ be torsion points. We suppose that 
    $\expo(x_1, \tildeH)=\expo(x_2, \tildeH)=\expo$ and
    $e_\pol(x_1, x_2)=1$.
    Then we have $\kappa_0(x_1 +x_2, \tildeH)=\kappa_0(x_1, \tildeH) \kappa_0(x_2, \tildeH)$.
    In particular, if $G$ is an isotropic subgroup for $e_\pol$ of $A(\overk)$ containing
    $\pi_{\Gpol}(\tildeH)$, generated by $(e_i)_{i \in I}$ such that $\expo(e_i, \tildeH)$
    is the same for all $i$, to prove that every element of $G$ is symmetric compatible
    with $\tildeH$, it is enough to verify it for each $e_i$ for $i \in I$.
\end{proposition}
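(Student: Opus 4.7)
The strategy is twofold: first, reduce to the case $x_1, x_2 \in K(\pol)(\overk)$ via the isogeny construction used in \cref{def:symcompat}; second, use the hypothesis $e_\pol(x_1, x_2) = 1$ to force any pair of symmetric lifts $g_{x_1}, g_{x_2} \in G(\pol)$ to commute, so that their product is itself a symmetric lift of $x_1 + x_2$ and all the signs $\kappa_0$ can be read off from a single computation.

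For the reduction, I would apply \cref{rm:intro} to obtain a separable isogeny $f_0 : B_0 \to A$ together with $y_1, y_2 \in K(f_0^*(\pol))(\overk)$ mapping to $x_1, x_2$, and let $\tildeH'$ be the symmetric level subgroup above $f_0^{-1}(H)$ containing the descent data $\tildeK_0$ provided by \cref{prop:compat1}. The compatibility of pairings in \cref{def:weil}(2) yields $e_{f_0^*(\pol)}(y_1, y_2) = e_\pol(x_1, x_2) = 1$; the equality $\pifzgpol(\tildeH') = f_0^{-1}(H)$ gives $\expo(y_i, \tildeH') = \expo(x_i, \tildeH) = \expo$; and $y_1 + y_2 \in K(f_0^*(\pol))$ lifts $x_1 + x_2$. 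The well-definedness statement of \cref{prop:compat1} then ensures that the $\kappa_0$-values for $(x_1, x_2, \tildeH)$ match those for $(y_1, y_2, \tildeH')$, so I may assume from now on that $x_1, x_2 \in K(\pol)(\overk)$.

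With $x_1, x_2 \in K(\pol)(\overk)$, pick symmetric lifts $g_{x_i} \in G(\pol)$. The commutator $g_{x_1} g_{x_2} g_{x_1}^{-1} g_{x_2}^{-1}$ is the scalar $e_\pol(x_1, x_2) = 1$, so $g_{x_1}$ and $g_{x_2}$ commute. Setting $g := g_{x_1} g_{x_2}$ and using that $\delta_{-1}$ is a group morphism,
\[
\delta_{-1}(g) = \delta_{-1}(g_{x_1}) \, \delta_{-1}(g_{x_2}) = g_{x_1}^{-1} \, g_{x_2}^{-1} = g^{-1},
\]
so $g$ is a symmetric lift of $x_1 + x_2$. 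Let $g_i' \in \tildeH$ be the unique lift of $\expo x_i$ and $g_{12}' \in \tildeH$ the unique lift of $\expo(x_1 + x_2)$; writing $g_{x_i}^{\expo} = \kappa_i \, g_i'$ with $\kappa_i \in \{\pm 1\}$, the commutativity of $g_{x_1}^\expo$ and $g_{x_2}^\expo$ gives
\[
g^\expo \;=\; g_{x_1}^\expo \, g_{x_2}^\expo \;=\; \kappa_1 \kappa_2 \, g_1' g_2' \;=\; \kappa_1 \kappa_2 \, g_{12}',
\]
where the last equality uses that $g_1' g_2'$ is a symmetric lift of $\expo(x_1 + x_2)$ lying in $\tildeH$, which is unique.

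To finish: when $\expo$ is even, $g_{x_i}^\expo$ is independent of the choice of symmetric lift (the two choices differ by $-1$ and $(-1)^\expo = 1$), so $\kappa_i = \kappa_0(x_i, \tildeH)$ and, similarly, $\kappa_1 \kappa_2$ is the intrinsic sign attached to $x_1 + x_2$; the displayed identity then yields the product formula. When $\expo$ is odd, every $\kappa_0$ in sight equals $+1$ (both signs of the symmetric lift are admissible and one can always arrange $g_x^\expo \in \tildeH$), and the formula is $1 = 1 \cdot 1$. The main obstacle to a uniform argument is the degenerate case where $\expo' := \expo(x_1 + x_2, \tildeH)$ is a proper divisor of $\expo$: then $\kappa_0(x_1 + x_2, \tildeH)$ is controlled by $g^{\expo'}$ rather than $g^\expo$, and I would combine the displayed identity with $g^\expo = (g^{\expo'})^{\expo/\expo'}$ and a short parity analysis in $\expo'$ and $\expo/\expo'$ to conclude. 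In the form needed for the second assertion (inductively propagating $\kappa_0 = 1$ from the generators $e_i$ to every element of $G$), the argument in fact collapses to the direct implication: if $\kappa_1 = \kappa_2 = 1$, the displayed identity shows $g^\expo = g_{12}' \in \tildeH$, and for every proper divisor of $\expo$ one likewise extracts an element of $\tildeH$, giving $\kappa_0(x_1 + x_2, \tildeH) = 1$.
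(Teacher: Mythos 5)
Your argument follows the paper's route: reduce to $K(\pol)$ by an isogeny via \cref{prop:compat1}, pick symmetric lifts $g_{x_1}, g_{x_2}$, observe that $e_\pol(x_1,x_2)=1$ forces them to commute, deduce that $g_{x_1}g_{x_2}$ is symmetric, and compute $(g_{x_1}g_{x_2})^{\expo}=g_{x_1}^{\expo}g_{x_2}^{\expo}$. Spelling out the commutation and the symmetry of the product, which the paper leaves implicit, is an improvement in readability. You are also right to flag that the displayed identity only controls the $\expo$-th power, whereas $\kappa_0(x_1+x_2,\tildeH)$ is defined through the $\expo'$-th power with $\expo':=\expo(x_1+x_2,\tildeH)$, which may be a proper divisor of $\expo$; the paper's proof stops at exactly the same point without addressing this.

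The resolution you propose, however, does not work, and the difficulty is not cosmetic. From $(g_{x_1}g_{x_2})^{\expo}\in\tildeH$ one cannot ``extract'' $(g_{x_1}g_{x_2})^{\expo'}\in\tildeH$: writing $r=\expo/\expo'$, the relation $(g_{x_1}g_{x_2})^{\expo}=\bigl((g_{x_1}g_{x_2})^{\expo'}\bigr)^{r}$ destroys the sign whenever $r$ is even, since $(-g')^{r}=g'^{r}$ for $g'\in\tildeH$. In fact the statement fails outright in this regime. In the genus-$2$ Heisenberg group $G(4)$, take $G=Z(4)\times\{0\}$, $H=\{(0,0),(2,2)\}\times\{0\}$ and $\tildeH=\{(1,(0,0),0),\,(-1,(2,2),0)\}$, with $e_1=((1,0),0)$, $e_2=((0,1),0)$. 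One checks $\expo(e_1,\tildeH)=\expo(e_2,\tildeH)=4$, $\kappa_0(e_1,\tildeH)=\kappa_0(e_2,\tildeH)=1$, yet $\expo(e_1+e_2,\tildeH)=2$ and $(g_{e_1}g_{e_2})^{2}=(1,(2,2),0)\notin\tildeH$, so $\kappa_0(e_1+e_2,\tildeH)=-1$; this contradicts both the multiplicativity and the ``in particular''. The paper's applications survive because there one always has $G\simeq Z(n)$, $H\simeq Z(m)$ with $\{\expo\,e_i\}$ generating $H$, i.e. $\expo\,G=H$; under that extra hypothesis the group generated by the $g_{e_i}$ and $\tildeH$ is a symmetric level subgroup $\tildeG$ over $G$ containing $\tildeH$, and for any $x\in G$ the lift $g_x^{\tildeG}$ is symmetric with $\expo(x,\tildeH)\,g_x^{\tildeG}\in\tildeG$ lying over a point of $H$, hence in $\tildeH$, giving $\kappa_0(x,\tildeH)=1$. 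Both a correct statement and a correct proof need this hypothesis; the parity analysis you defer cannot be carried out without it.
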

\begin{proof}
    Using Proposition \ref{prop:compat1}, we can suppose, if necessary by taking an
    isogeny, that $x_1, x_2 \in \Kpol$ and then $x_1+x_2 \in \Kpol$.
    For $i=1,2$, let $g_{x_i} \in \Gpol$ symmetric be such that $\pi_{\Gpol}(g_{x_i})=x_i$.
    Then we have by definition for $i=1,2$, $\expo(x_i, \tildeH) g_{x_i} = \kappa_0(x_i, \tildeH)
    g_{y_i}$ for $y_i \in \tildeH$. Thus, $e (g_{x_1} + g_{x_2})=\kappa_0(x_1,
    \tildeH) \kappa_0(x_2, \tildeH) (g_{y_1} + g_{y_2})$ where $g_{y_1} +
    g_{y_2} \in \tildeH$.
\end{proof}

The following Proposition tells that the property that $x$ is symmetric compatible with
$\tildeH$ is only meaningful when $\expo(x, \tildeH)$ is even. In this case, it explains how
to change $x$ so as to make it symmetric compatible with $\tildeH$. We see in particular
that there are counterexamples of Proposition \ref{sec2:propgroup} if we forget the
condition $e_\pol(x_1, x_2)=1$.

\begin{proposition}\label{sec2:proptors}
    Let $x \in A(\overk)$ be a torsion point and $\tildeH$ a symmetric level subgroup of
    $\Gpol$ over $H= \pi_{\pol}(\tildeH)$. We claim that:
    \begin{enumerate}
	\item
        If $\lambda = \expo(x, \tildeH)$ is odd, then $x$ is symmetric compatible with $\tildeH$.
    \item 
    Otherwise, write $\lambda = 2^\nu \lambda'$ where $\lambda'$ is odd. There exists $x'
    \in A[2^\nu]$ such that $x + x'$ is symmetric compatible with $\tildeH$.
\item 
    Suppose that $\tildeH=(\tildee_1, \ldots, \tildee_\kappa)$ with $(x) \cap H =
	    (\pi_{\Gpol}(\tildee_1))$. Write $\tildee_1=(\tau_{e_1}, \psi_{e_1}) \in \Gpol$. If $x$ is not
    symmetric compatible with $\tildeH$ then $x$ is symmetric compatible with $(\tildee'_1,
    \ldots, \tildee_\kappa)$ where $\tildee'_1=(\tau_{e_1}, -\psi_{e_1})$.
    \end{enumerate}
\end{proposition}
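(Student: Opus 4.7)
The plan is to reduce to the case $x \in K(\pol)(\overk)$ via Proposition~\ref{prop:compat1}, since the sign $\kappa_0(x, \tildeH)$ and the existence of the desired $x'$ are all compatible with pullback along an appropriate isogeny. Once in this setting, the engine of the proof is Lemma~\ref{intro:lemma1}: any $z \in K(\pol)$ has exactly two symmetric lifts in $G(\pol)$, namely $g_z$ and $-g_z$, differing by the central $-1 \in \overk^*$, and a direct induction on the group law of $G(\pol)$ gives $(-g_z)^n = (-1)^n g_z^n$ for any positive integer $n$.

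For~(1), I would fix any symmetric lift $g_x$ and let $\widetilde{e} \in \tildeH$ be the unique symmetric lift of $\lambda x \in H$. Then $\lambda g_x$ is itself a symmetric lift of $\lambda x$, so by Lemma~\ref{intro:lemma1} it equals $\pm \widetilde{e}$. Since $\lambda$ is odd, swapping $g_x$ with $-g_x$ flips this sign via the parity formula above, so one may arrange $\lambda g_x = \widetilde{e} \in \tildeH$, proving symmetric compatibility.

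For~(3), after fixing generators I may assume $\pi_{\Gpol}(\widetilde{e}_1) = \lambda x$. Then $\lambda g_x \in \{\pm \widetilde{e}_1\}$; if $x$ is not symmetric compatible with $\tildeH$ then necessarily $\lambda g_x = -\widetilde{e}_1 = \widetilde{e}_1'$. I would then check that $(\widetilde{e}_1', \widetilde{e}_2, \ldots, \widetilde{e}_\kappa)$ again generates a symmetric level subgroup above $H$: flipping the scalar of a single generator corresponds to twisting the level subgroup by the $\{\pm 1\}$-valued character of $H$ sending $\pi_{\Gpol}(\widetilde{e}_1)$ to $-1$ and the other generators to $+1$, which preserves both the level subgroup property and symmetry (since $\{\pm 1\}$-valued twists commute with $\delta_{-1}$). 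Since $\lambda g_x = \widetilde{e}_1'$ lies in the new subgroup, $x$ is symmetric compatible with it.

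Part~(2) is the main difficulty, because when $\lambda$ is even the parity trick fails: $(-g_x)^\lambda = g_x^\lambda$, so $\kappa_0(x, \tildeH)$ is intrinsic to $x$ and must be flipped by modifying $x$ itself. The plan is to pick $x' \in A[2^\nu] \cap K(\pol)$ with $\epol(x, x') = 1$, so that $g_x$ and a chosen symmetric lift $g_{x'}$ commute and $g_x + g_{x'}$ is then a symmetric lift of $x + x'$; then $\lambda(g_x + g_{x'}) = \lambda g_x \cdot \lambda g_{x'}$, and it suffices to ensure $\lambda g_{x'} = -1$ (the nontrivial symmetric lift of $0$) to flip the sign of $\kappa_0$. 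Such an $x'$ is produced from an element $y \in A[2^\nu]$ whose symmetric lift $g_y$ has order $2^{\nu+1}$, giving $g_y^{2^\nu} = -1$ as the unique order-$2$ element above $0$; then $g_y^\lambda = (g_y^{2^\nu})^{\lambda'} = -1$ because $\lambda'$ is odd. The main obstacle is simultaneously securing orthogonality $\epol(x, y) = 1$ with the order condition on the lift $g_y$; this I would resolve by choosing $y$ inside a maximal isotropic subgroup of $A[2^{\nu+1}] \cap K(\pol)$ containing $x$, exploiting that the map $z \mapsto g_z^{2^\nu}$ on $A[2^\nu] \cap K(\pol)$ is a nontrivial $\{\pm 1\}$-valued quadratic refinement of the restricted Weil pairing, hence achieves both values on any maximal isotropic subgroup.
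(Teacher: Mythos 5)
Parts (1) and (3) of your argument are essentially the paper's: in (1) you flip the sign by passing to $-g_x$ and use $(-g_x)^\lambda=(-1)^\lambda g_x^\lambda$ for $\lambda$ odd; in (3) you use Lemma~\ref{intro:lemma1} to identify the two symmetric elements over $\lambda x$ and twist the level subgroup by a $\{\pm 1\}$-valued character of $H$. Your observation that this twist is exactly what keeps $(\tildee_1', \tildee_2, \ldots, \tildee_\kappa)$ a symmetric level subgroup is a reasonable (if slightly informal) way to state what the paper leaves implicit.

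Part (2) has a genuine gap, and it is concentrated in the sentence asserting that $z\mapsto g_z^{2^\nu}$ on $A[2^\nu]\cap K(\pol)$ is a \emph{nontrivial} quadratic refinement achieving both values $\pm 1$. This map is in fact identically $1$. Transport to the Heisenberg group via a symmetric theta structure: write $z=(z_1,z_2)\in K(n)[2^\nu]$ and $g_z=(\alpha,z_1,z_2)$ with $\alpha^2=z_2(z_1)$. Then
\begin{equation}
g_z^{2^\nu} = \bigl(\alpha^{2^\nu}\cdot z_2(z_1)^{2^{\nu-1}(2^\nu-1)},\,0,\,0\bigr) = \bigl(\alpha^{2^{2\nu}},\,0,\,0\bigr),
\end{equation}
and since $z_1,z_2$ are $2^\nu$-torsion, $z_2(z_1)$ is a $2^\nu$-th root of unity, $\alpha$ is a $2^{\nu+1}$-th root of unity, and $\alpha^{2^{2\nu}}=1$ for every $\nu\geq 1$. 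Thus there is no $y\in A[2^\nu]\cap K(\pol)$ with a symmetric lift of order $2^{\nu+1}$, the desired $x'$ with $g_{x'}^\lambda=-1$ does not exist, and your step~(1) of the plan for~(2) cannot be carried out. (One can check more generally that for $x'\in K(\pol)[2^\nu]$, whether or not it is orthogonal to $x$, $g_{x+x'}^{2^\nu}=g_x^{2^\nu}$, so adding $2^\nu$-torsion from inside $K(\pol)$ never flips the sign.)

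The reason the statement can nevertheless be true is a subtlety in the reduction ``WLOG $x\in K(\pol)$'': that reduction replaces $\pol$ by $f_0^*(\pol)$ for some isogeny $f_0=[\ell]$, and the $2^\nu$-torsion point $x'\in A[2^\nu]$ must likewise be replaced by a preimage $y'$ under $f_0$, which has order dividing $\ell 2^\nu$ in $K(f_0^*(\pol))$, not $2^\nu$. For such a $y'$, $g_{y'}^{2^\nu}$ is no longer forced to be central, and that extra room is what lets the modification change the sign. Your version keeps $x'$ inside the original $2^\nu$-torsion of $K(\pol)$, which, as the computation above shows, is too small.
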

\begin{proof}
    Using Proposition \ref{prop:compat1}, we can suppose, if necessary by taking an isogeny, that $x
    \in \Kpol$. We have seen that there are exactly two symmetric elements of $\Gpol$
    above $x$, if $z=(\tau_x, \psi_x) \in \Gpol$ is one of them, then
    $-z$ is the other one. Let $\lambda = \expo(x, \tildeH)$. As $\delta_{-1}$ and the inverse
    are morphism of $\Gpol$, it is clear that the map $\lambda^*: \pi_{\Gpol}^{-1}(x)
    \rightarrow \pi_{\Gpol}^{-1}(\lambda x)$, $z \mapsto \lambda z$, maps a symmetric
    element to a symmetric element and, is onto, if $\lambda$ is odd. This implies
    that $x$ is symmetric compatible with $\tildeH$, which is the first claim.

    For the second and third claim, we can still suppose that $x \in \Kpol$ and, using the first
    claim, we can suppose that $\lambda'=1$, without loss of generality. There always
    exists $\Thetapol: G(n) \rightarrow \Gpol$ a symmetric theta structure (see \cite[Remark
    2, p. 318]{MumfordOEDAV1}). By using it, we have to prove the same claim in $G(n)$ for
    $x$ an element of $K(n)$. But by the same computations as in the proof of Proposition
    \ref{lem:semi}, and precisely using relation (\ref{eq:lift2}), we see that if $(1, x),
    (t, x+x') \in G(n)$  are symmetric lifts of $x, x+x' \in K(n)$ then $t^2 =
(x+x')_2((x+x')_1)$ (where $(x+x')_i$ for $i=1,2$ is the decomposition of $x+x'$ in $Z(n)
    \times \dZn$). Then we have $2^\nu (t, x+x') =( t^{2^{\nu-1}},2^\nu x)$. If necessary, by acting
    on $G(n)$ by an element of $\Auts(G(n))$, we can suppose that $x=(x_1, 0) \in \Zn
    \times \dZn$ and that $x'=(0,x'_2) \in \Zn \times \dZn$, so that $t^2=x'_2(x_1)$. We
    conclude the second claim by remarking that the map $\dZn[2^\nu] \rightarrow \mu_{2^\nu}$ (where
    $\mu_{2^\nu}$ is the group of $(2^{\nu})^{th}$-roots of unity), $x'_2
    \mapsto x'_2(x_1)$ is surjective.

    For the third claim, we remark that if $g_x$ is a symmetric lift of $x$, then
    $2^\nu g_x$ is also a symmetric lift. So if $2^\nu g_x \ne \tildee_1=(e_1,
    \psi_{e_1})$ then $2^\nu g_x=(\tau_{e_1}, -\psi_{e_1})$ because of Lemma
    \ref{intro:lemma1}.
\end{proof}

The following Proposition characterizes the subgroups of $B[n]$ that can be obtained as
$G(\tildenullpol)$ for $\tildenullpol \in \vjxz(\overk)$. It should be compared with \cite[Theorem 12]{DRmodular} which tells
that, in the case $d$ prime to $n$, we can recover every isotropic subgroups in that way.
In contrast, in the case $d | n$, the symmetric compatible condition appears.
\begin{proposition}\label{prop:cond}
    Let $(B, \bpol, \Thetabpol)$ be a marked abelian variety of type $K(m)$ with theta
    null point $\xtheta$. Let $G$ be a subgroup of $B[n]$ isomorphic to $Z(n)$
    containing
    $\Thetabpolbar(Z(m)\times\{0\})$.
    Then
    $G=G(0_\Thetapol)$ (see Definition \ref{sec1:defG}) for some $0_\Thetapol \in \vjxz(\overk)$ if and
    only if $G$ is isotropic for $e_{B, n}$ (the Weil pairing, see Proposition \ref{def:weil}) and for all $x \in G(\overk)$, $x$ is
    symmetric compatible with
    $\Thetabpol( \{1\}\times Z(m) \times\{0\} )$.
\end{proposition}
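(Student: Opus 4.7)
The plan is to prove each direction separately.

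For the forward direction, assume $G = G(0_\Thetapol)$ for some $0_\Thetapol \in \vjxz(\overk)$, corresponding to $(A, \pol, \Thetapol)$ with an isogeny $f : A \rightarrow B$ making $(A, \pol, \Thetapol)$ and $(B, \bpol, \Thetabpol)$ $f$-compatible. Set $L \defby \Thetapolbar(Z(n) \times \{0\})$, so that $G = f(L)$. For isotropy of $G$ under $e_{B,n}$: the line bundle $\bpol^d$ has level $n$, so $K(\bpol^d) = B[n]$ and $e_{B,n} = e_{\bpol^d}$ on $B[n]$ by Proposition \ref{def:weil} (1). Since $\pol^d = f^*(\bpol^d)$, Proposition \ref{def:weil} (2) gives $e_{\pol^d}(x', y') = e_{B,n}(f(x'), f(y'))$ for $x', y' \in A[n]$, and Proposition \ref{def:weil} (3) with $\kappa = d$ yields $e_{\pol^d}(x', y') = e_\pol(x', dy')$ on $K(\pol)$, which vanishes on $L \times L$ since $L$ is Lagrangian for $e_\pol$. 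For the symmetric compatibility of $x = f(x') \in G$ with $\tildeH \defby \Thetabpol(\{1\} \times Z(m) \times \{0\})$, I apply Definition \ref{def:symcompat} with $f_0 = f$ and $y = x' \in L \subset K(\pol)$. Choosing $\tildeH' \defby \Thetapol(\{1\} \times \mu_{m,n}(Z(m)) \times \dnu_{d,n}(\dZ(d)))$, this is a symmetric level subgroup of $G(\pol)$ containing the descent data $\tildeK_0 = \Thetapol(\{1\} \times \{0\} \times \dnu_{d,n}(\dZ(d)))$, and $\fsharp(\tildeH') = \tildeH$ by Definition \ref{sec1:def1} (3). Writing $y = \Thetapolbar((y_1, 0))$, the canonical symmetric lift $g_y = \Thetapol((1, y_1, 0))$ satisfies $\pigpol((g_y) \cap \tildeH') = (y) \cap H'$, as both equal $\Thetapolbar(((y_1) \cap \mu_{m,n}(Z(m))) \times \{0\})$.

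For the reverse direction, assume $G$ satisfies the three conditions. By Proposition \ref{prop:charac}, any element of $\vjxz(\overk)$ has underlying $A \cong B/K_0$ with $K_0 = \Sbpol(\mu_{d,m}(Z(d)) \times \{0\})$; let $\hat f : B \rightarrow A$ be this quotient, $f : A \rightarrow B$ its contragredient, and $\pol = f^*(\bpol)$. Pick any base $(A, \pol, \Thetapol^{(0)}) \in \vjxz(\overk)$ with associated $G^{(0)} = G(0_{\Thetapol^{(0)}})$; by the forward direction, $G^{(0)}$ satisfies our hypotheses. Since $\vjxz(\overk)$ is a principal homogeneous space under $\subG(n)$ and $\subzG(n)$ stabilizes $G^{(0)}$ by definition, the subgroups of $B[n]$ appearing as $G(0_\Thetapol)$ form the orbit of $G^{(0)}$ under $\subG(n)/\subzG(n)$. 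Using the explicit description of $\subG(n)$ given in Proposition \ref{prop:action2}, I construct $g \in \subG(n)$ carrying $G^{(0)}$ to $G$ as follows: fix an isomorphism $\phi : Z(n) \rightarrow G$ extending $\Thetabpolbar \circ \mu_{m,n}$ on $Z(m)$; use symmetric compatibility of each basis vector $\phi(v_k)$ with $\tildeH$, together with additivity (Proposition \ref{sec2:propgroup}), to produce symmetric lifts in $G(\bpol)$ consistent with $\tildeK_0$; isotropy of $G$ under $e_{B,n}$ translates into the symplecticity condition needed on the corresponding element of $\Auts(G(n))$; and the compatibility conditions of Definition \ref{def:compat} place $g$ in $\subG(n)$. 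Then $\Thetapol \defby \Thetapol^{(0)} \circ g$ gives the required theta structure with $G(0_\Thetapol) = G$.

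The main obstacle is the reverse direction, specifically turning the pointwise hypotheses on $G$ into a coherent $g \in \subG(n)$. The symmetric compatibility of each element of $G$ with $\tildeH$ is precisely the ingredient needed to construct symmetric lifts compatible with the forced descent data $\tildeK_0$; additivity from Proposition \ref{sec2:propgroup} promotes these pointwise lifts into a full symmetric level subgroup above a Lagrangian of $K(\pol)$; and isotropy under $e_{B,n}$ guarantees that this subgroup is Lagrangian, hence that the resulting automorphism of $G(n)$ is symplectic. Removing any of the three hypotheses breaks the construction: without symmetric compatibility the lifts may fail to be symmetric, without additivity the lifts of distinct generators need not be consistent, and without isotropy the candidate partial theta structure cannot be completed to a full one inherited from $\Thetabpol$ on the complementary Lagrangian.
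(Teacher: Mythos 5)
Your forward direction is correct and follows the paper's route; the isotropy argument via $e_{B,n}=e_{\bpol^d}$ on $B[n]$, compatibility with pullback, and $e_{\pol^d}(x',y')=e_\pol(x',dy')$ for $x',y'\in L$ is in fact cleaner than what the paper writes, and your choice of $\tildeH'$ and the check $\pigpol((g_y)\cap\tildeH')=(y)\cap H'$ are exactly the paper's construction.

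The reverse direction, however, has a genuine gap: it is circular. You begin by picking a basepoint $(A,\pol,\Thetapol^{(0)})\in\vjxz(\overk)$, but nonemptiness of $\vjxz(\overk)$ is not given --- it is precisely a consequence of the implication you are trying to prove. In the regime $d\mid n$ considered here one cannot cite the analog of \cite{DRmodular} for existence (the paper flags this distinction explicitly in the remark after Proposition \ref{prop:charac}), so a basepoint must be produced before the orbit argument can even start. The paper sidesteps this by constructing $\Thetapol$ directly from $G$: form $Z_G=f^{-1}(G)$, use $K=\Ker\hatf\subset G$ to see that $[n]$ kills $Z_G$ so $Z_G\subset K(\pol)=A[n]$, choose a section $\phi_0:Z(n)\to Z_G$ and set $Z_A=\phi_0(Z(n))$, deduce $Z_A$ isotropic for $e_\pol$ from isotropy of $G$ for $e_{B,n}$, lift $Z_A$ to a symmetric level subgroup $\tildeZ_A$ using the symmetric-compatibility hypothesis (this is the precise place that hypothesis is consumed), and finally assemble the full theta structure against $\dZ_A$ and $Z'_A$. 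Even granting a basepoint, your construction of $g\in\subG(n)$ is only a sketch of intentions; making it rigorous (pinning down the matrix shape of $\Psi(g)$ with the congruences modulo $m$ and $m'$ from Proposition \ref{prop:action2}, and choosing a semi-character consistent with the symmetric lifts of $\phi(v_k)$) would amount to redoing the direct construction re-expressed through $\Thetapol^{(0)}$, so the orbit argument does not save work and the essential details are missing.
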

\begin{proof}
    If $G=G(\tildenullpol)$, let $(A, \pol, \Thetapol)$ be the marked abelian variety with theta null
    point $\tildenullpol$. Let $f: A \rightarrow B$ be the isogeny such that, according to
    Proposition \ref{prop:charac}, $(A, \pol, \Thetapol)$ and $(B, \bpol, \Thetabpol)$ are
    isog-$f$-compatible. 

    Let $x_1, x_2 \in G(0_\Thetapol)$, we want to show that $e_{B, n}(x_1, x_2)=1$. 
    Let $x'_1, x'_2 \in A(\overk)$ such that $f(x'_i)=x_i$ for $i=1,2$.
    By definition, $G(0_\Thetapol)=f(\Thetapolbar(Z(n) \times \{0\}))$ and by
    definition of isog-$f$-compatibility (see Definition \ref{sec1:def1}),
    $Ker(f)=\Thetapolbar(\{0\} \times \dnu_{d,n}(\dZ(d)))$ so that
    $x'_i \in \Thetapolbar(Z(n) \times \dnu_{d,n}(\dZ(d)))$. We deduce that $e_{\pol}(x'_1,
    x'_2)=1$.
    Using Proposition \ref{def:weil}, we have: $e_{B, n}(x_1, x_2)=e_{\bpol^d}(x_1,
    x_2)=e_{\pol^d}(x'_1, x'_2)=e_{\pol}(x'_1, x'_2)^d=1$.

    For this, let $\hatf: B \rightarrow A$ be the contragredient isogeny of $f$ so that
    $\hatf \circ f = f \circ \hatf = [n]$.
 By Definition \ref{sec1:defG}, $G=f(\Thetapolbar(Z(n) \times
    \{0\}))$ so that there exists $\barx_1, \barx_2 \in  \Thetapolbar(Z(n) \times
    \{0\})$ such that for $i=1, 2$, $f(\barx_i)=x_i$. Let $x'_i \in B$ such that $x'_i =
    \hatf(x_i)$, so that $x_i = \hatf \circ f(x'_i)= [n](x'_i)$. Thus, by Definition
    \ref{def:weil} of the Weil pairing, we have $e_{B, n}(x_1,
    x_2)=e_{[n]^*(\bpol)}(x'_1, x_2)$.

    Let $x \in G(\overk)$, we are going to show that $x$ is symmetric
    compatible with $\tildeH=\Thetabpol( \{1\}\times Z(m) \times\{0\} )$. 
    Let $K= \Ker f=\Thetapolbar(\{0\}\times \dnu_{d,n}(\dZ(d)))$ and denote by $\tildeK$
    the level subgroup which is the descent data of $\pol = f^*(\bpol)$ to
    $\bpol$. 
    Set $H' = f^{-1}(\Thetabpolbar(Z(m)\times\{0\}))$ and $x' \in A(\overk)$ such that
    $f(x')=x$. We can choose $x'$ such that $x'=\Thetapolbar((\lambda, 0))$ for some $\lambda \in \Zn$ by
    definition of $G$. Let $g_x'=
    \Thetapol((1, \lambda, 0))$. Then $(g_x')$, being a subgroup of the level subgroup
    $\Thetapol(\{1\}\times \Zn \times\{0\})$, is a level subgroup of
    $\Gpol$. Let $\tildeH'= \Thetapol(\{1\}\times \mu_{m,n}(Z(m)) \times\{0\}) +
    \tildeK$, then we have $H'=\pi_{\Gpol}(\tildeH')$. As $\Thetapolbar(\mu_{m,n}(Z(m))\times\{0\}) \subset
    f^{-1}(K(\bpol))$, by Proposition \ref{prop:compat}, $\Thetapolbar(\mu_{m,n}(Z(m))\times\{0\}) \subset
    K^{\perp_{e_{\pol}}}$. Thus, $\tildeH'$ is a level subgroup.
    We have $\fsharp(\tildeH')=\tildeH$ (see (3) Definition
    \ref{sec1:def1}) so $\tildeH'$ verifies all the conditions (2) of Definition
    \ref{def:symcompat}. Moreover, $\pigpol((g_x') \cap
    \tildeH')=(x') \cap H'$ which means that $x$ is symmetric compatible with $\tildeH$.

    Conversely, let $G$ be a subgroup of $B[n]$ isotropic for $e_{B, n}$, isomorphic to
    $Z(n)$, containing
    $\Thetabpolbar(Z(m) \times\{0\})$ and such that for all $x \in G(\overk)$, $x$ is symmetric
    compatible with $\Thetabpol( \{1\}\times Z(m) \times\{0\} )$. Let $K=
    \Thetabpolbar(\mu_{d,m}(Z(d)) \times\{0\})$, $A=B/K$, $\hatf: B \rightarrow A$ be the
    quotient isogeny by $K$ and $f: A \rightarrow B$ be the contragredient isogeny,
    and denote by $K_{f}$ its kernel. Let $\pol = f^*(\bpol)$,
    we are going to define a theta structure $\Thetapol$ for $(A, \pol)$
    such that $(A, \pol, \Thetapol)$ and $(B, \bpol, \Thetabpol)$ are isog-$f$-compatible.

    As $\dZ_B=\Thetabpolbar(\{0\}\times \dZ(m))$ is isotropic for $e_{\bpol}$, by
    Proposition \ref{def:weil} (2),
    $\dZ_A=f^{-1}(\dZ_B)$,
    is isotropic for $e_\pol$. Since $f(\dZ_A) = \dZ_B$, $\dZ_A$ is, as a group,
    isomorphic to an
    extension of $\dZ(m)$ by $\dZ(d)$.
    But $\dZ_B \cap \Ker \hatf = \{0\}$, thus $[d]=\hatf
    \circ f: \dZ_A \rightarrow \dZ_A$ has as codomain a subgroup of $\dZ_A$ isomorphic
    to $\dZ(m)$. This means that $\dZ_A$
    is isomorphic to $\dZn$. Let $\tildeK_{f}$ be the symmetric level subgroup above
    $K_{f}$ which is the descent data of $\pol$ to $\bpol$. There is a unique
    symmetric level
    subgroup $\widetilde{\dZ}_A$ over $\dZ_A$ such that $\tildeK_{f} \subset \widetilde{\dZ}_A$
    and 
    \begin{equation}\label{eq:prop8:1}
    f^{\sharp}
    (\widetilde{\dZ}_A)=\Thetabpol(\{1\}\times \{0\} \times \dZ(m)).
    \end{equation}

    In the same way, let $Z_B=\Thetabpolbar(Z(m)\times\{0\})$ and $Z'_A=f^{-1}(Z_B)$.
    Using again Proposition \ref{def:weil} (2), $Z'_A$ is isotropic for $e_\pol$ and
    moreover it is isomorphic to an extension of $Z(m)$ by $Z(d)$.
    There is a unique level subgroup $\tildeZ'_A$ over $Z'_A$ such that
    $\tildeK_{f}  \subset \tildeZ'_A$ and
    \begin{equation}\label{eq:prop8:2}
    \fsharp(\tildeZ'_A)=\Thetabpol(\{1\}\times Z(m) \times \{0\}).
    \end{equation}

    Let $Z_G=f^{-1}(G)$. Note that, as $G$ is isomorphic to $Z(n)$, $Z_G$ is isomorphic to
    an extension of $Z(n)$ by $Z(d)$. The codomain of $[d]=\hatf \circ f: Z_G
    \rightarrow Z_G$ is $\hatf(G)$ which is isomorphic to $Z(n)/Z(d)$ since $G$ contains
    $K$. Thus $[n] = [m]\circ [d](Z_G)=0$. Using this, we define a section $\phi_0: G
    \rightarrow Z_G$ in the following manner. We define a morphism $\phi'_0: \Z^g
    \rightarrow Z_G$ by setting $\phi'_0(e_i)$ to be any element of $f^{-1}(e_i)$ where
    $(e_1, \ldots, e_g)$ is the canonical basis of $\Z^g$. By the preceding, for $i=1,
    \ldots, g$, $n \phi'_0(e_i)=0$, so that $\phi'_0$ induces $\phi_0: \Zn \rightarrow
    Z_G$. Let $\phi_1$ be an isomorphism between $Z(d)$ and $K_f$, we can extend $\phi_0$
    to an isomorphism $\phi: \Zn \times Z(d) \rightarrow Z_G$ by
    $\phi(x,y)=\phi_0(x)+\phi_1(y)$. Remark that $\phi(\{0\}\times Z(d))$ is $K_{f}$.

    Let $Z_A = \phi(Z(n) \times \{0 \})$. We are going to show that $Z_A$ is isotropic for
    $e_\pol$. Let $x_1, x_2 \in Z_A$, by Proposition \ref{def:weil}, we have: $e_\pol(d
    x_1, x_2)=e_{\pol^d}(x_1, x_2)=e_{\bpol^d}(f(x_1), f(x_2))=e_{B,n}(f(x_1), f (x_2))=1$
    by hypothesis since $f(x_1), f(x_2) \in G$. The fact that $e_\pol$ is a perfect
    pairing and that for all $x_1, x_2 \in Z_A \simeq \Zn$, $e_\pol(d x_1, x_2)=1$ implies
    that $Z_A$ is isotropic for $e_\pol$ (In fact, let $(e_i)_{i=1, \ldots, g}$ be a basis for
    $Z_A$ and for $i=1, \ldots, g$ let $\hate_i$ be a dual vector of $e_i$ i.e.
    $\hate_i(e_i)=\zeta_i$ a primitive $n^{th}$-root of unity. If $\hate_i \in Z_A$ this
    contradicts the fact that $e_\pol(dx_1, x_2)=1$ for all $x_1, x_2 \in Z_A$ so that
    $(e_i, \hate_i)_{i=1, \ldots, g}$ is a symplectic basis of $\Kpol$ and $Z_A$ is
    isotropic for $e_\pol$).

    Let $(e_1, \ldots, e_g)$ be a basis of $\phi(Z(n)\times\{0\}) \subset f^{-1}(G)$.
    As each element of $G$ is
    symmetric compatible with $\Thetabpol(\{1\}\times Z(m) \times\{0\})$, for
    $j=1, \ldots, g$, by Proposition \ref{prop:compat}, there are lifts $\tildeg_f(e_j) \in \Gpol$ of $e_j$ such that $(\tildeg_f(e_j))$ is a
    level subgroup and $\pigpol((\tildeg_f(e_j)) \cap \tildeZ'_A)=(e_j) \cap Z'_A$. This means
    that $\tildeZ_A=(\tildeg_f(e_1), \ldots, \tildeg_f(e_g))$ is a level subgroup over $Z_A =
    \pigpol(\tildeZ_A)$. As $Z_A \cap K_{f}=\{0\}$, $Z_A \cap Z'_A$ is isomorphic to $Z(m)$.
    Moreover, we can define a symplectic isomorphism $\Thetapolbar: K(n) \rightarrow \Kpol$ such that:
    \begin{equation}
	\begin{split}
	    \Thetapolbar(\{0\}\times \dnu_{d,n}(\dZ(d)))=K_{f}; \\
	    \Thetapolbar(\{0\}\times \dZn)=\dZ_A;\\
	    \Thetapolbar(\mu_{m,n}(Z(m)) \times\{0\})=Z'_A;\\
	    \Thetapolbar(Z(n) \times\{0\})=Z_A.
	\end{split}
    \end{equation}
    We can extend $\Thetapolbar$ to a theta structure $\Thetapol: G(n) \rightarrow
    \Gpol$ for $(A, \pol)$ such that:
\begin{equation}
	\begin{split}
	    \Thetapol(\{1\}\times\{0\} \times \dZn)=\widetilde{\dZ}_A;\\
	    \Thetapolbar(\{1\}\times \Zn \times\{0\})=\tildeZ_A.
	\end{split}
    \end{equation}
Now, equation (\ref{eq:prop8:1}) and (\ref{eq:prop8:2}) tells that $(B, \bpol, \Thetabpol)$ and $(A, \pol,
\Thetapol)$ are isog-$f$-compatible. It means that $G=G(\tildenullpol)$, $\tildenullpol$ being the theta null point
associated to $(A, \pol, \Thetapol)$.
\end{proof}

\section{Isogeny computation with change of level}\label{sec:isogchange}
Let $n,m,d > 1$ be integers such that $n=dm$.
Let $(B, \bpol, \Thetabpol)$ be a marked abelian variety of type $K(m)$ given by its theta null
point $\tildenullbpol=(\theta_i^\Thetabpol(0_\Thetabpol))$ and $K \subset K(\bpol)$ an isotropic
subgroup for $e_{\bpol}$
isomorphic to $K(d)$. Let $A=B/K$, $\hatf:B \rightarrow A$ be the quotient isogeny and
$f: A \rightarrow B$ be its contragredient isogeny.
We would like to be able to compute the theta null point of $(A, \pol, \Thetapol)$ of type $K(n)$ such that $(B, \bpol, \Thetabpol)$ and $(A, \pol,
\Thetapol)$ are isog-$f$-compatible from the
knowledge of $\tildenullbpol$, $K$ and $B[n]$.

First, by Remark \ref{rk:affinefcompat}, we can suppose that we are given a rigidified abelian variety
$(B, \bpol, \Thetabpol, \theta_0^\Thetabpol, \rho^\bpol_{0_{\Thetabpol}})$ of type $K(m)$ up to equivalence (see
Remark \ref{sec:rmequiv}) by its
affine theta null point $\tildenullbpol$ and that we want to compute a rigidified abelian
variety $(A, \pol, \Thetapol, \theta_0^\Thetapol, \rho^\pol_{0_\Thetapol})$ which is
isog-$f$-compatible. If $\tildenullpol$ is a theta null point of $A$, by Corollary \ref{sec1:cor1}, we
have $\tildef(\tildenullpol)=\tildenullbpol$.

In this section, if $x=(x_1, \ldots, x_n) \in \Aff^n(\overk)$ and $\lambda \in \overk^*$,
we denote by $\lambda * x \in \Aff^n(\overk)$ the point with coordinates $(\lambda x_1,
\ldots, \lambda x_n)$. For any $\kappa$ positive integer, we recall that
$\pi_{\proj^{Z(\kappa)}}: \Aff^{Z(\kappa)}-\{0\} \rightarrow \proj^{Z(\kappa)}$ is the canonical
projection. The thread that we are going to follow is analog of that of \cite{DRmodular}.
In order to recover $(A, \pol, \Thetapol, \theta_0^\Thetapol,
\rho^\pol_{0_\Thetapol})$ we have to fix $\Thetapol$. As $\Thetapol((1, 0, y)_{y \in
\dZn})$ is essentially defined by the descent data of $\pol = f^*(\bpol)$ to $\bpol$, it remains to set
$\Thetapol((1, x, 0)_{x \in \Zn})$. For $x \in \Zn$, let $\Thetapol((1, x, 0))= \lambda_x
g_x$ for $g_x \in \Gpol$, we have to determine $\lambda_x$. But as $\tildef(\lambda_x g_x
\tilde0_\Thetapol)$ is an affine lift of $f(x) \in \Gpol$, we see that computing
$\Thetapol$ boils down to computing affine lift of points of $B[n]$.

Precisely, the algorithm that we are going to describe comes from two crucial remarks. 
The first one is contained in the following Proposition:
\begin{proposition}\label{sec2:lemmalambdai}
 Let $G \subset B[n]$ be a subgroup of $B[n]$ isomorphic to $Z(n)$
    containing $\Thetabpolbar(Z(m)\times\{0\})$, isotropic for $e_{B,n}$ and such that for all
$x \in G(\overk)$, $x$ is symmetric compatible with $\Thetabpol( \{1\}\times Z(m) \times\{0\}
)$. We choose a numbering of the elements of $G$ by writing $G=\{g_f(i), i \in \Zn\}$
such that the map $i \mapsto g_f(i)$ is a group morphism and for all $i \in Z(m)$,
$g_f(\mu_{m,n}(i))=\Thetabpolbar((i,0))$. For all $i \in \Zn$, denote by
$\tildeg_f(i) \in \Aff^{Z(m)}$ an affine lift of $g_f(i)$. We suppose that for all $i \in
Z(m)$, $\tildeg_f(i)=\Thetabpol((1, i,0)).\tilde0_{\Thetabpol}$ (see Lemma \ref{sec1:lemact}).

    There exists a rigidified abelian variety $(A, \pol, \Thetapol,
    \theta_0^\Thetapol, \rho^\pol_{0_\Thetapol})$ with affine theta null point
    $\tildenullpol =
    (\theta_i^\Thetapol(0_\Thetapol))_{i \in \Zn}$ such that for all $i \in
    \Zn$, there exists $\lambda_i \in \overk^*$ such that 
    \begin{equation}\label{sec2:lemmaeq21}
	\tildeg_f(i)= \lambda_i * (\theta^\Thetapol_{\mu_{m,n}(j)+i}(0_\Thetapol))_{ j\in Z(m)}.
    \end{equation}

    Moreover, let $z \in A(\overk)$, recall that $G(z)=f(z)+ G$ and denote by $g_f^z$ the map $Z(n)
    \rightarrow G(z)$, $i \mapsto f(z)+g_f(i)$. We suppose that we have chosen a
    rigidification $\rho_z^\pol$ of $\pol$ in $z$.
    For all $i \in \Zn$, denote by
    $\tildeg_f^z(i)$ an affine lift of $g_f^z(i)$, then for all $i \in
    \Zn$, there exists $\lambda_i^z \in \overk$ such that:
    \begin{equation}\label{sec2:lemmaeq212}
	\tildeg^z_f(i)= \lambda^z_i * (\theta^\Thetapol_{\mu_{m,n}(j)+i}(z))_{ j\in Z(m)}.
    \end{equation}
\end{proposition}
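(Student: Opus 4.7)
The plan is to bootstrap on the existence result of Proposition \ref{prop:cond} together with the coordinate description given by Proposition \ref{sec3:prop9}.

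\textbf{Step 1 (Existence of the marked abelian variety).} The hypotheses on $G$ — isomorphic to $Z(n)$, containing $\Thetabpolbar(Z(m)\times\{0\})$, isotropic for $e_{B,n}$, and with every element symmetric compatible with $\Thetabpol(\{1\}\times Z(m)\times\{0\})$ — are exactly the conditions of Proposition \ref{prop:cond}. Applying it, we obtain a marked abelian variety $(A,\pol,\Thetapol)$ of type $K(n)$ whose theta null point lies in $\vjxz(\overk)$, is $f$-compatible with $(B,\bpol,\Thetabpol)$, and satisfies $G(0_\Thetapol)=G$.

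\textbf{Step 2 (Promoting to a rigidified abelian variety compatible with $\tilde 0_{\Thetabpol}$).} Using Remark \ref{rk:affinefcompat}, we can enrich $(A,\pol,\Thetapol)$ with a generator $\theta_0^\Thetapol$ and a rigidification $\rho^\pol_{0_\Thetapol}$ so that $(A,\pol,\Thetapol,\theta_0^\Thetapol,\rho^\pol_{0_\Thetapol})$ and $(B,\bpol,\Thetabpol,\theta_0^\Thetabpol,\rho^\bpol_{0_{\Thetabpol}})$ are $f$-compatible rigidified abelian varieties. By Corollary \ref{sec1:cor1} we then have $\theta_{\mu_{m,n}(i)}^\Thetapol(0_\Thetapol)=\theta_i^\Thetabpol(0_\Thetabpol)$ for all $i\in Z(m)$, which means precisely $\tildef(\tildenullpol)=\tildenullbpol$.

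\textbf{Step 3 (From projective coordinates to scalar lifts).} The element $g_f(i)\in G=f(\Thetapolbar(Z(n)\times\{0\}))$ corresponds, under the numbering chosen in the statement, to $f(\Thetapolbar((i,0)))$. By Proposition \ref{sec3:prop9}, the projective coordinates of this point in $\proj^{Z(m)}$ are exactly $(\theta_{\mu_{m,n}(j)+i}^\Thetapol(0_\Thetapol))_{j\in Z(m)}$. Since $\tildeg_f(i)$ is by hypothesis some affine lift of the same projective point, there must exist a unique $\lambda_i\in\overk^*$ giving (\ref{sec2:lemmaeq21}). The constraint $\tildeg_f(\mu_{m,n}(i))=\Thetabpol((1,i,0)).\tilde 0_{\Thetabpol}$ for $i\in Z(m)$ is consistent with this: by Lemma \ref{sec1:lemact} and equation (\ref{eq:trans}), acting on the affine theta null point by $\Thetabpol((1,i,0))$ produces the affine lift $(\theta_{j+i}^\Thetabpol(0_\Thetabpol))_{j\in Z(m)}$ of $\Thetabpolbar((i,0))$, and by Step 2 this equals $(\theta_{\mu_{m,n}(j+i)}^\Thetapol(0_\Thetapol))_{j\in Z(m)}$, so the corresponding $\lambda_{\mu_{m,n}(i)}$ is well-defined and nonzero.

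\textbf{Step 4 (The translated statement).} For $z\in A(\overk)$ with a chosen rigidification $\rho^\pol_z$ (which, via $f$, induces $\rho^\bpol_{f(z)}$ as in Proposition \ref{sec3:prop9}), the same Proposition \ref{sec3:prop9} applied to $x=z$ tells us that $g_f^z(i)=f(z)+g_f(i)=\lambda_z(i)$ has projective coordinates $(\theta_{\mu_{m,n}(j)+i}^\Thetapol(z))_{j\in Z(m)}$. Any affine lift $\tildeg_f^z(i)$ must therefore be a scalar multiple, which defines $\lambda_i^z\in\overk^*$ satisfying (\ref{sec2:lemmaeq212}).

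The only real subtlety is Step 2: the affine theta null point $\tildenullbpol$ is given up to the equivalence of Remark \ref{sec:rmequiv}, and one must check that the freedom in choosing $(\theta_0^\Thetapol,\rho^\pol_{0_\Thetapol})$ is exactly enough to match the pre-specified $\tildenullbpol$. This follows because by Proposition \ref{sec1:prop1} the pull-back $f^*(\theta_0^\Thetabpol)$ equals $\theta_0^\Thetapol$ up to a global scalar, which can be absorbed into the pair $(\theta_0^\Thetapol,\rho^\pol_{0_\Thetapol})$ without altering the induced affine lift.
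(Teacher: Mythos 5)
Your Steps 1, 2, and 4 track the paper closely, but Step 3 contains a genuine gap: you assert without justification that $g_f(i) = f(\Thetapolbar((i,0)))$, i.e., that the numbering of $G$ given in the hypothesis coincides with the numbering $\lambda_{0_\Thetapol}$ produced by the theta structure $\Thetapol$ supplied by Proposition \ref{prop:cond}. This is not automatic. Proposition \ref{prop:cond} only gives you \emph{some} $(A,\pol,\Thetapol)$ with $G(0_\Thetapol)=G$, and Proposition \ref{sec3:prop9} guarantees that the induced isomorphism $\lambda_{0_\Thetapol}\colon Z(n)\to G$ agrees with $g_f$ on the subgroup $\mu_{m,n}(Z(m))$ (both send $\mu_{m,n}(j)$ to $\Thetabpolbar((j,0))$), but on the rest of $Z(n)$ they may differ by a nontrivial group automorphism of $Z(n)$ restricting to the identity on $\mu_{m,n}(Z(m))$. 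Your parenthetical consistency check at the end of Step 3 only inspects $\mu_{m,n}(Z(m))$, so it does not detect this possible discrepancy.

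The paper's proof explicitly handles this: after invoking Propositions \ref{prop:cond} and \ref{sec3:prop9}, it observes that (\ref{sec2:lemmaeq21}) holds only up to a group isomorphism $\mu\colon Z(n)\to Z(n)$ fixing $\mu_{m,n}(Z(m))$ pointwise, and then replaces $\Thetapol$ by $\Thetapol\circ\gamma$ with $\gamma$ in the subgroup $G_1$ of Proposition \ref{prop:action2}~(2) (block-diagonal symplectic matrices congruent to the identity mod $m$) to force $\mu$ to be the identity. You need this reindexing step, or an argument that the automorphism is trivial, before Equation (\ref{sec2:lemmaeq21}) can be concluded as stated.
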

\begin{proof}
    The existence of 
$(A, \pol, \Thetapol,
    \theta_0^\Thetapol, \rho^\pol_{0_\Thetapol})$ isog-$f$-compatible with 
$(B, \bpol, \Thetabpol, \theta_0^\Thetabpol, \rho^\bpol_{0_{\Thetabpol}})$
    comes from Proposition \ref{prop:cond}.
    By Proposition \ref{sec3:prop9}, there exists a group isomorphism $\mu: \Zn
    \rightarrow \Zn$ such that the restriction of $\mu$ to $\mu_{m,n}(Z(m))$ is the
    identity and for all $i \in \Zn$:
    \begin{equation}
	\tildeg_f(\mu(i))=\lambda_i * (\theta^\Thetapol_{\mu_{m,n}(j)+i}(0_\Thetapol))_{ j\in Z(m)}.
    \end{equation}
    By acting on $(A, \pol, \Thetapol,
    \theta_0^\Thetapol, \rho^\pol_{0_\Thetapol})$ by the subgroup $\subG_1$ of Proposition
    \ref{prop:action2}
    (2), we can suppose that $\mu$ is the identity of $Z(n)$.

    As by hypothesis for all $i \in \Zn$, $\tildeg_f^z(i)=f(z) + g_f(i)$, we can apply
    Proposition \ref{sec3:prop9} to obtain (\ref{sec2:lemmaeq212}).
\end{proof}

This motivate the following Definition:
\begin{definition}\label{def:excel}
Let $(B, \bpol, \Thetabpol, \theta_0^\Thetabpol, \rho^\bpol_{0_{\Thetabpol}})$ be a rigidified
    abelian variety of type $K(m)$ with affine theta null point $\tildenullbpol$.
 Let $G \subset B[n]$ be a subgroup of $B[n]$ isomorphic to $Z(n)$
    containing $\Thetabpolbar(Z(m)\times\{0\})$, isotropic for $e_{B,n}$ and such that for all
$x \in G(\overk)$, $x$ is symmetric compatible with $\Thetabpol( \{1\}\times Z(m) \times\{0\}
)$. We choose a numbering of the elements of $G$ by writing $G=\{g_f(i), i \in \Zn\}$
such that the maps $i \mapsto g_f(i)$ is a group morphism and for all $i \in Z(m)$,
$g_f(\mu_{m,n}(i))=\Thetabpolbar((i, 0))$. 
    We say that $\tildeG= \{ \tildeg_f(i), i \in \Zn \}$ is an excellent lift of $G$ with
    respect to $\tildenullbpol$ if 
    there exists a rigidified abelian variety
$(A, \pol, \Thetapol,
    \theta_0^\Thetapol, \rho^\pol_{0_\Thetapol})$ with affine theta null point
    $\tildenullpol$ 
    of type $K(n)$ isog-$f$-compatible with 
$(B, \bpol, \Thetabpol, \theta_0^\Thetabpol, \rho^\bpol_{0_{\Thetabpol}})$ such that
    for all $i \in \Zn$:
    \begin{equation}\label{sec2:lemmaeq23}
	\tildeg_f(i)= \tildef(\Thetapol(1,i,0).\tildenullpol).
    \end{equation}
    where $\tildef$ is given by Definition \ref{def:affinemap}.
\end{definition}

The second remark defining our approach allows to interpret the modular Riemann equations for the theta
null point as relations for points of the variety $B$. 

\begin{proposition}\label{sec2:lemm6}
Let $(A, \pol, \Thetapol, \theta_0^\Thetapol, \rho^\pol_{0_\Thetapol})$ be a rigidified abelian
variety verifying the hypothesis of Proposition \ref{sec2:lemmalambdai}.
    Suppose that we have chosen $\tildeg_f(i)$ for $i \in \Zn$ so that 
    $\tildeg_f(i)= (\theta^\Thetapol_{\mu_{m,n}(j)+i}(0_\Thetapol))_{ j\in Z(m)}$
    ($\lambda_i=1$ in (\ref{sec2:lemmaeq21})). We let $\Aff^{Z(m)}=\Spec(k[x_i,i \in Z(m)])$
    so that for $i \in Z(m)$, $x_i$ is the $i^{th}$-coordinate function.

    \begin{enumerate}
	\item 
	    Let $\vx= (y_1, \ldots, y_4; y_5, \ldots, y_8) \in \Zn^8$ and $\vi=(i_1, \ldots, i_4; i_5, \ldots,
i_8) \in Z(m)^8$ be elements in Riemann position, then we have a Riemann equation:
    \begin{equation}\label{eq:riemaneq2}
	\sum_{\eta \in Z(2)}  \prod_{j=1}^4 x_{i_j +
	\eta}(\tildeg_f(y_j)) =\sum_{\eta \in Z(2)}  \prod_{j=5}^8 x_{i_j +
	\eta}(\tildeg_f(y_j)).
    \end{equation}
\item
For all $i \in Z(m)$ and $j \in \Zn$, we have the following symmetry relation:
\begin{equation}\label{eq:symrel}
    x_{i}(\tildeg_f(j))=x_{-i}(\tildeg_f(-j)).
\end{equation}
\item
    For all $\kappa,i \in Z(m)$ and $j \in \Zn$, we have
\begin{equation}
    x_{i + \kappa}(\tildeg_f(j))=x_i (\tildeg_f(j+\mu_{m,n}(\kappa))).
\end{equation}
\end{enumerate}
\end{proposition}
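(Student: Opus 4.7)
The plan is to reduce all three claims to identities for the theta null values of $(A,\pol,\Thetapol)$. With the normalisation $\lambda_i = 1$ in (\ref{sec2:lemmaeq21}), the coordinates of the affine lifts read
\[
x_i(\tildeg_f(y)) = \theta^{\Thetapol}_{\mu_{m,n}(i)+y}(0_\Thetapol), \qquad i \in Z(m),\ y \in Z(n),
\]
so each of (1), (2), (3) becomes a relation among the theta null values of $(A,\pol,\Thetapol)$ indexed by elements of $Z(n)$. Note that $n=md$ is even since by assumption $2 \mid m$, so that Theorem \ref{sec:thriemann} is available.

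For (1) the strategy is to combine the two Riemann-position data into a single Riemann-position datum in $Z(n)$. Given the witness $v \in Z(m)$ for $\vi$ and the witness $w \in Z(n)$ for $\vx$, I would set $z_j := \mu_{m,n}(i_j)+y_j \in Z(n)$. A direct check, using that $\mu_{m,n}$ is a group morphism, shows that $(z_1,\ldots,z_4;z_5,\ldots,z_8)$ is in Riemann position in $Z(n)$ with witness $\mu_{m,n}(v)+w$. Because $\mu_{m,n} \circ \mu_{2,m} = \mu_{2,n}$, the left-hand side of (\ref{eq:riemaneq2}) rewrites as $\sum_{\eta \in Z(2)} \prod_{j=1}^4 \theta^{\Thetapol}_{z_j+\eta}(0_\Thetapol)$ (with $Z(2)$ embedded in $Z(n)$ via $\mu_{2,n}$), and similarly for the right-hand side. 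The desired equality is then exactly the Riemann relation (\ref{eq:riemaneq1}) of Theorem \ref{sec:thriemann} applied to $(A,\pol,\Thetapol)$ with all evaluation points equal to $0_\Thetapol$.

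Claim (2) will follow from the symmetry of theta null values: unwinding definitions,
\[
x_{-i}(\tildeg_f(-j)) = \theta^{\Thetapol}_{-(\mu_{m,n}(i)+j)}(0_\Thetapol) = \theta^{\Thetapol}_{\mu_{m,n}(i)+j}(0_\Thetapol) = x_i(\tildeg_f(j)),
\]
the middle equality being Proposition \ref{sec:propsym}. Claim (3) is immediate from the additivity of $\mu_{m,n}$: the identity $\mu_{m,n}(i+\kappa)+j = \mu_{m,n}(i)+(j+\mu_{m,n}(\kappa))$ shows both sides compute the same theta null value.

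The only step with any content is the transfer of Riemann position in (1), a short piece of bookkeeping to verify that the witness $\mu_{m,n}(v)+w$ works; beyond this, everything is formal and I do not anticipate a real obstacle.
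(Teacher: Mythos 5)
Your proposal is correct and takes essentially the same route as the paper's proof: translate each coordinate $x_i(\tildeg_f(j))$ into the theta null value $\theta^{\Thetapol}_{\mu_{m,n}(i)+j}(0_\Thetapol)$ via the normalisation $\lambda_i=1$, then read off (1) from Theorem \ref{sec:thriemann}, (2) from Proposition \ref{sec:propsym}, and (3) from the index identity $\mu_{m,n}(i+\kappa)+j=\mu_{m,n}(i)+(j+\mu_{m,n}(\kappa))$, which is exactly what the paper phrases as the action of $\Thetabpol((1,\kappa,0))$ in Equation (\ref{eq:actthetagroup}).
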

\begin{proof}
    With the hypothesis of the Proposition, we have for all $i \in Z(m)$ and $j \in \Zn$: 
\begin{equation}
    x_{i}(\tildeg_f(j))= \theta^\Thetapol_{\mu_{m,n}(i)+j}(0_\Thetapol),
\end{equation}
    so (1) is an immediate consequence of Theorem \ref{sec:thriemann},
    (2) comes from the symmetry relations of Proposition \ref{sec:propsym} and (3) is given by
    the action on $\tildeg_f(j)$ of $\Thetabpol((1, \kappa, 0))$, for $\kappa \in
    Z(m)$ (see equation (\ref{eq:actthetagroup})).
\end{proof}
Proposition \ref{sec2:lemmalambdai} tells us how to recover the theta null point of a rigidified abelian variety 
$(A, \pol, \Thetapol,
    \theta_0^\Thetapol, \rho^\pol_{0_\Thetapol})$ 
isog-$f$-compatible with $(B, \bpol, \Thetabpol, \theta_0^\Thetabpol, \rho^\bpol_{0_{\Thetabpol}})$
from the knowledge of well chosen $\tildeg_f(i)$ for $i \in \Zn$ and
Proposition \ref{sec2:lemm6} gives necessary condition for $\tildeg_f(i)$ to be well chosen.

From the knowledge of $(B, \bpol, \Thetabpol, \theta_0^\Thetabpol, \rho^\bpol_{0_{\Thetabpol}})$, a rigidified
abelian variety given (up to equivalence) by its affine theta null point $\tilde0_{\Thetabpol}$, the
Riemann equations endow the rigidified abelian variety with important arithmetic
operations which are described in \cite{DRoptimal,DRarithmetic}, which deals with
projective or affine points.

We recall the ones that we are going to use:
\begin{itemize}
    \item Normal addition: $x+y=\normadd(x,y, 0_\Thetabpol)$ takes $x, y \in B(\overk) \subset \proj^{Z(m)}(\overk)$ and
	returns$x+y \in B(\overk) \subset \proj^{Z(m)}(\overk)$;
    \item Differential addition: $\tildexy=\diffadd(\tildex,\tildey,
	\tildexmy, \tildenullbpol)$ takes $\tildex, \tildey, \tildexmy,
	\tildenullbpol \in B(\overk) \subset \Aff^{Z(m)}(\overk)$ and returns
	$\tildexy \in B(\overk) \subset \Aff^{Z(m)}(\overk)$;
    \item Three way addition: $\widetilde{x+y+z}=\threeway(\tildexy, \widetilde{y+z},
	\widetilde{x+z}, \tildex,
	\tildey, \tildez, \tildenullbpol)$ takes $\tildexy$,
    $\widetilde{y+z}, \widetilde{x+z},
	\tildex, \tildey, \tildez, \tildenullbpol \in B(\overk) \subset \Aff^{Z(m)}(\overk)$
    and returns $\widetilde{x+y+z}
	\in B(\overk) \subset \Aff^{Z(m)}(\overk)$.
\end{itemize}
We can chain a differential addition in a Montgomery-ladder type algorithm in order to
compute scalar multiplication $\scalmult(\ell, \tildexy, \tildex, \tildey,
\tildenullbpol)$, which
takes as input a positive integer $\ell$, affine points $\tildexy$, $\tildex$,
$\tildey$, $\tildenullbpol \in B(\overk) \subset \Aff^{Z(m)}(\overk)$ and returns
$\widetilde{\ell x + y} \in B(\overk) \subset \Aff^{Z(m)}(\overk)$. 

There is two more operations that will be useful. The first one comes from the action of the
Heisenberg group on affine points (\ref{eq:trans}) and the second one is an immediate
consequence of the symmetry relations of Lemma \ref{sec2:lemm6}.
\begin{definition}
We denote by:
    \begin{itemize}
	\item
    $\thetaact(\tildex,i)$ the operation that takes as input
    $\tildex \in B(\overk) \subset \Aff^{Z(m)}(\overk)$ and $i \in Z(m) \cup \dZ(m)$ and outputs $(1,i,0).\tildex$ or $(1,0,i).\tildex$ depending on
    whether $i \in Z(m)$ or $i \in \dZ(m)$. 
\item $\inv(\tildex)$ the operator that takes as input $\tildex=(\tildex_j)_{j \in Z(m)} \in B(\overk) \subset \Aff^{Z(m)}(\overk)$
    which is a lift of $x \in B$
    and returns $\tildey=(\tildey_j)_{j \in Z(m)} \in B(\overk) \subset \Aff^{Z(m)}(\overk)$ where
    $y_j=x_{-j}$ which is an affine lift of
    $-x$.
    \end{itemize}
\end{definition}
The fact that $\inv$ is well defined on $G$ is an immediate consequence of the inverse
formula given by Lemma \ref{sec1:leminv}. Note however that $\inv$ acts on affine points
where the inverse formula deals with projective points.

We recall \cite[Lemma 1]{DRoptimal} that explains how the output of $\diffadd$ and $\scalmult$ change with the choice of input affine points and complete this result with
an analog result for the other operations on affine points that we are going to use:
\begin{lemma}\label{lem:scalar}
Let $x,y \in B(\kbar)$ and let $\tildex$, $\tildey$, $\tildexmy$ be affine lifts of $x$,
$y$ and $x-y$. Let
$$\tilde{r} = \diffadd(\tildex, \tildey, \tildexmy, \tildenullbpol).$$
Let $\alpha, \beta, \gamma, \delta \in \kbar^*$, we have:
\begin{equation}
    \diffadd(\alpha* \tildex, \beta*\tildey, \gamma*\tildexmy,
    \delta*\tildenullbpol)=\frac{\alpha^2 \beta^2}{\gamma \delta^2} \tilde{r}.
\end{equation}
Let $x, y \in B(\kbar)$ and let
$\tildex, \tildey, \tildexy$ be affine lifts of $x$, $y$ and
    $x+y$. Let $$\widetilde{r} = \scalmult(\ell, \tildexy, \tildex, \tildey,
    \tildenullbpol).$$ Let $\alpha, \beta, \gamma, \delta \in
\kbar^*,$ we have:
\begin{equation}\label{eq:factor1} \scalmult(\ell,\alpha*\tildexy,
    \beta*\tildex, \gamma*\tildey, \delta*\tildenullbpol)=(\alpha^{\ell}
\beta^{\ell(\ell-1)} / \gamma^{\ell-1}\delta^{\ell(\ell-1)})*\widetilde{r}, \end{equation}
\begin{equation}\label{eq:factor2} \scalmult(\ell,\alpha*\tildex,
\alpha*\tildex, \delta*\tildenullbpol,
    \delta*\tildenullbpol)=\frac{\alpha^{\ell^2}}{\delta^{\ell^2-1}} * \scalmult(\ell,\tildex,
\tildex, \tildenullbpol, \tildenullbpol).\end{equation}
  \label{lem:diffaddaction}
For $\alpha \in \overk^*$ and all $\tildex$ affine point and $i \in
Z(n)$, we have 
\begin{equation}
\thetaact(\alpha*\tildex,i)=\alpha* \thetaact(\tildex,i).
\end{equation}
In the same way, for all $\tildex$ affine point which is a lift of $x \in G$ and $\alpha \in
\overk^*$:
\begin{equation}
    \inv(\alpha* \tildex)=\alpha*\inv(\tildex).
\end{equation}
\end{lemma}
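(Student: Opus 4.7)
The plan is to treat the four displayed equations separately, since they split naturally into a recalled part and a new part. The first two identities (the scaling formulas for $\diffadd$ and $\scalmult$) are quoted verbatim from \cite[Lemma 1]{DRoptimal}, as the word "recall" in the statement indicates, so I would not re-derive them. For orientation I would only remark how the $\scalmult$ factor follows from the $\diffadd$ factor: the Montgomery-style ladder computing $\scalmult$ is an iterated chain of $\diffadd$ calls, and propagating the elementary scaling $(\alpha^2\beta^2)/(\gamma\delta^2)$ through the doubling and differential-addition steps by induction on $\ell$ produces the exponents $\alpha^\ell$, $\beta^{\ell(\ell-1)}$, $\gamma^{-(\ell-1)}$, $\delta^{-\ell(\ell-1)}$; identity (\ref{eq:factor2}) is then simply the specialisation $\tildey = \tildex$, $\tildexy = \tildenullbpol$, where the four scalars collapse into two.

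The new content is the last two equations, concerning $\thetaact$ and $\inv$, and both follow from unwinding definitions. For $\thetaact$, I would invoke the explicit description of the $G(n)$-action recorded in equation (\ref{eq:trans}): acting by $(1,i,0)$ or $(1,0,i)$ on an affine lift amounts to multiplying each coordinate of $\tildex$ by a scalar that depends only on the index (a value of a character), followed by a relabelling of the indices. Both operations are $\overk^*$-homogeneous in the input, so substituting $\alpha * \tildex$ for $\tildex$ scales every output coordinate by $\alpha$, which gives $\thetaact(\alpha * \tildex, i) = \alpha * \thetaact(\tildex, i)$.

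For $\inv$ the argument is even shorter. By its definition, $(\inv(\tildex))_j = (\tildex)_{-j}$ for every $j \in Z(m)$; replacing $\tildex$ by $\alpha * \tildex$ multiplies each coordinate by $\alpha$ before the permutation $j \mapsto -j$, so the output is also multiplied coordinate-wise by $\alpha$. The fact that $\inv$ is well defined as an operator on affine lifts of points of $G$ (i.e.\ that the sign ambiguity is resolved consistently) has already been established through the inverse formula of Lemma \ref{sec1:leminv}, and is not reopened here. I do not anticipate any real obstacle: the whole lemma is a bookkeeping statement about how already-known explicit formulas transform under rescaling of the input affine lifts, the only mildly delicate point being the induction tracking the exponents through the Montgomery ladder, which is already handled by the cited reference.
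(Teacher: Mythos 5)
Your proposal is correct and aligns with the paper's (mostly implicit) treatment: the paper itself gives no proof, stating the first two identities as a recall of \cite[Lemma 1]{DRoptimal} and leaving the $\thetaact$ and $\inv$ homogeneity as immediate from the definitions. Your observations — that the $\scalmult$ exponents follow by an induction through the Montgomery ladder, that $\thetaact$ acts coordinate-wise by a character times an index permutation (hence is degree-one homogeneous in the input), and that $\inv$ is a bare index permutation — are exactly the bookkeeping one would write out, so there is nothing to correct.
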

The following Lemma that we need states that $\scalmult$ is compatible with $\tildef$ (see
Definition \ref{def:affinemap}):
\begin{lemma}\label{lem:tildefscalmult}
Let $(A, \pol, \Thetapol)$ and $(B, \bpol, \Thetabpol)$ be isog-$f$-compatible or
dual-isog-$f$-compatible abelian
varieties of respective types $K(n)$ and $K(m)$ and associated affine theta null points
$\tildenullpol$ and $\tildenullbpol$. We suppose that
$\tildef(\tildenullpol)= \tildenullbpol$.

 Let $x, y \in A(\kbar)$ and let
$\tildex, \tildey, \tildexmy$ be affine lifts of $x$, $y$ and
    $x-y$, we have:
    \begin{equation}
	\tildef(\diffadd(\tildex, \tildey, \tildexmy, \tildenullpol))=\diffadd(\tildef(\tildex),
	\tildef(\tildey), \tildef(\tildexmy), \tildef(\tildenullbpol)).
    \end{equation}
    In particular for all $\ell$ positive integer, we have:
    \begin{equation}
	\tildef(\scalmult(\ell, \tildexy, \tildex, \tildey, \tildenullpol))=
	\scalmult(\ell, \tildef(\tildexy), \tildef(\tildex), \tildef(\tildey),
	\tildef(\tildenullpol))
    \end{equation}
\end{lemma}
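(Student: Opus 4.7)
I reduce first to the statement for $\diffadd$; the claim for $\scalmult$ then follows by induction on $\ell$, since the Montgomery ladder expresses $\scalmult(\ell,\tildexy,\tildex,\tildey,\tildenullpol)$ as an iterate of $\diffadd$ whose initial data $(\tildex,\tildey,\tildexy,\tildenullpol)$ is preserved by $\tildef$, and each iteration commutes with $\tildef$ by the $\diffadd$ case. The proof for $\diffadd$ relies on its characterization via Riemann equations (\cref{sec:thriemann}): given the inputs, $\diffadd(\tildex,\tildey,\tildexmy,\tildenullpol)$ is the unique affine point in $\Aff^{Z(n)}$ lifting $x+y$ such that the Riemann identities associated with a Riemann position of the form $(x+y,x-y,0,0;x,x,y,y)$ hold at the supplied affine lifts. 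By \cref{sec1:cor1}, $\tildef$ sends affine lifts in $A$ to compatible affine lifts in $B$ of the image point; hence both $\tildef(\diffadd(\tildex,\tildey,\tildexmy,\tildenullpol))$ and $\diffadd(\tildef(\tildex),\tildef(\tildey),\tildef(\tildexmy),\tildef(\tildenullpol))$ are affine lifts of $f(x+y)=f(x)+f(y)$ in $\Aff^{Z(m)}$, and it suffices to check they satisfy the same defining Riemann identity in $B$.

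\textbf{The $f$-compatible case.} Here $\tildef^{*}(Y_i)=X_{\mu_{m,n}(i)}$ is the coordinate projection onto the subspace indexed by $\mu_{m,n}(Z(m))\subset Z(n)$. Since $\bpol$ is totally symmetric, $2\mid m$, so $Z(2)\subset \mu_{m,n}(Z(m))$ inside $Z(n)$. A Riemann equation in $A$ at indices $(\mu_{m,n}(i_1),\ldots,\mu_{m,n}(i_8))$ with $(i_j)\in Z(m)^{8}$ in Riemann position involves only coordinates in $\mu_{m,n}(Z(m))$ (since the inner $\eta$-sum runs over $Z(2)$). Using $\theta_{\mu_{m,n}(i)}^{\Thetapol}(z)=\theta_{i}^{\Thetabpol}(f(z))$ from \cref{sec1:cor1}, this is literally the Riemann equation in $B$ at points $f(x_j)$ and indices $i_j$. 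Consequently the polynomial formula defining $\diffadd$ in $A$, restricted via $\tildef$ to the projected coordinates, coincides with the formula defining $\diffadd$ in $B$, and the commutation follows.

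\textbf{The dual-$f$-compatible case.} Here $\tildef^{*}(Y_i)=\sum_{j\in\rho_{n,m}^{-1}(i)}X_j$ sums over fibers of the projection $\rho_{n,m}:Z(n)\epi Z(m)$ with kernel $\mu_{d,n}(Z(d))$. Substituting $\theta_{i}^{\Thetabpol}(f(z))=\sum_{j\in\rho_{n,m}^{-1}(i)}\theta_{j}^{\Thetapol}(z)$ (from \cref{sec1:cor1}) into the Riemann equation in $B$ at points $f(x_j)$ and indices $i_j$ expands each of its eight theta factors into a fiber sum, yielding a sum of quadruple products of theta values in $A$ indexed by tuples $(k_j)\in\prod_{j}\rho_{n,m}^{-1}(i_j)$. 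Because $\rho_{n,m}$ is a group morphism, any such tuple $(k_j)$ lies over the Riemann position $(i_j)$ in $Z(m)^{8}$; moreover $\mu_{d,n}(Z(d))\subset 2Z(n)$ (using $2\mid m$), so the parameter $z$ of \cref{sec1:defrieman} lifts from $Z(m)$ to $Z(n)$, promoting $(k_j)$ to a Riemann position in $Z(n)^{8}$ up to reabsorption of a $\frac{1}{2}\mu_{d,n}(Z(d))$ ambiguity by reindexing within the fiber sums. The fiber-summed Riemann identities in $A$ therefore reassemble to the Riemann identity in $B$, giving the commutation for $\diffadd$.

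\textbf{Main obstacle.} The dual-$f$-compatible case is the delicate one: the fiber-summing $\tildef$ must be shown to commute with the Riemann-equation-defining formula for $\diffadd$, which requires that Riemann positions in $Z(m)^{8}$ lift to Riemann positions in $Z(n)^{8}$ and that the kernel $\mu_{d,n}(Z(d))$ is contained in $2Z(n)$. Both points rely on the parity assumption $2\mid m$ (implicit in the totally symmetric hypothesis on $\bpol$); once secured, the combinatorial reassembly of the fiber-expanded Riemann identity into the Riemann identity of $A$ is straightforward, and the scalar-multiplication case then drops out of the Montgomery-ladder induction.
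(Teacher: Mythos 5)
Your approach diverges significantly from the paper's and, while your $f$-compatible case is sound, the dual-$f$-compatible case has a genuine gap.

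The paper's proof is conceptual and avoids any case split: choose local trivializations of $\pol$ over $U$ and of $\bpol$ over $f(U)$ compatibly, so that $\tildef\bigl((\theta_i^\Thetapol(z))_{i\in Z(n)}\bigr)=(\theta_i^\Thetabpol(f(z)))_{i\in Z(m)}$ holds for every $z\in A(\overk)$ (not just $z=0$), which is a direct consequence of Proposition~\ref{sec1:prop1}. By $\overk^*$-homogeneity of $\tildef$ and of $\diffadd$, one may assume all input affine lifts are exactly these canonical ones. Then $\diffadd$ applied to canonical lifts in $A$ returns the canonical lift $(\theta_i^\Thetapol(x+y))_i$, and $\diffadd$ applied to the $\tildef$-images (which are the canonical lifts of $f(x),f(y),f(x-y),0$ in $B$) returns $(\theta_i^\Thetabpol(f(x+y)))_i$; both sides of the claimed identity are then the same point, and one never has to manipulate a single Riemann relation explicitly. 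This is cleaner than your syntactic approach and works identically for the $f$-compatible and dual-$f$-compatible cases.

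Your $f$-compatible argument does work: there $\tildef$ is a coordinate projection onto $\mu_{m,n}(Z(m))$, the $\eta$-sum in a Riemann relation stays inside $\mu_{m,n}(Z(m))$ because $Z(2)\subset\mu_{m,n}(Z(m))$ when $2\mid m$, and the restricted $A$-Riemann relations coincide verbatim with the $B$-Riemann relations under $\theta^\Thetapol_{\mu_{m,n}(i)}(z)=\theta_i^\Thetabpol(f(z))$; uniqueness of the $\diffadd$ output then gives the commutation.

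The dual-$f$-compatible case is where the gap lies. After fiber-expanding each of the eight theta factors $\theta^\Thetabpol_{i_j+\eta}=\sum_{k_j\in\rho_{n,m}^{-1}(i_j+\eta)}\theta^\Thetapol_{k_j}$, you obtain a sum over tuples $(k_1,\ldots,k_8)$ ranging independently over the fibers of $\rho_{n,m}$. You need to regroup these tuples into Riemann positions in $Z(n)^8$ so that the whole sum becomes a sum of $A$-Riemann relations. Indeed $\mu_{d,n}(Z(d))\subset 2Z(n)$ when $2\mid m$, so any lift $(k_1,\ldots,k_4)$ of $(i_1,\ldots,i_4)$ does satisfy $-k_1+k_2+k_3+k_4\in 2Z(n)$. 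But this only determines the companion half $(k_5,\ldots,k_8)$ up to a $Z(n)[2]$ ambiguity, and the fibers of $\rho_{n,m}$ on the right-hand side are summed over independently of the $z'$ chosen on the left-hand side; establishing a bijection between left-hand and right-hand tuples compatible with Riemann positions is a nontrivial combinatorial claim that your phrase "reabsorption of a $\frac{1}{2}\mu_{d,n}(Z(d))$ ambiguity by reindexing within the fiber sums" does not actually verify (and the group you cite does not seem to be the correct one — the ambiguity is in $Z(n)[2]$). As written, this step is not a proof. The paper's semantic argument is not merely shorter; it entirely sidesteps this combinatorics and is the approach I would recommend.
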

\begin{proof}
    The second claim is an immediate consequence of the first. 
    Let $x,y \in A(\overk)$, let $U$ be an open affine subspace of $A$ containing $x$, $y$, $x-y$,
    $0$ and $x+y$ and choose a local trivialisation $\Gamma(U, \pol) \simeq \Gamma(U,
    \stsheaf_A)$. In the same manner, choose a local trivialisation $\Gamma(f(U), \bpol)
    \simeq \Gamma(f(U), \stsheaf_B)$. If $s \in \Gamma(U, \pol)$ (resp. $s \in
    \Gamma(f(U), \bpol)$) and $x \in U(\overk)$ (resp. $x \in f(U)(\overk)$), denote by $s(x)$ the
    evaluation map. We can choose the trivialisations so that  $\tildef(
    (\theta_i^\Thetapol(0))_{i \in \Zn}) = (\theta_i^\Thetabpol(0))_{i \in Z(m)}$, thus
    for all $z  \in A(\overk)$, 
    \begin{equation}\label{eq:tildefscalarmult}
    \tildef(
    (\theta_i^\Thetapol(z))_{i \in \Zn}) = (\theta_i^\Thetabpol(z))_{i \in Z(m)}.
\end{equation}

    As for all $\alpha \in \overk$, $\tildef(\alpha * \tildex)= \alpha *
    \tildef(\tildex)$, we can suppose that for $z=0, x,y, x-y$, 
    $\tildez = (\theta^\Thetapol_i(z))_{i \in
    \Zn}$. Then, as $\diffadd(\tildex, \tildey, \tildexmy, \tildenullpol)$ is computed using Riemann relations of Theorem \ref{sec:thriemann} for the
    points $(x+y, x-y, 0, 0; x, x, y,y)$ in Riemann position, it returns the affine point
    $(\theta^\Thetapol_i(x+y))_{i \in \Zn}$. In the same manner, 
$\diffadd(\tildef(\tildex),
	\tildef(\tildey), \tildef(\tildexmy), \tildef(\tildenullpol))$ returns
	$(\theta_i^\Thetabpol(f(x+y)))_{i \in \Zn}$. The result is thus an immediate
	consequence of Equation (\ref{eq:tildefscalarmult}).
\end{proof}
In the previous Lemma, we have seen that $\diffadd$ behave nicely with isog-$f$-compatible
isogenies. We have a similar result for the action of $\Gpol$ on affine points. In order
to prove it, we first state a more general result for Riemann equations:
\begin{proposition}\label{prop:riemannthetagroupaction}
    Let $(A, \pol, \Thetapol)$ be a marked abelian variety of type $K(n)$. For $\vg=(g_1,
    \ldots, g_4; g_5, \ldots, g_8) \in \Gpol^8$, $\vx=(x_1, \ldots, x_4; x_5, \ldots,
    x_6) \in A(\overk)^8$, $\vi=(i_1, \ldots, i_4; i_5, \ldots, i_8)\in \Zn^9$, we set:
    \begin{equation}\label{eq:riemaneq22}
	L'(\vi, \vx, \vg) = \sum_{\eta \in Z(2)}  \prod_{j=1}^4 g_j(\theta_{i_j +
	\eta}^{\Thetapol})(x_j) -\sum_{\eta \in Z(2)}  \prod_{j=5}^8 g_j(\theta_{i_j +
	\eta}^{\Thetapol})(x_j).
    \end{equation}
    Then for all $\vx$, $\vi$ in Riemann position, $L'(\vi, \vx, 0)=0$
    span the vector space of Riemann equations. 
    Moreover, if $\vg$ is in Riemann position
    then $L'(\vi, \vx, \vg)=0$ is in the vector space of Riemann equations.
\end{proposition}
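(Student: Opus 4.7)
\emph{First claim.} Taking $\vg=(1_{G(\pol)},\ldots,1_{G(\pol)})$, each $g_j$ acts trivially on sections, so $L'(\vi,\vx,0)=0$ reduces to the Riemann relation \eqref{eq:riemaneq1} of Theorem \ref{sec:thriemann}; as $(\vi,\vx)$ varies over Riemann position tuples, these span the vector space of Riemann equations by definition.

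\emph{Second claim.} Write $g_j=\Thetapol((\alpha_j,u_j,v_j))$ and apply the action formula \eqref{eq:trans} to obtain
\[
g_j(\theta^\Thetapol_{i_j+\eta})(x_j)\;=\;C_j\,v_j(-\eta)\,\theta^\Thetapol_{i_j+u_j+\eta}(x_j),\qquad C_j:=\alpha_j v_j(-i_j-u_j).
\]
Riemann position of $\vg$ projected to $\dZn$ gives $v_1+v_2+v_3+v_4=2v_5$ and $v_5+v_6+v_7+v_8=2v_1$; as $Z(2)\subset Z(n)$ is $2$-torsion, both characters $\prod_{j=1}^{4}v_j$ and $\prod_{j=5}^{8}v_j$ evaluate trivially on $-\eta$ for every $\eta\in Z(2)$, so the $\eta$-dependent characters drop out and the constants $\prod_{j=1}^{4}C_j$, $\prod_{j=5}^{8}C_j$ factor out of each sum. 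A direct check shows that if $w$ witnesses Riemann position of $\vi$ and $u$ is the $Z(n)$-component of the witness of $\vg$, then $(i_j+u_j)_{j=1,\ldots,8}$ is again in Riemann position with witness $w+u$. The first claim applied to this shifted tuple and the unchanged $\vx$ yields
\[
T\;:=\;\sum_{\eta\in Z(2)}\prod_{j=1}^{4}\theta^\Thetapol_{i_j+u_j+\eta}(x_j)\;=\;\sum_{\eta\in Z(2)}\prod_{j=5}^{8}\theta^\Thetapol_{i_j+u_j+\eta}(x_j),
\]
whence
\[
L'(\vi,\vx,\vg)=\Big(\prod_{j=1}^{4}C_j\Big)\bigl(T-T'\bigr)+\Big(\prod_{j=1}^{4}C_j-\prod_{j=5}^{8}C_j\Big)T',
\]
where $T-T'$ is the Riemann polynomial for $((i_j+u_j)_j,\vx)$. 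The first summand is already a scalar multiple of a Riemann polynomial, so $L'(\vi,\vx,\vg)=0$ lies in the vector space of Riemann equations as soon as both scalar prefactors coincide.

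The main obstacle is the scalar identity $\prod_{j=1}^{4}C_j=\prod_{j=5}^{8}C_j$. Using the group law $(\alpha,a,b)(\alpha',a',b')=(\alpha\alpha'b'(a),a+a',b+b')$ on $G(n)$ with $z=(\beta,u,v)$ and the Riemann-position relations $g_5=g_1z$, $g_6=g_2z^{-1}$, $g_7=g_3z^{-1}$, $g_8=g_4z^{-1}$, one computes explicit formulas for $\alpha_5,\ldots,\alpha_8$. Expanding $v_j=v_1\pm v$ in each $v_j(-i_j-u_j)$ for $j\geq 5$ and telescoping via $-u_1+u_2+u_3+u_4=2u$ and $-v_1+v_2+v_3+v_4=2v$ shows the character parts match, reducing the identity to the ratio $v(u)/\beta^2=1$, which is forced by the normalization of the witness $z$ implicit in the Riemann-position relation in $G(\pol)$ (parallel to the symmetric-element condition $\beta^2=v(u)$ on $G(n)$). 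This mechanical but somewhat delicate bookkeeping completes the proof.
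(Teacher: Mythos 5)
You take a genuinely different route from the paper: you compute the twist by a general $\vg$ directly, whereas the paper first passes to the quotient $\qGn$, shows that the Riemann-position tuples $\hGn$ form a subgroup acting on the space of Riemann polynomials, reduces to the generator subsets $(1,Z(n),0)^8$ and $(1,0,\dZn)^8$ where all scalar parts $\alpha_j$ are $1$, and bootstraps via the group structure. Your index-shift observation (that $(i_j+u_j)_j$ is again in Riemann position) and the $\eta$-character cancellation are both correct and indeed recover the two generator computations as special cases.

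However, your reduction of the general case rests on the unproved scalar identity $\prod_{j=1}^4 C_j=\prod_{j=5}^8 C_j$. Carrying your own computation through, this identity is equivalent (after the character parts telescope as you indicate) to $\beta^2=v(u)$ for the witness $z=(\beta,u,v)$ of Riemann position for $\vg$. You assert this is \emph{forced} by the normalization implicit in the Riemann-position relation, appealing to a parallel with the symmetric-element condition $\beta^2=v(u)$; but that is not a proof. Working out the Riemann-position constraint $g_1^{-1}g_2g_3g_4=z^2$ on scalars gives
\[
\beta^2\,v(u)=\alpha_1^{-1}\alpha_2\alpha_3\alpha_4\,v_1(u_1)\,v_2(-u_1)\,v_3(-u_1+u_2)\,v_4(-u_1+u_2+u_3),
\]
which pins down $\beta^2$ but does \emph{not} yield $\beta^2=v(u)$ for an arbitrary $\vg$ in Riemann position: that extra equation is a genuine restriction on $\vg$ (equivalent to the witness being a symmetric element of the Heisenberg group), not an automatic consequence of Definition~\ref{sec1:defrieman}. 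The symmetric-element condition you invoke describes the involution $\delta_{-1}$, not the Riemann-position witness, so the analogy does not establish the identity. This is exactly the bookkeeping that the paper's generator-and-group-action argument is designed to avoid: on the two generator subsets the scalars are identically $1$ and the identity is vacuous, and one never needs to verify it for a general $\vg$. To close the gap you would need either to prove the scalar identity by a careful expansion of all eight $\alpha_j$'s together with the Riemann-position constraints, or to switch to the generator argument.
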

\begin{proof}
    For the first claim, taking the sum over all the characters $\chi \in
    \dZ(2)$ of $(\ref{eq:riemaneq})$, we obtain the relation (\ref{eq:riemaneq22}).
    Reciprocally, we can recover the equations (\ref{eq:riemaneq}) from a linear
    combination of equations (\ref{eq:riemaneq22}).

Because $\Thetapol$ is an isomorphism between $G(n)$ and $\Gpol$, it suffices the prove
    the same result where we have replaces $\Gpol$ by $G(n)$ the action of $G(n)$ on
    $H^0(\pol)$ being given by (\ref{eq:trans}).

    Denote by $\qGn$ the group obtained by taking the quotient of $G(n)^4$ (with the product-group
    structure) by the subgroup
    $(\alpha_1, \ldots, \alpha_4) \in {(\overk^*)}^4$ where $\prod \alpha_i =1$ (remember
    that
    $\overk^*$ is a subgroup of $G(n)$) so that $(\overk^*)^4$ is subgroup of $G(n)^4
    $). Denote by $\hGn$ the subset of ${\qGn} \times \qGn$ of
    elements $(g_1, \ldots, g_4; g_5, \ldots, g_8)$ which are in Riemann position. A
    tedious but trivial computation shows that $\hGn$ is in fact a subgroup of
    $\qGn \times \qGn$
    (with the product-group structure). 

There is an action of $\hGn$ on a Riemann relation. If $g=(g_1, \ldots, g_4; g_5, \ldots,
    g_8) \in \hGn$, it is given by:
    \begin{equation}
    g\left( \sum_{\eta \in Z(2)}  \prod_{j=1}^4 (\theta_{i_j +
	\eta}^{\Thetapol})(x_j) -\sum_{\eta \in Z(2)}  \prod_{j=5}^8 (\theta_{i_j +
	\eta}^{\Thetapol})(x_j) \right)=L'(\vi, \vx, \vg).
    \end{equation}
    To prove that $L'(\vi, \vx, \vg)=0$ is in the vector space of Riemann relations, it is enough to
    prove that if $S \subset \hGn$ a generator subset then for all $g \in S$, $L'(\vi, \vx,
    \vg)=L'(\vi', \vx',0)$ for $\vi' \in \Zn^8$ and $\vx' \in A(\overk)^8$ in Riemann
    position. We are going to consider $(1, K(n), 0)^8$ and $(1, 0, \dZn)^8$
    as generator subsets of $\hGn$.

    Suppose that $\vg =((1, x_i, 0)_{i=1, \ldots, 8})\in G(n)^8$ is in
    Riemann position. Using te
    action of $G(n)$ on $H^*(\pol)$ given by (\ref{eq:trans}) we obtain that
    \begin{equation}
	L'(\vi, \vx, \vg)=L'((i_1+x_1, \ldots, i_4+x_4; i_5+x_5, \ldots, i_8+x_5), \vx,0),
    \end{equation}
    where $(i_1+x_1, \ldots, i_4+x_4; i_5+x_5, \ldots, i_8+x_5)$ is in Riemann position
    since $\vi$ and $\vg$ are in Riemann position.

    Next suppose that $\vg=(g_i)\in G(n)^8$ with $g_i=(1, 0, y_i)$ for $i=1, \ldots, 8$ and $y_i \in \dZn$.
    Applying, (\ref{eq:trans}) and using the fact that for $\eta \in Z(2)$, $\prod_{j=1}^4 y_j(\eta)=1$ because
    $g_i$ are in Riemann position so that $\prod_{j=1}^4 y_j \in 2 \dZ(n)$, we get:
    \begin{equation}\label{sec5:eqlemma1}
\sum_{\eta \in Z(2)}  \prod_{j=1}^4 g_i(\theta_{i_j + \eta}^{\Thetapol})(x_j)
	=(\prod_{j=1}^4 y_j(-i_j)) \sum_{\eta \in Z(2)}  \prod_{j=1}^4 \theta_{i_j +
	\eta}^{\Thetapol}(x_j),
    \end{equation}
    Let \begin{equation} M=\begin{pmatrix} 1 & 1 & 1 & 1 \\ 1 & 1 & -1 & -1 \\ 1 & -1 & 1 & -1 \\ 1 & -1 &
    -1 & 1\end{pmatrix} \end{equation}
    By definition of Riemann position, we have:
    \begin{equation}
	2(y_j)_{j=5}^8=(y_j)_{j=1}^4 M,  2(i_j)_{j=5}^8= (i_j)_{j=1}^4 M.
    \end{equation}
    so that
     \begin{equation}\label{sec5:eqlemma2}
\sum_{\eta \in Z(2)}  \prod_{j=5}^8 g_i(\theta_{i_j + \eta}^{\Thetapol})(x_j)
	=F \sum_{\eta \in Z(2)}  \prod_{j=5}^8 \theta_{i_j + \eta}^{\Thetapol}(x_j),
    \end{equation}
    with:
    \begin{equation}
	F=\frac{1}{4}(y_j)_{j=1}^4 M ^tM ^t(-i_j)_{j=1}^4=(\prod_{j_1}^4 y_j(-i_j)).
    \end{equation}
Comparing (\ref{sec5:eqlemma1}) and (\ref{sec5:eqlemma2}), we get that
\begin{equation}
L'(\vi, \vx,\vg)=L'(\vi, \vx,0),
\end{equation}
and we are done.
\end{proof}
\begin{corollary}\label{cor:diffaddaction}
    Let $(A, \pol, \Thetapol)$ be a marked abelian variety of type $K(n)$. Let $x,y \in
    A(k)$, $\tildex$, $\tildey$, $\tildexmy$ be affine lifts of $x$, $y$, $x-y$. For $g_x,
    g_y \in \Gpol$, we have:
    \begin{equation}\label{eq:diffaddaction1}
	(g_x + g_y) \tildexpy=\diffadd(g_x \tildex, g_y \tildey, \tildexmy,
	\tildenullpol).
    \end{equation}
    As a consequence, we have for all $\ell \in \N^*$, $x,y \in
    A(k)$, $\tildex$, $\tildey$, $\tildexmy$ affine lifts of $x$, $y$, $x-y$ and $g_x,
    g_y \in \Gpol$:
    \begin{align}
	\scalmult(\ell, (g_x + g_y) \tildexpy, g_x \tildex, g_y \tildey,
	\tildenullpol) &=(\ell g_x +g_y) \scalmult(\ell, \tildexpy, \tildex, \tildey,
	\tildenullpol) \label{eq:diffaddaction2},\\
	\scalmult(\ell, g_x \tildex, g_x \tildex, \tildenullpol,
	\tildenullpol)&=(\ell g_x) \scalmult(\ell, \tildex, \tildex, \tildenullpol,
	\tildenullpol). \label{eq:diffaddaction3}
    \end{align}
\end{corollary}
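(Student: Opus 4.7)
The plan is to derive equation (\ref{eq:diffaddaction1}) as an immediate consequence of Proposition \ref{prop:riemannthetagroupaction}, and then obtain the scalar-multiplication identities (\ref{eq:diffaddaction2}) and (\ref{eq:diffaddaction3}) by induction on $\ell$ using the Montgomery-ladder structure of $\scalmult$.

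For (\ref{eq:diffaddaction1}) I would proceed as follows. The output $\tildexpy = \diffadd(\tildex, \tildey, \tildexmy, \tildenullpol)$ is characterised coordinatewise by the Riemann relations $L'(\vi, \vx, 0) = 0$ with $\vx = (x+y, x-y, 0, 0; x, x, y, y)$ in Riemann position (take $z = -y$) and $\vi \in Z(n)^8$ in Riemann position. I would then select $\vg \in G(\pol)^8$ in Riemann position whose entries in positions $5,6,7,8$ are $g_x, g_x, g_y, g_y$ and whose first entry is $g_x+g_y$; the remaining slots (corresponding to the inputs $\tildexmy$ and $\tildenullpol$ on the left-hand side of the Riemann equation) are determined by the Riemann-position requirement in $G(\pol)$. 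Applying Proposition \ref{prop:riemannthetagroupaction} to $\vg$ together with $\vx$ yields $L'(\vi, \vx, \vg) = 0$ for every Riemann-position $\vi$; unpacking this identity via the action formula (\ref{eq:trans}) identifies the resulting system of relations with the ones defining the right-hand side of (\ref{eq:diffaddaction1}), and the desired equality of affine lifts follows.

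For (\ref{eq:diffaddaction2}) I would argue by induction on $\ell$. The scalar multiplication $\scalmult(\ell, \tildexpy, \tildex, \tildey, \tildenullpol)$ is built by chaining $\diffadd$ operations in a Montgomery-style ladder, so the inductive step reduces to a single $\diffadd$, to which (\ref{eq:diffaddaction1}) applies. Each ladder step adds one further copy of $g_x$ to the accumulated shift while keeping the $g_y$-shift fixed; after $\ell$ iterations the total shift is $\ell g_x + g_y$, which is exactly the content of (\ref{eq:diffaddaction2}). The identity (\ref{eq:diffaddaction3}) is then the specialisation in which $\tildey = \tildenullpol$ and the first input coincides with the third, so that the $g_y$-shift drops out and one recovers $\ell g_x$.

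The main technical obstacle is the Riemann-position bookkeeping inside $G(\pol)$: because $G(\pol)$ is a central extension of $K(\pol)$ by $k^*$, exhibiting $\vg$ in Riemann position requires not only that its image in $K(\pol)$ satisfies the corresponding condition (a direct check from the form of $\vg$) but also a compatible choice of the scalar factors in the centre $k^*$, i.e.\ a small cocycle calculation. This is closely parallel to the sign-choice issue addressed in the proof of Proposition \ref{lem:semi} and is resolved using the $2$-divisibility of $k^*$ in odd characteristic. Once this check is in place the remainder of the argument is a routine iteration, and no additional ingredients are required.
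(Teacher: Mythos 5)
Your proposal follows the same route as the paper: apply Proposition~\ref{prop:riemannthetagroupaction} to $\vx=(x+y,x-y,0,0;x,x,y,y)$ with $\vg=(g_x+g_y,g_x-g_y,0,0;g_x,g_x,g_y,g_y)$ in Riemann position, then chain~\eqref{eq:diffaddaction1} through the Montgomery ladder for~\eqref{eq:diffaddaction2}, and specialise $y=0,\,g_y=0$ for~\eqref{eq:diffaddaction3}. The ``cocycle calculation'' you flag in your last paragraph is not an extra step: Proposition~\ref{prop:riemannthetagroupaction} already absorbs the choice of central scalars by working in the quotient $\overline{G}(n)^4$, so once $\vg$ has the indicated form there is nothing further to verify and the appeal to $2$-divisibility of $k^*$ is unnecessary.
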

\begin{proof}
We remark that $\diffadd$ is a consequence of Riemann equation applied to 
$\vx=(x+y, x-y, 0, 0; x,x,y,y) \in A(\overk)^8$ which is in Riemann position and any
$\vi=(i_1, \ldots, i_4; i_5, \ldots, i_8)\in \Zn^9$ in Riemann position. Thus by applying
Proposition \ref{prop:riemannthetagroupaction} to $\vx$, $\vi$ and $\vg=(g_x+g_y, g_x-g_y,
0, 0; g_x, g_x, g_y, g_y)$, we get Equation (\ref{eq:diffaddaction1}). Next, we obtain
Equation (\ref{eq:diffaddaction2}) by chaining Equation (\ref{eq:diffaddaction1}) in a
Mongomery ladder algorithm and Equation(\ref{eq:diffaddaction3}) is a particular case of
Equation (\ref{eq:diffaddaction2}).
\end{proof}

We count the complexity of algorithms with the number of operations in $k$ where  $k$ is
the compositum of the fields of definition of $(B, \bpol, \Thetabpol, \theta_0^\Thetabpol,
\rho^\bpol_{0_{\Thetabpol}})$ and $G \subset B[n]$. An element of $B(\overk)$ can be
represented with $\sharp \Zn=n^g$ coordinates, each of which being an element of $k$, and it
is clear that $\diffadd$ takes $O(n^g)$ base field operations, and $\scalmult(\ell,\dots)$, $O(n^g\log(\ell))$ base field operations.

Denote by $\Zz(d)$ a set of representatives of classes of $Z(n)/\mu_{m,n}(Z(m))\simeq Z(d)$. 
Denote by $\pi_{n,d}: \Zn \rightarrow Z(d)\simeq \Zn/\mu_{m,n}(Z(m))$ the canonical projection.
\begin{definition}
    We say that
    $(e_1, \ldots, e_g) \in \Zz(d)^g$ is a basis of $\Zz(d)$ if $(\pi_{n,d}(e_1), \ldots,
    \pi_{n,d}(e_g))$ is a basis of $Z(d)$. Then $(e_i, e_i + e_j)_{i,j=1,\ldots, g}$ is
    called a chain basis of $\Zz(d)$.
\end{definition}
Using Riemann equations, we can recover an excellent lift of $G$ from the knowledge of
lifts $\tildeg_f(\kappa)$ for $\kappa \in \Bb$ for $\Bb$ a chain basis of $\Zz(d)$:
\begin{proposition}\label{sec2:proprecons}
    Keeping hypothesis and notations of Proposition \ref{sec2:lemmalambdai}. Let 
    $\Bb_0=(e_i)_{i=1, \ldots, g}$ and
    $\Bb = (e_i, e_i + e_j)_{i,j=1,
    \ldots, g}$ be respectively a basis and a chain basis of
    $\Zz(d)$, suppose that we have chosen affine lifts such that for all $\kappa \in \Bb$
    \begin{equation}\label{sec2:lemmaeq29}
	\tildeg_f(\kappa)= (\theta^\Thetapol_{\mu_{m,n}(j)+\kappa}(0_\Thetapol))_{ j\in Z(m)}.
    \end{equation}
    
    Then, there exists a unique set of affine
    lifts $\tildeG= \{ \tildeg'_f(e), e \in \Zn \}$ such that 
    for all $e \in \Bb$, $\tildeg'_f(e)=\tildeg_f(e)$ and verifying all possible relations
    provided by $\scalmult$, $\threeway$, $\inv$ and $\thetaact$ operations.
    In particular, there exists a unique
    rigidified abelian variety $(A, \pol, \Thetapol,
    \theta_0^\Thetapol, \rho^\pol_{0_\Thetapol})$ verifying (\ref{sec2:lemmaeq29}).

    Suppose moreover that we have chosen affine lifts such that for all $\kappa \in \Bb_0
    \cup \{ 0 \}$:
    \begin{equation}\label{sec2:lemmaeq30}
	\tildeg^z_f(i)= (\theta^\Thetapol_{\mu_{m,n}(j)+i}(z))_{ j\in Z(m)}.
    \end{equation}
    Then, there exists a unique set of affine
    lifts $\tildeG(z)= \{ {\tildeg}^{\prime z}_f(e), e \in \Zn \}$ such that 
    for all $e \in \Bb_0 \cup \{ 0 \}$, ${\tildeg}^{\prime z}_f(e)=\tildeg_f(e)$ and verifying all possible relations
    provided by $\scalmult$, $\threeway$ and $\thetaact$ operations. In
    particular, there exists a unique $z \in A(\overk)$ verifying (\ref{sec2:lemmaeq30}).
\end{proposition}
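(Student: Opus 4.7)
The plan is to split the proof of the first claim into existence and uniqueness, and then run the same strategy (minus the $\inv$-step) for the pointed version.

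For existence, I would invoke Proposition \ref{sec2:lemmalambdai} to produce a rigidified abelian variety $(A, \pol, \Thetapol, \theta_0^\Thetapol, \rho^\pol_{0_\Thetapol})$ for which setting $\tildeg'_f(e) := (\theta^\Thetapol_{\mu_{m,n}(j)+e}(0_\Thetapol))_{j\in Z(m)}$ agrees with the given $\tildeg_f$ on $\Bb$; the scalar constants appearing in (\ref{sec2:lemmaeq21}) can be absorbed into the choice of $\theta_0^\Thetapol$ and $\rho^\pol_{0_\Thetapol}$ thanks to Remark \ref{sec:rmequiv}. Proposition \ref{sec2:lemm6} then guarantees that the family $\{\tildeg'_f(e)\}_{e\in Z(n)}$ satisfies all Riemann equations, symmetry relations, and Heisenberg-action relations, and hence every relation produced by the derived operators $\diffadd$, $\scalmult$, $\threeway$, $\inv$, $\thetaact$ (see \cite{DRoptimal,DRarithmetic}).

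For uniqueness, I would show that the data on the chain basis $\Bb$ propagates deterministically to all of $Z(n)$. Starting from $\tildeg'_f(e_i)$ and $\tildeg'_f(e_i+e_j)$, iterating $\diffadd$ via a Montgomery ladder yields $\tildeg'_f(e_i+\lambda e_j)$ for every $\lambda$. Applying $\threeway$ to the triple of pairwise sums $\tildeg'_f(e_i+e_j), \tildeg'_f(e_j+e_k), \tildeg'_f(e_i+e_k)$ together with $\tildeg'_f(e_i), \tildeg'_f(e_j), \tildeg'_f(e_k)$ produces $\tildeg'_f(e_i+e_j+e_k)$, and an induction on the weight of a sum reaches every nonnegative integer combination of the $e_i$. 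The operator $\inv$ covers the remaining signed combinations, so every element of $\Zz(d)$ is determined; a final application of $\thetaact$ with $\Thetabpol((1,\kappa,0))$ for $\kappa\in Z(m)$ extends the reconstruction from $\Zz(d)$ to all of $Z(n)$. Since each operation is deterministic, the reconstruction is unique, and the coordinates $(\tildeg'_f(e))_j$ determine $\tildenullpol$ and therefore, up to equivalence (Remark \ref{sec:rmequiv}), the rigidified abelian variety itself.

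The second claim on $\tildeG(z)$ is proved by the same strategy, except that $G(z) = f(z) + G$ is generally not stable under $[-1]$, so $\inv$ is unavailable. To compensate, one takes the initial data on the smaller set $\Bb_0 \cup \{0\}$ but reuses the already-constructed excellent lift $\tildeG$: from $\tildeg^z_f(e_i)$, $\tildez = \tildeg^z_f(0)$ and $\tildeg_f(e_i)\in\tildeG$, the call $\diffadd(\tildeg^z_f(e_i), \tildeg_f(e_i), \tildez, \tildenullbpol)$ produces $\tildeg^z_f(2e_i)$ and iterates to $\tildeg^z_f(\lambda e_i)$; applying $\threeway$ with $\tildeg_f(e_i+e_j)$ drawn from $\tildeG$ yields $\tildeg^z_f(e_i+e_j)$, and the induction proceeds as before, with $\thetaact$ giving the extension from $\Zz(d)$ to $Z(n)$. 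The principal obstacle in both claims is coherence: different chains of operations arriving at the same $e\in Z(n)$ must yield identical affine lifts. This is ensured by the existence part, since the canonical values $(\theta^\Thetapol_{\mu_{m,n}(j)+e}(0_\Thetapol))_{j\in Z(m)}$ satisfy every Riemann, symmetry and Heisenberg relation simultaneously, so any two reconstructions must collapse to this canonical one.
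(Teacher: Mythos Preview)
Your proposal is correct and follows essentially the same approach as the paper: reduce from $Z(n)$ to $\Zz(d)$ via $\thetaact$, then reconstruct all of $\Zz(d)$ from the chain basis using $\threeway$ and repeated $\diffadd$, and for the pointed version replace $\inv$ by the use of the already-built $\tildeG$ inside $\threeway$. The paper orders the steps slightly differently (first $\threeway$ to get all subset sums $\sum_{i\in I}e_i$, then $\diffadd$ citing \cite[Theorem~4.4]{DRisogenies} for the rest) and defers the coherence discussion to a remark after the proof, but the substance is the same; your explicit treatment of existence/uniqueness and of why different reconstruction chains agree is a welcome clarification.
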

\begin{proof}
    From the knowledge of $\tildeg_f(\kappa)$ for $\kappa \in \Zz(d)$, because of
    Proposition \ref{sec2:lemm6} (3), one can recover
    $\tildeg_f(\kappa)$ for all $\kappa \in \Zn$ using the operator $\thetaact$. So, by
    Proposition \ref{sec2:lemm6} (1) and (2), it is
    enough to show that from the knowledge of $\tildeg_f(\kappa)$ for $\kappa \in \Bb$, we can recover $\tildeg_f(\kappa)$ for
    all $\kappa$ in $\Zz(d)$ using $\scalmult$, $\threeway$, $\inv$ and $\thetaact$.

    But, by a repeated use of $\threeway$, we can compute $\tildeg_f(\kappa)$ for $\kappa=
    \sum_{i \in I} e_i$ for all $I \subset \{1, \ldots, g\}$. Then we can recover all
    $\tildeg_f(\kappa)$ for $\kappa \in \Zz(d)$ with a repeated use of $\diffadd$ 
    as explained in \cite[Theorem 4.4]{DRisogenies}.

    For the second claim of the Proposition, we use the same argument as before to reduce the
    claim to show that from the knowledge of $\tildeg^z_f(\kappa)$ for $\kappa \in \Bb_0
    \cup \{ 0 \}$,
    we can recover $\tildeg^z_f(\kappa)$ for
    all $\kappa$ in $\Zz(d)$ using $\scalmult$, $\threeway$ and $\thetaact$
    operations.
    For
    this, it is enough to explain that for $i,j \in \Zn$, from the knowledge of
    $\tildeg^z(i)$, $\tildeg^z(j)$, $\tildeg^z(0)$, $\tildeg(i)$, $\tildeg(j)$, $\tildeg(i+j)$
    we can recover the unique affine lift $\tildeg^z(i + j)$ compatible with the
    former lifts and Riemann equations. But this is exactly what $\threeway(\tildeg^z(i),
    \tildeg(i+j), \tildeg^z(j), \tildeg^z(0), \tildeg(i), \tildeg(j))$ does.
\end{proof}

\begin{remark}
    In the proof of the Proposition, we use a sequence of operations by $\scalmult$,
    $\threeway$, $\inv$ and $\thetaact$ to recover $\tildeG$ without specifying it
    precisely. The fact that all these operations are based on relations that are verified by
    $\tildenullpol$ as explained in Proposition \ref{sec2:lemm6} guarantees that
    whatever sequence of operation that we choose in order to do the computation, we will
    obtain the same result.

    The time complexity of the algorithm obtained from the proof of Proposition
    \ref{sec2:proprecons}, to recover $\tildeg_f(\kappa)$ for $\kappa \in \Zz(d)$ from the
    knowledge of $\tildeg_f(\kappa)$ for $\kappa \in \Bb$ is $O(m^g \ln(d))$ operations in
    $k$.
\end{remark}

Proposition \ref{sec2:proprecons} tells that in order to compute the theta null point of 
$(A, \pol, \Thetapol, \theta_0^\Thetapol, \rho^\pol_{0_\Thetapol})$ which is
isog-$f$-compatible to  $(B, \bpol, \Thetabpol, \theta_0^\Thetabpol, \rho^\bpol_{0_{\Thetabpol}})$, it
is enough to compute lifts of $\tildeg_f(\kappa)$ for $\kappa \in \Bb$ a chain basis of
$\Zz(d)$. In view of Proposition \ref{sec2:proptors} and Proposition \ref{prop:cond}, we should always be able
to compute such lifts in the case that $d$ is odd. But in the case that $d$ is even, the
symmetric compatible condition becomes non trivial, and it is not always possible to find
an excellent lift of $G$. Proposition \ref{prop:cond} suggests two ways deal with this problem and
fulfill the symmetric compatible condition:
\begin{itemize}
    \item change the theta structure $\Thetabpol$ by acting on it by $K(2)$;
    \item change the basis of $G$ by adding points of $B[2]$.
\end{itemize}
We are going to examnine this obstruction and ways to remedying it by direct
computation on affine lifts. In the affine lifts approach, the difference between the even
and odd cases is that $G/\Thetabpolbar(Z(m) \times\{0\}) \times Z(d)$ has a trivial
$2$-torsion in the odd case and a non trivial one in the even case. Thus, in the odd case,
the inverse operation acts freely on  $G/\Thetabpolbar(Z(m) \times\{0\})-\{0 \}$ but has non
trivial fixed points in the even case.

To make precise this argument, we consider
$\Sinv = \{ t \in \Zz(d), -t = t \mod Z(m)\}$. Recall that $\pi_{n,d}: \Zn \rightarrow
Z(d)\simeq \Zn/\mu_{m,n}(Z(m))$ is the canonical projection, then $\Sinv= \pi_{n,d}^{-1}(Z(d)[2])$, so it is
trivial unless
$d$ is even. For $e \in \Sinv$, we look at the relation of projective points:
\begin{equation}\label{eq:invol1}
    g_f(e) +  g_f(e)= (1, \nu_{n,m}(2e), 0).\tildenullbpol.
\end{equation}
From the previous relation, we deduce for affine points:
\begin{equation}\label{eq:aobsrtruc1}
    \tildee=  \lambda*(1, \nu_{n,m}(2 e), 0). \inv (\tildee),
\end{equation}
for $\tildee$ an affine point above $g_f(e)$ and $\lambda \in \overk$. Recall that $\pi_{\proj^{Z(n)}}: \Aff^{Z(n)}-\{0\} \rightarrow
\proj^{Z(n)}$ is the canonical projection, and for $e \in \Sinv$, we let $T(e) \in \pi_{\proj^{Z(n)}}^{-1}(g_f(e))
\subset \Aff^{Z(n)}$. We consider the map:
\begin{align*}
    \invd(g_f(e)): T(e)(\overk) & \rightarrow T(e)(\overk) \\
	\tildee & \mapsto (1, 2 e,
	    0). \inv (\tildee).
\end{align*}
Because of (\ref{eq:invol1}), this map is well defined.
\begin{lemma}
    For all $e \in \Sinv$, 
    the map $\invd(g_f(e))$ is involutive, i.e. $\invd(g_f(e)) \circ \invd(g_f(e))=1$ so that there exists
    $\kappa(g_f(e)) \in \{-1, 1\}$ such that for all $\tildee \in T(e)(\overk)$,
    $\invd(e)(\tildee)=\kappa(g_f(e)) * \tildee$. Moreover, for all $t \in Z(m)$, we have
    $\kappa(g_f(e+t))=\kappa(g_f(e))$.
\end{lemma}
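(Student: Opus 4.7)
The approach is to reduce the entire statement to the commutation identity
\[
\inv\bigl((1,v,0).\widetilde{x}\bigr) = (1,-v,0).\inv(\widetilde{x}) \quad \text{for all } v \in Z(m) \text{ and affine lifts } \widetilde{x},
\]
which follows directly from the coordinate-level descriptions: $\inv$ negates the index via $(x_j)_j \mapsto (x_{-j})_j$, while equation~(\ref{eq:trans}) shows $(1,v,0)$ shifts indices via $(x_j)_j \mapsto (x_{j+v})_j$; the two index-level operations compose in the desired way.

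Granting this identity, the involutivity of $\invd(g_f(e))$ is a one-line computation. Setting $v := \nu_{n,m}(2e)$, we get
\[
\invd(g_f(e))^{2}(\widetilde{e}) = (1,v,0).\inv\bigl((1,v,0).\inv(\widetilde{e})\bigr) = (1,v,0)(1,-v,0).\widetilde{e} = \widetilde{e},
\]
using involutivity of $\inv$ and the Heisenberg multiplication law. Because $\inv$ and the Heisenberg action both commute with scalar multiplication by Lemma~\ref{lem:scalar}, $\invd(g_f(e))$ is $\overk^{*}$-equivariant on the $\overk^{*}$-torsor $T(e)(\overk)$. Hence the scalar $\kappa(g_f(e))$ defined by $\invd(g_f(e))(\widetilde{e}) = \kappa(g_f(e)) * \widetilde{e}$ is independent of the chosen lift, and squares to $1$.

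For the invariance under $Z(m)$-translation, fix $t \in Z(m)$ and any lift $\widetilde{e}$ of $g_f(e)$; then $\widetilde{e}_{t} := (1,t,0).\widetilde{e}$ is a lift of $g_f(e + \mu_{m,n}(t)) = g_f(e) + \Thetabpolbar((t,0))$. The Heisenberg element translating by $2g_f(e+t) = 2g_f(e) + \Thetabpolbar((2t,0))$ is $(1,v+2t,0)$, so by the commutation identity and the abelianness of the subgroup $\{(1,x,0)\}$,
\[
\invd(g_f(e+t))(\widetilde{e}_{t}) = (1,v+2t,0).(1,-t,0).\inv(\widetilde{e}) = (1,t,0).\invd(g_f(e))(\widetilde{e}) = \kappa(g_f(e)) * \widetilde{e}_{t},
\]
giving $\kappa(g_f(e+t)) = \kappa(g_f(e))$. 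The only non-trivial point is securing the commutation identity between $\inv$ and the vertical Heisenberg action; everything else is a Heisenberg bookkeeping exercise.
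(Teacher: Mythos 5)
Your proposal is correct and takes essentially the same route as the paper. The paper dispatches involutivity as "a simple verification" and proves translation invariance by the same Heisenberg manipulation you use; your only addition is to surface the commutation identity $\inv\bigl((1,v,0).\tildex\bigr)=(1,-v,0).\inv(\tildex)$ as the shared engine, which the paper uses implicitly. Both arguments agree on the $\overk^*$-equivariance step (Lemma~\ref{lem:scalar}) and the final bookkeeping.
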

\begin{proof}
    The fact that $\invd(g_f(e))$ is involutive is a simple verification.
    By Lemma \ref{lem:scalar}, for all $\lambda \in \overk$, $\invd(g_f(e))(\lambda * \tildee)=\lambda
    *(\invd(g_f(e))(\tildee))$, from which we deduce that there exists $\kappa(\tildee) \in \{-1, 1\}$
    such that $\invd(g_f(e))(\tildee)=\kappa(g_f(e)) * \tildee$.
    For the last claim, let $t \in Z(m)$. If $\tildeg_f(e)$ is an affine point above
    $g_f(e)$, then $(1, t, 0).\tildeg_f(e)$ is an affine point above $g_f(e+t)$.
    So starting from $(1, 2e, 0). \inv (\tildee)=\kappa(g_f(e)) * \tildee$, we get
    $(1, -t, 0) (1, 2(e+t), 0). \inv(\tildee)=\kappa(g_f(e))*(1, t, 0)\tildee$, so that
    $(1, 2(e+t), 0).\inv ((1,t,0) \tildee)=\kappa(g_f(e))*(1, t, 0)\tildee$ and we have
    proved that $\kappa(g_f(e))=\kappa(g_f(e+t))$.
\end{proof}
We remark that the definition of $\kappa(g_f(e))$ only depends on $\tildenullbpol$ and on
the class of 
$g_f(e)$ modulo $\Thetabpolbar((i,0))_{i \in Z(m)}$.
This allows us to state the following Definition:
\begin{definition}\label{def:symcompat2}
    Let $e \in \Sinv$, we say that $g_f(e)$ is symmetric compatible with
    $\tildenullbpol$ if and only if
    $\kappa(g_f(e))=1$. By the previous Lemma $\kappa(g_f(e))$ only depends on the class
    of $g_f(e)$ modulo $\Thetabpolbar((i,0))_{i \in Z(m)}$.
    We say that $G=\{ g_f(e), e \in \Zn \}$ is symmetric compatible with
    $\tildenullbpol$ if for all $e \in \Sinv$, $g_f(e)$ is symmetric compatible
    with $\tildenullbpol$.
\end{definition}
\begin{example}
    Let $(B, \bpol, \Thetabpol)$ be a dimension $1$ abelian variety of type $K(4)$.
    Because of the symmetry relations, the general form of its theta null point is
    $0_{\Thetabpol}= (b_0:b_1:b_2:b_1) \in  \proj^{Z(4)}(k)$. By acting on $0_{\Thetabpol}$ by $G(\bpol)$ we
    obtain the $4$-torsion point $P_0=(b_1:b_2:b_1:b_0)$. Let $P=(a_0:a_1:a_2:a_3) \in B[8]$ be such that
    $2P=P_0$. Lift it to an affine point $\tildeP=(a_0, a_1, a_2, a_3)$ and we suppose
    that we have chosen $a_0$ such that $\scalmult(2, \tildeP, \tildeP,
    \tildenullbpol,
    \tildenullbpol)=(b_1, b_2, b_1, b_0)$.
    Then we have $\widetilde{-P}=(a_0,a_3,a_2,a_1)$ and $(1, 1,
    0).(\widetilde{-P})=(a_3,a_2,a_1,a_0)$.
    The theta structure $\Thetabpol$ is symmetric compatible with $P$ if and only if
    $a_0=a_3$ and in this case we also have $a_2=a_1$. And we can form the level $8$-theta
    null point $(b_0:a_0:b_1:a_1:b_2:a_1:b_1:a_0)$ which verifies the symmetry relations.
\end{example}

\begin{algorithm}
\SetKwInOut{Input}{input}\SetKwInOut{Output}{output}

\SetKwComment{Comment}{/* }{ */}
\Input{
    \begin{itemize}
	\item $m,n,d>1$ integers such that $n=md$;
	\item A marked abelian variety $(B, \bpol, \Thetabpol)$ given by its theta null
	    point $0_{\Thetabpol}$;
	\item $x \in B[n]$ such that $d x \in \Thetabpolbar(Z(m) \times \{0 \})$
    (respectively $\Thetabpolbar(\{0 \} \times Z(m))$).
    \end{itemize}
}
\Output{
    \begin{itemize}
	\item A boolean which is True if $x$ is symmetric compatible with $\Thetabpol(\{1\}
    \times Z(m)\times\{0\})$ (resp. $\Thetabpol(\{1\} \times \{0 \} \times Z(m))$).
    \end{itemize}
}
\BlankLine
    \eIf{$2\not|\ d$}
    {\Return True\;}
    {
	Let $d' = d/2$, fix $\tildex$ an affine lift of $x$, fix $\tildenullbpol$ an
	affine lift of $0_{\Thetabpol}$\;
    Let $e \in Z(m)$ be such that $dx = \Thetabpolbar((e, 0))$ (resp. $d' x = \Thetabpolbar((0, e))$)\;
    Let $\tildee = \scalmult(d', \tildex, \tildex, \tildenullbpol,
    \tildenullbpol)$\;
    Compute $\lambda$ such that $\tildee= \lambda*(1, e, 0). \inv (\tildee)$
    (resp. $\lambda*(1, 0, e). \inv (\tildee)$)\;
    \Return the boolean $\lambda==1$.
    }
\caption{Algorithm to check if a point is symmetric compatible with $0_{\Thetabpol}$.}
\label{algo:testsymcompatb}
\end{algorithm}

In order to check whether $G$ is symmetric compatible with $\tildenullbpol$, it is
enough to check that $g_f(e)$ is symmetric compatible with $\tildenullbpol$ for
all $e$ in a set of generators of $G$. To prove this, we have first to show that the two
notions of symmetric compatibility that we have given in Definition \ref{def:symcompat2} and Definition
\ref{def:symcompat} in fact agree.
\begin{proposition}\label{prop:propclef}
    Write $d=2d'$ for $d'$ an integer.
    Let $e \in G(\overk)$ be a point such that $d e \in \Thetabpolbar(Z(m)\times
    \{0\})(\overk)$. Then
    $e$ is symmetric compatible with $\Thetabpol(\{1\}\times Z(m) \times\{0\})$
    following Definition \ref{def:symcompat} if and only if
    $d' e$ is symmetric compatible with $\tildenullbpol$ following Definition
    \ref{def:symcompat2}.
\end{proposition}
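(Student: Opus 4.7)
The plan is to use an isogeny to reduce Definition \ref{def:symcompat} to a concrete statement about symmetric lifts in the theta group of a pulled-back line bundle, and then to identify the resulting sign with $\kappa(g_f(d'e))$ from Definition \ref{def:symcompat2} via the affine-lift arithmetic.

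First, I apply Remark \ref{rm:intro} with $f_0 = [d] : B \to B$ and choose $y \in [d]^{-1}(e)$, so that $y \in K(\pol)$ for $\pol = [d]^*\bpol$. Let $\tildeK_{[d]} \subset G(\pol)$ denote the descent data of $\pol$ to $\bpol$ (symmetric by Remark \ref{rm:sym}), and let $\tildeH' \subset G(\pol)$ be the unique symmetric level subgroup over $H' = [d]^{-1}(\Thetabpolbar(Z(m)\times\{0\}))$ containing $\tildeK_{[d]}$ and satisfying $[d]^{\sharp}(\tildeH') = \tildeH$ (Proposition \ref{prop:compat1}). By Definition \ref{def:symcompat}, $e$ is symmetric compatible with $\tildeH$ iff $y$ is symmetric compatible with $\tildeH'$. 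Since $y \in K(\pol)$, the latter condition unfolds to: there exists a symmetric lift $g_y \in G(\pol)$ of $y$ with $\ell g_y \in \tildeH'$, where $\ell = \expo(y, \tildeH') = \expo(e, \tildeH)$ (since $\ell y \in H'$ iff $\ell e \in \Thetabpolbar(Z(m)\times\{0\})$), i.e., $\kappa_0(y, \tildeH') = +1$. If $\ell$ is odd this holds automatically by Proposition \ref{sec2:proptors} (1), so only the even case requires attention.

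For the comparison, I would fix an affine lift $\widetilde{y}$ of $y$ on the cover corresponding to the rigidification induced by $\tildex$, compute $\widetilde{d'y}$ via $\scalmult$, and push forward by the affine map $\tildef$ associated to the pullback $\pol = [d]^*\bpol$. By Lemma \ref{lem:tildefscalmult}, $\tildef(\widetilde{d'y}) = \scalmult(d', \tildex, \tildex, \tildenullbpol, \tildenullbpol) =: \tilde{u}$, which is precisely the affine lift of $d'e$ used in Definition \ref{def:symcompat2}. By Corollary \ref{cor:diffaddaction} and the inverse formula Lemma \ref{sec1:leminv}, applying $\inv$ to $\tilde{u}$ corresponds to $\delta_{-1}$ acting on $d'g_y \in G(\pol)$. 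The defining equation $\tilde{u} = \kappa(g_f(d'e)) \cdot (1,c,0).\inv(\tilde{u})$ from Definition \ref{def:symcompat2} then reads, after unwinding via Lemma \ref{sec1:lemact}, as the assertion that $2d' g_y = d g_y$ agrees with the distinguished symmetric element of $\tildeH'$ above $de$ up to a sign equal to $\kappa_0(y, \tildeH')$. Hence $\kappa(g_f(d'e)) = \kappa_0(y, \tildeH')$, and the two definitions agree. The main obstacle is the last identification: tracking the symmetric-lift sign through $[d]^{\sharp}$ and its compatibility with $\delta_{-1}$, which uses the canonicity of the descent data (Remark \ref{rm:sym}) together with the isomorphism of Proposition \ref{prop:compat} to ensure that $[d]^{\sharp}$ commutes with inversion and with the map $\delta_{-1}$ on the centralizer of $\tildeK_{[d]}$.
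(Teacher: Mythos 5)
Your reduction to $y \in K([d]^*\bpol)$ via Remark \ref{rm:intro}, the observation that only the even-order case is at stake (Proposition \ref{sec2:proptors}), and the goal of identifying $\kappa_0(y,\tildeH')$ with $\kappa(g_f(d'e))$ are all correct and match the high-level strategy. However, the mechanism you propose for making the comparison has a genuine gap. You invoke the affine map $\tildef$ of Definition \ref{def:affinemap} and Lemma \ref{lem:tildefscalmult} for $f_0=[d]$, but that machinery is only defined for $f$-compatible or dual-$f$-compatible pairs of theta structures — and that requires $\Ker f \simeq Z(d)$ (or $\dZ(d)$), an \emph{isotropic} kernel of the right shape. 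Here $\Ker[d]=B[d]$ is symplectic of rank $2g$, not isotropic of rank $g$, and $[d]^*\bpol = \bpol^{d^2}$ is of level $md^2$ rather than $md$. No theta structure on $(B,[d]^*\bpol)$ is even in play, so the ``cover'' on which you want to take affine lifts of $y$ and the map $\tildef$ that is supposed to push those lifts down are both undefined. The subsequent claim that ``applying $\inv$ to $\tilde{u}$ corresponds to $\delta_{-1}$ acting on $d'g_y$'' via Corollary \ref{cor:diffaddaction} and Lemma \ref{sec1:leminv} is also unsubstantiated: Corollary \ref{cor:diffaddaction} only addresses how $G(\pol)$ commutes through $\diffadd$/$\scalmult$, and Lemma \ref{sec1:leminv} is a projective statement about $[-1]^*\theta_i$; neither gives the affine-level identification you need.

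The paper's own proof avoids $\scalmult$ and $\tildef$ entirely. It stays on the theta-group/section side: from $dg_y = \kappa_0(y,\tildeH')g_h$ with $g_{y'}=d'g_y$ symmetric, it uses $g_{y'}^{-1}=\gi\, g_{y'}\,\gi$ (symmetry) and $\gi\, g_i = g_{-i}\,\gi$ (symmetry of $g_i$, with $g_i$ running in the level subgroup so $g_i^{-1}=g_{-i}$) to produce the identity $g_{y'}f_0^*(\theta_i^{\Thetabpol}) = \kappa_0\, g_h\,\gi\, g_{y'}\,f_0^*(\theta_{-i}^{\Thetabpol})$, then picks a rigidification at $0_{\Thetabpol}$ and evaluates. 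This turns the theta-group identity into the scalar identity $\theta_i^{\Thetabpol}(d'e) = \kappa_0\cdot (g_{h_B}\theta_{-i}^{\Thetabpol})(d'e)$, which is literally the defining relation for $\kappa(g_f(d'e))$ in Definition \ref{def:symcompat2} (the shift $g_{h_B}=\Thetabpol((1,c,0))$ playing the role of $(1,\nu_{n,m}(2d'e),0)$). If you want to keep your route, you would have to replace $\tildef$ by the evaluation-via-rigidification argument (Lemma \ref{sec1:lemact}, Remark \ref{def:descendrigid}), which is precisely what supplies the missing bridge between elements of $G([d]^*\bpol)$ acting on $f_0^*(\theta_i)$ and the affine-coordinate manipulations with $\inv$ and $\scalmult$.
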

\begin{proof}
    Let $e \in G(\overk)$ be a point such that $d e \in \Thetabpolbar(Z(m)\times
    \{0\})(\overk)$. We suppose that $e$ is symmetric compatible with $\Thetabpol(\{1\}\times
    Z(m) \times\{0\})$.
    According to Definition \ref{def:symcompat}, it means that there exists:
    \begin{itemize}
	\item $f_0: B_0
    \rightarrow B$ a separable isogeny with kernel $K_0$, $y \in
	    K(f_0^*(\pol))(\overk)$ such that $f_0(y)=e$;
	\item a symmetric level subgroup $\tildeH$ of $G(f_0^*(\bpol))$ above
	    $H'=f_0^{-1}(\Thetabpolbar(Z(m)\times\{0\}))$
	    such that $\tildeK_0 \subset \tildeH$ (where $\tildeK_0$ is the descent data
	    of $f_0^*(\bpol)$ to $\bpol$) and
	     $f_0^\sharp(\tildeH)=\Thetabpol(\{1\}\times Z(m)\times\{0\})$;
	 \item $g_y \in G(f_0^*(\bpol))$ symmetric such that
	     $\pi_{G(f_0^*(\bpol))}(g_y)=y$ and $d g_y \in \tildeH$.
    \end{itemize}
    Following Definition \ref{def:kappa},
    we can rewrite this last condition as 
    \begin{equation}\label{eq:sec2eq10}
    d g_y = \kappa_0(y, \tildeH) g_h,
    \end{equation}
    for $g_h \in
    \tildeH$. 
    For $i \in Z(m)$, choose $g_i \in G(f_0^*(\bpol))$ such that
    $f_0^\sharp(g_i)=\Thetabpol((1, i,0))$ (such a $g_i$ always exists by Proposition
    \ref{prop:compat}). Note that by Corollary \ref{cor:canonical}, we have for all
    $s \in \Gamma(B,\bpol)$, 
    \begin{equation}
    g_i f_0^*(s)= f^*(\Thetabpol((1,i,0))s).
    \end{equation}

    Now, the relation (\ref{eq:sec2eq10}) is equivalent to 
    \begin{equation}\label{prop19:eq3}
    g_{y'} g_i = \kappa_0(y,
    \tildeH) g_h g_{y'}^{-1} g_i,
\end{equation} 
    where $g_{y'}=d' g_y$. As $g_{y'}$ is symmetric
    and
    $f_0^*(\bpol)$ is totally symmetric, we can replace $g_{y'}^{-1}$ by $[-1]^* g_{y'}
    [-1]^*$ in the right hand side of (\ref{prop19:eq3}).
    Moreover, using that $g_i$ is symmetric and
    $[-1]^*(\theta_0^\Thetabpol)=\theta_0^\Thetabpol$, we obtain that
    $\kappa_0(y,
    \tildeH) g_h g_{y'}^{-1}g_if_0^*(\theta_0^\Thetabpol) = \kappa_0(y, \tildeH)
    g_h [-1]^* g_{y'} g_{-i} f_0^*(\theta_0^\Thetabpol)$, thus: 
    \begin{equation}\label{eq:prop19}
    g_{y'} f_0^*(\theta_i^\Thetabpol)= \kappa_0(y, \tildeH)
    g_h [-1]^* g_{y'} f_0^*(\theta_{-i}^\Thetabpol).
    \end{equation}

Choose a rigidification $\rho_{0_\Thetabpol}^{f_0^*(\bpol)}: f_0^*(\bpol)(0_\Thetabpol)
    \rightarrow \stsheaf_{B_0}(0_\Thetabpol)$. If $s \in \Gamma(B_0,f_0^*(\bpol))$, it allows us to
    evaluate $s$ in $0_\Thetabpol$ by taking $\rho_{0_\Thetabpol}^{f_0^*(\bpol)}(s) \in
    \overk$, that we abbreviate in the following by $s(0_\Thetabpol)$. Then
    $\rho^{f_0^*(\bpol)}_{0_\Thetabpol}\circ g_{y'}$ is a rigidification in $y'$ of $f_0^*(\pol)$. If $s \in
    \Gamma(B,\bpol)$, we denote by $s(y')$ the evaluation $\rho^{f_0^*(\bpol)}_{0_\Thetabpol}\circ g_{y'}(s)$.
    On the other side, we can choose $\rho_{y}^\bpol: \bpol(y) \rightarrow
    \stsheaf_B(y)$, and if $s \in \Gamma(B,\bpol)$, we denote by $s(y)$ the evaluation $\rho_y^{\bpol}(s)$.
    There is a constant
    $\lambda_\rho \in \overk$ such that for all $s \in H^*(\bpol)$, $s(y) = \lambda_\rho
    f_0^*(s)(y')$.

    With these notations, by evaluating in $0_\Thetabpol$ the left hand side of Equation (\ref{eq:prop19}), we
    obtain $(g_{y'}
    f_0^*(\theta_i^\Thetabpol))(0_\Thetabpol)=f_0^*(\theta_i^\Thetapol)(y')=\lambda_\rho
    \theta_i^\Thetabpol(y)$. 
    And for the right hand side, let $g_{h_B}=f_0^{\sharp}(h)$,
    we get $(g_h [-1]^* g_{y'}
    f_0^*(\theta_{-i}^\Thetabpol))(0_\Thetabpol)=g_h f_0^*(\theta_{-i}^\Thetabpol)(y')=
    f_0^*(g_{h_B} \theta_{-i}^\Thetabpol)(y')=\lambda_\rho g_{h_B}
    \theta_{-i}^\Thetabpol(y)$.

    Finally, we have:
    \begin{equation}
	\theta_i^\Thetabpol(y) =\kappa_0(y, \tildeH)g_{h_B}
    \theta_{-i}^\Thetabpol(y).
    \end{equation}
    But this means that $\kappa_0(y, \tildeH)=\kappa(g_f(e))$ and we are done.
\end{proof}

\begin{corollary}\label{cor:symbasis}
    Let $(e_i)_{i \in I}$ be a set of generators of $G(\overk)$ as a group. We have the equivalence:
    \begin{itemize}
	\item for all $e \in G(\overk)$, $e$ in symmetric compatible with
	    $\tildenullbpol$;
	\item for all $i \in I$, $e_i$ is symmetric compatible with
	    $\tildenullbpol$.
    \end{itemize}
\end{corollary}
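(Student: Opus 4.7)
The plan is to reduce this corollary to the additivity statement in Proposition~\ref{sec2:propgroup}, which is phrased in the language of Definition~\ref{def:symcompat} (symmetric compatibility with respect to a level subgroup), by passing through the equivalence of definitions established in Proposition~\ref{prop:propclef}.

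First, I would dispose of the trivial case: if $d$ is odd, then for every $e \in G$ the exponent $\expo(e, \tildeH)$ with $\tildeH = \Thetabpol(\{1\} \times Z(m) \times \{0\})$ divides $d$, hence is odd, so Proposition~\ref{sec2:proptors}(1) makes every element symmetric compatible automatically. Assume therefore $d = 2d'$. For every $e \in G(\overk)$, the hypothesis $G \simeq Z(n) \supset \Thetabpolbar(Z(m)\times\{0\})$ with quotient $Z(d)$ ensures $de \in \pi_{G(\bpol)}(\tildeH)$, so Proposition~\ref{prop:propclef} applies and gives the equivalence
\begin{equation*}
  \text{$e$ is symmetric compatible with $\tildeH$ (Def.~\ref{def:symcompat})} \quad\Longleftrightarrow\quad \text{$d'e$ is symmetric compatible with $\tildenullbpol$ (Def.~\ref{def:symcompat2})}.
\end{equation*}
This bijection between the two notions of compatibility lets me translate the corollary into a statement purely about Definition~\ref{def:symcompat}, where Proposition~\ref{sec2:propgroup} lives.

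Second, I would verify the hypotheses of Proposition~\ref{sec2:propgroup}. By assumption $G$ is isotropic for $e_{B,n}$; since $e_{B,n}$ and $e_\bpol$ agree on $G \subset B[n]$ by Proposition~\ref{def:weil}(1), $G$ is also isotropic for $e_\bpol$. Moreover $G$ contains $\pi_{G(\bpol)}(\tildeH) = \Thetabpolbar(Z(m)\times\{0\})$. The one delicate hypothesis is uniform exponent: Proposition~\ref{sec2:propgroup}'s multiplicativity $\kappa_0(x_1+x_2,\tildeH) = \kappa_0(x_1,\tildeH)\kappa_0(x_2,\tildeH)$ requires $\expo(x_1,\tildeH)=\expo(x_2,\tildeH)$. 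I handle this by replacing the given family $(e_i)_{i\in I}$ with a basis $(\tilde e_j)_{j=1,\dots,g}$ of $G \simeq Z(n)$ whose image in $G/\Thetabpolbar(Z(m)\times\{0\}) \simeq Z(d) = (\Z/d\Z)^g$ is a basis: every such $\tilde e_j$ has $\expo(\tilde e_j,\tildeH) = d$ uniformly. Each $\tilde e_j$ is a $\Z$-linear combination of the $e_i$'s (modulo the trivially compatible subgroup $\Thetabpolbar(Z(m)\times\{0\})$, whose elements have $\expo = 1$ and are automatically symmetric compatible), so compatibility of all $e_i$ propagates to compatibility of each $\tilde e_j$ by a scalar-multiplication argument on symmetric lifts inside $G(\bpol)$. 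Applying Proposition~\ref{sec2:propgroup} to the $(\tilde e_j)$ then yields compatibility of every sum, hence of every element of $G$. Translating back via Proposition~\ref{prop:propclef} gives the non-trivial implication; the reverse is tautological.

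The main obstacle is precisely the uniform-exponent hypothesis: the original generators $(e_i)_{i\in I}$ need not have equal exponent, and Proposition~\ref{sec2:propgroup} does not directly apply to them. The basis reduction above circumvents this, exploiting that scalar multiples of symmetric compatible elements remain symmetric compatible (propagating the property from the $e_i$ to any chosen basis) and that elements of $\pi_{G(\bpol)}(\tildeH)$ contribute trivially.
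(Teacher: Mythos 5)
Your proof is correct and follows the same route as the paper, which simply states that the corollary is an immediate consequence of Propositions~\ref{prop:propclef} and~\ref{sec2:propgroup}. You add a genuinely useful refinement: the \emph{In particular} clause of Proposition~\ref{sec2:propgroup} demands that all generators share the same exponent $\expo(\cdot,\tildeH)$, a hypothesis an arbitrary generating set need not satisfy, so a basis reduction is required. One small imprecision: you label the passage from the original generators $(e_i)$ to the chosen basis $(\tilde e_j)$ a ``scalar-multiplication argument,'' but since each $\tilde e_j$ is a $\Z$-\emph{linear combination} of the $e_i$ modulo $\Thetabpolbar(Z(m)\times\{0\})$, the step also uses additions of symmetric lifts inside $G(\bpol)$. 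This is fine, but it deserves a sentence: for every $e_i$ one has $\expo(e_i,\tildeH)\mid d$, hence $d\,g_{e_i}\in\tildeH$, and since $\tildeH$ is a group the corresponding combination satisfies $d\,g_{\tilde e_j}\in\tildeH$; as the image of $\tilde e_j$ in $G/\Thetabpolbar(Z(m)\times\{0\})\simeq Z(d)$ is a basis vector, $\expo(\tilde e_j,\tildeH)=d$, so $\tilde e_j$ is symmetric compatible. Calling it a linear-combination argument and spelling out these two lines would close the gap completely; otherwise the plan matches the paper and is sound.
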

\begin{proof}
    This is an immediate consequence of Proposition \ref{prop:propclef} and Proposition
    \ref{sec2:propgroup}. 
\end{proof}

As explained above, if $G$ is not symmetric compatible with $\tildenullbpol$, we can
either change $G$ or $\tildenullbpol$ to make it symmetric compatible. The following
Proposition explains how to change $\tildenullpol$ and the next one treats the
case when we
change $G$. These Propositions should be compared with Proposition \ref{prop:cond}:
\begin{proposition}\label{prop:changesym1}
    Let $\Bb=(e_i)_{i=1, \ldots, g}$ be a basis of $Z(n)$. For $i=1, \ldots, g$, denote by
    $c_i=d e_i \in Z(m)$ and
    by $\dc_i \in
    \dZ(m)[2]$ the character such that for $i,j=1, \ldots, g$, $\dc_i(c_i)=-1$ and
    $\dc_i(c_j)=1$ if $i \neq j$.

    Let $I \subset \{ 1, \ldots, g\}$ be the set of indexes such that $g_f(e_i)$ is not
    symmetric compatible with $\tildenullbpol$. Let $c=\sum_{i \in I} \dc_i$.
    Let $g = \Phi(c) \in \Aut_s(G(m))$ where $\Phi$ is defined by (\ref{eq:defphi}).
    Then $\widetilde{0}_{\Thetabpol \circ g}$ is symmetric compatible with $G$.
\end{proposition}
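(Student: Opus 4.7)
The plan is to reduce to a per-basis-element check, translate it from a statement about level subgroups to one about affine lifts, and then track how the $\kappa$ invariants transform under the action of $\Phi(c)$. Concretely, by Corollary \ref{cor:symbasis}, it suffices to verify that each $g_f(e_i)$ is symmetric compatible with $(\Thetabpol \circ g)(\{1\}\times Z(m)\times\{0\})$, since the $g_f(e_i)$ generate $G$. Since $2(d'e_i)=de_i=\mu_{m,n}(c_i)\in\mu_{m,n}(Z(m))$, we have $d'e_i\in\Sinv$, so by Proposition \ref{prop:propclef} this is in turn equivalent to proving that $g_f(d'e_i)$ is symmetric compatible with $\widetilde{0}_{\Thetabpol\circ g}$ in the sense of Definition \ref{def:symcompat2}.

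Next I would use Lemma \ref{lem:actionK} applied to $g=\Phi(c)$ with $c=(0,c)\in Z(m)\times\dZ(m)$. This yields $\theta_i^{\Thetabpol\circ g}=c(i)\theta_i^\Thetabpol$, so the affine lift of any point transforms coordinate-wise by $(\tildex^{new})_j=c(j)(\tildex)_j$. Let $\tilde p_i=\tildeg_f(d'e_i)$ w.r.t.\ $\tildenullbpol$, which satisfies the symmetry relation $(\tilde p_i)_j=\kappa_i (\tilde p_i)_{-j-c_i}$ where $\kappa_i=\kappa(g_f(d'e_i))$ is $-1$ if $i\in I$ and $+1$ otherwise. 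Since $c\in\dZ(m)[2]$ we have $c(-j)=c(j)$, and then a direct substitution gives $(\tilde p_i^{new})_j=c(j)\kappa_i(\tilde p_i)_{-j-c_i}=\kappa_i\, c(c_i)^{-1}c(j+c_i)(\tilde p_i)_{-j-c_i}=\kappa_i\,c(c_i)\,(\tilde p_i^{new})_{-j-c_i}$, so
\begin{equation*}
\kappa_i^{new}=\kappa_i\cdot c(c_i).
\end{equation*}

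It remains to evaluate $c(c_i)=\prod_{i'\in I}\dc_{i'}(c_i)$. By the defining property of the $\dc_{i'}$, the factor $\dc_{i'}(c_i)$ equals $-1$ when $i'=i$ and $+1$ otherwise; therefore $c(c_i)=-1$ if $i\in I$ and $c(c_i)=+1$ if $i\notin I$. Combining with $\kappa_i=-1$ for $i\in I$ and $\kappa_i=+1$ for $i\notin I$ gives $\kappa_i^{new}=+1$ in all cases. Hence every $g_f(d'e_i)$ is symmetric compatible with $\widetilde{0}_{\Thetabpol\circ g}$, and unwinding the two equivalences above finishes the proof.

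The main obstacle is the sign/character bookkeeping in the middle step: one has to correctly identify how $\inv$ and the translation $(1,c_i,0)$ interact with the coordinate rescaling by $c(j)$, using crucially that $c$ is of order two (so $c(-j)=c(j)$) and the bilinearity $c(j+c_i)=c(j)c(c_i)$. Once these are in place, the choice $c=\sum_{i\in I}\dc_i$ is manifestly the one that exactly cancels the bad signs at the indices where $g_f(e_i)$ fails to be symmetric compatible, while leaving the good ones unchanged.
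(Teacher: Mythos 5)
Your proof is correct and follows essentially the same route as the paper: reduce to basis elements via Corollary \ref{cor:symbasis} and Proposition \ref{prop:propclef}, then compute how the sign invariant $\kappa_i = \kappa(g_f(d'e_i))$ transforms under the twist by $\Phi(c)$, arriving at $\kappa_i^{new} = \kappa_i \cdot c(c_i)$ and then verifying $c(c_i) = \kappa_i$ so the product is $+1$. The one (minor) place where you diverge in presentation is the middle step: the paper directly applies the identity $(\Thetabpol\circ g)((1, 2d'e_i, 0)) = c(2d'e_i)\Thetabpol((1, 2d'e_i, 0))$ and acts on the old affine lift $\tildedei$, implicitly using that the resulting scalar is unchanged when one passes to the new coordinate system; you instead make the coordinate-wise rescaling $(\tildex^{new})_j = c(j)(\tildex)_j$ explicit and push it through $\inv$ and the index-shift action, using $c(-j)=c(j)$ and the bilinearity $c(j+c_i)=c(j)c(c_i)$. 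Your version is a bit more careful about the interplay between $\inv$ and the change of theta basis, which is the only subtlety in the argument; otherwise the two proofs are the same.
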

\begin{proof}
    Using Corollary \ref{cor:symbasis}, it is enough to check that for all $i \in \{1,
    \ldots, g\}$, $g_f(e_i)$ is
    symmetric compatible with $\widetilde{0}_{\Thetabpol \circ g}$. Let $d'=d/2$. 
    Let
    $\tildedei$ be an affine lift of $g_f(d' e_i)$. Then, we have
    $\Thetabpol((1, 2d'e_i, 0)). \inv (\tildedei)=\kappa(g_f(d'e_i)) * \tildedei$. 
    Using Equation (\ref{eq:defphi}), we have $(\Thetabpol\circ g)(1, 2d'e_i,
    0)=c(2d'e_i)\Thetabpol(1, 2d'e_i, 0)$ so that
    $(\Thetabpol\circ g)(1, 2d'e_i, 0). \inv (\tildedei)=
    c(2d'e_i)\kappa(g_f(d'e_i))*\tildedei$. As $c(2d'e_i)\kappa(g_f(d'e_i))=1$, $e_i$ is
    symmetric compatible with  $\widetilde{0}_{\Thetabpol \circ g}$.
\end{proof}

If $G$ is not symmetric compatible with $\tildenullbpol$, the following
proposition explains how to change it to make it symmetric compatible with
$\tildenullbpol$.
\begin{proposition}\label{prop:changesym2}
    Let $\Bb=(e_i)_{i=1, \ldots, g}$ be a basis of $Z(n)$, write $d=2d'$ for $d'$ an
    integer. For all $j=1, \ldots, g$, 
    let $\hate_j \in
    \dZ(m)$ be such that $\hate_j(\nu_{n,m}(de_j))=\kappa(g_f(d'e_j))$ and
    $\hate_j(\nu_{n,m}(de_i))=1$ if $i \neq j$. For $i=1, \ldots, g$, let $g'_f(e_i) =
    g_f(e_i)
    +\Thetabpol((0, \hate_i))$ and let $G'$ be the subgroup of $B[n]$ generated by
    $g'_f(e_i)$.
 Then $G'$ is a subgroup of $B[n]$ isomorphic to $Z(n)$,
    containing $\Thetabpolbar(Z(m)\times\{0\})$, isotropic for $e_{B,n}$ and such that for all
$x \in G'(\overk)$, $x$ is symmetric compatible with $\Thetabpol( \{1\}\times Z(m) \times\{0\}
    )$.
\end{proposition}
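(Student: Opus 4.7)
The strategy is to verify the four asserted properties of $G'$ in turn, with the bulk of the work concentrated in the symmetric compatibility claim.

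First, I would deal with the structural properties. Inclusion $G' \subseteq B[n]$ is immediate since $g_f(e_i) \in B[n]$ and $\Thetabpol((0,\hate_i)) \in \Thetabpolbar(\{0\}\times \dZ(m)) \subseteq B[m] \subseteq B[n]$. To show $G' \simeq Z(n)$, I would consider the group morphism $\phi : Z(n) \to B[n]$ extending $e_i \mapsto g'_f(e_i)$ (well-defined since $n \cdot g'_f(e_i)=0$). Injectivity reduces to showing $G \cap \Thetabpolbar(\{0\}\times \dZ(m))$ is trivial on the relevant component, which follows from the isotropy of $G$ and the Heisenberg pairing formula~\eqref{eq:commutator} combined with Proposition~\ref{def:weil}(3). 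The inclusion $\Thetabpolbar(Z(m)\times\{0\}) \subseteq G'$ is then obtained from $d\cdot g'_f(e_j) = g_f(de_j) + \Thetabpol((0, d\hate_j))$ and the constraints $\hate_j(\nu_{n,m}(de_i)) \in \{\pm 1\}$, which force the $\dZ(m)$-corrections to be absorbed via linear combinations. Isotropy of $G'$ for $e_{B,n}$ is obtained by expanding $e_{B,n}(g'_f(e_i), g'_f(e_j))$ bilinearly into four terms and using Proposition~\ref{def:weil}(3) together with the prescribed cross-values $\hate_j(\nu_{n,m}(de_i))=1$ for $i\neq j$ to cancel the mixed contributions.

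The core of the proof is symmetric compatibility. By Corollary~\ref{cor:symbasis} it suffices to check each generator $g'_f(e_i)$; and once $d\cdot g'_f(e_i) \in \Thetabpolbar(Z(m)\times\{0\})$ has been established in the structural step, Proposition~\ref{prop:propclef} reduces the problem to computing $\kappa(g'_f(d'e_i))$ in the sense of Definition~\ref{def:symcompat2}. I would fix the affine lift $\tildedei$ of $g_f(d'e_i)$ satisfying the defining relation $\tildedei = \kappa_i \cdot (1,\nu_{n,m}(de_i),0)\cdot \inv(\tildedei)$ with $\kappa_i = \kappa(g_f(d'e_i))$, and lift the modified point as $\tildex := (1,0,d'\hate_i).\tildedei$. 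Applying the commutation $\inv \circ (1,0,\chi) = (1,0,-\chi)\circ \inv$ derived from~\eqref{eq:trans}, together with the Heisenberg multiplication $(1,x_1,y_1)(1,x_2,y_2) = (y_2(x_1), x_1+x_2, y_1+y_2)$, I rewrite $(1,\nu_{n,m}(de_i),0)\inv(\tildex)$ back in terms of $\tildedei$. The scalar picked up from the commutation is precisely $\hate_i(\nu_{n,m}(de_i))^{-1}$, so the computation yields $\kappa(g'_f(d'e_i)) = \kappa_i \cdot \hate_i(\nu_{n,m}(de_i))^{-1} = \kappa_i \cdot \kappa_i^{-1} = 1$ by the defining condition on $\hate_i$.

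The hardest part is the Heisenberg bookkeeping: when transporting the relation from $\tildedei$ to $\tildex$, one must track the scalars arising both from the theta-group law and from the interaction of $\inv$ with the dualities $(1,0,\chi)$ and translations $(1,x,0)$, and verify that no stray factors involving $\hate_j$ at $\nu_{n,m}(de_i)$ for $j\neq i$ contaminate the generator $e_i$. This last point is exactly what is guaranteed by the vanishing condition $\hate_j(\nu_{n,m}(de_i))=1$ for $i\neq j$, and it is what allows the single-generator computation of $\kappa$ to propagate additively to all of $G'$ via Proposition~\ref{sec2:propgroup}. In this sense the proposition is the mirror image of Proposition~\ref{prop:changesym1}: rather than adjusting the theta structure by an element of $\Phi(K(2))$ to absorb the obstructions $\kappa_j$, one adjusts the generators of $G$ by elements of $\Thetabpolbar(\{0\}\times\dZ(m))$ whose characters encode exactly the same obstruction data.
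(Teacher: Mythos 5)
Your proof matches the paper's in structure and in every essential step: the structural claims about $G'$ are routine, and the heart of the argument is the reduction to generators via Corollary~\ref{cor:symbasis} together with Proposition~\ref{prop:propclef}, followed by the computation of $\kappa(g'_f(d'e_i))$ by commuting a Heisenberg shift past $\inv$, exactly as the paper does.

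One bookkeeping point that your version actually exposes. You write the lift of $g'_f(d'e_i)$ as $(1,0,d'\hate_i)\cdot\tildedei$, which is indeed the correct lift: since $g'_f$ is a group morphism, $g'_f(d'e_i)=g_f(d'e_i)+d'\,\Thetabpolbar((0,\hate_i))$. But you then assert the commutation scalar is $\hate_i(\nu_{n,m}(de_i))^{-1}$; with your lift the Heisenberg multiplication produces $\bigl(\hate_i(\nu_{n,m}(de_i))\bigr)^{d'}$, which equals the quantity you use only when $d'$ is odd. The paper's own proof takes the lift $\Thetabpol((1,0,\hate_i))\tildedei$ outright, which amounts to the same implicit restriction. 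For $4\mid d$ the prescribed $2$-torsion character satisfies $d'\hate_i=0$, so the correction does not alter $g'_f(d'e_i)$ modulo $\Thetabpolbar(Z(m)\times\{0\})$ and the obstruction $\kappa$ cannot change; neither your argument nor the paper's then closes the case $\kappa(g_f(d'e_i))=-1$. This is a latent limitation of the Proposition as stated rather than of your execution of the paper's strategy, but your more careful notation makes it visible where the paper glosses over it.
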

\begin{proof}
    As $G$ is isomorphic to $Z(n)$ and isotropic for $e_{B,n}$, it is clear that $G'$
    also verifies these properties. Moreover, as for $j=1, \ldots, g$, $\hate_j$ has value in $\{-1, 1\}$, we
    have that $\hate_j\in \dZ(m)[2]$. As $2|d$, it is clear that $d g'_f(e_i) = d(g'_f(e_i) +
    \Thetabpol((0, \hate_i)))=d
    e_i$ so that $G'$ contains $\Thetabpol(Z(m) \times\{0\})$.

Using Corollary \ref{cor:symbasis}, we have to check that for all $i = 1, \ldots, g$,
    $g'_f(d' e_i)$ is symmetric compatible with $\tildenullbpol$. For $i=1, \ldots, g$,
let
    $\tildedei$ be an affine lift of $g_f(d'e_i)$. Then $\Thetabpol((1, 0, \hatei))
    \tildedei$ is an affine lift of $g'_f(d'e_i)$. We compute:
    \begin{equation}
	\begin{split}
    \Thetabpol((1, d e_i,0)).
    \inv \Thetabpol((1, 0, \hatei))
	    \tildedei & =  \hate_i(\nu_{n,m}(2d'e_i)) \Thetabpol((1, 0, \hatei))
\Thetabpol((1, 2d'e_i,0)).
	    \inv     \tildedei\\ & = \hate_i(\nu_{n,m}(2d'e_i)) \kappa(g_f(d'e_i))* \Thetabpol((1, 0, \hatei))
    \tildedei.
	\end{split}
	\end{equation}
	    As by hypothesis $\hate_i(\nu_{n,m}(2d'e_i)) \kappa(g_f(d'e_i))=1$,
    $g'_f(e_i)$ is symmetric compatible with $\tildenullbpol$.
\end{proof}

\begin{corollary}
From Proposition \ref{prop:changesym1} we deduce Algorithms \ref{algo:changesym1} the
running time of which is $O(m^g)$ operations in the base field.
From Proposition \ref{prop:changesym2}, we deduce Algorithm \ref{algo:changesym2} with running time
$O(g^2 m^g \log(n))$ operations in the base field.
\end{corollary}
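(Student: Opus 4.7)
The plan is to translate each Proposition into an explicit algorithmic procedure and then tally the cost of the elementary operations (scalar multiplication on affine lifts, Heisenberg action, and coordinate rescaling), each of which has a known complexity in terms of operations in the base field.

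For Algorithm \ref{algo:changesym1}, I would proceed as follows. Fix a basis $(e_1, \ldots, e_g)$ of $Z(n)$. For $i = 1, \ldots, g$, run Algorithm \ref{algo:testsymcompatb} on input $g_f(d' e_i)$ to determine whether $g_f(e_i)$ is symmetric compatible with $\tildenullbpol$ (equivalently, whether $\kappa(g_f(d' e_i)) = 1$, by Proposition \ref{prop:propclef}), where $d = 2 d'$. Collect the indices $i$ for which compatibility fails into a set $I$, form the character $c = \sum_{i \in I} \dc_i \in \dZ(m)[2]$ where $\dc_i$ is defined in Proposition \ref{prop:changesym1}, and compute $\widetilde{0}_{\Thetabpol \circ \Phi(c)}$ using the explicit formula of Lemma \ref{lem:actionK}, namely $\theta_i^{\Thetabpol \circ \Phi(c)} = c(i)\, \theta_i^{\Thetabpol}$ (since the $Z(m)$-component of $c$ is trivial). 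Proposition \ref{prop:changesym1} guarantees that the output is symmetric compatible with $G$.

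For Algorithm \ref{algo:changesym2}, the procedure is similar: for each $i = 1, \ldots, g$, run Algorithm \ref{algo:testsymcompatb} to obtain $\kappa_i = \kappa(g_f(d' e_i))$; then build $\hate_j \in \dZ(m)[2]$ characterized by $\hate_j(\nu_{n,m}(d e_j)) = \kappa_j$ and $\hate_j(\nu_{n,m}(d e_i)) = 1$ for $i \ne j$ (this is just the dual basis construction); finally, for each $i$, compute the affine lift of the new generator $g'_f(e_i) = g_f(e_i) + \Thetabpol((0, \hate_i))$ by applying $\thetaact(\cdot, \hate_i)$ to the affine lift of $g_f(e_i)$. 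Proposition \ref{prop:changesym2} guarantees that the resulting $G'$ has the required symmetric compatibility.

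The main thing to argue is the complexity. An affine point lives in $\Aff^{Z(m)}$ and so has $m^g$ coordinates; each elementary operation $\normadd$, $\diffadd$, $\thetaact$, $\inv$ thus costs $O(m^g)$ operations in $k$, and one call to $\scalmult$ with multiplier $\ell$ costs $O(m^g \log \ell)$. Each call to Algorithm \ref{algo:testsymcompatb} therefore costs $O(m^g \log d)$, and we run it $g$ times, which is absorbed in the stated complexity (with implicit constants depending on $g$ in Algorithm \ref{algo:changesym1}). The action of $\Phi(c)$ via Lemma \ref{lem:actionK} is a pointwise sign flip on $m^g$ coordinates, so it costs $O(m^g)$, giving the total $O(m^g)$ for Algorithm \ref{algo:changesym1}. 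For Algorithm \ref{algo:changesym2}, computing the characters $\hate_j$ requires, for each of the $g$ basis elements, solving a $g \times g$ linear system of characters, contributing an $O(g^2)$ factor per basis element; combined with the $g$ scalar multiplications of cost $O(m^g \log n)$ and the $g$ applications of $\thetaact$, this yields $O(g^2 m^g \log n)$ as claimed. The only subtle point, and the one I would double-check carefully, is the bookkeeping to ensure that the affine lifts produced by $\thetaact$ are the ones making Proposition \ref{prop:changesym2} directly applicable.
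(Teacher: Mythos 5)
Your reconstruction of Algorithm~\ref{algo:changesym1} is fine (though the absorption of the factor $O(g\log d)$ into the stated $O(m^g)$ is a bit loose, the paper itself doesn't explain that bound), but your account of Algorithm~\ref{algo:changesym2} misses its dominant step entirely. The algorithm's input is an \emph{arbitrary} basis $(e_i)_{i=1,\dots,2g}$ of $B[n]$, not one already split into two isotropic subgroups; the first thing the algorithm does is run a Gram--Schmidt procedure (line~1 of Algorithm~\ref{algo:changesym2}) to produce a symplectic basis for the Weil pairing $e_{B,n}$. That step requires $g^2$ pairing evaluations, each costing $O(m^g\log n)$ base-field operations by the algorithm of~\cite{DRpairing}, and it is precisely this that accounts for the $O(g^2 m^g\log n)$ bound. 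Only \emph{after} that symplectization can one apply Proposition~\ref{prop:changesym2} to adjust the basis elements by $2$-torsion.

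There is also a problem internal to your own accounting. You claim that constructing the characters $\hate_j$ contributes ``an $O(g^2)$ factor per basis element,'' and that combined with $g$ calls to $\scalmult$ at $O(m^g\log n)$ each this ``yields $O(g^2 m^g\log n)$.'' It does not: by your count the total is $O(g^3 + g\, m^g\log n)$, which is a weaker statement and does not exhibit where the $g^2\,m^g\log n$ cost comes from. Moreover the $\hate_j$ are characters of a small finite group whose values on a basis are prescribed directly in Proposition~\ref{prop:changesym2}; writing them down costs no base-field arithmetic at all, so the ``$g\times g$ linear system over the base field'' does not actually arise. The $g^2$ factor in the stated bound is not a linear-algebra artifact but the number of Weil pairing evaluations in the Gram--Schmidt step, which is the ingredient your argument is missing.
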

\begin{proof}
    We only have to explain the running time of the second claim. The dominant step of
    Algorithm \ref{algo:changesym2} is the Gram-Schmit algorithm which uses $g^2$
    Weil pairing computations. We can use the algorithm of \cite{DRpairing} to compute the Weil
    pairing in $O(m^g \log(n))$ operations in the base field.%à vérif, je (Antoine) n'ai pas trouvé la complexité dans le papier
\end{proof}

\begin{algorithm}
\SetKwInOut{Input}{input}\SetKwInOut{Output}{output}

\SetKwComment{Comment}{/* }{ */}
\Input{
    \begin{itemize}
	\item the marked abelian variety $(B, \bpol, \Thetabpol)$ of type $K(m)$ given by its theta null
	    point $0_{\Thetabpol}$;
	\item $G_1 \subset B[n]$ such that:
	    \begin{itemize}
		\item $G_1$ is isomorphic to $Z(n)$;
		\item  $G_1$ is isotropic for $e_{B,n}$;
		\item $G_1 \supset \Thetabpolbar(Z(m) \times \{ 0 \})$;
	    \end{itemize}
given by a basis $(e_i)_{i=1, \ldots, g}$.
    \end{itemize}
}
\Output{
    \begin{itemize}
	\item 
    $\Thetaubpol$ a theta structure for $(B, \bpol)$ such that
	for all $x \in G_1$, $x$ is symmetric compatible with $\Thetaubpol(\{1 \}
	    \times Z(m) \times \{ 0 \})$ given by its new theta null point
	    $0_{\Thetaubpol}$;
	\item $(e_i)_{i=1, \ldots, g}$ in the new coordinates provided by
	    $0_{\Thetaubpol}$.
    \end{itemize}
}
\BlankLine
$d'=d/2$, I = \{ \}\;
    \For{$j \in \{1, \ldots, g\}$}
    {
	\If {$\kappa(d' e_j)==-1$}
	{
	    I = I + \{ j \}\;
	}
    }
    Let $c = \sum_{i \in I} \dc_i$\;
    Let $\Phi(c)$ be the automorphism of $\Aff^g(\overk)= \Spec(X_i, i=1, \ldots, g)$,
    $X_i \rightarrow c(i) X_i$\;
    $0_{\Thetaubpol}=\Phi(c) (0_{\Thetabpol})$\;
    \For{$j \in \{1, \ldots, g\}$}
    {
	$e_i = \Phi(c)(e_i)$\; 
    }
    \Return $0_{\Thetaubpol}, (e_i)_{i=1, \ldots, g}$.
    \caption{Algorithm to compute $0_{\Thetabpol}$ symmetric compatible with a subgroup
    $G_1$ of $B[n]$.}
  \label{algo:changesym1}
\end{algorithm}

\begin{algorithm} % AD à changer
\SetKwInOut{Input}{input}\SetKwInOut{Output}{output}

\SetKwComment{Comment}{/* }{ */}
\Input{
    \begin{itemize}
	\item the marked abelian variety $(B, \bpol, \Thetabpol)$ of type $K(m)$ given by its theta null
    point $0_{\Thetabpol}$;
	\item a basis $(e_i)_{i=1, \ldots, 2g}$ of $B[n]$.
    \end{itemize}
}
\Output{
A basis $(e'_i)_{i=1, \ldots, 2g}$ of $B[n]$ such that if $G_1=(e'_1, \ldots, e'_g)$ and
    $G_2=(e'_{g+1}, \ldots, e'_{2g})$:
    \begin{itemize}
	\item $G_1$ and $G_2$ are isotropic for $e_{B,n}$;
	\item for all $x \in G_1$ (resp. $x \in G_2$), $x$ is symmetric compatible with $\Thetabpol(\{1 \}
	    \times Z(m) \times \{ 0 \})$ (resp. with $\Thetabpol(\{1 \}
	    \times \{ 0 \} \times \dZ(m))$).
    \end{itemize}
}
\BlankLine
    Using a Gram-Schmit like algorithm, compute a basis $(w_1, \ldots, w_{2g})$ of $B[n]$
    such that $e_{B,n}(w_i, w_i)=1$ and $e_{B,n}(w_i, w_{i+g})=\zeta$ (where $\zeta$ is a
    primitive $n^{th}$-root of unity)\;

    $d'=d/2$\;
    \For { $(i_0, i_1) \in \{ (0,1), (1,0) \}$}
    {
	\For { $j \in \{ 1, \ldots, g\}$}
	{
		$e_{i_0 g + j}= e_{i_0 g +j} + (\kappa(d' e_{i_0 g+
		j})+1)/2*\Thetabpolbar(e_{i_1 g +j})$\;
	    }
	}

    \Return $(e_i)_{i=1, \ldots, 2g}$.
    \caption{Algorithm to compute a decomposition of $B[n]=G_1 \times G_2$ symmetric compatible with
    $0_{\Thetabpol}$.}
  \label{algo:changesym2}
\end{algorithm}

Now that Proposition \ref{prop:propclef} gives us an effective criterion to check that all elements of
$G(\overk)$ are symmetric compatible with $\Thetabpol(\{1\}\times Z(m) \times \{0\})$, we can
come back to our initial problem of finding the theta null point of $(A, \pol, \Thetapol)$
a marked abelian variety isog-$f$-compatible with $(B, \bpol, \Thetabpol)$.
We keep the hypothesis and notations of Proposition \ref{sec2:lemmalambdai}. 
Let $(A, \pol, \Thetapol,
    \theta_0^\Thetapol, \rho^\pol_{0_\Thetapol})$ be a rigidified abelian variety with
    affine theta null point $\tildenullpol =
    (\theta_i^\Thetapol(0_\Thetapol))_{i \in \Zn}$ such that for all $i \in
    \Zn$, there exists $\lambda_i \in \overk^*$ such that 
    \begin{equation}\label{sec2:lemmaeq2}
	\tildeg_f(i)= \lambda_i * (\theta^\Thetapol_{\mu_{m,n}(j)+i}(0_\Thetapol))_{ j\in Z(m)}.
    \end{equation}

Let $\Bb=(e_i, e_i + e_j)$ be a chain basis of $\Zz(d)$. Let $e \in \Bb$, we want to
compute $\lambda_e$. For this, let $\ell$ be
the smallest positive integer such that $\ell e \in \mu_{m,n}(Z(m))$.

We gather information on $\lambda_e$ by writing the expressions verified by affine points.
All these relations come from trivial relations of projective points that we translate
into non-trivial relations between affine points. For instance, the equality of projective points
$\ell g_f(e)= (1, \ell e, 0). 0_\Thetabpol$, can be rewritten with
operations on affine points as
\begin{equation}\label{eq:affinelift1}
\scalmult(\ell, \lambda_e * \tildeg_f(e), \lambda_e * \tildeg_f(e), \tildenullbpol,
\tildenullbpol) = (1, \ell e, 0). \tildenullbpol.
\end{equation}
Following Lemma \ref{lem:scalar}, we obtain an expression for $\lambda_e^{\ell^2}$.
Then, we have to have a relation that take into account the symmetry relations. For this,
consider the equality of projective points $(\ell -1) g_f(e) + g_f(e)= (1, \ell e,0 ).
	0_\Thetabpol$. We deduce from it the relation on affine points:
	\begin{equation}\label{eq:affinelift2}
	    \scalmult(\ell-1, \lambda_e * \tildeg_f(e), \lambda_e * \tildeg_f(e),
	    \tildenullbpol,
	    \tildenullbpol) = (1, \ell e,0). \inv(\lambda_e * \tildeg_f(e)).
	\end{equation}
	Because of Lemma \ref{lem:scalar}, we get an expression for $\lambda_e^{(\ell-1)^2
	-1 }=\lambda_e^{\ell^2 - 2\ell}$, but as we already know $\lambda_e^{\ell^2}$ from
	Equation (\ref{eq:affinelift1}), we finally get $\lambda_e^{2\ell}$. Note that in
	the case that $\ell$ is odd, we obtain in fact $\lambda_e^\ell$ by taking the
	unique square root $t$ of $\lambda_e^{2\ell}$ such that $t^\ell=\lambda_e^{\ell^2}$.

\begin{remark}
    In the case that $\ell$ is odd, we can get $\lambda_e^\ell$ without computing a square
    in the following manner: we write $\ell = 2\ell'+1$, then from the equality of
	projective points $\ell' g_f(e) + (\ell'+1) g_f(e)=(1, \ell e,
	0).0_\Thetabpol$, we obtain the operations on affine points:
	\begin{equation}\label{eq:affineliftold}
	    \scalmult(\ell', \lambda_e * \tildeg_f(e), \lambda_e * \tildeg_f(e),
	    \tildenullbpol,
	    \tildenullbpol) =  (1, \ell e, 0).\inv (\scalmult(\ell'+1, \lambda_e *
	    \tildeg_f(e), \lambda_e * \tildeg_f(e), \tildenullbpol,
	    \tildenullbpol)).
	\end{equation}
	Using Lemma \ref{lem:scalar}, we obtain an expression for
	$\lambda_e^{\ell'^2- (\ell'+1)^2}$ and thus we get directly $\lambda_e^\ell$.
\end{remark}
In the case that $\ell$ is even, from the knowledge of $\lambda_{e_i}, \lambda_{e_j}$ for
$i,j \in \{1, \ldots, g\}$, $i \ne j$, one can get an extra-information on $\lambda_{e_i +
e_j}$ using the projective relation $g_f(\ell e_i +  e_j) =(1, \ell e_i, 0).g_f(e_j)$. With
affine points, we get:
\begin{equation}\label{eq:affinelift3}
    \scalmult(\ell, \lambda_{e_i + e_j}* \tildeg_f(e_i + e_j), \lambda_{e_i}*\tildeg_f(e_i),
    \lambda_{e_j}*\tildeg_f(e_j), \tildenullbpol)= (1, \ell e_i, 0) (\lambda_{e_j}*
    \tildeg_f(e_j)).
\end{equation}
Using Lemma \ref{lem:scalar}, we obtain an expression for $\frac{\lambda_{e_i +
e_j}^\ell \lambda_{e_i}^{\ell (\ell -1)}}{\lambda_{e_j}^{\ell}}$ and as we know
$\lambda_{e_i}^{\ell^2}$ from
Equation (\ref{eq:affinelift1}), we finally get an expression for $\frac{\lambda_{e_i +
e_j}^\ell}{\lambda_{e_i}^\ell \lambda_{e_j}^{\ell}}$.

\begin{definition}\label{def:good}
    Let $\Zz(d)$ be a set of representatives of classes of $Z(n)/Z(m)$ and $\Bb=(e_i,
    e_i+e_j)$ be a chain basis of $\Zz(d)$.
We say that:
    \begin{itemize}
	\item 
	    $\{\lambda_e * \tildeg_f(e), e \in \Bb \}$ is a good lift with respect to
	    $\tildenullbpol$ of $\{ g_f(e), e \in \Bb\}$ if
    for all $i, j \in \{1, \ldots, g\}$, $i \ne j$, $\lambda_{e_i}$ verifies the relations
    (\ref{eq:affinelift1}), (\ref{eq:affinelift2}) and $\lambda_{e_i}, \lambda_{e_j},
    \lambda_{e_i + e_j}$ verifies the relations (\ref{eq:affinelift3}).
    \item 
    $\{\lambda_e * \tildeg_f(e), e \in \Zz(d) \}$ is a good lift of $\{ g_f(e), e \in \Zz(d)
	    \}$ with respect to $\tildenullbpol$ if for all $e \in \Bb$, $\lambda_e * \tildeg_f(e)$ is a good lift of $g_f(e)$ and
	    all the $\tildeg_f(e)$ for $e\in \Zz(d)$ are computed from  $\{\lambda_e * \tildeg_f(e), e \in \Bb \}$
    with the algorithm described in the proof of Proposition \ref{sec2:proprecons}. 
\item $\tildeG= \{ \tildeg_f(e), e \in \Zn \}$ is a good lift of $G$ with respect to
    $\tildenullbpol$ if $\{\lambda_e * \tildeg_f(e), e
	    \in \Zz(d) \}$ is a good lift of $\{ g_f(e), e \in \Zz(d) \}$ and if, for all $e \in
	    \Zn$, if we write $e=e_d + e_m$ with $e_d \in \Zz(d)$ and $e_m \in \mu_{m,n}(Z(m))$, we
	    have $\tildeg_f(e)=\Thetabpol((1, e_m, 0))\tildeg_f(e_d)$.
    \end{itemize}
\end{definition}
\begin{remark}
    In order to prove that the algorithm described in the proof of Proposition
    \ref{sec2:proprecons} output the same result independently of the particular sequence
    of operation that we perform, we have used the hypothesis that $\{ \lambda_e *
    \tildeg_f(e), e \in \Zz(d) \}$ are excellent lifts. We are going to see that this
    hypothesis is actually always true up to an action of the metaplectic group on $(A,
    \pol, \Thetapol)$ for good lifts.
\end{remark}

Note that if $\tildeG= \{ \tildeg_f(e), e \in \Zn \}$ is a good lift of $G$ with respect to
    $\tildenullbpol$, we have in particular, for all $e_m \in Z(m)$,
    $\tildeg_f(e_m)= \Thetabpol((1, e_m, 0)) \tildenullbpol$.
The following Lemma will be used in the following:
\begin{lemma}\label{lem:goodliftdiff}
    Let $\Bb = \{ e_i , e_i + e_j \}_{i,j= 1, \ldots, g}$  be a chain basis of $\Zz(d)$.
    If $\{ \tildeg_f(e), e \in \Bb \}$ and $\{ \tildeg'_f(e), e \in \Bb\}$ are two good
    lifts of $\{ g_f(e), e \in \Bb \}$,
    then for $i=1, \ldots, g$, there exists $\mu_{e_i}$ a $\ell^{th}$-root of unity, with $\ell=d$ if $d$ is odd and
    $\ell=2d$ if $d$ is even such that:
    \begin{equation}
	\tildeg_f(e_i)=\mu_{e_i} * \tildeg'_f(e_i).
    \end{equation}
    Moreover, for all $i,j =1, \ldots, g$,
    $i\ne j$, we have that:
    \begin{equation}\label{eq:diffgoodlift}
	\frac{\mu_{e_i+ e_j}}{\mu_{e_i} \mu_{e_j}}
    \end{equation}
    is a $d^{th}$-root of unity.
    
%Moreover, for all $i,j=1, \ldots, g$, $i \ne j$, we have:
%\begin{equation}
%    \mu_{e_i + e_j}=
%\end{equation}
\end{lemma}
\begin{proof}
    From Equations (\ref{eq:affinelift1})
    (\ref{eq:affinelift2}), and Lemma
    \ref{lem:scalar}, we get that for all $e \in \Bb$, $\lambda_e^{\ell } \in k$.
    In the case that $d$ is odd the fact that (\ref{eq:diffgoodlift}) is a $d^{th}$-root
    of unity comes from the preceding.
In the case that $d$ is even, we use relation
(\ref{eq:affinelift3}) to obtain that 
    for
    all $e_i, e_j \in \Bb$, $\frac{\lambda_{e_i + e_j}^\ell}{\lambda_{e_i}^\ell
    \lambda_{e_j}^\ell} \in k$ whence the result.
\end{proof}

\begin{remark}
    One can prove by induction that if $\tildeG=\{ \tildeg_f(e), e \in \Zn \}$ is a good
    lift of $G=\{ g_f(e), e \in \Zn \}$, then for all $e \in
    \Zn$, $\tildeg_f(e)$ verify the symmetry relation (\ref{eq:affinelift2}). In particular, this means
    that for all $e \in \Zn$, $\inv \tildeg_f(e) \in \tildeG$. In fact, this property is
    true by definition for all $e \in \Bb$. But all the other points of $\tildeG$ are
    obtained either by using a Riemann equation or the action of the theta group.
    But it is clear that if $(i_1, \ldots, i_4; i_5, \ldots, i_8)$ are in Riemann position,
    then so are  $(-i_1, \ldots, -i_4; -i_5, \ldots, -i_8)$. In particular, we have
    $\diffadd( \inv  (\tildex), \inv (\tildey), \inv (\tildexmy), \tildenullbpol)=\inv
    (\tildexy)$
    and we have the same kind of compatibility relations for $\threeway$.
    Moreover, we have $\thetaact(\inv (\tildex), -i)=\inv( \thetaact(\tildex, i))$.
\end{remark}

We would like to compare the notion of good lift and excellent lift as given in
Definition \ref{sec1:def1} and Definition \ref{def:excel}. Keeping the hypothesis of Definition
\ref{sec1:def1}, there is an obvious direction that is excellent lifts are
good lifts:
\begin{proposition}\label{prop:excelgood}
    Suppose that $\tildeG = \{ \tildeg_f(i), i \in \Zn \}$ is an excellent lift of $G$,
    then for $i \in \Zz(d)$, $\tildeg_f(i)$ is a good lift of $g_f(i)$. 
\end{proposition}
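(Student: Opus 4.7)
The proof is a direct computation verifying that the two defining equations (\ref{eq:affinelift1}) and (\ref{eq:affinelift2}) of a good lift are satisfied with $\lambda_e = 1$ for the affine points $\tildeg_f(i) = \tildef(\Thetapol(1,i,0)\tildenullpol)$ produced by the excellent lift. The strategy is to push every computation back to $A$ via Lemma \ref{lem:tildefscalmult}, reduce scalar multiplications on $\tildenullpol$-based affine points to pure theta-group actions via Corollary \ref{cor:diffaddaction}, and then push the result back to $B$ using $f$-compatibility.

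For $i \in \Zz(d)$, let $\ell$ be the smallest positive integer with $\ell i \in \mu_{m,n}(Z(m))$, and write $\ell i = \mu_{m,n}(k)$ for a unique $k \in Z(m)$. To verify (\ref{eq:affinelift1}), we use Lemma \ref{lem:tildefscalmult} (since $\tildef(\tildenullpol)=\tildenullbpol$ by Corollary \ref{sec1:cor1}) to write
\begin{equation*}
\scalmult(\ell, \tildeg_f(i), \tildeg_f(i), \tildenullbpol, \tildenullbpol) = \tildef\bigl(\scalmult(\ell, \Thetapol(1,i,0)\tildenullpol, \Thetapol(1,i,0)\tildenullpol, \tildenullpol, \tildenullpol)\bigr).
\end{equation*}
By Corollary \ref{cor:diffaddaction} equation (\ref{eq:diffaddaction3}) applied with $g_x = \Thetapol(1,i,0)$, the inner $\scalmult$ equals $\ell\Thetapol(1,i,0) \cdot \scalmult(\ell, \tildenullpol, \tildenullpol, \tildenullpol, \tildenullpol) = \Thetapol(1,\ell i, 0)\tildenullpol$ (the last scalar multiplication returning $\tildenullpol$ since all Riemann relations are trivially satisfied at the origin). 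Computing coordinates gives $\tildef(\Thetapol(1,\ell i,0)\tildenullpol)_j = \theta^\Thetapol_{\mu_{m,n}(j)+\ell i}(0_\Thetapol) = f^*(\theta^\Thetabpol_{j+k})(0_\Thetapol) = \theta^\Thetabpol_{j+k}(0_\Thetabpol)$ by $f$-compatibility (Corollary \ref{sec1:cor1}), which is precisely the $j$-th coordinate of $(1,\ell i,0)\tildenullbpol = \Thetabpol(1,k,0)\tildenullbpol$.

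For (\ref{eq:affinelift2}), the same argument gives $\scalmult(\ell-1, \tildeg_f(i), \tildeg_f(i), \tildenullbpol, \tildenullbpol) = \tildef(\Thetapol(1,(\ell-1)i,0)\tildenullpol)$, whose $j$-th coordinate is $\theta^\Thetapol_{\mu_{m,n}(j)+(\ell-1)i}(0_\Thetapol)$. For the right-hand side, writing out the action explicitly, the $j$-th coordinate of $(1,\ell i,0)\inv(\tildeg_f(i))$ equals $\theta^\Thetapol_{-(\mu_{m,n}(j)+(\ell-1)i)}(0_\Thetapol)$. The two sides agree by the symmetry relation $\theta^\Thetapol_r(0_\Thetapol)=\theta^\Thetapol_{-r}(0_\Thetapol)$ of Proposition \ref{sec:propsym}.

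There is no real obstacle: the combination of Lemma \ref{lem:tildefscalmult} (compatibility of $\scalmult$ with $\tildef$), Corollary \ref{cor:diffaddaction} (compatibility of $\scalmult$ with the theta-group action), $f$-compatibility, and the symmetry of the theta null point reduces everything to index bookkeeping. The only point requiring care is that the identity $\Thetapol(1,i,0)^{\ell} = \Thetapol(1,\ell i, 0)$ holds because the Heisenberg cocycle vanishes on $\{1\}\times Z(n)\times\{0\}$, and that $\ell i$ lies in the subgroup $\mu_{m,n}(Z(m))$ on which $\fsharp$ is explicitly described by condition (3) of Definition \ref{sec1:def1}, making the passage from $A$ to $B$ via $\tildef$ transparent.
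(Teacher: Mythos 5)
Your proof is correct, and it takes a genuinely different route from the paper's. The paper proves this in a single sentence by citing Proposition \ref{sec2:lemm6}: the relations (\ref{eq:affinelift1}) and (\ref{eq:affinelift2}) defining a good lift are composites of Riemann relations, symmetry relations, and theta-group actions, all of which excellent lifts satisfy by that proposition. You instead verify the two defining relations directly, by pushing $\scalmult$ up to $A$ via Lemma \ref{lem:tildefscalmult}, reducing the scalar multiplication of a torsion point starting from $\tildenullpol$ to a pure theta-group action via Corollary \ref{cor:diffaddaction}, and pushing back to $B$ using $f$-compatibility (Corollary \ref{sec1:cor1}) and the symmetry of the theta null point (Proposition \ref{sec:propsym}). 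The paper's argument is more compact and, because it works with the abstract fact that the good-lift relations are built from the three basic relation types, automatically handles any operation built from them; your argument is self-contained and makes the index bookkeeping transparent, which is genuinely useful for understanding why the construction works. One small point you glide over: the identity $\scalmult(\ell, \tildenullpol, \tildenullpol, \tildenullpol, \tildenullpol) = \tildenullpol$ is true (it is the base case of the Montgomery ladder, and $\diffadd$ at the origin is the identity), but it is used silently and deserves at least a mention since it is what pins down the affine normalization on which the whole computation rests.
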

\begin{proof}
    The relations
    defining good lifts are given by composing Riemann and symmetry relations and the
    action of the theta group $G(\bpol)$, which are all verified by excellent lifts because
    of Proposition \ref{sec2:lemm6}.
\end{proof}

In Proposition \ref{prop:excelgood}, we have seen that excellent lifts are good lifts. The following
Theorem, which is one of the main results of this section, tells that good lifts are
excellent lifts up to a change of the isog-$f$-compatible theta structure of $(A, \pol,
\Thetapol)$.
\begin{theorem}\label{th:main1}
Let $(B, \bpol, \Thetabpol, \theta_0^\Thetabpol, \rho^\bpol_{0_{\Thetabpol}})$ be a rigidified
    abelian varieties of type $K(m)$.
 Let $G \subset B[n]$ be a subgroup of $B[n]$ isomorphic to $Z(n)$
    containing $\Thetabpolbar(Z(m)\times\{0\})$, isotropic for $e_{B,n}$ and such that for all
$x \in G(\overk)$, $x$ is symmetric compatible with $\Thetabpol( \{1\}\times Z(m) \times\{0\}
)$. We choose a numbering of the elements of $G$ by writing $G=\{g_f(i), i \in \Zn\}$
such that the map $i \mapsto g_f(i)$ is a group morphism and that for all $i \in Z(m)$,
$g_f(\mu_{m,n}(i))=\Thetabpolbar((i,0))$. Suppose that $\tildeG= \{ \tildeg_f(i), i \in \Zn
    \}$ is a good lift of $G$. Then there exists $(A, \pol, \Thetapol,
    \theta_0^\Thetapol, \rho^\pol_{0_\Thetapol})$ a rigidified abelian variety such that for all $i \in
    \Zn$: 
    \begin{equation}\label{sec2:lemmaeq2main1}
	\tildeg_f(i)= (\theta^\Thetapol_{\mu_{m,n}(j)+i}(0_\Thetapol))_{ j\in Z(m)}.
    \end{equation}
    Said in another way, $\tildeG$ is an excellent lift of $G$ with respect to
    $\tildenullbpol$.
\end{theorem}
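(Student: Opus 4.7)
The plan is to produce a rigidified abelian variety whose excellent lift of $G$ coincides with the given good lift $\tildeG$, by combining existence with adjustment by the symmetry groups already introduced.

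First, by Proposition \ref{prop:cond}, the hypotheses on $G$ (being isotropic for $e_{B,n}$ and every element being symmetric compatible with $\Thetabpol(\{1\} \times Z(m) \times \{0\})$) guarantee the existence of a marked abelian variety $(A, \pol, \Thetapol)$ of type $K(n)$ that is $f$-compatible with $(B, \bpol, \Thetabpol)$ and realizes $G = G(0_\Thetapol)$. After acting on $\Thetapol$ by a suitable element of the subgroup $G_1 \subset \subzG(n)$ of Proposition \ref{prop:action2}, we arrange the group isomorphism $\lambda_{0_\Thetapol} : Z(n) \to G$ of Proposition \ref{sec3:prop9} to match our fixed numbering $i \mapsto g_f(i)$. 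Fixing any rigidification $(\theta_0^\Thetapol, \rho^\pol_{0_\Thetapol})$ then yields a candidate rigidified abelian variety whose associated excellent lift $\tildeG' = \{\tildeg'_f(i) = (\theta_{\mu_{m,n}(j)+i}^\Thetapol(0_\Thetapol))_{j \in Z(m)}\}$ is, by Proposition \ref{prop:excelgood}, itself a good lift of $G$.

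Second, since both $\tildeG$ and $\tildeG'$ are good lifts and both agree on $\mu_{m,n}(Z(m))$ (both satisfy $\tildeg_f(\mu_{m,n}(i)) = \Thetabpol((1,i,0))\tildenullbpol$ by the numbering hypothesis), Lemma \ref{lem:goodliftdiff} furnishes scalars $\mu_e \in \overk^*$ with $\tildeg_f(e) = \mu_e * \tildeg'_f(e)$ for $e \in \Zz(d)$, where $\mu_e$ is an $\ell_e$-th (or $2\ell_e$-th) root of unity with $\ell_e$ the smallest integer such that $\ell_e e \in \mu_{m,n}(Z(m))$. Moreover, the reconstruction algorithm of Proposition \ref{sec2:proprecons} combined with the compatibility properties of $\scalmult$, $\threeway$, $\inv$ and $\thetaact$ with affine scalings (Lemma \ref{lem:scalar}) forces the family $(\mu_e)_{e \in \Zz(d)}$ to be determined by its values on a chain basis $\Bb$ via explicit multiplicative relations of the same shape that govern the group law in $G(n)$.

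Third, it suffices to exhibit an element $g \in \subzG(n)$ and a scalar $\lambda \in \overk^*$ such that replacing $(A, \pol, \Thetapol, \theta_0^\Thetapol, \rho^\pol_{0_\Thetapol})$ by $(A, \pol, \Thetapol \circ g, \lambda\,\theta_0^{\Thetapol\circ g}, \lambda^{-1} \rho^\pol_{0_\Thetapol})$ sends $\tildeG'$ to $\tildeG$. The action of $\subzG(n)$ preserves $G$ by construction, and Proposition \ref{prop:action2} decomposes it into three pieces: the subgroup $G_1$ (trivial on $\mu_{m,n}(Z(m))$, permuting the $\Zz(d)$-cosets by $1 \bmod m$), the subgroup $G_0 = \Phi(K(2)) \cap \subG(n)$ (character twists by elements of $\dZ(2)$), and the key subgroup $G'_2$ of matrices $S_g(C)$ with $C \equiv 0 \bmod m$ symmetric. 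By Proposition \ref{prop:trans} case (2) and the composition law (\ref{eq:choicesign}), the action of $g \in G'_2$ multiplies $\theta_{\mu_{m,n}(j)+e}^\Thetapol$ by a factor of the form $\sqrt{\gamma_C(e)(e)}^{-1} \gamma_C(\mu_{m,n}(j))(e)^{-1}$, i.e.\ a character in $j$ times an $\ell_e$-th or $2\ell_e$-th root of unity depending only on $e$. Varying $C$ through symmetric matrices $\equiv 0 \bmod m$ and combining with $G_0$-twists (to handle the $\pm 1$ obstruction appearing when $\ell_e$ is even) and the global scaling by $\lambda$, one obtains exactly the $\mu_e$-family on $\Bb$ prescribed above, and the multiplicative coherence forced by the good-lift reconstruction procedure then matches automatically on the rest of $\Zz(d)$. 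The main obstacle is precisely this last transitivity statement, which amounts to a careful dimension count matching the parameter space of $(\mu_e)_{e \in \Bb}$ compatible with the good-lift relations against the image of $\subzG(n) \times \overk^*$ in the group of pointwise rescalings of excellent lifts; once verified, the desired rigidified abelian variety is produced and the theorem follows.
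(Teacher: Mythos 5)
Your outline has the correct ingredients and the overall strategy matches the paper's: use Proposition \ref{prop:cond} (filtered through Proposition \ref{sec2:lemmalambdai}) to produce an $f$-compatible $(A,\pol,\Thetapol)$ whose excellent lift $\tildeG'$ agrees with the given good lift $\tildeG$ up to scalars $\lambda_e$ constrained to be roots of unity of controlled order by Lemma \ref{lem:goodliftdiff}, then replace $\Thetapol$ by $\Thetapol\circ\gamma$ for suitable $\gamma\in\subzG(n)$ to absorb those scalars. The problem is that you name the decisive step — realizing the prescribed family $(\lambda_e)_{e\in\Bb}$ by the action of $G'_2$ and a global scaling — and then defer it to ``a careful dimension count ... once verified.'' You have not verified it, and that verification \emph{is} the proof. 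The paper does not argue by counting parameters. It constructs the symmetric matrix $C=(c_{ij})$ explicitly: $c_{kk}$ is chosen so that $\mu^{c_{kk}}=\lambda_{e_{kk}}^{-2}$, which lies in $m\Z/n\Z$ precisely because Lemma \ref{lem:goodliftdiff} forces $\lambda_{e_{kk}}$ to be a $d$\textsuperscript{th} (resp.\ $2d$\textsuperscript{th}) root of unity; then $c_{ij}=c_{ji}$ is fixed from $\lambda_{e_{ij}}$ and $c_{ii},c_{jj}$; and the sign convention of Equation (\ref{eq:choicesign}) in Proposition \ref{prop:trans}(2) is chosen so that $\gamma_C(e)(e)^{-1/2}=\lambda_e$ holds on all of $\Bb$. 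That this works, i.e.\ that the constraint ``$C$ symmetric and $\equiv 0\pmod m$'' is exactly dual to the constraint on the $\lambda_e$'s from being good lifts, is what needs checking — a coarse parameter-count over finite groups of roots of unity would not by itself rule out a $\pm1$ mismatch, and the even case (the whole point of this paper) is precisely where such parity issues must be handled explicitly.

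A smaller inaccuracy: you invoke ``$G_0$-twists'' to absorb the $\pm1$ ambiguity when $\ell_e$ is even. In the paper's construction $G_0$ is not used; the sign ambiguity is already resolved inside $G'_2$ by the choice of square root, which is coherent on $\Bb$ by Equation (\ref{eq:choicesign}). Since $C\equiv 0\pmod m$ implies $\gamma_C(\mu_{m,n}(j))$ vanishes as a character, the only nontrivial factor introduced by $\gamma\in G'_2$ on $\theta^\Thetapol_{\mu_{m,n}(j)+e}$ is $\gamma_C(e)(e)^{-1/2}$, independent of $j$, which is exactly what is needed to rescale the whole affine point $\tildeg'_f(e)$ uniformly.
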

\begin{proof}
    Let $\Bb_0 = \{ e_k, \hate_{k} \}$ be the canonical basis of $K(n)=Z(n) \times \dZn$
    and denote by $\mu$ a primitive $n^{th}$ root of unity such that for $k=1, \ldots, g$,
    $\hate_k(e_k)=\mu$.
    Denote by $\Zz(d)$ a set of representatives of classes of $Z(n)/Z(m)$ containing $e_k$
    for $k=1, \ldots, g$. Let $\Bb=(e_i, e_i
    + e_j)$ be a chain basis of $\Zz(d)$. For $i=1, \ldots, g$, we let $e_{ii}=e_i$  and for
    $i,j=1, \ldots, j$, $i\ne j$, we set $e_{ij}=e_i + e_j$.
    Using Proposition \ref{sec2:lemmalambdai}, we can suppose that for all $e_{ij} \in \Bb$,
    there exists $\lambda_{e_{ij}} \in \overk^*$ such that
    \begin{equation}
	\tildeg_f(e_{ij})= \lambda_{e_{ij}} *
	(\theta^\Thetapol_{\mu_{m,n}(\kappa)+e_{ij}}(0_\Thetapol))_{ \kappa\in Z(m)}.
    \end{equation}
    By Proposition \ref{prop:excelgood}, we know that for all $e_{ij} \in \Bb$,
    $(\theta^\Thetapol_{\mu_{m,n}(j)+e_{ij}}(0_\Thetapol))_{ j\in Z(m)}$ is a good lift of
    $\tildeg_f(e)$ with respect to $\tildenullbpol$. Using Lemma
    \ref{lem:goodliftdiff}, it means
    that $\lambda_{e_{ii}}$ is a $\ell^{th}$-root of unity with $\ell=d$ if $d$ is odd and
    $\ell=2d$ is $d$ is even and if $i \ne j$, $\lambda_{e_{ij}}$ is a $d^{th}$-root of
    unity. 

    We consider a morphism $\gamma_C: \Zn \rightarrow \dZn$ given in the basis $\{e_1,
    \ldots, e_g \}$ and $\{ \hate_1, \ldots, \hate_g \}$ by a matrix $C=(c_{ij})_{i,j=1,\ldots,g}$,
    which coefficients are going to be defined later on, so that
    for all $k =1, \ldots, g$:
    \begin{equation}
	\gamma_C(e_k) = \sum_{i=1}^g c_{ki} \hate_i.
    \end{equation}
    For all $k=1, \ldots, g$, $\gamma_C(e_k)(e_k)= \mu^{c_{kk}}$ and we choose $c_{kk} \in
    \Z / n \Z$ so
    that $\mu^{c_{kk}}=\lambda^{-2}_{e_{kk}}$. As $\lambda_{e_{kk}}$ is a
    $\ell^{th}$-root of unity, we have that $c_{kk} \in m \Z / n \Z$, and moreover
    $\gamma_C(e_k)(e_k)^{-1/2}=\lambda_{e_{kk}}$: when $d$ is odd the square root is
    unique and when $d$ is even we choose the sign of the square root to obtain the
    equality which is allowed by Proposition \ref{prop:trans} (2).

    Next, we have $\gamma_C(e_i + e_j)(e_i + e_j)=\mu^{c_{ii} + c_{ij} + c_{ji} + c_{jj}}$
    and we choose $c_{ij}=c_{ji} \in \Z / n \Z$ so that $\mu^{2 c_{ij}}=\lambda_{e_{ij}}^{-2} \mu^{-
	(c_{ii} + c_{jj})}$. Thus, $\mu^{2 c_{ij}}=
	\left(\frac{\lambda_{e_{ij}}}{\lambda_{e_{ii}} \lambda_{e_{jj}}}\right)^{-2}$. By
	Lemma \ref{lem:goodliftdiff}, we know that
	$\frac{\lambda_{e_{ij}}}{\lambda_{e_{ii}} \lambda_{e_{jj}}}$ is a $d^{th}$-root of
	    unity. It means that $\mu^{c_{ij}}$ is a $d^{th}$-root of unity: if $d$ is
	    odd, the square and square root of a $d^{th}$-root of unity is a $d^{th}$-root
	    of unity, and if $d$ is even it is because $\mu^{2 c_{ij}}$ is a
	    $(d/2)^{th}$-root of unity. Thus we have
	    $c_{ij} \in m \Z / n \Z$. Moreover, $\gamma_C(e_i + e_j)(e_i +
	    e_j)^{-1/2}= \lambda_{e_{ij}}$ where in the case that $\ell$ is even we have
	    chosen the sign of $(\mu^{2 c_{ij}})^{-1/2}$ so that $\gamma_C(e_i + e_j)(e_i
	    + e_j)^{-1/2}=\gamma_C(e_i)(e_i)^{-1/2} \gamma_C(e_j)(e_j)^{-1/2}
	    \gamma_C(e_i)(e_j)= \mu^{-1/2(c_{ii} + 2 c_{ij} + c_{jj})}$, thus satisfying
	    condition (\ref{eq:choicesign}) of Proposition \ref{prop:trans}.

    Now, as $C$ is a symmetric matrix, $C \in M_n(\Z / n \Z)$ and $C = 0 \mod m$,
    applying Proposition \ref{prop:action2}, we can choose $\gamma \in
    \Psi^{-1}(\psi(S_g(C)))$ (with $\psi(M)$ the element of
    $\Sp(K(n))$ whose matrix is $M$ in the basis $\mB_0$) such that:
    \begin{equation}
	\tildeg_f(e_{ij})= 
	(\theta^{\Thetapol\circ \gamma}_{\mu_{m,n}(\kappa)+e_{ij}}(0_\Thetapol))_{ \kappa\in Z(m)}.
    \end{equation}
We get the conclusion by changing $\Thetapol$ by $\Thetapol \circ \gamma$.
\end{proof}

\begin{corollary}
Suppose that we are given $(B, \bpol, \Thetabpol, \theta_0^\Thetabpol, \rho^\bpol_{0_{\Thetabpol}})$ a rigidified
    abelian varieties of type $K(m)$ up to equivalence by its affine theta null point
    $\tildenullbpol$.
 Let $G \subset B[n]$ be a subgroup of $B[n]$ isomorphic to $Z(n)$
    containing $\Thetabpolbar(Z(m)\times\{0\})$ and isotropic for $e_{B,n}$. There exists an
    algorithm whose running time is $O(n^g \log(d))$ which outputs:
    \begin{itemize}
	\item $\widetilde{0}_{\Thetapbpol}$ the theta null point of $(B, \bpol, \Thetapbpol,
	    \theta_0^\Thetabpol, \rho^\bpol_{0_{\Thetapbpol}})$ a rigidified abelian variety
	    such that for all $x \in G(\overk)$, $x$ is symmetric compatible with
	    $\Thetapbpol(\{1\}\times Z(m) \times \{0\})$;
	\item $\tildenullpol$ the theta null point of $(A, \pol, \Thetapol, \theta_0^\Thetapol, \rho^\pol_{0_\Thetapol})$ which is
isog-$f$-compatible.
    \end{itemize}
\end{corollary}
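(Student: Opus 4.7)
The plan is to combine the previous results into a three-phase algorithm and then verify that the running time adds up correctly.

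First I would apply Algorithm \ref{algo:changesym1}, justified by Proposition \ref{prop:changesym1}, to the input $\tildenullbpol$ and to a basis $(e_i)_{i=1,\ldots,g}$ of $G/\Thetabpolbar(Z(m)\times\{0\})$. This produces $\widetilde{0}_{\Thetapbpol}$ and the corresponding coordinates of the basis, such that for every $x \in G(\overk)$, $x$ is symmetric compatible with $\Thetapbpol(\{1\}\times Z(m) \times \{0\})$: additivity of symmetric compatibility from Proposition \ref{sec2:propgroup} together with Corollary \ref{cor:symbasis} reduces the check to the basis elements, and the cost of this phase is $O(m^g)$ operations in $k$.

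The second phase is to build a good lift of $G$ (in the sense of Definition \ref{def:good}) with respect to $\widetilde{0}_{\Thetapbpol}$. Pick a chain basis $\Bb = (e_i, e_i+e_j)$ of a set of representatives $\Zz(d)$ of $Z(n)/Z(m)$. For each $e \in \Bb$, fix any affine lift $\tildeg_f(e)$; then compute the scalar $\lambda_e$ by applying Lemma \ref{lem:scalar} to equations (\ref{eq:affinelift1}) and (\ref{eq:affinelift2}), each of which consists of a $\scalmult$ of length $\ell \leq n$ together with $\inv$ and $\thetaact$; symmetric compatibility ensures the two relations are simultaneously solvable. This costs $O(g^2 \cdot m^g \log n)$. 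Then by Proposition \ref{sec2:proprecons}, applying $\scalmult$, $\threeway$, $\inv$ and $\thetaact$ according to the algorithm of \cite{DRisogenies}[Theorem 4.4] reconstructs $\tildeg_f(e)$ for all $e \in \Zz(d)$ deterministically in $O(n^g \log d)$ operations; finally set $\tildeg_f(e_m + e_d) = \Thetapbpol((1,e_m,0))\tildeg_f(e_d)$ for $e_m \in Z(m)$, $e_d \in \Zz(d)$, at cost $O(n^g)$ using $\thetaact$.

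The third phase extracts both theta null points. By Theorem \ref{th:main1}, there exists a rigidified $(A,\pol,\Thetapol,\theta_0^\Thetapol,\rho^\pol_{0_\Thetapol})$, which is $f$-compatible with $(B,\bpol,\Thetapbpol,\theta_0^\Thetabpol,\rho^\bpol_{0_{\Thetapbpol}})$, such that the good lift $\tildeG$ satisfies (\ref{sec2:lemmaeq2main1}). Concretely, for every $i \in Z(n)$ written uniquely as $i = \mu_{m,n}(j) + e$ with $j \in Z(m)$ and $e \in \Zz(d)$, we have
\begin{equation}
\theta_i^\Thetapol(0_\Thetapol) = (\tildeg_f(e))_j,
\end{equation}
so $\tildenullpol = (\theta_i^\Thetapol(0_\Thetapol))_{i \in Z(n)}$ is obtained by reindexing the $n^g$ scalars assembled in the good lift at cost $O(n^g)$.

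The hard part is the second phase, in particular the fact that the naive reconstruction inside Proposition \ref{sec2:proprecons} is canonical only when the $\tildeg_f(e)$ for $e \in \Bb$ are already good lifts — otherwise different orderings of $\scalmult$/$\threeway$ could disagree. The existence and uniqueness guaranteed by Theorem \ref{th:main1} (together with the prior change of theta structure in phase one) is precisely what makes the construction unambiguous. Summing the three phases, the total complexity is dominated by $O(n^g \log n) = O(n^g \log n^g)$ operations in the base field, as claimed.
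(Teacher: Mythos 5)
Your proof is correct and takes essentially the same route as the paper, whose own proof is simply a one-line citation of Proposition \ref{prop:changesym1} and Theorem \ref{th:main1}. You have unpacked that "immediate consequence" into the explicit three-phase algorithm (change of theta structure via Algorithm \ref{algo:changesym1}, construction of a good lift of $G$ in the sense of Definition \ref{def:good}, extraction of $\tildenullpol$ via the excellence guaranteed by Theorem \ref{th:main1}), identified the crucial subtlety that Proposition \ref{sec2:proprecons}'s determinism is only a priori valid for excellent lifts — which is precisely what Theorem \ref{th:main1} upgrades good lifts to — and verified that the costs add up to $O(n^g \log n^g)$; this is a faithful and useful elaboration, not a different argument.
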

\begin{proof}
    This is an immediate consequence of Proposition \ref{prop:changesym1} and Theorem \ref{th:main1}.
\end{proof}

The preceding Theorem gives an algorithm to compute a theta null point of type $K(n)$
of an abelian variety which is isog-$f$-compatible with $(B, \bpol, \Thetabpol, \theta_0^\Thetabpol, \rho^\bpol_{0_{\Thetabpol}})$ a rigidified
    abelian variety of type $K(m)$. Thus, it gives us a way to compute a point
    of the fiber of the map $\pi^0_{n,m}: \Modu_n \rightarrow \Modu_m$. The Theorem tells
    that the choices in the
    roots of unity that appear in the algorithm corresponds to a choice of a point in
    the fiber.

    Now, our goal is to complete the picture by computing the fiber of the whole map
    $\pi_{n,m}: \univAb \rightarrow \univAb$.
     we would like to be able to compute the isogeny
    $f: B\rightarrow A$. For that, since $\tildenullbpol$ and
    $\tildenullpol$ are known,
    we are going to explain how to compute a point in the fiber $f^{-1}(z_0)$ for $z_0 \in B(\overk)$.
    Here again, there will be choices of roots of unity in the course of the algorithm
    and we will explain how theses choices correspond to a choice of a point in
    $\hatf^{-1}(z)$.

We suppose given $\tildenullbpol$ an affine theta null point of 
$(B, \bpol, \Thetabpol, \theta_0^\Thetabpol, \rho^\bpol_{0_{\Thetabpol}})$, $G \subset B[n]$
such that $G(\overk) \subset \Thetabpolbar(Z(m) \times\{0\})$, $G$ isotropic for
$e_{B,n}$ and symmetric compatible with $\tildenullbpol$. We suppose that we have computed
$\tildeG$ an excellent lift of $G$. 

\begin{definition}\label{def:excelz}
    We suppose that $(B, \bpol, \Thetabpol, \theta_0^\Thetabpol,
    \rho^\bpol_{0_{\Thetabpol}})$
    and $(A, \pol, \Thetapol,
    \theta_0^\Thetapol, \rho^\pol_{0_\Thetapol})$ are isog-$f$-compatible rigidified
    abelian varieties with respective affine theta null points $\tildenullbpol$ and
    $\tildenullpol$. Let $z \in A(\overk)$ and $z_0 =f(z)$. 
We choose a rigidification
$\rho^\pol_{z}: \pol(z) \rightarrow \stsheaf_A(z)$ and for all $s \in \Gamma(A,\pol)$, we
denote by $s(z_0)$ the evaluation of $s$ in $z_0$. In particular, there is an affine lift
    $\tildez^{\Thetapol}$ of $z$.
    Consider $g_f: \Zn
    \rightarrow G(0_\Thetapol)$ a group morphism such
    that there exists
    $\tildeG(0_\Thetapol)=\{ \tildeg_f(i), i \in \Zn \}$ an excellent
    lift of $G(0_\Thetapol)$ (see Definition \ref{def:excel}). This group morphism exists and
    is unique by Proposition \ref{sec3:prop9}.
    Consider the map $g^z_f: \Zn \rightarrow G(z)$, $i \mapsto z_0 + g_f(i)$. 
    We say that $\tildeG(z)= \{ \tildeg^z_f(i), i \in \Zn \}$ is an excellent lift of
    $G(z)$ if there exists $\lambda \in \overk$ (independent of $i$) such that:
    \begin{equation}\label{sec2:excelz}
	\tildeg^z_f(i)= \tildef(\Thetapol((1,i,0)) \tildez),
%	\lambda *(\theta^\Thetapol_{\mu_{m,n}(j)+i}(z))_{ j\in Z(m)}.
    \end{equation}
    where $\tildef$ is given in Definition \ref{def:affinemap}.
\end{definition}
Note that the choice of $\rho^\pol_z$ in the preceding Definition will only affect the
global factor $\lambda$ in (\ref{sec2:excelz}).
We have seen in Lemma \ref{sec1:lemact} that given a rigidification $\rho^\pol_z$ 
and $g' \in \Gpol$, we can
deduce a rigidification $\rho^\pol_{z'}$ for any $z' = z + \pigpol(g')$. Thus, we can define
a excellent lift $\tildeG(z')$ of $G(z)$ for any $z' \in z + \Kpol$. The following Lemma
explains that $\tildeG(z')$ does not depend on the choice of $g'$ such that $\pigpol(g') = z'-z$ and
gives a way to compute $\tildeG(z')$ from the knowledge of $\tildeG(z)$.

\begin{lemma}\label{lem:excelchange}
    Let $z \in A(\overk)$, choose $\rho_z^\pol$ a rigidification of $\pol$ in $z$ and let 
$\tildeG(z)= \{
    \tildeg^z_f(i), i \in \Zn \}$ be an excellent lift of $G(z)$. Let $g' \in \Gpol$,
let $(\lambda', \alpha, \beta) \in G(n)$ such that $g' = \Thetapol((\lambda', \alpha,
    \beta))$. Let $z'=z - \pigpol(g')$ and let $\rho_{z'}^\pol = g'(\rho_z^\pol)$ be a
    rigidification. If $s \in \Gamma(A,\pol)$, we let $s(z')= \rho_{z'}^\pol(s)$.
    Finally, let $\tildeG(z')= \{ \tildeg^{z'}_f(i), i \in \Zn
    \}$ be an excellent lift of $G(z')$. There exists $\lambda \in
    \overk$ such that:
    \begin{equation}
	\tildeg^{z'}_f(i) =  \lambda \beta(- i) \tildeg_f^{z}(i+\alpha).
    \end{equation}
\end{lemma}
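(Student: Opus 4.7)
The plan is a direct computation, unfolding the definition of excellent lift, expressing $\tildez'^\Thetapol$ in terms of $\tildez^\Thetapol$ via the action of $g'$, and then using the group law of the Heisenberg group to bring out the claimed scalar factor.

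First I unpack Definition~\ref{def:excelz}: there exist global scalars $\mu_z, \mu_{z'} \in \overk^*$ (coming from the chosen rigidifications) such that for every $i \in Z(n)$,
\[
\tildeg^z_f(i) = \mu_z \, \tildef\bigl(\Thetapol((1,i,0)) \cdot \tildez^{\Thetapol}\bigr), \qquad \tildeg^{z'}_f(i) = \mu_{z'} \, \tildef\bigl(\Thetapol((1,i,0)) \cdot \tildez'^{\Thetapol}\bigr),
\]
where $\tildez^{\Thetapol}$ and $\tildez'^{\Thetapol}$ are the affine lifts of $z$ and $z'$ on $A$ associated to $\rho_z^\pol$ and $\rho_{z'}^\pol = g'(\rho_z^\pol)$. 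The task is thus reduced to comparing the two right-hand sides.

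Next I relate $\tildez'^{\Thetapol}$ to $\tildez^{\Thetapol}$. By Lemma~\ref{sec1:lemact}, the relation $\rho_{z'}^\pol = g'(\rho_z^\pol)$ translates into $\rho_{z'}^\pol(s) = \rho_z^\pol(g' \cdot s)$ for every $s \in \Gamma(A, \pol)$. Evaluating on the canonical basis $\theta_k^\Thetapol$ of $\Gamma(A, \pol)$ and applying formula~(\ref{eq:trans}) for the action of $G(n)$ on that basis, namely $(\lambda', \alpha, \beta) \cdot \theta_k^\Thetapol = \lambda' \beta(-k-\alpha) \theta_{k+\alpha}^\Thetapol$, one obtains the coordinate-wise identity
\[
(\tildez'^{\Thetapol})_k = \lambda' \beta(-k-\alpha) (\tildez^{\Thetapol})_{k+\alpha}, \qquad k \in Z(n).
\]

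Then I apply $\Thetapol((1,i,0))$, which shifts coordinates by $i$, and then $\tildef$, which by Definition~\ref{def:affinemap} extracts the coordinates indexed by $\mu_{m,n}(Z(m))$. Writing out both expressions and using the Heisenberg group law identity
\[
(1,i,0) \cdot (\lambda',\alpha,\beta) = (\lambda'\beta(i), 0, \beta) \cdot (1, i+\alpha, 0),
\]
the $j$-th coordinate of $\tildef(\Thetapol((1,i,0)) \cdot \tildez'^{\Thetapol})$ matches the $j$-th coordinate of $\tildef(\Thetapol((1, i+\alpha, 0)) \cdot \tildez^{\Thetapol})$ up to the scalar $\lambda' \beta(-i-\alpha) = \beta(-\alpha) \cdot \lambda' \beta(-i)$, together with the additional contribution coming from the action of $(1, 0, \beta)$ on $\tildef(\Thetapol((1, i+\alpha, 0)) \cdot \tildez^{\Thetapol})$.

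Finally, grouping the $i$-independent scalars $\mu_{z'}/\mu_z$, $\lambda'$ and $\beta(-\alpha)$ into a single constant $\lambda \in \overk$ gives the claimed equality $\tildeg^{z'}_f(i) = \lambda \beta(-i) \tildeg^z_f(i+\alpha)$. The main subtlety, and the step requiring care, is the bookkeeping of how the central factor $\beta(-k-\alpha)$ of step two splits into the $i$-dependent piece $\beta(-i)$ and an $i$-independent piece under the descent $\tildef$ prescribed by the $f$-compatibility conditions of Definition~\ref{sec1:def1}; the remaining contributions from the character $\beta$ along $\mu_{m,n}(Z(m))$ fit into the global scalar $\lambda$ because $\Thetapol$ and $\Thetabpol$ are compatible under $\fsharp$.
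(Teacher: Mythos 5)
Your proposal follows the same overall line as the paper's own proof: rewrite $\theta^\Thetapol_k(z')=\lambda'\beta(-k-\alpha)\theta^\Thetapol_{k+\alpha}(z)$ via Lemma~\ref{sec1:lemact} and Equation~(\ref{eq:trans}), then substitute $k=\mu_{m,n}(j)+i$ and compare the resulting coordinate vector with $\tildeg^z_f(i+\alpha)$. The intermediate Heisenberg-group-law rewriting you insert is a detour — the paper simply substitutes directly — but it is harmless and arrives at the same coordinate-wise comparison.

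However, the last step of your argument contains a genuine gap, and you have in fact put your finger on exactly where it is without closing it. After substituting $k=\mu_{m,n}(j)+i$, the central factor is
\[
\beta(-k-\alpha)=\beta(-i)\,\beta(-\alpha)\,\beta(-\mu_{m,n}(j)),
\]
and you assert that the last piece, which depends on $j\in Z(m)$, ``fits into the global scalar $\lambda$ because $\Thetapol$ and $\Thetabpol$ are compatible under $\fsharp$.'' This is not correct as stated: $\lambda$ must be independent of $j$, and $j\mapsto\beta(-\mu_{m,n}(j))$ is a nontrivial character of $Z(m)$ whenever $\beta$ does not restrict trivially to $\mu_{m,n}(Z(m))$. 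The $\fsharp$-compatibility of the two theta structures places no such restriction on $\beta$, so the claimed absorption into $\lambda$ fails for general $g'\in G(\pol)$. Concretely, what is happening is that the two sides of the claimed identity are then affine lifts of \emph{different} projective points: $g^{z'}_f(i)=g^z_f(i+\alpha)+\Thetabpolbar((0,\drho_{n,m}(\beta)))$, which differs from $g^z_f(i+\alpha)$ exactly when $\drho_{n,m}(\beta)\neq 0$, i.e.\ when $\beta$ is nontrivial on $\mu_{m,n}(Z(m))$. The clean way out — which the paper's own three-line proof also silently elides — is to observe that the lemma is only invoked for $g'$ with $\beta\in\dnu_{d,n}(\dZ(d))$, equivalently $\beta\vert_{\mu_{m,n}(Z(m))}=1$ (in the theorem that follows, $\alpha=0$ and $\beta$ is prescribed by its values $\zeta_i$ on a basis of $\Zz(d)$, with the $\zeta_i$ being $d$-th roots of unity, so one may and should take $\beta$ to factor through $Z(d)=Z(n)/\mu_{m,n}(Z(m))$). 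Under that hypothesis the offending factor $\beta(-\mu_{m,n}(j))$ is identically $1$ and the rest of your computation goes through. You should make this hypothesis explicit rather than appeal to $\fsharp$-compatibility, which does not supply it.
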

\begin{proof}
    For all $i \in \Zn$, we have, by definition, $\theta^\Thetapol_i(z')= ((\lambda',
    \alpha, \beta). \theta^\Thetapol_i)(z)=\lambda' \beta(-\alpha-i) \theta^\Thetapol_{i+\alpha}$
    (the second equality comes from (\ref{eq:trans})). Let $\lambda_{z'} \in \overk$ be such
    that for all $i \in \Zn$, $\tildeg^{z'}_f(i)= \lambda_{z'}
    *(\theta^\Thetapol_{\mu_{m,n}(j)+i}(z'))_{ j\in Z(m)}$.
    We have, for $i \in \Zn$:
    \begin{equation}
	\tildeg^{z'}_f(i)= \lambda_{z'} *(\theta^\Thetapol_{\mu_{m,n}(j)+i}(z'))_{ j\in Z(m)}
=\lambda_{z'} \lambda' \beta(-\alpha-i)
	*(\theta^\Thetapol_{\mu_{m,n}(j)+i+\alpha}(z))_{ j\in Z(m)}=\lambda \beta(- i)
	\tildeg_f^{z}(i+\alpha),
    \end{equation}
    where $\lambda=\lambda_{z'} \lambda' \beta(-\alpha)$.
\end{proof}

For $e \in \Zn$, let $\ell_e$ be the smallest integer such that $\ell_e g_f(e) \in
K(\bpol)$. As we have done before, we can leverage trivial relations with projective
points to obtain non trivial conditions for affine lifts. For instance, the relation
$\ell_e g_f(e) + z_0 = (1, \ell_e e,0).z_0$, taking into account that $g_f^z(e)=g_f(e) +
z_0$, can be rewritten with operations on affine points as:
\begin{equation}\label{eq:affineliftz}
    \scalmult(\ell_e, \lambda_e * \tildeg^z_f(e), \tildeg_f(e), \tildez_0,
    \tildenullbpol) = (1, \ell_e e, 0). \tildez_0.
\end{equation}
Using Lemma \ref{lem:scalar}, we obtain an expression for $\lambda_e^{\ell_e}$ so that $\lambda_e$ is known
up to a $\ell_e^{th}$-root of unity.

This motivates the Definition of a good lift of $G(z)$.
\begin{definition}\label{def:goodz}
    We keep the hypothesis of Definition \ref{def:excelz}. In particular, we have
    defined a map $g_f^z: \Zn \rightarrow G(z)$.
    Let $\Zz(d)$ be a set of representatives of classes of $Z(n)/Z(m)$ and
    $\Bb=(e_i)_{i=1, \ldots, g}$ be a basis of $\Zz(d)$.
We say that:
    \begin{itemize}
	\item 
	    for $e \in \Zn$,
	    $\lambda_e * \tildeg^z_f(e)$ is a good lift with respect to
	    $\tildenullbpol$ and $\tildenullpol$ of $g^z_f(e)$ if
    $\lambda_e$ verifies relation (\ref{eq:affineliftz});
    \item 
    $\{\lambda_e * \tildeg^z_f(e), e \in \Zz(d) \}$ is a good lift of $\{ g^z_f(e), e \in \Zz(d)
	    \}$ with respect to $\tildenullbpol$ and $\tildenullpol$ if
	    for all $e \in \Bb$, $\lambda_e * \tildeg^z_f(e)$ is a good lift of $g^z_f(e)$ and
	    all the $\tildeg^z_f(e)$, for $e\in \Zz(d)$ are computed from  $\{\lambda_e *
	    \tildeg^z_f(e), e \in \Bb\}$
    with the algorithm described in the proof of Proposition \ref{sec2:proprecons};
\item $\tildeG(z)= \{ \tildeg^z_f(e), e \in \Zn \}$ is a good lift of $G(z)$ with respect to
    $\tildenullbpol$ and $\tildenullpol$ if $\{\lambda_e * \tildeg^z_f(e), e
	    \in \Zz(d) \}$ is a good lift of $\{ g^z_f(e), e \in \Zz(d) \}$ and if for all $e \in
	    \Zn$, if we write $e=e_d + e_m$ with $e_d \in \Zz(d)$ and $e_m \in Z(m)$, we
	    have $\tildeg^z_f(e)=\Thetabpol((1, e_m, 0))\tildeg^z_f(e_d)$.
    \end{itemize}
\end{definition}

From the Definition \ref{def:good} and \ref{def:goodz}, we immediately deduce Algorithm \ref{algo:goodlift} to compute a good lift
of $G \subset B[n]$ and Algorithm \ref{algo:goodliftg} to compute a good lift of $x + G$
for $x \in B(\overk)$.

\let\oldnl\nl
\newcommand{\nonl}{\renewcommand{\nl}{\let\nl\oldnl}}
\begin{algorithm}
\SetKwInOut{Input}{input}\SetKwInOut{Output}{output}

\SetKwComment{Comment}{/* }{ */}
\Input{
    \begin{itemize}
	\item the marked abelian variety $(B, \bpol, \Thetabpol)$ of type $K(m)$ given by its theta null
	    point $0_{\Thetabpol}$;
	\item $G \subset B[n]$ a subgroup isomorphic to $Z(n)$ such that:
	    \begin{itemize}
		\item $G$ is isotropic for $e_{B,n}$;
		\item $\Thetabpolbar(Z(m) \times \{0\}) \subset G$;
		\item for all $P \in G$, $P$ is symmetric compatible with
		    $\Thetabpol(\{1\}\times Z(m)\times\{0\})$.
	    \end{itemize}
	\item $\Bb=(e_i, e_i + e_j)$ a chain basis of $\Zz(d)$.
    \end{itemize}
}
\Output{
    \begin{itemize}
	\item $\tGBb=(\tgfei, \tgfeij)$ a good lift of $(\gfei, \gfeij)$ with respect to $\tildenullbpol$.
    \end{itemize}
}
\BlankLine
\For{$i \in \{1, \ldots, g\}$}
{
    Let $\ell_{e_i} = \min \{ \ell \in \N^* | \ell \gfei \in K(\bpol) \}$\;
    Let $e \in K(m)$ be such that $\ell_{e_i} \gfei = \Thetabpolbar(e)$\;
Fix $\tgfei$ arbitrary affine lift of $\gfei$\;
Compute $\lambda_{e_i}$ such that:\nonl \\
    \begin{minipage}{13cm}
    \begin{itemize}
	\item $\scalmult(\ell_{e_i}, \lambda_{e_i}* \tgfei, \lambda_{e_i} * \tgfei, \tildenullbpol,
\tildenullbpol) = (1, e). \tildenullbpol$;
\item $\scalmult(\ell_{e_i}-1, \lambda_{e_i} * \tgfei, \lambda_{e_i} * \tgfei,
	    \tildenullbpol,
	    \tildenullbpol) = (1, e). \inv(\lambda_{e_i} * \tgfei)$;
    \end{itemize}
    \end{minipage} \\
    \, \\
}
%Compute $\lambda_x$ such that:
%    $\scalmult(\ell_P, \lambda_x * \widetilde{x+P}, \lambda_P*\tildeP, \tildex,
%    \tildenullbpol) = (1, e). \tildex$.\\

\For{$i,j \in \{1, \ldots, g\}$, $i > j$}
{
    Let $\ell_{e_i + e_j} = \min \{ \ell \in \N^* | \ell \gfeij \in K(\bpol) \}$\;
    Let $e \in K(m)$ be such that $\ell_{e_i + e_j} \gfeij = \Thetabpolbar(e)$\;
Fix $\tgfeij$ arbitrary affine lift of $\gfeij$\;
Compute $\lambda_{e_i + e_j}$ such that: % \nonl \\
    % \begin{minipage}{\linewidth}
    % \begin{itemize}
% 	\item $\scalmult(\ell_{e_i + e_j}, \lambda_{e_i + e_j}* \tgfeij, \lambda_{e_i +
% 	    e_j} * \tgfeij, \tildenullbpol,
% \tildenullbpol) = (1, e). \tildenullbpol$;
% \item $\scalmult(\ell_{e_i + e_j}-1, \lambda_{e_i+ e_j} * \tgfeij, \lambda_{e_i+e_j} *
%     \tgfeij,
% 	    \tildenullbpol,
% 	    \tildenullbpol) = (1, e). \inv(\lambda_{e_i+ e_j} * \tgfeij)$;
% 	\item  
$\scalmult(\ell, \lambda_{e_i + e_j}* \tgfeij, \lambda_{e_i}*\tildeg_f(e_i),
    \lambda_{e_j}*\tildeg_f(e_j), \tildenullbpol)= (1, \ell e_i, 0). (\lambda_{e_j}*
    \tildeg_f(e_j))$\;
    % \end{itemize}
    % \end{minipage} \\
    % \, \\

}
\Return $\tGBb=(\lambda_{e_i} * \tgfei, \lambda_{e_i + e_j} * \tgfeij)$.
\caption{Algorithm to compute a good lift of a chain basis of $G \subset B[n]$.}
  \label{algo:goodlift}
\end{algorithm}

\begin{algorithm}
\SetKwInOut{Input}{input}\SetKwInOut{Output}{output}

\SetKwComment{Comment}{/* }{ */}
\Input{
    \begin{itemize}
	\item the marked abelian variety $(B, \bpol, \Thetabpol)$ of type $K(m)$ given by its theta null
	    point $0_{\Thetabpol}$;
	\item $G \subset B[n]$ a subgroup isomorphic to $Z(n)$ such that:
	    \begin{itemize}
		\item $G$ is isotropic for $e_{B,n}$;
		\item $\Thetabpolbar(Z(m) \times \{0\}) \subset G$;
		\item for all $P \in G$, $P$ is symmetric compatible with
		    $\Thetabpol(\{1\}\times Z(m)\times\{0\})$.
	    \end{itemize}
	\item $z \in B(\overk)$.
    \end{itemize}
}
\Output{
    \begin{itemize}
    \item $\tildeG$ a good lift of $G$ with respect to $\tildenullbpol$;
	\item $\widetilde{z + G}$ a good lift of $z+G$ with respect to $\tildeG$.
    \end{itemize}
}
\BlankLine
    Fix a basis $\Bb_0=(e_1, \ldots, e_g)$ of $\Zz(d)$, compute $\Bb=(e_i, e_i + e_j)$ a chain
    basis of $\Zz(d)$\;
    Compute $\tGBb = (\tgfei, \tgfeij)$ using Algorithm \ref{algo:goodlift}\;
Fix $\tildez$ an arbitrary affine lift of $z$\;
    \For{$i \in \{1, \ldots, g\}$}
	{
    Let $\ell_{e_i} = \min \{ \ell \in \N^* | \ell \gfei \in K(\bpol) \}$\;
    Let $e \in K(m)$ be such that $\ell_{e_i} \gfei = \Thetabpolbar(e)$\;
Fix $\tgzfei$ an arbitrary affine lift of $\gzfei$\;
Compute $\lambdazei$ such that: $\scalmult(\ell_{e_i}, \lambdazei * \tgzfei,
\tgfei, \tildez, \tildenullbpol) = (1, e, 0). \tildez$\;
\For{$j \in \{1, \ldots, g\}$, $i > j$}
	{
Compute $\tgzfeij$ as: $\tgzfeij = \threeway(\lambdazei * \tgzfei,
\tgfeij, \lambdaz_{e_j} * \tgzfej, \tildez, \tgfei, \tgfej, \tildenullbpol)$\;
    }
}
From $\tGBb$ and $(\lambdazei * \tgzfei, \tgzfeij)_{i,j=1, \ldots, g}$,
compute $\tildeG$ and $\widetilde{z+G}$ using $\scalmult$, $\threeway$ and
the action of $\Thetabpol$ on affine points\;

    \Return $\tildeG, \widetilde{z+G}$.
    \caption{Algorithm to compute a good lift of $G$ and $z+G$ for $G \subset B[n]$.}
  \label{algo:goodliftg}
\end{algorithm}

As before, we show that excellent lifts are good lifts:
\begin{lemma}\label{lem:goodexcel2}
    Keeping the hypothesis of Definition \ref{def:excelz}, if $\tildeG(z)$ is an excellent lift of
    $G(z)$ then it is also a good lift.
\end{lemma}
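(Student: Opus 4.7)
The plan is to mirror the strategy of Proposition \ref{prop:excelgood}, pushing the verification from the rigidified abelian variety $(A, \pol, \Thetapol)$ down to $(B, \bpol, \Thetabpol)$ through the map $\tildef$ of Definition \ref{def:affinemap}. By Definition \ref{def:excelz}, if $\tildeG(z)$ is an excellent lift then there exists $\lambda \in \overk^*$ such that for every $i \in Z(n)$
\begin{equation*}
	\tildeg^z_f(i) = \lambda * \tildef\bigl(\Thetapol((1,i,0)).\tildez\bigr),
\end{equation*}
and in particular $\tildeg_f(i) = \tildef(\Thetapol((1,i,0)).\tildenullpol)$ is the corresponding excellent lift of $g_f(i)$. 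So the task reduces to showing that these specific affine lifts satisfy the defining relations of a good lift (Definition \ref{def:goodz}).

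First I would verify relation (\ref{eq:affineliftz}) for every $e \in \Bb$ a basis of $\Zz(d)$. The key input is Lemma \ref{lem:tildefscalmult}, which tells us that $\scalmult$ commutes with $\tildef$. Thus
\begin{equation*}
    \scalmult(\ell_e, \tildeg^z_f(e), \tildeg_f(e), \tildez_0, \tildenullbpol) = \tildef\bigl(\scalmult(\ell_e, \Thetapol((1,e,0)).\tildez, \Thetapol((1,e,0)).\tildenullpol, \tildez, \tildenullpol)\bigr),
\end{equation*}
and by Corollary \ref{cor:diffaddaction} applied with $g_x = \Thetapol((1,e,0))$ and $g_y = 1$, the inner scalar multiplication equals $\Thetapol((1,\ell_e e,0)).\tildez$. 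Applying $\tildef$ then yields $(1, \ell_e e, 0).\tildez_0$ by compatibility of $\tildef$ with the actions of $G(\pol)$ and $G(\bpol)$ via $\fsharp$ (encoded in point (3) of Definition \ref{sec1:def1}). This establishes the scalar-multiplication good-lift relation. The symmetry half of relation (\ref{eq:affineliftz}) for $\tildeg_f(e)$ itself is handled identically, using $\inv$ in place of the endpoint: $\inv$ commutes with $\tildef$ because both are coordinate-indexed involutions and $\tildef$ only drops or sums coordinates indexed compatibly with $-1$.

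Next I would verify the coherence condition: that when $\tildeg^z_f(e)$ for $e \in \Zz(d)$ is reconstructed from the basis values via the algorithm in Proposition \ref{sec2:proprecons} using $\scalmult$, $\threeway$ and $\thetaact$, one gets the same affine point as the excellent lift formula. This is again immediate since each of these operations commutes with $\tildef$: for $\scalmult$ by Lemma \ref{lem:tildefscalmult}, for $\threeway$ by the same Riemann-equation argument used in Proposition \ref{prop:riemannthetagroupaction} combined with Corollary \ref{cor:diffaddaction} (which is the only obstacle worth naming, but it is already in our toolbox), and for $\thetaact$ by direct reindexing, since if $\kappa \in Z(m)$ then $\tildef \circ \Thetapol((1,\mu_{m,n}(\kappa),0)) = \Thetabpol((1,\kappa,0)) \circ \tildef$ by Corollary \ref{sec1:cor1}. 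Finally the last clause of Definition \ref{def:goodz}, namely $\tildeg^z_f(e_d + e_m) = \Thetabpol((1,e_m,0)).\tildeg^z_f(e_d)$ for $e_m \in \mu_{m,n}(Z(m))$, follows from the same compatibility of $\tildef$ with $\thetaact$ applied to the defining formula of the excellent lift.

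The main conceptual point, and the only place where care is needed, is to observe that because $\tildef$ is defined by dropping coordinates (and multiplying by a scalar $\lambda$), all the arithmetic relations on affine points on $A$ descend verbatim to $B$. Once this compatibility is established for the three operations $\scalmult$, $\threeway$ and $\thetaact$, the excellent-lift formulas of Definition \ref{def:excelz} automatically satisfy the good-lift equations of Definition \ref{def:goodz}, so $\tildeG(z)$ is a good lift of $G(z)$ with respect to $\tildenullbpol$ and $\tildenullpol$.
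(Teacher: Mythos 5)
Your proof is correct, and it reaches the same conclusion as the paper via a slightly different organization. The paper applies Theorem \ref{sec:thriemann} directly to the Riemann-positioned tuple $(z,z,0_\Thetapol,0_\Thetapol;\ldots)$ to obtain the mixed Riemann relation (\ref{eq:riemaneq2proof}) among the coordinates of the lifts $\tildeg^z_f(\cdot)$ and $\tildeg_f(\cdot)$, and then observes that this relation is exactly what underlies the $\scalmult$ and $\threeway$ operations defining a good lift (noting by homogeneity that the choice of $\rho^\pol_z$ is immaterial). You instead invoke the already-established Lemma \ref{lem:tildefscalmult} to move $\scalmult$ across $\tildef$, and then Corollary \ref{cor:diffaddaction} to compute the scalar multiplication on $A$ explicitly as $\Thetapol((1,\ell_e e,0)).\tildez$, which pushes forward to $(1,\ell_e e,0).\tildez_0$ as required. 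Both arguments rest on the same underlying fact, namely that Riemann relations descend through $\tildef$; yours reuses the proved commutation lemmas, while the paper's is more self-contained and explicit about which Riemann equation is being invoked. Two minor points. First, the global constant $\lambda$ of Definition \ref{def:excelz} drops out of your first displayed identity: strictly you should carry it through and invoke Lemma \ref{lem:scalar} (equivalently, take $\tildez_0 = \tildeg^z_f(0)$) to see that $\lambda_e=1$ satisfies (\ref{eq:affineliftz}); this plays the role of the paper's homogeneity remark. Second, you speak of a ``symmetry half'' of relation (\ref{eq:affineliftz}) handled by $\inv$, but Definition \ref{def:goodz} imposes only the single $\scalmult$ relation (\ref{eq:affineliftz}) and no $\inv$ condition; the symmetry relation (\ref{eq:affinelift2}) belongs to Definition \ref{def:good} for $\tildeG$, which is already good by hypothesis via Proposition \ref{prop:excelgood}. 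That paragraph is harmless but unnecessary.
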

\begin{proof}
We would like to show that excellent lifts are good lift. For this, we need to show that
$\tildeG(z)=\{ \tildeg_f^z(i), i \in \Zn \}$ satisfy Riemann equations and we cannot
use Proposition \ref{sec2:lemm6} which only deals with $\tildeG$.

Let $\rho^\pol_z$ be a rigidification of $\pol$ in $z$.
    In the following, for $s \in \Gamma(A,\pol)$, we let $s(0_\Thetapol)=\rho^\pol_{0_\Thetapol}(s)$ and
    $s(z)=\rho^\pol_z(s)$. Let $(i_1, \ldots, i_4; i_5, \ldots, i_8)$ be elements of
    $Z(n)$ in Riemann position. By Theorem \ref{sec:thriemann}, we have:
    \begin{equation}\label{eq:riemannexcel}
	L(\Thetapol, \chi,i_1, i_2, z, z) L(\Thetapol, \chi, i_3, i_4, 0_\Thetapol,
	0_\Thetapol) =L(\Thetapol, \chi,i_5, i_6, z, z)
	L(\Thetapol, \chi, i_7, i_8, 0_\Thetapol, 0_\Thetapol).
    \end{equation}
We remark that the preceding relation, for homogeneity reason, does not depend on the
    choice of $\rho^\pol_z$. 

    Let $\tildeG= \{ \tildeg_f(i), i \in \Zn \}$ be an excellent lift of $G$. We let
    $\Aff^{Z(m)}=\Spec(k[x_i, i \in Z(m)])$
    so that for $i \in Z(m)$, $x_i$ is the $i^{th}$-coordinate function.
	    Let $\vx= (y_1, \ldots, y_4; y_5, \ldots, y_8) \in \Zn^8$ and $\vi=(i_1, \ldots, i_4; i_5, \ldots,
i_8) \in Z(m)^8$ be elements in Riemann position then we have a Riemann equation:
    \begin{equation}\label{eq:riemaneq2proof}
	\sum_{\eta \in Z(2)}  \prod_{j=1}^2 x_{i_j +
	\eta}(\tildeg^z_f(y_j)) \prod_{j=3}^4 x_{i_j +
	\eta}(\tildeg_f(y_j)) = \sum_{\eta \in Z(2)}  \prod_{j=5}^6 x_{i_j +
	\eta}(\tildeg^z_f(y_j))\prod_{j=7}^8 x_{i_j +
	\eta}(\tildeg_f(y_j)).
    \end{equation}
But theses relations are enough to be able to compute $\scalmult(\ell_e, \lambda_e * \tildeg^z_f(e), \tildeg_f(e), \tildez_0,
    \tildenullbpol)$ which is used to define a good lift. We can proceed in the
    same manner to obtain the relations used for $\threeway$.
\end{proof}

We can state the second main result of this section:
\begin{theorem}
    We suppose that $(B, \bpol, \Thetabpol, \theta_0^\Thetabpol, \rho^\bpol_{0_{\Thetabpol}})$
    and $(A, \pol, \Thetapol,
    \theta_0^\Thetapol, \rho^\pol_{0_\Thetapol})$ are isog-$f$-compatible rigidified
    abelian varieties with respective affine theta null points $\tildenullbpol$ and
    $\tildenullpol$. Let $z \in A(\overk)$ and $z_0 =f(z)$. Consider $g_f: \Zn
    \rightarrow G(0_\Thetapol)$ a group morphism such
    that there exists
    $\tildeG(0_\Thetapol)=\{ \tildeg_f(i), i \in \Zn \}$ an excellent
    lift of $G(0_\Thetapol)$ (see Definition \ref{def:excel}). This group morphism exists and
    is unique by Proposition \ref{sec3:prop9}.
    Consider the map $g^z_f: \Zn \rightarrow G(z)$, $i \mapsto z_0 + g_f(i)$. Let
    $\tildeG'(z)=\{ \tildeg^{\prime z}_f(i), i \in \Zn \}$ be a good lift of $G(z)$. Then there
    exists $z' \in f^{-1}(z_0)$ such that $\tildeG'(z)$ is an excellent lift of $z'$. In
    other words, there exists $\lambda \in \overk$ (independent of $i$) such that:
    \begin{equation}\label{sec2:excelzi}
	\tildeg^{\prime z}_f(i)= \lambda *(\theta^\Thetapol_{\mu_{m,n}(j)+i}(z'))_{ j\in Z(m)}.
    \end{equation}
\end{theorem}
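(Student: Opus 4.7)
The plan is to compare any good lift of $G(z)$ with a reference excellent lift at $z$, and to show that the discrepancy is exactly accounted for by translating the base point within the fiber $f^{-1}(z_0) = z + \Ker f$. Concretely, fix a rigidification $\rho^\pol_z$ of $\pol$ in $z$ so that Definition \ref{def:excelz} provides a reference excellent lift $\tildeG(z) = \{\tildeg^z_f(i), i \in Z(n)\}$; by Lemma \ref{lem:goodexcel2}, this is also a good lift of $G(z)$. Since the given $\tildeG'(z)$ and the reference $\tildeG(z)$ are both good lifts built from the same $\tildeG$ and $\tildenullbpol$, we can write $\tildeg^{\prime z}_f(i) = \mu_i * \tildeg^z_f(i)$ for unique scalars $\mu_i \in \overk^*$. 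The task is to identify $(\mu_i)_{i \in Z(n)}$ with the family of scaling factors produced by a change of base point $z \mapsto z'$ with $z' \in f^{-1}(z_0)$, as described by Lemma \ref{lem:excelchange}.

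The argument proceeds by parameterizing both sides. By Definition \ref{def:goodz}, a good lift of $G(z)$ is determined by its values on a basis $\Bb = (e_1, \ldots, e_g)$ of $\Zz(d)$, extended via the algorithm of Proposition \ref{sec2:proprecons}, with the good-lift constraint (\ref{eq:affineliftz}) imposed at each basis element. Substituting $\tildeg^{\prime z}_f(e) = \mu_e * \tildeg^z_f(e)$ and $\tildeg^{\prime z}_f(0) = \mu_0 * \tildeg^z_f(0)$ into (\ref{eq:affineliftz}) and applying the scaling rules of Lemma \ref{lem:scalar} yields $(\mu_{e_i}/\mu_0)^{\ell_{e_i}} = 1$; since $\ell_{e_i} = d$ for a basis element of $\Zz(d)$, the ratio $\mu_{e_i}/\mu_0$ lies in the group $\mu_d$ of $d$-th roots of unity. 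So, modulo the global factor $\mu_0 \in \overk^*$, good lifts of $G(z)$ form a torsor over $\mu_d^g$, of cardinality $d^g$. On the other side, Definition \ref{sec1:def1} gives $\Ker f = \Thetapolbar(\{0\} \times \dnu_{d,n}(\dZ(d)))$, so $f^{-1}(z_0)$ is a torsor over $\dnu_{d,n}(\dZ(d)) \simeq \dZ(d)$, again of cardinality $d^g$. For $\beta \in \dnu_{d,n}(\dZ(d))$ and $z' = z - \Thetapolbar((0, \beta))$, Lemma \ref{lem:excelchange} applied to $g' = \Thetapol((1, 0, \beta))$ gives $\tildeg^{z'}_f(i) = \lambda \beta(-i) \tildeg^z_f(i)$, so the corresponding good-lift parameters are $\mu_0 = \lambda$ and $\mu_{e_i}/\mu_0 = \beta(-e_i)$.

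The map $(z', \lambda) \mapsto (\mu_0, (\mu_{e_i}/\mu_0))$ is injective, because a character $\beta \in \dnu_{d,n}(\dZ(d))$ is determined by its values $\beta(-e_i)$ on the basis $(e_i)$ of $\Zz(d)$. Both sides have the same cardinality $d^g$ modulo the $\overk^*$-action of global scalings, hence the map is a bijection. In particular the given $\tildeG'(z)$ corresponds to a unique pair $(z', \lambda)$, and unfolding Definition \ref{def:excelz} at $z'$ gives $\tildeg^{\prime z}_f(i) = \lambda * (\theta^\Thetapol_{\mu_{m,n}(j)+i}(z'))_{j \in Z(m)}$, which is exactly (\ref{sec2:excelzi}). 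The main delicate point is verifying that the basis-level $\mu_d^g$-parameters really extend uniquely and consistently into a good lift through the algorithm of Proposition \ref{sec2:proprecons}, so that the cardinality match produces an actual bijection and not merely an injection; this amounts to checking that the scaling behavior of $\threeway$ and $\thetaact$ applied to the derived lifts is controlled in the same way as the scaling of $\scalmult$ is controlled by Lemma \ref{lem:scalar}, which is the crucial compatibility underlying the whole construction.
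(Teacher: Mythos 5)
Your proof is correct and takes essentially the same approach as the paper: both compare the given good lift against a reference excellent lift, observe that the discrepancy on a basis of $\Zz(d)$ consists of $d$-th roots of unity, and use Lemma \ref{lem:excelchange} to realise that tuple of roots of unity as a translation of the base point to some $z' \in f^{-1}(z_0)$. The paper closes by noting directly that a good lift is determined by its basis values through the reconstruction algorithm of Proposition \ref{sec2:proprecons}, whereas you package the same fact as a torsor-and-cardinality argument; the ``delicate point'' you flag about the scaling behaviour of $\threeway$ and $\thetaact$ is already controlled by Lemma \ref{lem:scalar} together with the well-definedness of the reconstruction, so it is not an actual gap.
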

\begin{proof}
    Let $\Bb=(e_i)_{i=1, \ldots, g}$ be a basis of $\Zz(d)$. Let
    $\tildeG(z)=\{ \tildeg^z_f(i), i \in \Zn \}$ be an excellent lift of $G(z)$. By Lemma
    \ref{lem:goodexcel2}, $\tildeG(z)$ is a good lift of $G(z)$. So for $i=1, \ldots, g$,
    there exists a $d^{th}$-root of unity $\zeta_i$ and $\lambda_0 \in \overk$ such that:
    \begin{equation}
	\lambda_0 * \tildeg^{\prime z}_f(e_i)= \zeta_i* \tildeg^{z}_f(e_i).
    \end{equation}
    Let $\beta \in \dZn$ be such that $\beta(-e_i) = \zeta_i$ for $i=1,\ldots,g$, let $z' = z-\pigpol((1,
    0, \beta))$ and let
    $\tildeG(z')= \{ \tildeg^{z'}_f(i), i \in \Zn
    \}$ be an excellent lift of $G(z')$.
    By Lemma \ref{lem:excelchange}, there exists $\lambda_1 \in \overk$ such that, for $i
    =1, \ldots, g$:
    \begin{equation}
	\tildeg^{z'}_f(e_i) =  \lambda_1 \zeta_i *\tildeg_f^{z}(e_i) = \lambda_1 \lambda_0
*	\tildeg^{\prime z}_f(e_i).
    \end{equation}
    By definition of a good lift, it means that for all $i \in \Zn$, we have:
 \begin{equation}
	\tildeg^{z'}_f(i) =  \lambda_1 \lambda_0 * \tildeg^{\prime z}_f(i).
    \end{equation}
    By setting $\lambda=\lambda_1 \lambda_0$ in the preceding expression, we obtain exactly (\ref{sec2:excelzi}).
\end{proof}

\section{Change of level algorithms and isogeny computation}\label{sec:main}
In this section, we are interested in two closely related questions: change of level
algorithms and isogeny computation algorithms. 
Let $m,n,d>1$ be integers such that $n=md$.
A change of level algorithm going up in level takes as input the theta null point of
$(B, \bpol, \Thetabpol)$ a marked abelian of type $K(m)$ and $B[n]$, and outputs the theta
null point of $(B, \bpol^d, \Thetabpold)$ a marked abelian variety of type $K(n)$.
The other way, a change of level algorithm going
down in level takes as input the theta null point of
$(B, \bpol^d, \Thetabpold)$ a marked abelian of type $K(n)$ and outputs the theta
null point of $(B, \bpol,
\Thetabpol)$ a marked abelian variety of type $K(m)$. In addition, an isogeny computation algorithm takes
as input the theta null point of $(B, \bpol, \Thetabpol)$ a marked abelian variety of type
$K(m)$, and $K \subset B[n]$ an isotropic subgroup for $e_{\bpol}$ isomorphic to $Z(d)$,
and computes the theta null point of $(A, \pol, \Thetapol)$ a marked abelian variety of
type $K(m)$, where $A=B/K$, and the isogeny $f: B \rightarrow A$. The case $d$ prime to
$m$ has been treated in \cite{lubicz:hal-03738315}. In this paper, we consider the case $d |m$.

First, we would like to make precise the relation between the theta structures
$\Thetabpol$ and $\Thetabpold$ for a change of level algorithm between $(B, \bpol, \Thetabpol)$
and $(B, \bpol^d, \Thetabpold)$ (whatever the direction of the change of
level is). In fact, if we look at the simpler case of symplectic structures,
$\Thetabpolbar: K(m) \rightarrow K(\bpol)$ and $\Thetabpoldbar: K(n) \rightarrow K(\bpol^d)$,
as $K(\bpol^d) = \{ x \in B(\overk), dx \in K(\bpol) \}$, it is clear that
$K(\bpol) \subset K(\bpol^d)$, and we would like the symplectic structures
to be compatible in the sense that for all $(x,y) \in Z(m) \times \dZ(m)$, 
$\Thetabpoldbar(\mu_{m,n}(x)) =\Thetabpolbar(x)$ and $\Thetabpoldbar(\dnu_{m,n}(x))
=\Thetabpolbar(x)$. We can find in the work of Mumford an analog definition of compatible theta
structures which grasps the idea that when we go down in level, we forget a part of the
information that we have in the higher level structure. We recall it briefly.

\begin{definition}\label{def:pid}
Let $(B, \bpol)$ be a $g$-dimensional abelian variety together with a level $m$ ample
symmetric line bundle. As $\bpol$ is symmetric, there is an isomorphism
$\psi_d(\bpol^{d^2}): \bpol^{d^2} \rightarrow [d]^*(\bpol)$. This allows us to defined the morphism:
\begin{equation}
    \begin{split}
	\pi_d(\bpol^{d^2}): G(\bpol^{d^2}) & \rightarrow G(\bpol) \\
	(\tau_x, \psi_x ) & \mapsto [d]^{\sharp} ( (\tau_x, \tau^*_x(\psi_d(\bpol^{d^2}))\circ
	\psi_x^{\otimes d} \circ \psi_d(\bpol^{d^2})^{-1})),
    \end{split}
\end{equation}
where $[d]^\sharp: G([d]^*(\bpol)) \rightarrow G(\bpol)$ is the quotient by the level
subgroup of $G([d]^*(\bpol))$ above $\Ker([d])$ associated to the descent data of
$[d]^*(\bpol)$ to $\bpol$.
\end{definition}

For $m,n,d >1$ integers such that $n=md$, in \cite{MumfordOEDAV1}, Mumford defines the morphisms (see \cite[p. 308]{MumfordOEDAV1}):
\begin{itemize}
    \item $\epsilon_d(\bpol): G(\bpol) \rightarrow
        G(\bpol^d)$, $(\tau_x, \psi_x) \mapsto (\tau_x, \psi_x^{\otimes d})$;
	\item $\eta_d(\bpol^d): G(\bpol^d)
	    \rightarrow G(\bpol)$, $(\tau_x, \psi_x) \mapsto \pi_d(\bpol^{d^2}) \circ
	    \epsilon_d(\bpol^d)(\tau_x, \psi_x)$;
    \item $\delta_d(\bpol): G(\bpol) \rightarrow G(\bpol)$, $h \mapsto
        h^{(d^2+d)/2} .\delta_{-1}(h^{(d^2-d)/2})$. 
\end{itemize}

Moreover, for $m,n,d>1$ integers such that $n=md$, Mumford defines similar morphisms for the Heisenberg
group (see \cite[p. 316]{MumfordOEDAV1}):
\begin{itemize}
    \item $E_d(m): G(m) \rightarrow G(n)$, $(\ell, x, y) \mapsto (\ell^d, \mu_{m,n}(x),
	\dnu_{m,n}(y))$;
    \item $H_d(n): G(n) \rightarrow G(m)$, $(\ell, x, y) \mapsto (\ell^d,\nu_{n,m}(x),
	\dmu_{n,m}(y))$;
    \item $D_d(m): G(m) \rightarrow G(m)$, $(\ell, x, y) \mapsto (\ell^{d^2}, dx, dy)$.
\end{itemize}

We gather the results from \cite[Proposition 5]{MumfordOEDAV1} that we are going to use:
\begin{proposition}\label{prop:mumfordtech}
    The maps $\epsilon_d(\bpol)$, $\eta_d(\bpol^d)$, $\delta_d(\bpol)$ are morphisms of theta group considered as
    central extension. We have:
    \begin{itemize}
	\item $\delta_{-1}(\bpol^d) \circ \epsilon_d(\bpol)=\epsilon_d(\bpol) \circ
	    \delta_{-1}(\bpol)$;
	\item $\delta_{-1}(\bpol) \circ \eta_d(\bpol^d)= \eta_d(\bpol^d) \circ
	    \delta_{-1}(\bpol^d)$;
	\item $\delta_{d}(\bpol^d)= \epsilon_d(\bpol) \circ \eta_d(\bpol^d)$;
	\item $\delta_{d}(\bpol)=\eta_d(\bpol^d) \circ \epsilon_d(\bpol)$.
    \end{itemize}
\end{proposition}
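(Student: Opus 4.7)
The plan is essentially to verify Mumford's calculations from \cite[Prop.~5, p.~308]{MumfordOEDAV1} in the notation of the present paper. The proof is a series of explicit verifications at the level of the defining formulas for the three maps, using the central-extension structure of the theta groups and the symmetry of $\bpol$.

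First, I would check that each of $\epsilon_d(\bpol)$, $\eta_d(\bpol^d)$, $\delta_d(\bpol)$ is well defined as a morphism of central extensions. For $\epsilon_d(\bpol)$, this is immediate from the identity $(\tau_y^*(\psi_x)\circ\psi_y)^{\otimes d}=\tau_y^*(\psi_x^{\otimes d})\circ\psi_y^{\otimes d}$, together with the fact that the central $\overk^*$ is raised to the $d$-th power. For $\pi_d(\bpol^{d^2})$, and hence $\eta_d(\bpol^d)$, one uses symmetry of $\bpol$ to supply the isomorphism $\psi_d(\bpol^{d^2}):\bpol^{d^2}\simeq [d]^*\bpol$ (\cite[Cor.~4, p.~59]{MumfordAV}); then the conjugated pair $(\tau_x,\tau_x^*(\psi_d(\bpol^{d^2}))\circ\psi_x^{\otimes d}\circ\psi_d(\bpol^{d^2})^{-1})$ lies in $G([d]^*\bpol)$ over a preimage of $dx$ by $[d]$, so one may apply the isogeny morphism $[d]^\sharp$ of Definition \ref{def:fsharp}; compatibility with composition is then a formal consequence of the naturality already used for $\epsilon_d$ and of the definition of $[d]^\sharp$. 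For $\delta_d(\bpol)$, the operations $h\mapsto h^k$ make unambiguous sense because the restriction of the group law to a fibre $\pigpol^{-1}(x)$ is abelian (a $\overk^*$-torsor), and the formula descends to a well-defined morphism once one verifies on a symmetric lift $g_x$ that $\delta_d(\bpol)(g_x)=g_x^d$, which follows from $\delta_{-1}(g_x)=g_x^{-1}$ and the parity of $d^2\pm d$.

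Next I would verify the four compatibilities by acting on a generic $g_x=(\tau_x,\psi_x)\in G(\bpol)$. The first two relations are formal: taking $d$-th tensor powers commutes with $(-1)^*$, hence with $\delta_{-1}$, giving the first; and the descent datum defining $\pi_d(\bpol^{d^2})$ is symmetric because $\psi_d(\bpol^{d^2})$ intertwines the symmetries of $\bpol^{d^2}$ and $[d]^*(\bpol)$ (a consequence of symmetry of $\bpol$), so $\pi_d$ commutes with $\delta_{-1}$; combining with the first relation applied to $\bpol^d$ yields the second. For the last two, which form the core content, I would reduce to symmetric lifts, which together with $\overk^*$ generate $G(\bpol)$. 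On such a lift $g_x$, one has $\epsilon_d(\bpol)(g_x)=g_x^{\otimes d}$, which is again a symmetric lift of $x\in K(\bpol^d)$. Using the explicit formula for $\pi_d(\bpol^{d^2})$ and chasing the commutator pairing through the tensor powers, one obtains $\eta_d\circ\epsilon_d(g_x)=g_x^d$ on the nose, and compares with $\delta_d(\bpol)(g_x)=g_x^{(d^2+d)/2}g_x^{-(d^2-d)/2}=g_x^d$, giving the fourth relation. Applying the same recipe in $G(\bpol^d)$ with the roles of $\epsilon_d$ and $\eta_d$ reversed yields the third.

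The main obstacle is the bookkeeping of scalar factors on the central $\overk^*$ when moving $\psi_x^{\otimes d}$ across $\tau_y^*$, combined with the scalar contribution of $\psi_d(\bpol^{d^2})$ relative to $\psi_x^{\otimes d}$; this is where the asymmetric exponents $(d^2\pm d)/2$ of $\delta_d$ appear, encoding the behaviour of the commutator pairing $e_{\bpol^d}=e_\bpol^d$ (cf. Proposition \ref{def:weil}(3)) under repeated composition of symmetric lifts. Once this pairing bookkeeping is carried out cleanly on one symmetric lift in each fibre, the remaining cases follow by linearity in the $\overk^*$-direction and by the density of symmetric lifts.
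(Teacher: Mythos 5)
The paper itself does not prove this statement: the surrounding text reads ``We gather the results from \cite[Proposition 5]{MumfordOEDAV1} that we are going to use,'' and the proposition is cited verbatim from Mumford without a proof in the present paper. So there is no ``paper's own proof'' against which to align your attempt; what you have written is an outline of how one would reproduce Mumford's argument, and it is faithful to Mumford's method.

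Your outline is essentially correct, but two points deserve sharper treatment. First, there is a latent circularity in the passage establishing the fourth identity: you verify $\delta_d(\bpol)(g_x)=g_x^d$ and $\eta_d\circ\epsilon_d(g_x)=g_x^d$ on a symmetric lift $g_x$, and then invoke ``linearity in the $\overk^*$-direction'' to conclude equality everywhere. That is legitimate only because $\delta_d$ is given by an explicit elementwise formula, so that $\delta_d(\lambda g_x)=\lambda^{d^2}g_x^d$ can be computed directly (using centrality of $\lambda$ and $\delta_{-1}(\lambda)=\lambda$), and every element of $G(\bpol)$ has the form $\lambda g_x$ with $g_x$ symmetric. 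It would strengthen the writeup to say this explicitly rather than appealing to ``density of symmetric lifts,'' which is vague phrasing for a finite-index statement; moreover, the morphism property of $\delta_d$ is then most cleanly obtained \emph{as a consequence} of the identity $\delta_d=\eta_d\circ\epsilon_d$ rather than verified independently. Second, the last two bullets are genuinely two separate identities ($\epsilon_d\circ\eta_d$ on $G(\bpol^d)$ versus $\eta_d\circ\epsilon_d$ on $G(\bpol)$), and they do not follow from each other by a formal symmetry since the two compositions land in different theta groups; your phrase ``applying the same recipe with the roles reversed'' is an honest description of the workload, but the scalar bookkeeping must be carried out once in each direction. With those caveats, the sketch correctly isolates the point where all the work lives: tracking the $\overk^*$-factor through $\psi_d(\bpol^{d^2})$ and through tensor powers, which is precisely what produces the exponents $(d^2\pm d)/2$ in $\delta_d$.
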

It is a matter of a simple verification to see that we have the same properties for
$E_d(m)$, $H_d(n)$ and $D_d(m)$ and that for all $h\in G(m)$, $D_d(m)(h)=h^{(d^2+d)/2}.
D_{-1}(h^{(d^2-d)/2})$. We have the following Definition from \cite[p. 317]{MumfordOEDAV1}:

\begin{definition}\label{def:mumfordcomp}
    Let $m,n,d >1$ be integers such that $n=md$.
    Let $\Thetaubpol: \overk^* \times Z(m) \rightarrow G(\bpol)$ and
    $\Thetaubpold: \overk^* \times \Zn \rightarrow G(\bpol^d)$
    be two partial symmetric theta structures of respective type $Z(m)$ and $Z(n)$. We say that
    $\Thetabpol$ and $\Thetabpold$ is a (Mumford) compatible pair of theta structures if:
    \begin{enumerate}
        \item $\Thetaubpold \circ E_d(m) = \epsilon_d(\bpol) \circ \Thetaubpol$;
        \item $\Thetaubpol \circ H_d(n) =  \eta_d(\bpol^d) \circ \Thetaubpold$.
	\end{enumerate}
    We say that $\Thetabpol$ and $\Thetabpold$ is a partial symmetric compatible pair of theta
    structures if moreover they are symmetric. We have the same definition if we replace $Z(m)$ (resp. $Z(n)$) by $\dZ(m)$ (resp. $\dZ(n)$).

    We say that the theta structure $\Thetabpol: G(m) \rightarrow G(\bpol)$ and
    $\Thetabpold: G(n) \rightarrow
    G(\bpol^d)$ of respective type $K(m)$ and $K(n)$ are (Mumford) compatible (resp. symmetric
    compatible) if the pairs of partial theta
    structures obtain by restriction of $\Thetabpol$ on $\overk^* \times Z(m)$ and
    $\overk^* \times \dZ(m)$ and
    $\Thetabpold$ on $\overk^* \times \Zn$ and $\overk^* \times \dZ(n)$ are compatible (resp. symmetric compatible).
\end{definition}
\begin{remark}\label{rk:thm6tech}
    We note that our definition of symmetric theta structure given in the introduction, is
    trivially equivalent to that given by Mumford \cite[p. 317]{MumfordOEDAV1},
    which say that $\Thetabpol$ is symmetric if:
    \begin{equation}
	\Thetabpol \circ D_{-1}(m) = \delta_{-1}(\bpol) \circ \Thetabpol.
    \end{equation}
    It is immediate to see that if $\Thetabpol$ is symmetric, then for all $d$ positive
    integer:
    \begin{equation}
	\Thetabpol \circ D_{d}(m) = \delta_{d}(\bpol) \circ \Thetabpol.
    \end{equation}
\end{remark}
Unfortunately, to the best of our understanding, this definition of compatible theta structures
is not easily amenable to computations. But we can build on the results of the previous section to obtain another,
more effective, definition of compatible theta structures.

For this, consider $(B, \bpol, \Thetabpol)$ and $(A, \pol, \Thetapol)$ a pair of
isog-$f$-compatible marked abelian varieties. Let $\tildeK$ be the level subgroup of $\Gpol$
corresponding to the descent data of $(\pol, \psi)$ where $\psi : f^*(\bpol) \rightarrow
\pol$ is an isomorphism. The following basic
Lemma is an important tool to establish the link between isog-$f$-compatibility and Mumford
compatibility:
\begin{lemma}\label{lem:commepsi}
Let $\tildeK_0:=\epsilon_d(\pol)(\tildeK)$  and let $G^*(\pol^d)$ be the centralizer of $\tildeK_0$ in $G(\pol^d)$. 
    Then:
    \begin{equation}
	\epsilon_d(\pol)(\Thetapol(\{1\}\times \Zn\times\{0\})) \subset G^*(\pol^d).
    \end{equation}
Moreover,
$\tildeK_0$ is the descent data of $(\pol^d, \psi^d)$.
\end{lemma}
\begin{proof}
    Let $h \in \epsilon_d(\pol)(\Thetapol(\{1\}\times \Zn\times\{0\}))$. We have to prove that $h$ commutes
    with every elements of $\tildeK_0$. For this, let $h' \in \tildeK_0$. We have to prove
    that $e_{\pol^d}(h,h')=1$. Let $h_0 \in \Thetapol(\{1\}\times \Zn\times\{0\})$ be such that
    $h=\epsilon_d(\pol)(h_0)$ and $h'_0 \in \tildeK$ such that $h'=\epsilon_d(\pol)(h'_0)$.
    We have $e_{\pol^d}(h,h')=e_\pol(h_0, h'_0)^{d}$, and moreover $e_\pol(h_0, h'_0)$ is a
    $d^{th}$-root of unity. We are done for the first claim.

    For the second claim, let $\psi : f^*(\bpol) \rightarrow \pol$ be the isomorphism such
    that $(\pol, \psi)$ is the pair associated to $\tildeK$. We
    have seen that $(x, \psi_x) \in \tildeK$ if and only if $\psi_x$ make Diagram
    (\ref{eq:descentdata}) commutative. Note that we have an isomorphism $\psi^{d} :
    f^*(\bpol)^d = f^*(\bpol^d) \rightarrow \pol^d$. Let $\tildeK_1$ be the descent data
    of $\pol^d$ to $\bpol^d$ associated to $(\pol^d, \psi^d)$. Let $x \in K$, then $(x, \psi_x^1) \in \tildeK_1$ if
    and only if the following Diagram commutes:
\begin{equation}\label{eq:descentdata1}
  \begin{tikzpicture}
      \matrix [column sep={1cm}, row sep={1cm}]
    { 
      \node(a){$f^*(\bpol^d)$}; & \node(b){$\pol^d$}; \\
      \node(c){$\tau_x^*(f^*(\bpol^d))$}; & \node(d){$\tau_x^*(\pol^d)$}; \\
     };
     \draw [->] (a) -- (b) node[above, midway]{$\psi^{d}$};
     \draw [->] (c) -- (d) node[above, midway]{$\tau^*_x(\psi^{d})$};
     \draw [->] (b) -- (d) node[right, midway] {$\psi_x^{1}$};
      \draw [double distance = 2 pt] (a) -- (c);
  \end{tikzpicture}
\end{equation}
But we see immediately that $\psi^1_x = \psi_x^{d}$, thus $\tildeK_1 =
\epsilon_d(\pol)(\tildeK)=\tildeK_0$.
\end{proof}
Denote by $\fsharp(\pol^d): G^*(\pol^d) \rightarrow G(\bpol^d)$ the map given by Definition
\ref{def:fsharp} and the descent data $\epsilon_d(\pol)(\tildeK)$. Note that by the previous Lemma, we have that $\epsilon_d(\pol) \circ
\Thetapol(\{1\}\times \Zn\times\{0\})$ is a subset of the domain of $\fsharp(\pol^d)$, so that 
$\fsharp(\pol^d) \circ \epsilon_d(\pol) \circ \Theta^1_\pol: \overk^* \times \Zn \rightarrow G(\bpol^d)$ is a well
defined group morphism.
\begin{definition}\label{def:fcompatibleB}
    Let $m,n,d > 1$ be integers such that $n=md$ and $d|m$. Denote by $E'_d(n): \overk^* \times
    \Zn \rightarrow \overk^* \times \Zn$, $(\alpha, x) \mapsto (\alpha^d, x)$.
    Let $\Thetaubpol: \overk^* \times Z(m) \rightarrow G(\bpol)$ and $\Thetaubpold: \overk^* \times \Zn \rightarrow
    G(\bpol^d)$ be partial symmetric theta structures.

    We say that they are $f$-compatible if there
    exists $(A, \pol, \Thetapol)$ an isog-$f$-compatible marked abelian variety such that we
    have the equality of maps $\overk^* \times \Zn \rightarrow G(\bpol^d)$:
    \begin{equation}
	\Thetaubpold \circ E'_d(n)=  \fsharp(\pol^d) \circ \epsilon_d(\pol) \circ \Thetaupol.
    \end{equation}
    We have a similar definition for $\Theta^2_\bpol: \overk^* \times \dZ(m) \rightarrow \Gpol$ and
    $\Theta^2_{\bpol^d}: \overk^* \times \dZ(n) \rightarrow G(\pol^d)$.

    We say that the theta structures $\Thetabpol: G(m) \rightarrow \Gpol$ and
    $\Thetabpold: G(n) \rightarrow
    G(\pol^d)$ are $f$-compatible if the induced partial theta structures on $Z(m)$, $Z(n)$ and
    $\dZ(m)$, $\dZ(n)$ are $f$-compatible. 
\end{definition}

At first sight, Mumford compatibility and $f$-compatibility are different properties:
Mumford compatibility uses the $\eta_d(\bpol^d)$ map, which is constructed with the morphism
$\psi_m$, which uses the fact that the line bundle $\bpol$ is symmetric. Actually, we will see in
the following Theorem that the two definitions are equivalent. 
To prove it, we need the following technical Lemma:

\begin{lemma}\label{lem:thm6lemtech}
    Keeping the notations of the Definition \ref{def:fcompatibleB}, we have:
    \begin{enumerate}
	\item $\fsharp(\pol^d) \circ \epsilon_d(\pol) = \epsilon_d(\bpol) \circ \fsharp(\pol)$;
	\item $\eta_d(\bpol^d) \circ \fsharp(\pol^d) = \fsharp(\pol) \circ \eta_d(\pol^d)$.
    \end{enumerate}
\end{lemma}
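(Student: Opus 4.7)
The plan is to verify both equalities by chasing the definitions, using functoriality of $(\cdot)^{\otimes d}$ for $(1)$ and uniqueness of descent for $(2)$.

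For $(1)$, take $h = (\tau_x, \psi_x) \in \Gspol$. First note that $\epsilon_d(\pol)(h) = (\tau_x, \psi_x^{\otimes d})$ lies in $G^*(\pol^d)$ by the same argument as in Lemma \ref{lem:commepsi}: the descent data of $\pol^d$ to $\bpol^d$ is $\epsilon_d(\tildeK)$, and the commutator pairing in $G(\pol^d)$ of two images under $\epsilon_d$ equals the $d$-th power of their commutator in $G(\pol)$, hence a $d$-th power of a $d$-th root of unity, hence $1$. Writing $y = f(x)$, both composites project to translation by $y$. The isomorphism part of $\fsharp(\pol^d) \circ \epsilon_d(\pol)(h)$ is, by Definition \ref{def:fsharp}, the unique $\psi'_y$ satisfying
\begin{equation}
f^*(\psi'_y) = \tau_x^*((\psi^{\otimes d})^{-1}) \circ \psi_x^{\otimes d} \circ \psi^{\otimes d},
\end{equation}
while $\epsilon_d(\bpol) \circ \fsharp(\pol)(h) = (\tau_y, \psi_y^{\otimes d})$ where $\psi_y$ is characterized by $f^*(\psi_y) = \tau_x^*(\psi^{-1}) \circ \psi_x \circ \psi$. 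Since $f^*$ and $\tau_x^*$ both commute with tensor powers, $f^*(\psi_y^{\otimes d}) = (\tau_x^*(\psi^{-1}) \circ \psi_x \circ \psi)^{\otimes d}$, which equals the characterization of $\psi'_y$; by uniqueness $\psi'_y = \psi_y^{\otimes d}$.

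For $(2)$, use Proposition \ref{prop:mumfordtech} to write $\eta_d = \pi_d \circ \epsilon_d$, and apply part $(1)$ with $(\pol, \bpol)$ replaced by $(\pol^d, \bpol^d)$ to move $\epsilon_d(\bpol^d)$ past $\fsharp(\pol^d)$, yielding
\begin{equation}
\eta_d(\bpol^d) \circ \fsharp(\pol^d) = \pi_d(\bpol^{d^2}) \circ \fsharp(\pol^{d^2}) \circ \epsilon_d(\pol^d).
\end{equation}
Since $\fsharp(\pol) \circ \eta_d(\pol^d) = \fsharp(\pol) \circ \pi_d(\pol^{d^2}) \circ \epsilon_d(\pol^d)$, the claim reduces to
\begin{equation}
\pi_d(\bpol^{d^2}) \circ \fsharp(\pol^{d^2}) = \fsharp(\pol) \circ \pi_d(\pol^{d^2}).
\end{equation}
Because $\pol = f^*(\bpol)$ and $f \circ [d]_A = [d]_B \circ f$, we have $\pol^{d^2} = f^*(\bpol^{d^2})$ and a canonical identification $[d]_A^*(\pol) = f^*([d]_B^*(\bpol))$. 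Choosing $\psi_d(\pol^{d^2}) = f^*(\psi_d(\bpol^{d^2}))$ (compatible with the symmetry isomorphism $\psii$ inherited from $\bpol$), $\pi_d$ on both sides consists of transport by $\psi_d$ followed by the $[d]^\sharp$ descent. The resulting map $G(\pol^{d^2}) \to G(\bpol)$ can be factored either as descent first by $[d]_A$ then by $f$, or first by $f$ then by $[d]_B$; by the universal characterization of Definition \ref{def:fsharp}, both realize the unique $\sharp$ descent associated to the common composite isogeny $f \circ [d]_A = [d]_B \circ f$ and therefore coincide.

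The main obstacle is the bookkeeping: one has to keep track of the identifications $\psi$, $\psi^{\otimes d}$, $\psi^{\otimes d^2}$, $\psi_d(\bpol^{d^2})$, $\psi_d(\pol^{d^2})$, and the various descent data simultaneously, and verify that the natural choices are mutually compatible. The subtlest point is confirming that $\psi_d(\pol^{d^2})$ can be taken to equal $f^*(\psi_d(\bpol^{d^2}))$, which relies on the uniqueness (up to the normalization of $\psii$) of the symmetric structure on $f^*(\bpol)$. Once a coherent system of identifications is fixed, both assertions reduce to functoriality of tensor powers and the uniqueness of the $\sharp$ descent construction.
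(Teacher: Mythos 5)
Your proof is correct and takes essentially the same route as the paper: for (1) you verify the equality of maps directly from the uniqueness clause in Definition \ref{def:fsharp} and functoriality of tensor powers (the paper instead cites, with an empty reference, the fact that $\pol^d/\epsilon_d(\pol)(\tildeK)=(\pol/\tildeK)^d$), and for (2) you use $\eta_d=\pi_d\circ\epsilon_d$ and part (1) to reduce to $\pi_d(\bpol^{d^2})\circ\fsharp(\pol^{d^2})=\fsharp(\pol)\circ\pi_d(\pol^{d^2})$, exactly as the paper does, while supplying the commuting-square $f\circ[d]_A=[d]_B\circ f$ argument that the paper dismisses as clear.
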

\begin{proof}

We first prove (1). 
Let $\tildeK$ be the level subgroup defining $f^\sharp(\pol)$.
Let $\psi : f^*(\bpol) \rightarrow \pol$ be the isomorphism such
that the pair $(\pol, \psi)$ is associated to the descent data $\tildeK$. Denote
by $G^*(\pol)$ the centralizer of $\tildeK$ in $\Gpol$.
Let $(y, \psi_y) \in G^*(\pol)$, let $x=f(y)$ and set $(x, \psi_x) = f^\sharp(\pol)((y,
\psi_y))$. 

Then, by definition \ref{def:fsharp} of $f^\sharp(\pol)$, we have the
commutative Diagram,
\begin{equation}\label{eq:descentdata2}
  \begin{tikzpicture}
      \matrix [column sep={1.5cm}, row sep={1cm}]
    { 
      \node(a){$f^*(\bpol)$}; & \node(b){$\tau_y^*(f^*(\bpol))$}; \\
      \node(c){$\pol$}; & \node(d){$\tau_y^*(\pol)$}; \\
     };
     \draw [->] (a) -- (b) node[above, midway]{$f^*(\psi_x)$};
     \draw [->] (c) -- (d) node[above, midway]{$\psi_y$};
     \draw [->] (b) -- (d) node[right, midway] {$\tau^*_y(\psi)$};
     \draw [->] (a) -- (c) node[left, midway] {$\psi$};
  \end{tikzpicture}
\end{equation}
from which we deduce the commutative Diagram:
\begin{equation}\label{eq:descentdata3}
  \begin{tikzpicture}
      \matrix [column sep={1.5cm}, row sep={1cm}]
    { 
      \node(a){$f^*(\bpol)^d$}; & \node(b){$\tau_y^*(f^*(\bpol))^d$}; \\
      \node(c){$\pol^d$}; & \node(d){$\tau_y^*(\pol)^d$}; \\
     };
     \draw [->] (a) -- (b) node[above, midway]{$f^*(\psi_x)^d$};
     \draw [->] (c) -- (d) node[above, midway]{$\psi_y^d$};
     \draw [->] (b) -- (d) node[right, midway] {$\tau^*_y(\psi)^d$};
     \draw [->] (a) -- (c) node[left, midway] {$\psi^d$};
  \end{tikzpicture}
\end{equation}
The map $f^\sharp(\pol^d)$ is defined by the descent data
$\epsilon_d(\pol)(\tildeK)$, which, by Lemma \ref{lem:commepsi}, is associated to the pair $(\pol^d, =)$.
Thus,
this last Diagram shows that $(x, \psi_x^d)=\epsilon_d(\bpol)((x, \psi_x))$ is the image by
$f^\sharp(\pol^d)$ of $\epsilon_d(\pol)((y, \psi_y))$.

For (2), by definition of $\eta_d(\bpol^d)$ and $\eta_d(\pol^d)$, we have to prove:
\begin{equation}
\pi_d(\bpol^{d^2}) \circ \epsilon_d(\bpol^d) \circ \fsharp(\pol^d) =  \fsharp(\pol) \circ
\pi_d(\pol^{d^2}) \circ \epsilon_d(\pol^d).
\end{equation}
Denote by $\fsharp(\pol^{d^2}): G(\pol^{d^2}) \rightarrow G(\bpol^{d^2})$ the quotient
map defined by the level subgroup $\epsilon_{d^2}(\pol)(\tildeK)$, it is clear that:
\begin{equation}
    \pi_d(\bpol^{d^2}) \circ
    \fsharp(\pol^{d^2})=\fsharp(\pol) \circ \pi_d(\pol^{d^2}).
\end{equation}
So the result stems from (1) which says that $\fsharp(\pol^{d^2}) \circ
\epsilon_d(\pol^d)= \epsilon_d(\bpol^d) \circ \fsharp(\pol^d)$.
\end{proof}
The following Proposition shed some light on the meaning of the mysterious $\eta_d(\bpol^d)$
map used to define Mumford compatibility, by uncovering the link between this map and the
notion of symmetric compatible of Definition \ref{def:symcompat}.

\begin{proposition}\label{prop:mumfordequiv}
    Let $(B, \bpol, \Thetabpol)$ be a level $n$ marked abelian variety and denote by
    $\pigbpol:G(\bpol) \rightarrow K(\bpol)$ the projection.
Let $\tildeH$ be a
    symmetric level subgroup of $G(\bpol)$ and let $H=\pigbpol(\tildeH)$.
    Let $x \in
    B(\overk)$ be a point and suppose that $d=\min \{
	\lambda \in \N-\{ 0 \} | \lambda x \in H \}$. Note that, in particular, $x \in K(\pol^d)$ (because of
    \cite[Proposition 4]{MumfordOEDAV1}). Then we have the equivalence:
    \begin{enumerate}
	\item $x$ is symmetric compatible with $\tildeH$;
	\item if $g_x=(\tau_x, \psi_x) \in G(\bpol^d)$ is a symmetric element (i.e. $-g_x =
	    \delta_{-1}(g_x)$), we have $\eta_d(\bpol^d)(g_x) \in \tildeH$.
     \end{enumerate}
\end{proposition}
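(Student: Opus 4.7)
The plan is to interpret both (1) and (2) as asserting that a specific symmetric lift of $dx$ in $G(\bpol)$ equals $h_{dx}$, the unique element of $\tildeH$ lying above $dx$ (which exists because $\tildeH$ is a symmetric level subgroup and $dx \in H$ by hypothesis).

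First I would check that $\eta_d(\bpol^d)(g_x)$ is itself a symmetric lift of $dx$ in $G(\bpol)$: symmetry follows from Proposition \ref{prop:mumfordtech}, which gives $\delta_{-1}(\bpol) \circ \eta_d(\bpol^d) = \eta_d(\bpol^d) \circ \delta_{-1}(\bpol^d)$; the fact that the projection is $dx$ follows from the definition of $\eta_d$ via $[d]^{\sharp}$, or equivalently from the identity $\delta_d(\bpol) = \eta_d(\bpol^d) \circ \epsilon_d(\bpol)$ combined with $\delta_d(h) = h^d$ on symmetric $h$. Hence (2) is equivalent to $\eta_d(g_x) = h_{dx}$.

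Next I would unpack (1) via Definition \ref{def:symcompat} applied to the isogeny $f_0 = [d]:B\to B$, choosing any $y \in B$ with $[d](y)=x$; then $y \in K([d]^*(\bpol)) = K(\bpol^{d^2})$ because $d^2 y = dx \in K(\bpol)$. Let $\tildeK_0 \subset G([d]^*(\bpol))$ be the descent data of $[d]^*(\bpol) \to \bpol$ (a symmetric level subgroup above $B[d]$ by Remark \ref{rm:sym}), and let $\tildeH'$ be the unique symmetric level subgroup of $G([d]^*(\bpol))$ above $H' = [d]^{-1}(H)$ containing $\tildeK_0$ with $[d]^{\sharp}(\tildeH')=\tildeH$, as furnished by Proposition \ref{prop:compat1}. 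Minimality of $d$ yields $(y)\cap H' = (dy) = (x)$, so condition (1) becomes: there exists a symmetric $g_y \in G([d]^*(\bpol))$ above $y$ with $dg_y \in \tildeH'$. Since $\tildeK_0 \cap \overk^* = \{1\}$, the map $[d]^{\sharp}$ is the identity on the central $\overk^*$, which reduces (1) to the equation $[d]^{\sharp}(dg_y) = h_{dx}$.

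The crux is therefore to prove the identity $[d]^{\sharp}(dg_y) = \eta_d(g_x)$ for suitable symmetric $g_y, g_x$. By definition $\eta_d(g_x) = [d]^{\sharp}(\psi_d^* \epsilon_d(g_x))$, with $\psi_d^* : G(\bpol^{d^2}) \fliso G([d]^*(\bpol))$ induced by Mumford's $\psi_d(\bpol^{d^2})$, and both $dg_y$ and $\psi_d^* \epsilon_d(g_x)$ are symmetric lifts of $x$ in $G([d]^*(\bpol))$, so differ at most by a sign. Fixing Mumford-compatible symmetric theta structures on $\bpol^d$, $\bpol^{d^2}$, and (via $\psi_d$) $[d]^*(\bpol)$, I would carry out a direct computation in the Heisenberg group $G(nd^2)$: writing $g_y = (\alpha_y, a_y, b_y)$ with $\alpha_y^2 = b_y(a_y)$, induction gives $(\alpha_y, a_y, b_y)^d = (\alpha_y^{d^2}, d a_y, d b_y)$, while $\epsilon_d$ in Heisenberg coordinates is $E_d(nd) : (\beta, a, b) \mapsto (\beta^d, \mu(a), \dnu(b))$. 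The pairing identity $\dnu(b)(\mu(a)) = b(a)^d$ together with the symmetry $\beta^2 = b(a)$ forces $\alpha_y^{d^2} = \beta^d$ up to a sign; for $d$ even both sides are sign-canonical (since $d(-g_y) = dg_y$ and $\epsilon_d(-g_x) = \epsilon_d(g_x)$) and the pairing computation pins the sign down to $+$, while for $d$ odd switching $g_x$ to $-g_x$ negates $\beta^d$, so one can always arrange equality by the right choice of sign, which is consistent with Proposition \ref{sec2:proptors}(1) asserting that (1) holds automatically when $d$ is odd.

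The main obstacle is the sign analysis in the Heisenberg computation for $d$ even, which requires matching the pairings in $G(nd^2)$ and $G(nd)$ on the nose rather than up to sign. A secondary check is that $dg_y$ does not depend on the choice of preimage $y$ of $x$: if $y$ is replaced by $y+t$ with $t \in B[d]$ and $\widetilde{e}_t \in \tildeK_0$ its canonical symmetric lift, then $d\widetilde{e}_t = 1$ because $\tildeK_0$ projects isomorphically onto $B[d]$, so $dg_{y+t} = dg_y$, in agreement with the independence of choices granted by Proposition \ref{prop:compat1}.
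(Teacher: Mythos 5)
Your reduction --- take $f_0=[d]$, note $y \in K(\bpol^{d^2})$, identify (1) with $dg_y \in \tildeH'$, and reduce everything to the single identity $f_0^\sharp(dg_y) = \eta_d(\bpol^d)(g_x)$ --- mirrors the paper's setup exactly and is sound. Where you diverge is at the crux, and there you leave a genuine gap. Setting $\pol := f_0^*\bpol = [d]^*\bpol$, the paper proves the identity purely functorially: it takes $g_x := f_0^\sharp(\pol^d)\bigl(\epsilon_d(\pol)(g_y)\bigr)$, which is automatically a symmetric element above $x$ since $\epsilon_d$ and $f_0^\sharp$ both commute with $\delta_{-1}$, then invokes Lemma~\ref{lem:thm6lemtech}(2) to commute $\eta_d$ past $f_0^\sharp$, and finally uses $\eta_d \circ \epsilon_d = \delta_d$ together with $\delta_d(g_y)=g_y^d$ for symmetric $g_y$ (Proposition~\ref{prop:mumfordtech}) to land directly on $f_0^\sharp(dg_y)$. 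Because $g_x$ is \emph{constructed from} $g_y$ rather than chosen independently, there is never a sign to match.

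You instead keep $g_y$ and $g_x$ independent, correctly observe that $dg_y$ and the $\psi_d$-conjugate of $\epsilon_d(g_x)$ are two symmetric lifts of the same point in $G([d]^*\bpol)$ and hence agree up to $\pm 1$, and correctly observe that for $d$ even both are sign-canonical, so the sign cannot be tuned by a choice. But at that point you assert, without carrying it out, that ``the pairing computation pins the sign down to $+$.'' That step is not a formality: it requires symmetric theta structures on $\bpol$, $\bpol^d$, $[d]^*\bpol$ and $\bpol^{d^2}$ that are simultaneously compatible with $\epsilon_d$, $\eta_d$, $\psi_d$ and $[d]^\sharp$, and the compatibility of the $(\bpol,\bpol^d)$ pair is itself \emph{defined} via $\eta_d$ (Definition~\ref{def:mumfordcomp}), so invoking such a choice to verify an identity about $\eta_d$ carries a circularity hazard unless it is explicitly avoided. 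For $d$ even --- the only case where this proposition carries new content --- the sign you leave unverified is the entire point of the statement. Lemma~\ref{lem:thm6lemtech}, which lets one commute $\eta_d$ with $f^\sharp$, is precisely the ingredient your proof is missing.
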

\begin{proof}
    Suppose that $x$ is symmetric compatible with $\tildeH$, we follow the construction
    of Definition \ref{def:symcompat}. 
    Let $f_0=[d]$ with kernel $K_0 = B[d]$ and let $\pol = f_0^*(\bpol)$. Let $y
    \in \Kpol$ such that $f_0(y)=x$. Denote by $\tildeK_0$ the descent data of
    $\pol$ to $\bpol$ and let $f_0^\sharp(\pol): G^*(\pol) \rightarrow G(\bpol)$ the map
    defined as in Definition \ref{def:fsharp}. 
    
    Let $H'=f_0^{-1}(H)$ and denote by $\tildeH'$ a symmetric level subgroup of $\Gpol$
    above $H'$. Let $g_y=(\tau_y, \psi_y) \in \Gpol$ be any symmetric element above $y$.
    Then, by Definition \ref{def:symcompat}, $y$ is
    symmetric compatible with $\tildeH$ if and only if $d g_y \in \tildeH'$.

    Let $g_z = \epsilon_d(\pol)(g_y) \in G(\pol^d)$, note that $g_z$ is symmetric because of Proposition
    \ref{prop:mumfordtech}. Let $K_0^d = \epsilon_d(\pol) (\tildeK_0)$ and denote by
    $f_0^\sharp(\pol^d): G^*(\pol^d) \rightarrow G(\bpol^d)$ the map defined as in
    Definition \ref{def:fsharp} by the descent data $K_0^d$. Note that
    $f_0^\sharp(\pol^d)(g_z)$ is a symmetric element above $x$ so it is either $(\tau_x,
    \psi_x)$ or $(\tau_x, -\psi_x)$ but the definition of symmetric compatible with $\tildeH$
    does not depend on the choice of the symmetric element above $x$ so we can suppose
    that $f_0^\sharp(\pol^d)(g_z)=g_x$.

    Then, we have:
    \begin{gather}
	\begin{aligned}\label{eq:propequiveta:eq1}
	    \eta_d(\bpol^d)(g_x) & = \eta_d(\bpol^d)(f_0^\sharp(\pol^d)(g_z)) & \text{}\\
	    & = f_0^\sharp(\pol)(\eta_d(\pol^d)(g_z)) & (\text{because of Lemma
	    \ref{lem:thm6lemtech}}) \\
	    & = f_0^\sharp(\pol)(\eta_d(\pol^d) \circ \epsilon_d(\pol)(g_y)) & \\
	    & = f_0^\sharp(\pol)(\delta_d(\pol)(g_y)) & (\text{because of Proposition
	    \ref{prop:mumfordtech}}) \\
	    & = f_0^\sharp(\pol)(d g_y). & (\text{because $g_y$ is symmetric}) \\
	\end{aligned}
    \end{gather}
    But as $d g_y \in \tildeH'$ by hypothesis, the preceding computations show that
    $\eta_d(\bpol^d)(g_x) \in \tildeH$.

As the definition of symmetric compatible does not depend on the choice of $f_0$ by
    Proposition \ref{prop:compat1}, the preceding computations also prove that (2) implies
    (1).
\end{proof}
The proof also clarify
the role played by $\eta_d(\bpol^d)$ in the definition of Mumford compatibility: it can be
replaced by the condition that for all $x \in \Thetaubpoldbar(Z(n))$, $x$ is symmetric
compatible with $\Thetaubpol(\{1 \} \times Z(m))$. We put the following Definition:

\begin{definition}\label{def:mumfordcomp1}
    Let $m,n,d >1$ be integers such that $n=md$ and $d|m$.
    Let $\Thetaubpol: \overk^* \times Z(m) \rightarrow G(\bpol)$ and
    $\Thetaubpold: \overk^* \times \Zn \rightarrow G(\bpol^d)$
    be two partial symmetric theta structures of respective type $Z(m)$ and $Z(n)$. We say that
    $\Thetabpol$ and $\Thetabpold$ is a (Mumford) compatible pair of theta structures if:
    \begin{enumerate}
        \item $\Thetaubpold \circ E_d(m) = \epsilon_d(\bpol) \circ \Thetaubpol$;
        \item for all $x \in \Thetaubpoldbar(Z(n))$, $x$ is symmetric
compatible with $\Thetaubpol(\{1 \} \times Z(m))$.
	\end{enumerate}
\end{definition}
We remark that we can drop the second condition if $d$ is odd. By what have been explaned,
we have the Corollary:
\begin{corollary}
    Definitions \ref{def:mumfordcomp} and \ref{def:mumfordcomp1} are equivalent.
\end{corollary}
\begin{proof}
    This is an immediate consequence of Proposition \ref{prop:mumfordequiv}.
\end{proof}

\begin{lemma}\label{lem:techmainphi}
    Suppose that $d$ is even.
    Let $\Thetaubpol: \overk^* \times Z(m) \rightarrow G(\bpol)$ and
    $\Thetaubpold: \overk^* \times \Zn \rightarrow G(\bpol^d)$
    be two partial symmetric theta structures of respective type $Z(m)$ and $Z(n)$. We
    suppose that they are Mumford compatible. 
    \begin{enumerate}
	\item For any $g \in \dnu_{d,n}(\dZ(d)[2])$, $\Thetaubpol$ and $\Thetaubpold \circ \Phi(g)$ are
	    also Mumford compatible, with $\Phi:\overk^* \times \dZ(n)\rightarrow \overk^* \times \dZ(n)$, $(\alpha, x)\mapsto(\alpha g(x),x)$ (see Equation (\ref{eq:defphi}));
	\item If $(\Thetaubpol, \Thetaubpold_{,0})$ and $(\Thetaubpol, \Thetaubpold_{,1})$ are
	    two pairs of Mumford compatible partial symmetric theta structures such that
	    ${\Thetaubpoldbar}_{,0}(Z(n))= {\Thetaubpoldbar}_{,1}(Z(n))$, then there exists $g \in
	    \dnu_{d,n}(\dZ(d)[2])$ such that ${\Thetaubpol}_{,0}={\Thetaubpold}_{,1} \circ \Phi(g)$.
    \end{enumerate}
\end{lemma}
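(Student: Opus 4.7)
The plan is to carry out two direct verifications, one for each claim.

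For (1), the key observation is that $\Phi(g)|_{\overk^* \times Z(n)}$ has the explicit form $(\ell, x) \mapsto (\ell g(x), x)$, so it is the identity on the $Z(n)$-component and only alters the central scalar. I will check each of the two conditions of Definition~\ref{def:mumfordcomp} in turn. Condition~(1) will survive because $g \in \dnu_{d,n}(\dZ(d))$ factors through $Z(n)/\mu_{m,n}(Z(m)) \simeq Z(d)$, so $g$ vanishes on $\mu_{m,n}(Z(m))$ and $\Phi(g) \circ E_d(m) = E_d(m)$. Condition~(2) will survive because $\eta_d(\bpol^d)$ raises central scalars to the $d$-th power (as $\epsilon_d$ does, and $\pi_d$ is a morphism of central extensions by Proposition~\ref{prop:mumfordtech}), and the scalar factor $g(x)^d$ introduced by $\Phi(g)$ is trivial: either $d$ is even and $g(x) \in \{\pm 1\}$ gives $g(x)^d = 1$, or $d$ is odd and $\dZ(d)[2] = \{0\}$ forces $g = 1$.

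For (2), I will set $h_i(x) = \Thetaubpold_{,i}(1,x)$ and first extract an automorphism $\sigma$ of $\overk^* \times Z(n)$ such that $\Thetaubpold_{,0} = \Thetaubpold_{,1} \circ \sigma$. The image equality combined with the identity action on the central $\overk^*$ shows that $\sigma$ exists; the fact that $h_1(Z(n))$ is abelian then forces $\sigma(\ell, x) = (\ell \chi(x), A(x))$ with $\chi \in \dZ(n)$ a character and $A \in \Aut(Z(n))$. The constraints on $\chi$ follow rapidly: condition~(1) applied to both $\Thetaubpold_{,i}$ gives $A|_{\mu_{m,n}(Z(m))} = \mathrm{id}$ and $\chi|_{\mu_{m,n}(Z(m))} = 1$, hence $\chi \in \dnu_{d,n}(\dZ(d))$; the symmetry $\delta_{-1}(h_i(x)) = h_i(-x)$ on both sides gives $\chi(-x) = \chi(x)$, so $\chi^2 = 1$ and thus $\chi \in \dnu_{d,n}(\dZ(d)[2])$ as required.

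The main obstacle will be eliminating the automorphism $A$, i.e.\ showing $A = \mathrm{id}$. At the $K(\bpol^d)$-level the compatibility identities yield only $d(A(x) - x) = 0$, i.e.\ $A(x) - x \in Z(n)[d]$, which a priori leaves room for non-trivial $A$. My plan is to lift this information to $G(\bpol^d)$ using that both $h_i$ are pinned down on $\mu_{m,n}(Z(m))$ by $\Thetaubpol$ via $h_i \circ \mu_{m,n} = \epsilon_d(\bpol) \circ \Thetaubpol$, so the displacement $h_1(A(x) - x) = h_1(A(x)) h_1(x)^{-1}$ lives inside $\epsilon_d(\bpol)(\Thetaubpol(\overk^* \times Z(m)))$; combining this containment with the symmetry identity above and the relation $\chi(x)^d = 1$ obtained by applying $\eta_d(\bpol^d)$ to $h_0 = \chi \cdot (h_1 \circ A)$ should force the displacement to be trivial, hence $A = \mathrm{id}$. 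The conclusion $\Thetaubpold_{,0} = \Thetaubpold_{,1} \circ \Phi(g)$ then follows immediately with $g = \chi$.
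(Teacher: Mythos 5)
Your proof of part (1) is correct and is essentially the paper's own argument: both check the two conditions of Definition~\ref{def:mumfordcomp} directly, and both hinge on the observation that $\eta_d(\bpol^d) = \pi_d(\bpol^{d^2}) \circ \epsilon_d(\bpol^d)$ raises the central scalar to the $d$-th power so that the extra factor $g(x)^d$ contributed by $\Phi(g)$ is trivial for $g \in \dnu_{d,n}(\dZ(d)[2])$.

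For part (2) the plan you sketch does not close, and cannot. With $h_1(A(x)-x)=\epsilon_d(\bpol)(\Thetaubpol(1,y_x))$, where $\mu_{m,n}(y_x)=A(x)-x$ (which is well-posed because $d\mid m$ forces $Z(n)[d]\subset\mu_{m,n}(Z(m))$), applying $\eta_d(\bpol^d)$ to $h_0=\chi\cdot(h_1\circ A)$ and using $\eta_d\circ\epsilon_d=\delta_d$ together with $\delta_d(\bpol)\circ\Thetaubpol=\Thetaubpol\circ D_d(m)$ (Remark~\ref{rk:thm6tech}) yields exactly $\chi(x)^d\,\Thetaubpol(1,dy_x)=1$, i.e.\ $dy_x=0$ and $\chi(x)^d=1$. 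Both of these are already known (the first because $A(x)-x\in Z(n)[d]$, the second because $\chi^2=1$), and neither constrains $A$ beyond what you started with. The symmetry relation likewise only repeats $\chi^2=1$. In fact, under the image-only hypothesis $A=\mathrm{id}$ is genuinely false: pick any nonzero homomorphism $\rho\colon Z(n)\to Z(n)[d]$ that kills $\mu_{m,n}(Z(m))$ (such $\rho$ exist since both sides are $\simeq Z(d)$); then $\rho^2=0$, so $A=\mathrm{id}+\rho$ is an automorphism fixing $\mu_{m,n}(Z(m))$ with $A-\mathrm{id}$ valued in $Z(n)[d]$, the map $\sigma_A(\alpha,x)=(\alpha,A(x))$ preserves symmetry and both Mumford compatibility conditions (as $\sigma_A\circ E_d(m)=E_d(m)$ and $H_d(n)\circ\sigma_A=H_d(n)$), yet $\Thetaubpold_{,1}\circ\sigma_A$ differs from $\Thetaubpold_{,1}$ by no $\Phi(g)$. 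The paper's one-line proof, which invokes the exact sequence \ref{eq:exactaut}, is coherent only if the hypothesis is read as equality of the \emph{maps} $\Thetaubpoldbar_{,0}=\Thetaubpoldbar_{,1}\colon Z(n)\to K(\bpol^d)$ rather than mere equality of their images, since the kernel $\Phi(K(n))$ of $\Psi$ controls precisely the theta structures inducing a fixed symplectic-level map. Under that reading, injectivity of $\Thetaubpoldbar_{,1}$ gives $A=\mathrm{id}$ at once, and your derivation that $\chi$ kills $\mu_{m,n}(Z(m))$ (from condition (1)) and that $\chi^2=1$ (from symmetry) then yields $\chi\in\dnu_{d,n}(\dZ(d)[2])$ as needed.
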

    \begin{proof}
	We prove (1) following Definition \ref{def:mumfordcomp}. For, $(\alpha,x) \in \overk^*\times Z(m)$,
	$\Thetaubpold \circ \Phi(g) \circ E_d(m)((\alpha, x))=\Thetaubpold \circ \Phi(g)((\alpha^d,
	\mu_{m,n}(x)))=\Thetaubpold((\alpha^dg(\mu_{m,n}(x)),
	\mu_{m,n}(x)))=\Thetaubpold((\alpha^d,\mu_{m,n}(x)))=\Thetaubpold \circ E_d(m)((\alpha, x)) $. We have checked the first condition
	of Definition \ref{def:mumfordcomp}. Next, we have $\eta_d(\bpol^d) \circ
	\Thetaubpold \circ \Phi(g)= \pi_d(\bpol^{d^2}) \circ \epsilon_d(\bpol^d) \circ
	\Thetaubpold \circ \Phi(g)$. But we have seen that $\epsilon_d(\bpol^d) \circ
	\Thetaubpold \circ \Phi(g) = \epsilon_d(\bpol^d) \circ
	\Thetaubpold$ so that $\eta_d(\bpol^d) \circ
	\Thetaubpold \circ \Phi(g)=\eta_d(\bpol^d) \circ
	\Thetaubpold$ and we are done for (1).

	For (2), we remark that, in general, if $\Thetaubpold_{,0}$ and $\Thetaubpold_{,1}$
	are two symmetric partial theta structures such that ${\Thetaubpoldbar}_{,0}(Z(n))=
	{\Thetaubpoldbar}_{,1}(Z(n))$, because of the exact sequence (\ref{eq:exactaut}), there exists 
$g \in 
	    \dnu_{d,n}(\dZ(d)[2])$ such that ${\Thetaubpol}_{,0}={\Thetaubpold}_{,1} \circ
	    \Phi(g)$.
    \end{proof}

    \begin{theorem}\label{thm:compatthetas}
    Let $\Thetaubpol: \overk^* \times Z(m) \rightarrow G(\bpol)$ and $\Thetaubpold:
    \overk^* \times \Zn \rightarrow
    G(\bpol^d)$ be a pair of partial theta structures. They are $f$-compatible, if and
    only if they are
    Mumford compatible.
\end{theorem}
\begin{proof}
    Let $\Thetaubpol: \overk^* \times Z(m) \rightarrow G(\bpol)$ and $\Thetaubpold:
    \overk^* \times \Zn \rightarrow
    G(\bpol^d)$ be the partial theta structures obtained by restriction of the domain from $\Thetabpol$
    and $\Thetabpold$.

First, we prove that $f$-compatible implies Mumford compatible.
Suppose that $\Thetaubpol: Z(m) \rightarrow \Gpol$ and $\Thetaubpold: \overk^* \times \Zn \rightarrow
    G(\bpol^d)$ are $f$-compatible. This means that there exists $(A, \pol, \Thetapol)$ 
    a marked abelian variety isog-$f$-compatible with $(B, \bpol, \Thetabpol)$ such that:
    \begin{equation}\label{eq:th6:eq3}
	\Thetaubpold\circ E'_d(n) =  \fsharp(\pol^d) \circ \epsilon_d(\pol) \circ \Thetaupol,
    \end{equation}
	keeping the notations
    of Definition \ref{def:fcompatibleB}.
%Let $\Thetaubpol: \overk^* \times Z(m) \rightarrow G(\bpol)$ be the restriction of
%$\Thetabpol$ to $\overk^* \times Z(m)$. 
    Consider the map $M_{m,n}: \overk^* \times Z(m) \rightarrow \overk^* \times \Zn$, $(\alpha, x)
    \mapsto (\alpha, \mu_{m,n}(x))$. Note that:
    \begin{equation}\label{eq:th6:eq4}
    M_{m,n} \circ E'_d(m) = E'_d(n) \circ M_{m,n}=E_d(m).
    \end{equation}
    By Definition \ref{sec1:def1} (3), we have, for
    all $g \in  \overk^* \times Z(m)$:
    \begin{equation}\label{eq:th6:eq1}
	\Thetaubpol(g) = \fsharp(\pol) \circ \Thetaupol\circ M_{m,n}(g). 
    \end{equation}
The following Diagram shows all the objects and maps between them that we consider and will
be useful to understand the proof:
\begin{equation}\label{eq:diag3d}
  \begin{tikzpicture}
      \matrix [column sep={1cm}, row sep={1cm}]
    { 
      & \node(gmd){$G_1(n)$}; & & & \node(gnd){$G_1(nd)$}; \\
      \node(gbpold){$G(\bpol^d)$}; & & & \node(gpold){$G(\pol^d)$}; & \\
      \\
      & \node(gm){$G_1(m)$}; & & & \node(gn){$G_1(n)$}; \\
     \node(gbpol){$G(\bpol)$}; & & & \node(gpol){$\Gpol$}; & \\
     };
      \draw [->] (gpold) -- (gbpold) node[above right, midway]{$\fsharp(\pol^d)$};
    \draw [->] (gpol) -- (gbpol) node[above, midway]{$\fsharp(\pol)$};
      \draw [->] (gmd) -- (gbpold) node[xshift=0.5ex, yshift=-0.5ex, above left, midway]{$\Thetaubpold$};
\draw [->] (gnd) -- (gpold) node[xshift=-0.5ex, yshift=0.5ex, below right, midway]{$\Thetaupold$};
      \draw [->] (gm) -- (gn) node[below left, midway]{$M_{m,n}$};
      \draw [->] (gn) -- (gpol) node[xshift=-0.5ex, yshift=0.5ex, below right, midway]{$\Thetaupol$};
\draw [->] (gm) -- (gbpol) node[xshift=-0.5ex, yshift=0.5ex, below right, midway]{$\Thetaubpol$};
      \draw [->] (gpol) -- (gpold) node[above right, midway]{$\epsilon_d(\pol)$};
\draw [->] (gn) -- (gnd) node[right, midway]{$E_d(n)$};
      \draw [->] (gnd) -- (gmd) node[above, midway]{$\fsharp(nd)$};
      \draw [->, transform canvas={xshift=1ex}] (gbpol) -- (gbpold) node[right,
      midway]{$\epsilon_d(\bpol)$};
\draw [->, transform canvas={xshift=-1ex}] (gbpold) -- (gbpol) node[left,
      midway]{$\eta_d(\bpol^d)$};
\draw [->, transform canvas={xshift=1ex}] (gm) -- (gmd) node[below right, midway]{$E_d(m)$};
\draw [->, transform canvas={xshift=-1ex}] (gmd) -- (gm) node[below left,midway]{$H_d(n)$};
  \end{tikzpicture}
\end{equation}
For $\xi\in\{m,n,nd\}$, we have denoted $G_1(\xi)=k^* \times Z(\xi)$. All the arrows in this Diagram have already been defined (or are restrictions of such maps) except $\fsharp(nd)$ which is the
analog for Heisenberg groups of the map $\fsharp(\pol^d)$. Precisely, let $M_{n, nd}:
\overk^* \times \Zn \rightarrow \overk^* \times Z(nd)$, $(\alpha, x) \mapsto (\alpha,
\mu_{n,nd}(x))$. Remark that $M_{n,nd}$ is injective, and denote by $G^*_1(nd)$ its image
in $G_1(nd)$. Then $\fsharp(nd): G^*_1(nd) \rightarrow G_1(n)$ is a left inverse of
$M_{n,nd}$. We remark that $\fsharp(nd) \circ E_d(n)= E'_d(n)$: this explains why we have
introduced this a priori strange map. Note that the Diagram has the shape of a cube,
and we can interpret some the previous results as the commutativity of
its faces. For instance, Lemma \ref{lem:thm6lemtech} states the commutativity of the front face of the cube.

	Now, we can compute (and follow the paths in the Diagram):
	\begin{gather}
    \begin{aligned}\label{eq:th6:eq2}
	\Thetaubpold \circ E_d(m) & = \Thetaubpold \circ E'_d(n) \circ M_{m,n} &
	(\text{because of Equation (\ref{eq:th6:eq4})}) \\ 
	& =  \fsharp(\pol^d) \circ \epsilon_d(\pol) \circ \Thetaupol \circ M_{m,n} & (\text{because of
	(\ref{eq:th6:eq3})})  \\
	& = \epsilon_d(\bpol)  \circ \fsharp(\pol) \circ \Thetaupol \circ M_{m,n} & (\text{thanks
	to Lemma \ref{lem:thm6lemtech}}) \\
	& = \epsilon_d(\bpol) \circ \Thetaubpol. & (\text{because of Equation
	(\ref{eq:th6:eq1})})
    \end{aligned}
	\end{gather}
    Next, we want to prove that $\Thetaubpol \circ H_d(n) =  \eta_d(\bpol^d) \circ
    \Thetaubpold$ (still under the $f$-compatibility assumption). We can write:
    \begin{gather}
	\begin{aligned}\label{eq:th6:eq5}
	\eta_d(\bpol^d) \circ \Thetaubpold \circ E'_d(n) & = \eta_d(\bpol^d) \circ
	\fsharp(\pol^d) \circ \epsilon_d(\pol) \circ \Thetaupol & (\text{because of
	(\ref{eq:th6:eq3})}) \\
	& = \fsharp(\pol) \circ \eta_d(\pol^d)\circ \epsilon_d(\pol) \circ \Thetaupol & (\text{thanks
	to Lemma \ref{lem:thm6lemtech}}) \\
	& = \fsharp(\pol) \circ \delta_d(\pol) \circ \Thetaupol & (\text{because of Proposition
	\ref{prop:mumfordtech}}) \\
	& = \fsharp(\pol) \circ \Thetaupol \circ D_d(n) & (\text{using Remark \ref{rk:thm6tech}}) \\
	& = \fsharp(\pol) \circ \Thetaupol \circ M_{m,n} \circ E'_d(m) \circ H_d(n) & (\text{by
	definition of $D_d(n)$})\\
	& = \Thetaubpol \circ E'_d(m) \circ H_d(n) &  (\text{because of Equation
	(\ref{eq:th6:eq1})}) \\
	& = \Thetaubpol \circ H_d(n) \circ E'_d(n).
\end{aligned}
    \end{gather}
Next, we prove that Mumford compatible implies $f$-compatible. Suppose that $\Thetaubpol
: \overk^* \times Z(m) \rightarrow \Gpol$ and $\Thetaubpold: \overk^* \times \Zn \rightarrow
    G(\bpol^d)$ are Mumford compatible. Let $G= \Thetaubpoldbar(Z(n))$. 
    For all $x_1, x_2 \in G$, $e_{B,n}(x_1, x_2)= e_{\bpol^d}(x_1, x_2)=1$, so $G$ is
    isotropic for $e_{B,n}$. We are going to show that for all $x \in G$, $x$ is
    symmetric compatible with $\Thetaubpol(\{1\} \times Z(m))$. As $x \in
    \Thetaubpoldbar(Z(n))$, there exists $e_x \in G_1(n)$,
    such that $g_x = \Thetaubpold(e_x) \in G(\bpol^d)$ is a symmetric element above $x$.
    Then using Mumford compatibility, we have $\eta_d(\pol^d)(g_x) =
    \eta_d(\pol^d)(\Thetaubpold(e_x))=\Thetaubpol(H_d(n)(e_x)) \in \Thetaubpol(\{1\} \times
    Z(m))$. Thus using Proposition \ref{prop:mumfordequiv}, we get that $x$ is
    symmetric compatible with $\Thetaubpol(\{1\} \times Z(m))$.

    From the preceding, we have that $G$ is a subgroup of $B[n]$ isomorphic to $Z(n)$
    containing $\Thetaubpolbar(Z(m)\times\{0\})$. Moreover, $G$ is isotropic for $e_{B, n}$ and for all $x \in G(\overk)$, $x$ is
    symmetric compatible with
    $\Thetabpol( \{1\}\times Z(m) \times\{0\} )$, so by applying Proposition
    \ref{prop:cond}, we get that there exists $(A, \pol, \Thetapol)$ isog-$f$-compatible to $(B,
    \bpol, \Thetabpol)$ such that $f(\Thetaupolbar(Z(n)))=G$. Let $K$ be the kernel of $f$
    et denote by $\tildeK$ the unique level subgroup above $K$ which is the descent data
    of $\pol=f^*(\bpol)$.

By Lemma \ref{lem:commepsi}, we know that $\epsilon_d(\Thetapol(\{1\}\times
Z(n)\times\{0\})) \subset G^*(\pol^d)$. Let $\fsharp(\pol^d): G^*(\pol^d) \rightarrow
G(\bpol^d)$ be defined by $\epsilon_d(\pol)(\tildeK)$. We consider the partial theta structure $\Thetaalphabpold: \overk^* \times
Z(n) \rightarrow G(\bpol^d)$ such that 
    \begin{equation}
	\Thetaalphabpold \circ E'_d(n)=  \fsharp(\pol^d) \circ \epsilon_d(\pol) \circ \Thetaupol.
    \end{equation}
Then, by definition, $\Thetaalphabpold$ and $\Thetaubpol$ are $f$-compatible. By the first
part of the proof, it implies that they are Mumford compatible. Then, by Lemma
\ref{lem:techmainphi},
there exists $c \in \dnu_{d,n}(\dZ(d)[2]) \subset K(n)$ such that $\Thetaalphabpold \circ
g_c = \Thetaubpold$ where $g_c=\Phi(c)$.

It remains to prove that if $\Thetaalphabpold$ and $\Thetaubpol$ are
$f$-compatible, then $\Thetaalphabpold \circ g_c = \Thetaubpold$ and $\Thetaubpol$ are also
$f$-compatible. By Lemma \ref{lem:action}, it suffices to show that there exists $g_0 \in \subG(n)$ (see Definition \ref{def:G0})
such that:
    \begin{equation}
	\Thetaalphabpold \circ g_c \circ E'_d(n) = \fsharp(\pol^d) \circ \epsilon_d(\pol) \circ \Thetaupol
	 \circ g_0.
    \end{equation}
    By pulling everything back on Heisenberg groups it means that there exists $g_0 \in \subG(n)$ such that:
    \begin{equation}\label{eq:pullbackhei}
    g_c \circ E'_d(n) = \fsharp(nd) \circ E_d(n) \circ g_0.
\end{equation}
Let $\gbar_0 : K(n) \rightarrow K(n)$ be the symplectic morphism induced by $g_0 \in
\Auts(G(n))$.
We suppose that $g_0$ is such that for all $x \in \Zn$, $g_0((1, x,
0))= (\chi_{\gbar_0}((x,0)), x, \psigz(x))$ 
and for all $y \in \dZ(n)$, $g_0((1, 0,
y))=(1, 0, y)$, where $\chi_{\gbar_0}$ is a symmetric semi-character and
$\psigz : \Zn
\rightarrow \dnu_{d,n}(\dZ(d)) \subset \dZ(n)$ a linear morphism.

Let $(e_k, \hate_k)_{k=1, \ldots, g}$ be a symplectic basis of $K(n)$. Because
$\chi_{\gbar_0}$ is symmetric, by Proposition \ref{lem:semi}, we have for $i=1, \ldots,
g$, $\chi_{\gbar_0}(e_i)^2=\psigz(e_i)(e_i)$ where $\psigz(e_i)(e_i)$ is a $d^{th}$-root
of unity.

For all $\alpha \in \overk^*$, $x \in \Zn$, we have $\fsharp(nd) \circ E_d(n) \circ g_0
((\alpha, x, 0))= (\alpha^d \chi_{\gbar_0}((x,0))^d, x, 0)$. Using Definition
\ref{def:semicharacter} of a
semi-character, we see that for all $x, y \in \Zn$,
$\chi_{\gbar_0}((x,0))^d\chi_{\gbar_0}((y,0))^d= \chi_{\gbar_0}((x+y,0))^d$ so
$\chi_{\gbar_0}^d$ is a character. By definition $g_0((\alpha, x, y))=(\alpha
e_n(c,(x,y)), x,y)$. Now, as $\hate_i(e_i)$ is a primitive $(2d)^{th}$-root
of unity, if we set $\psigz(e_i)=c_i \hate_i$, we can choose $c_i$ so that
$\psigz(e_i)(e_i)^d = e_n(c, (e_i, 0))$. The such defined $g_0$ verifies Equation
(\ref{eq:pullbackhei}), and we are done.
\end{proof}
Note that the two definitions of Mumford compatibility and $f$-compatibility complement each other.
Mumford compatibility is more intrinsic and it shows in particular that $f$-compatibility does
not depend on $f$. 

We have seen in Lemma \ref{lem:techmainphi} that the data of the partial theta
structure $\Thetaubpol$ and a
subgroup $G \subset B[n]$ together with a numbering $g_f: \Zn \rightarrow G(\overk)$
such that for all $i \in Z(m)$, $g_f(\mu_{m,n}(i))= \Thetabpolbar((i, 0))$, 
defines $\Thetaubpold$ (if it exists) Mumford compatible to
$\Thetaubpol$ up to an action of $\Phi(g)$ for $g \in \dnu_{d,n}(\dZ(d)[2])$.
Because of the equivalence between Mumford compatibility and $f$-compatibility of the preceding Theorem, we have
exactly the same thing for $f$-compatibility:
\begin{proposition}\label{prop:unicity}
    Let $m,n,d>1$ be integers such that $n=md$ and $d|m$.
    Let $(B, \bpol, \Thetabpol)$ be a marked abelian variety of type $K(m)$. 
Denote by $\Thetaubpol: \overk^* \times Z(m) \rightarrow G(\bpol)$ the partial theta
    structure obtained by restriction from $\Thetabpol$.
    Let $G= \{ g_f(i), i \in \Zn \}$ be a subgroup of $B[n]$ isomorphic to $Z(n)$
    containing $\Thetabpolbar(Z(m)\times\{0\})$, such that $G$ is isotropic for $e_{B, n}$
    and that for all $x \in G(\overk)$, $x$ is symmetric compatible with
    $\Thetabpol( \{1\}\times Z(m) \times\{0\} )$. 
We suppose moreover that for all $i \in Z(m)$, $g_f(\mu_{m,n}(i))= \Thetabpolbar((i, 0))$.
    Denote by $\sgood$ the set of good lifts
    $\tildeG = \{ \tildeg_f(i), i \in \Zn \}$ of $G$ and by $\sfcompat$ the set of 
partial theta structures $\Thetaubpold: \overk^* \times \Zn \rightarrow G(\bpol^d)$
    which are
    $f$-compatible with $\Thetaubpol$ and such that for all $i \in \Zn$,
    $\Thetaubpoldbar((i, 0))=g_f(i)$.
        Then:
    \begin{enumerate}
	\item $\sfcompat \ne \emptyset$;
	\item let $\Thetaubpold, \Thetadbpold \in \sfcompat$, there exists $g_0 \in \dnu_{d,n}(\dZ(d)[2])$ such that
    $\Thetaubpold= \Thetadbpold
    \circ \Phi(g_0)$;
\item there is a well defined map $\gtf : \sgood \mapsto \sfcompat$;
\item let $(e_1, \ldots, e_g)$, be the canonical basis of $Z(n)$, and for $\alpha = 1, 2$, let
    $\tildeG^\alpha = \{ \tildeg^\alpha_f(i), i \in \Zn \} \in \sgood$. For $i=1,
    \ldots, g$, let $\omega_i$ be such that $\tildeg_f^1(e_i) = \omega_i * \tildeg_f^2(e_i)$. We have, for all $i=1,
    \ldots, g$, $\omega_i^d \in \{-1, 1\}$. Moreover, $\gtf(\tildeG^1)=\gtf(\tildeG^2)$ if and only if for all $i=1, \ldots, g$,
    $\omega_i^d=1$.
\end{enumerate}
\end{proposition}
\begin{proof}
To prove (1), we have to prove the existance of a partial theta structures $\Thetaubpold: \overk^* \times \Zn \rightarrow G(\bpol^d)$
    which is
    $f$-compatible with $\Thetaubpol$ and such that for all $i \in \Zn$,
    $\Thetaubpoldbar((i, 0))=g_f(i)$. By Proposition \ref{prop:cond} and the hypothesis, there exists $(A,
    \pol, \Thetapol)$ isog-$f$-compatible with $(B, \bpol, \Thetabpol)$ and such that for
    all $i \in \Zn$,
    $f(\Thetapolbar((i,0)))=g_f(i)$. Let $\Thetaubpold$ be the partial theta
    structure defined by $\Thetaubpold \circ E'_d(n)= \fsharp(\pol^d) \circ
    \epsilon_d(\pol) \circ \Thetaupol$. By definition, $\Thetaubpold$ is $f$-compatible
    with $\Thetabpol$, and for all $i \in \Zn$, $\Thetaubpold( (i, 0))=g_f(i)$. We have
    proved the first claim.

    Claim (2) is an immediate consequence of Theorem \ref{thm:compatthetas} and Lemma \ref{lem:techmainphi}.

    In order to define $\gtf$, let $\tildeG = \{ \tildeg_f(i), i \in \Zn \}$ be any good
    lift of $G$. Then by Theorem
    \ref{th:main1}, there exists a unique $(A, \pol, \Thetapol)$ isog-$f$-compatible with
    $(B, \bpol, \Thetabpol)$ such that for all $i \in \Zn$, $\tildeg_f(i)=
    (\theta^\Thetapol_{\mu_{m,n}(j)+i}(0_\Thetapol))_{ j\in Z(m)}$. Let 
$\Thetaubpold \circ E'_d(n)= \fsharp(\pol^d) \circ
    \epsilon_d(\pol) \circ \Thetaupol$, we can set $\gtf(\tildeG)= \Thetaubpold$ and we have
    proved claim (3).

Next, for $\alpha = 1, 2$, let
    $\tildeG^\alpha = \{ \tildeg^\alpha_f(i), i \in \Zn \} \in \sgood$. For $i=1, \ldots,
    g$, $(\tildeg_f^1(e_i)/\tildeg_f^2(e_i))^{2d}=1$ because of Lemma \ref{lem:goodliftdiff}, thus
    $(\tildeg_f^1(e_i)/\tildeg_f^2(e_i))^{d} \in \{ -1, 1 \}$. Note that if $d$ is odd, still by Lemma
    \ref{lem:goodliftdiff}, $(\tildeg_f^1(e_i)/\tildeg_f^2(e_i))^{d}=1$ and by claim (2) 
there is a unique partial theta structures $\Thetaubpold: \overk^* \times \Zn \rightarrow G(\bpol^d)$
    which is
    $f$-compatible with $\Thetaubpol$ and such that for all $i \in \Zn$,
    $\Thetaubpoldbar((i, 0))=g_f(i)$. So claim (4) is true if $d$ is odd. We suppose that
    $d$ is even. For $\alpha=1, 2$, let $(A, \pol, \Thetaalphapol)$
isog-$f$-compatible with
    $(B, \bpol, \Thetabpol)$ such that for all $i \in \Zn$, $\tildeg^\alpha_f(i)=
    (\theta^\Thetaalphapol_{\mu_{m,n}(j)+i}(0_\Thetaalphapol))_{ j\in Z(m)}$. 
Let $g_0 \in \subG(n)$ be such that $\Thetaupol = \Thetadpol \circ g_0$.
Note that $g_0 \in \subzG(n)$ (see Definition \ref{def:G0}) because for $\alpha=1, 2$,
for $i \in \Zn$, $f(\Thetaalphapolbar((i, 0)))=g_f(i)$.
Then $\fsharp(\pol^d) \circ
    \epsilon_d(\pol) \circ \Thetaupol = \fsharp(\pol^d) \circ
    \epsilon_d(\pol) \circ \Thetadpol$ is equivalent to the fact that $f^\sharp(nd) \circ E_d(n) \circ
    g_0 : G(n) \rightarrow k^* \times \Zn$ is the projection map $(\alpha, x, y) \mapsto
    (\alpha, x)$. As $g_0$ is the identity on $\{0\} \times \dZ(n)$, we can suppose that for all $x \in \Zn$, $g_0((1, x,
0))= (\chi_{\gbar_0}((x,0)), x, \psigz(x))$ 
and for all $y \in \dZ(n)$, $g_0((1, 0,
y))=(1, 0, y)$, where $\chi_{\gbar_0}$ is a symmetric semi-character and
$\psigz : \Zn
\rightarrow \dnu_{d,n}(\dZ(d)) \subset \dZ(n)$ a linear morphism. We can redo exactly the
same computations as at the end of proof of Theorem \ref{thm:compatthetas} where we have
defined exactly the same $g_0$ and we see that 
$f^\sharp(nd) \circ E_d(n) \circ
    g_0 : G(n) \rightarrow k^* \times \Zn$ is the projection map $(\alpha, x, y) \mapsto
    (\alpha, x)$ if and only if $\psi_{g_0}(e_i)(e_i)^d = 1$ for all $i=1, \ldots, g$. But
    using Proposition \ref{prop:trans}, it means that for all $i=1, \ldots, g$, 
$(\tildeg_f^1(e_i)/\tildeg_f^2(e_i))^d=1$.
\end{proof}
\begin{remark}\label{rk:unicity}
    Recall from Definition \ref{def:fcompatibleB} that we say that 
$\Thetaubpol: \overk^* \times Z(m) \rightarrow G(\bpol)$ and $\Thetaubpold: \overk^* \times \Zn \rightarrow
    G(\bpol^d)$ partial symmetric theta structures are $f$-compatible if there exists $(A,
    \pol, \Thetapol)$ an isog-$f$-compatible marked abelian variety such that we
    have $\Thetaubpold \circ E'_d(n)=  \fsharp(\pol^d) \circ \epsilon_d(\pol) \circ
    \Thetaupol$.

    It is remarkable that while $(A, \pol, \Thetapol)$ isog-$f$-compatible to $(B, \bpol,
    \Thetabpol)$ is non-unique since by Lemma \ref{lem:action}, there is an action of
    $\subG(n)$ on $(A, \pol, \Thetapol)$, the preceding Proposition states that all theses
    choices collapse to $\dZ(2)$ for $\Thetaubpold$.

    It has an important algorithmic consequence that was pointed out in \cite{DRisogenies}: the
    non-unicity of $(A, \pol, \Thetapol)$ materialise algorithmically in the choices of
    roots of unity in the computations of good lifts. The unicity of $(B, \bpol^d,
    \Thetabpold)$ up to the action of $\dZ(d)[2]$ shows that we can expect, and this can be verified directly in the
    formulas, that most choices of roots of unity in good lifts will cancel out in the algorithms to
    compute the theta null point of $(B, \bpol^d, \Thetabpold)$.
    So, in practice, we should be able to fix $\Thetabpold$ just by extracting
    $g$ square roots.
\end{remark}

In order to compute elements of
$\Gamma(B, \bpol^d)$, we can just take the product of sections of $d$ elements of $\Gamma(B,\bpol)$.
Let $s= \prod_{i=1}^d s_i$ for $s_i \in \Gamma(B,\bpol)$ be such a section. If $g_d \in
G(\bpol^d)$ is of the form $g_d= \epsilon_d(\bpol)(g_0)$ for $g_0 \in G(\bpol)$, then $g_d(s)=
\prod_{i=1}^d g_0(s_i)$. If $g_d \notin G(\bpol^d)$, we can suppose that $g_d$ is in the
image of $\fsharp(\pol^d) \circ \epsilon_d(\pol)$ ; write $g_d = \fsharp(\pol^d) \circ \epsilon_d(\pol)(g_0)$
for $g_0 \in G(\bpol)$. Then, $\fsharp(\pol^d) \circ \epsilon_d(\pol)(\prod_{i=1}^d s_i)=
\prod_{i=1}^d g_0(f^*(s_i))$, and it is easy to see that this section is invariant by
$\epsilon_d(\pol)(\tildeK)$, the descent data of $\pol^d$ to $\bpol^d$, so that it is of the form
$f^*(s_1)$ and we can set $g_d(s)=s_1$.

The isogeny and change of level algorithms share the same structure made of three steps:
\begin{enumerate}
    \item From sections of $\bpol$, compute sections of $\bpol^d$ and compute a map
	$G(\bpol) \rightarrow G(\bpol^d)$;
    \item Compute certain level subgroups, say $\tildeK_i$ for $i=1, 2$ of $G(\bpol^d)$;
    \item Compute the action of $\tildeK_i$ on $\Gamma(B, \bpol^d)$ to obtain sections of
	$\bpol^d/ \tildeK_1$ and a theta
	structure for $A$ or a theta structure for $(B, \bpol^d)$.
\end{enumerate}

Let $(B, \bpol, \Thetabpol)$ be a marked abelian variety. Suppose that $\Thetabpold$ is a
theta structure such that $\Thetabpol$ and $\Thetabpold$ are $f$-compatible.
\begin{itemize}
    \item compute sections of $\Gamma(B, \bpol^d)$;
    \item compute the action of $g \in \Thetabpold((1, e, 0))$ for $e \in \Zn$ on an element of $\Gamma(B, \bpol^d)$.
\end{itemize}

%$g_d(s)= g_d(\prod_{i=1}^d
%s_i)=  \fsharp(\pol^d) \circ \epsilon_d(\pol)(\prod_{i=1}^d s_i)=   \prod_{i=1}^d g_0(f^*(s_i))

Alternatively, if $d=i_0^2$ for $i_0 \in \N$, we can use the fact that, as $\bpol$ is symmetric,
$[i_0]^*(\bpol)\simeq \bpol^{i_0^2}$ so that $\Gamma(B, \bpol^d)=\Gamma(B, [i_0]^*(\bpol))$.

More generally, following \cite{DRniveau}, we can write $d
= \sum_{j=1}^r a_j^2$ where $a_j$ are positive integers, so that:
\begin{equation}
    \otimes_{j = 1}^r [a_j]^*(\bpol) \simeq
\bpol^{\sum_{j=1}^r a_j^2}= \bpol^d. 
\end{equation}
Recall that we have a morphism $\epsilon_d(\pol): \Gpol \rightarrow G(\pol^d)$, $(\tau_x,
\psi_x)\mapsto (\tau_x, \psi_x^{\otimes d})$. This result generalizes immediately:

\begin{lemma}\label{lem:prodiso}
    Let $(B, \bpol)$ be an abelian variety together with an ample symmetric line bundle of
    type $K(m)$. For any $a$ positive integer such that $\gcd(a,m)=1$
there exists an injective morphism of
theta groups
    \begin{gather}
	\begin{aligned}
	    \ea(\bpol): G(\bpol) & \rightarrow
	    G([a]^*(\bpol))\\ 
	    (\tau_x, \psi_x) & \mapsto (\tau_x,
	    [a]^*(\psi_{ax})),
	\end{aligned}
    \end{gather}
    where $\psi_{ax}$ is such that $a(\tau_x, \psi_x)=(\tau_{ax}, \psi_{ax})$. Moreover, $\ea$ is
    compatible with the action on sections: for all $s \in
	    \Gamma(B,\bpol)$ and $(\tau_x, \psi_x) \in G(\bpol)$, we have:
	    \begin{equation}
		\ea(\bpol)((\tau_x, \psi_{x})) ([a]^*(s))=[a]^* ((\tau_{ax}, \psi_{ax})(s)).
	    \end{equation}
%	\item Write $d= \sum_{j=1}^r a_j^2$, with $a_j$ positive integers. 
%There exists a canonical injective morphism of
%theta groups
%$\edij: G(\bpol) \rightarrow
%	    G(\otimes_{j=1}^r [a_j]^*(\bpol))$, $(x, \psi_x) \mapsto (x, \otimes_{j=1}^r
%	    [a_j]^*(\psi_{x_j}))$, where $a_j x_j=x$.
%Let $\psidij:  \bpol^d \rightarrow \otimesbpol$ be an isomorphism. We get an isomorphism
%    $\gpsidij:G(\bpol^d) \rightarrow G(\otimesbpol)$ of theta groups, $(x, \psi_x) \mapsto (x, \psi'_x)$, where
%	    $\psi'_x = \tau^*_x(\psidij) \circ \psi_x \circ \psidij^{-1}$ as described in the
%    following Diagram:
%    \begin{equation}
%  \begin{tikzpicture}
%      \matrix [column sep={1cm}, row sep={1cm}]
%    { 
%      \node(a){$\bpol^d$}; & \node(b){$\tau^*_x(\bpol^d)$}; \\
%      \node(c){$\otimesbpol$}; & \node(d){$\tau_x^*(\otimesbpol)$}; \\
%     };
%      \draw [->] (a) -- (b) node[above, midway]{$\psi_x$};
%      \draw [->] (a) -- (c) node[left, midway]{$\psidij$};
%      \draw [->] (c) -- (d) node[below, midway] {$\psi'_x$};
%      \draw [->] (b) -- (d) node[right, midway] {$\tau^*_x(\psidij)$};
%  \end{tikzpicture}
%    \end{equation}
\end{lemma}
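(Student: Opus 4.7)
The plan is to verify in turn well-definedness, the group-morphism property, injectivity (suitably understood), and the compatibility formula for sections, all by direct manipulation of the pullback formalism and the explicit description of the theta group.

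\emph{Well-definedness.} Given $(\tau_x,\psi_x)\in G(\bpol)$ with $x\in K(\bpol)$, the $a$-th power $a(\tau_x,\psi_x)$ in the theta group is an element of the form $(\tau_{ax},\psi_{ax})$, so $\psi_{ax}:\bpol\to\tau_{ax}^*(\bpol)$ is a canonically determined isomorphism. The key observation is the identity $\tau_{ax}\circ[a]=[a]\circ\tau_x$, which yields
\begin{equation*}
[a]^*(\tau_{ax}^*\bpol)=(\tau_{ax}\circ[a])^*\bpol=([a]\circ\tau_x)^*\bpol=\tau_x^*([a]^*\bpol),
\end{equation*}
so $[a]^*(\psi_{ax}):[a]^*\bpol\to\tau_x^*([a]^*\bpol)$ is the right kind of data. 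Since $\gcd(a,m)=1$, the element $ax$ determines $x$, so the pair $(\tau_x,[a]^*(\psi_{ax}))$ genuinely lives in $G([a]^*\bpol)$ (using that $K(\bpol)\subset K([a]^*\bpol)=[a]^{-1}K(\bpol)$).

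\emph{Group-morphism property.} For $g_i=(\tau_{x_i},\psi_{x_i})$, using that $a(-)$ is a group morphism on $G(\bpol)$, one writes $a(g_1g_2)=(ag_1)(ag_2)$, which produces the cocycle identity $\psi_{a(x_1+x_2)}=\tau_{ax_2}^*(\psi_{ax_1})\circ\psi_{ax_2}$. Applying $[a]^*$ and commuting it with $\tau_{ax_2}^*$ via $[a]\circ\tau_{x_2}=\tau_{ax_2}\circ[a]$ gives
\begin{equation*}
[a]^*(\psi_{a(x_1+x_2)})=\tau_{x_2}^*([a]^*(\psi_{ax_1}))\circ[a]^*(\psi_{ax_2}),
\end{equation*}
which is exactly the composition law $\ea(g_1)\circ\ea(g_2)$ in $G([a]^*\bpol)$.

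\emph{Injectivity.} The underlying action on translations is the identity $x\mapsto x$, which is injective (as a map $K(\bpol)\hookrightarrow K([a]^*\bpol)$), and on the central $\kbarstar$ the map is $\mu\mapsto\mu^a$. Because $\gcd(a,m)=1$ together with the fact that the commutator pairing on $G(\bpol)$ takes values in $\mu_m$ forces any element of the kernel to lie in $\kbarstar$, and injectivity of $\ea$ as a morphism of central extensions is understood in this standard sense (exactly as for Mumford's $\epsilon_d(\pol)$).

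\emph{Compatibility with sections.} This is a bookkeeping computation once the previous steps are set up. Using $(\tau_y,\psi_y)(s)=\psi_y^{-1}(\tau_y^*s)$ and $\tau_x\circ[a]=[a]\circ\tau_{x}$ wait—$[a]\circ\tau_x=\tau_{ax}\circ[a]$, one computes
\begin{equation*}
\ea(\tau_x,\psi_x)([a]^*s)=[a]^*(\psi_{ax})^{-1}\tau_x^*([a]^*s)=[a]^*\bigl(\psi_{ax}^{-1}\tau_{ax}^*s\bigr)=[a]^*\bigl((\tau_{ax},\psi_{ax})(s)\bigr),
\end{equation*}
which is the required formula.

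The only genuinely delicate step is the well-definedness of $\ea$ as a morphism of theta groups (respecting the central extension), which hinges on the two identities $\tau_{ax}\circ[a]=[a]\circ\tau_x$ and the cocycle relation for $\psi$ under the group law; once these are in hand, the remaining verifications are essentially formal pullback manipulations.
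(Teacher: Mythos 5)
Your proposal follows the paper's own route step-for-step: the identification $[a]^*(\tau_{ax}^*\bpol)=\tau_x^*([a]^*\bpol)$ for well-definedness, a cocycle/functoriality argument for the group-morphism property (the paper expresses the same content as a commutative diagram), the same injectivity remark, and the same direct pullback computation for section-compatibility. So the approach is identical; the observations below apply to both your write-up and the paper's.

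There is, however, a genuine gap in the group-morphism step, which you state explicitly where the paper leaves it tacit: you assert that $g\mapsto ag$ is a group morphism on $G(\bpol)$ and deduce the cocycle $\psi_{a(x_1+x_2)}=\tau_{ax_2}^*(\psi_{ax_1})\circ\psi_{ax_2}$. But $G(\bpol)$ is non-abelian (nilpotent of class two), and the $a$-th power map is not a homomorphism: one has
\begin{equation*}
(g_1 g_2)^a = g_1^a\, g_2^a\, e_\bpol(x_1,x_2)^{\pm\binom{a}{2}},
\end{equation*}
with $e_\bpol(x_1,x_2)$ a central $m$-th root of unity that need not be killed by $\binom{a}{2}$; already $m=4$, $a=3$ gives $\binom{a}{2}=3$ and $4\nmid 3$, and a short Heisenberg computation with $(\alpha,x,y)^a=(\alpha^a y(x)^{\binom{a}{2}},ax,ay)$ reproduces the same discrepancy. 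Consequently the naive reading gives $\ea(g_1g_2)=e_\bpol(x_1,x_2)^{\pm\binom{a}{2}}\,\ea(g_1)\ea(g_2)$, which is not a group morphism; one also sees this from the commutator compatibility, since on the centre the naive $\ea$ acts as $\lambda\mapsto\lambda^a$, while $e_{[a]^*\bpol}(x_1,x_2)=e_\bpol(ax_1,ax_2)=e_\bpol(x_1,x_2)^{a^2}$, forcing $a\equiv a^2\pmod m$. The repair (and, I believe, what the notation $a(\tau_x,\psi_x)$ is intended to mean, given the standing symmetry hypothesis on $\bpol$) is to read $ag$ as Mumford's $\delta_a(\bpol)(g)=g^{(a^2+a)/2}\delta_{-1}(g^{(a^2-a)/2})$ rather than the bare power $g^a$: by Proposition \ref{prop:mumfordtech} this \emph{is} a morphism, it agrees with $g^a$ on symmetric elements (which is all that is used downstream, where everything comes from a symmetric theta structure), and on the centre it acts as $\lambda\mapsto\lambda^{a^2}$, which makes the commutator pairings match. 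With that substitution your cocycle identity holds verbatim and the rest of the proof, including the section-compatibility computation, goes through unchanged. Your handling of ``injectivity'' (to be read in the same loose sense as for Mumford's $\epsilon_d$, which is likewise a power map on the centre) matches the paper and is fine as a remark.
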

\begin{proof}
%    Denote by $K=A[a]$ the kernel of $[a]$ and let $\tildeK$ be the level subgroup above $K$
%    which is the descent data of $[a]^*(\bpol)$ over $\bpol$. Denote by
%    $\gabpol^*$ be the centralizer of $\tildeK$ in $\gabpol$. Then, by Proposition \ref{prop:compat},
%    there is an isomorphism $f^{\sharp}(\abpol): \gabpol^*/\tildeK \rightarrow G(\bpol)$.

If $\psi_{ax}: \bpol \rightarrow \tau^*_{ax} \bpol$ is an
isomorphism, then
    $[a]^*(\psi_{ax})$ is an isomorphism between $[a]^*(\bpol) \rightarrow [a]^*(
    \tau^*_{a x} \bpol)= \tau^*_x( [a]^* \bpol)$ so we have a well defined map $\ea(\bpol):
    G(\bpol) \rightarrow \gabpol$.

We show that $\ea(\bpol)$ is a group morphism.
    Let $x_i \in K(\bpol)$ for $i=1,2$, from the definition of composition of $(\tau_{a x_1},
    \psi_{a x_1})\circ (\tau_{a x_2}, \psi_{a x_2})$ where for $\nu=1, 2$, $(\tau_{a x_\nu}, \psi_{a
    x_\nu}) = a (\tau_{x_\nu}, \psi_{\nu})$:
    $$\bpol \stackrel{\psi_{a x_1}}{\longrightarrow} \tau^*_{a x_1} \bpol
    \stackrel{\tau^*_{a x_1}(\psi_{a x_2})}{\longrightarrow}
    \tau^*_{a x_1}(\tau^*_{a x_2}\bpol)=\tau^*_{a (x_1 + x_2)}(\bpol),$$
we deduce the following Diagram:
	    \begin{center}
	    \begin{tikzpicture}[ampersand replacement=\&]
		\matrix (m) [matrix of math nodes,row sep=2em,column sep=7em,minimum
		width=2em]
	      {
		  [a]^*(\bpol) \& {[a]}^*(\tau^*_{a x_1} \bpol) \&
		  {[a]}^*(\tau^*_{a x_2}(\tau^*_{a x_2} \bpol)) \\
		{[a]}^*(\bpol) \&  \tau^*_{x_1}({[a]}^*(\bpol)) \& \tau^*_{x_1+x_2}
		({[a]}^*(\bpol)) \\
	    };
		\path[->] (m-1-1) edge node[above]{$[a]^*(\psi_{a x_1})$} (m-1-2)
		(m-1-2) edge node[above]{$[a]^*(\tau^*_{a x_1}(\psi_{a x_2}))$} (m-1-3)
		(m-2-1) edge node[above]{$[a]^*(\psi_{a x_1})$} (m-2-2)
		(m-2-2) edge node[above]{$\tau^*_{x_1}([a]^*(\psi_{a x_2}))$} (m-2-3);
		\draw[double distance=1pt] (m-1-1) -- (m-2-1);
        \draw[double distance=1pt] (m-1-2) -- (m-2-2);
        \draw[double distance=1pt] (m-1-3) -- (m-2-3);
	    \end{tikzpicture}
	    \end{center}
    It is clear that the kernel of $\ea$ is the neutral element of $G(\bpol)$,
    so that $\ea$ is injective.

If $s \in \Gamma(B,\bpol)$ and $(\tau_{ax},
    \psi_{ax})=a(\tau_x, \psi_x)  \in G(\bpol)$, we have:
    \begin{gather}
	\begin{aligned}\label{eq:fonctorial}
	    [a]^*((\tau_{ax}, \psi_{ax})(s)) & = [a]^* (\psi_{ax}^{-1}  \circ \tau^*_{ax}(s)) \\
	    & = [a]^*(\psi_{ax}^{-1}) \circ [a]^* (\tau_{ax}^*(s)) \\
	    & = (([a]^*(\psi_{ax}))^{-1}) \circ (\tau^*_x ([a]^*(s))) = \ea(\bpol)((\tau_x,
	    \psi_{x}))([a]^*(s)).
	\end{aligned}
    \end{gather}
\end{proof}
From the preceding Lemma, we get the following Corollary which gives a generalisation of
Mumford's map $\epsilon_d$ defined in \cite{MumfordOEDAV1} to line bundles built as tensor products of
pullbacks by isogenies:
\begin{corollary}\label{cor:genmum}
Let $(B, \bpol, \Thetabpol)$ be a marked abelian variety of type $K(m)$.
    Let $d=\sum_{j=1}^r a_j^2$ where $a_j$ are positive integers such that $\gcd(a_j,m)=1$. There exists an
    injective morphism of theta groups 
    \begin{gather}
	\begin{aligned}
	    \eaj(\bpol): G(\bpol) & \rightarrow
	    G(\otimesbpol), \\ 
	    (\tau_x, \psi_x) & \mapsto (\tau_x,
	    \otimes [a_j]^*(\psi_{a_j x})),
	\end{aligned}
    \end{gather}
    where $(\tau_{ax}, \psi_{ax})=a(\tau_x, \psi_x)$,
compatible with the action on product of sections: for all $s \in
	    \Gamma(B,\bpol)$ and $(\tau_x, \psi_x) \in G(\bpol)$, we have:
	    \begin{gather}
		\begin{aligned}
		\eaj(\bpol)((\tau_x, \psi_{x})) (\otimes_{j=1}^r [a_j]^*(s))
		    & =\otimes_{j=1}^r \eajs(\bpol)((\tau_x, \psi_x))(s) \\
		    &= \otimes_{j=1}^r [a_j]^* (a_j (\tau_x, \psi_x)(s)).
		\end{aligned}
	    \end{gather}
\end{corollary}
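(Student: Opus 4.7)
The plan is to reduce Corollary \ref{cor:genmum} to Lemma \ref{lem:prodiso} via the tensor-product construction for theta groups. The key observation is that for any collection of ample line bundles $\bpol_1, \ldots, \bpol_r$ on $B$, the natural inclusion $\bigcap_j K(\bpol_j) \subseteq K(\bigotimes_j \bpol_j)$ is refined at the level of theta groups by a multiplication map $\mu \colon \prod_j G(\bpol_j)|_{\bigcap_j K(\bpol_j)} \to G(\bigotimes_j \bpol_j)$, sending a tuple of lifts $((\tau_x, \phi_x^{(j)}))_j$ of the same $x$ to $(\tau_x, \bigotimes_j \phi_x^{(j)})$. This is well-defined because a tensor product of isomorphisms of line bundles is an isomorphism of the tensor product, and it is a group morphism since the composition law in each $G(\bpol_j)$ is compatible with tensoring.

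First I would apply Lemma \ref{lem:prodiso} to each $a_j$ (using $\gcd(a_j, m) = 1$, which guarantees that $K(\bpol) \subseteq K([a_j]^*(\bpol))$ and that the map $\epsilon_{[a_j]}$ makes sense on all of $G(\bpol)$) to obtain injective group morphisms $\epsilon_{[a_j]}(\bpol) : G(\bpol) \to G([a_j]^*(\bpol))$, $(\tau_x, \psi_x) \mapsto (\tau_x, [a_j]^*(\psi_{a_j x}))$. These all lie above the same $x \in K(\bpol)$, so their images can be fed into the multiplication map $\mu$ above, giving
\begin{equation*}
    \eaj(\bpol) \defby \mu \circ (\epsilon_{[a_1]}(\bpol), \ldots, \epsilon_{[a_r]}(\bpol)) : G(\bpol) \to G(\otimesbpol),
\end{equation*}
which by construction is a group morphism with the stated formula. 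Injectivity is immediate: if $\eaj(\bpol)(\tau_x, \psi_x)$ is the identity, then $x = 0$ and $\bigotimes_j [a_j]^*(\psi_{a_j \cdot 0}) = \id$, which, projecting via any one factor (and using injectivity of $\epsilon_{[a_j]}$ from Lemma \ref{lem:prodiso}), forces $\psi_x = \id$.

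For the compatibility with the action on sections, the first equality
\begin{equation*}
\eaj(\bpol)((\tau_x, \psi_x))(\bigotimes_j [a_j]^*(s)) = \bigotimes_j \epsilon_{[a_j]}(\bpol)((\tau_x, \psi_x))(s)
\end{equation*}
is a direct consequence of the definition of the action of a tensor product of theta-group elements on a tensor product of sections: since $\eaj(\bpol)((\tau_x,\psi_x))$ acts factor-by-factor on pure tensors. The second equality
\begin{equation*}
\bigotimes_j \epsilon_{[a_j]}(\bpol)((\tau_x, \psi_x))(s) = \bigotimes_j [a_j]^*(a_j(\tau_x, \psi_x)(s))
\end{equation*}
is exactly the fonctoriality formula (\ref{eq:fonctorial}) from the proof of Lemma \ref{lem:prodiso} applied in each factor.

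The only subtlety, which I expect to be the mildest obstacle, is checking that the multiplication map $\mu$ is genuinely a morphism of theta groups (i.e. it commutes with the central $\overk^*$-extension structure and respects composition laws). This boils down to showing that when composing $(\tau_x, \bigotimes_j \phi_x^{(j)})$ with $(\tau_y, \bigotimes_j \phi_y^{(j)})$ in $G(\bigotimes_j \bpol_j)$, the resulting automorphism of $\bigotimes_j \bpol_j$ factors as $\bigotimes_j (\tau_y^*(\phi_x^{(j)}) \circ \phi_y^{(j)})$; this is a direct naturality check using that $\tau_y^*$ commutes with tensor product and that pullback of a tensor product factors through the tensor of pullbacks. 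Once this naturality is in hand, everything else in the corollary follows from Lemma \ref{lem:prodiso} without further calculation.
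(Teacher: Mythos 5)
Your approach is essentially the same as the paper's: apply Lemma~\ref{lem:prodiso} to each $a_j$ and then tensor the resulting isomorphisms $[a_j]^*(\psi_{a_j x})$ together. The paper states this in one line (``apply the previous proposition componentwise in the tensor product''), while you formalize it through the fiber-product-with-multiplication map $\mu$; the content is identical.

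One small gap: your injectivity argument invokes ``projecting via any one factor,'' but there is no such projection $G(\otimesbpol) \to G([a_j]^*\bpol)$. When you tensor scalar automorphisms, the individual scalars mix multiplicatively and cannot be recovered from the product; concretely, on the center $\eaj(\bpol)$ sends $c \in \overk^*$ to $c^{\sum_j a_j}$, so any argument that separates factors after tensoring cannot work. (The paper's own one-sentence assertion that ``the kernel of $\ea$ is the neutral element'' is no more detailed, so this is arguably an inherited imprecision rather than a defect in your plan, but if you intend injectivity to hold literally on all of $G(\bpol)$ you would need to justify it by other means, or weaken it to injectivity on the underlying $K(\bpol)$.) The remaining parts — the group-morphism check for $\mu$ via naturality of pullback under tensoring, and the two displayed equalities for the action on product sections via the factor-by-factor action and Equation~(\ref{eq:fonctorial}) — are correct and match the paper's reasoning.
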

\begin{proof}
    The first claim is an immediate consequence of the previous Proposition
    and the fact that from the isomorphisms $[a_j]^*(\psi_{a_j x}): [a_j]^*(\bpol) \rightarrow
    \tau^*_x([a_j]^*(\bpol))$ we get an isomorphism $\otimes_{j=1}^r [a_j]^*(\psi_{a_j x})
    : \otimes_{j=1}^r  [a_j]^*(\bpol) \rightarrow \otimes_{j=1}^r
    \tau^*_x([a_j]^*(\bpol))$. We can then apply the previous Proposition componentwise in
    the tensor product. The second claim is immediate.
\end{proof}
We also deduce immediately:
\begin{corollary}\label{cor:productsection}
With the same hypothesis as in the preceding Corollary, set
    $\Thetabpoldij = \eaj \circ \Thetabpol$.
    Let $s_0 \in \Gamma(B,\bpol)$ then $s = \prod_{j=1}^r
	    [a_j]^*(s_0) \in \Gamma(B, \otimesbpol)$. 

	    Moreover, we have for all $t \in G(n)$:
	    \begin{equation}
		\Thetabpoldij(t)(s)=\prod_{j=1}^r \eajs(\bpol)(\Thetabpol(t))[a_j]^*(s_0).
	    \end{equation}
\end{corollary}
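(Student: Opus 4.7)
My plan is to derive both claims as direct consequences of Corollary \ref{cor:genmum}, combined with the definition $\Thetabpoldij = \eaj \circ \Thetabpol$. The first claim is essentially tautological: since $s_0 \in \Gamma(B, \bpol)$, for each index $j$ the pullback $[a_j]^*(s_0)$ is a global section of $[a_j]^*(\bpol)$, and the tensor product over $j$ of global sections of the factor line bundles gives a global section of $\otimes_{j=1}^r [a_j]^*(\bpol)$.

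For the second claim, I would fix $t \in G(n)$ and set $g = \Thetabpol(t) \in G(\bpol)$, so that by the very definition of $\Thetabpoldij$ we have $\Thetabpoldij(t) = \eaj(\bpol)(g)$. The action compatibility formula established in Corollary \ref{cor:genmum} then states exactly that
\begin{equation*}
\eaj(\bpol)(g)\Bigl(\otimes_{j=1}^r [a_j]^*(s_0)\Bigr) = \otimes_{j=1}^r \eajs(\bpol)(g)\bigl([a_j]^*(s_0)\bigr),
\end{equation*}
since $\eaj(\bpol)$ was defined precisely so that its action on the tensor product factors componentwise through the maps $\eajs(\bpol)$. Substituting back $g = \Thetabpol(t)$ yields the claimed identity.

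I do not anticipate any real obstacle: the statement is a direct repackaging of Corollary \ref{cor:genmum} through the definition of $\Thetabpoldij$. The only minor care required is to keep track of the identification between the tensor product of the local actions $\eajs(\bpol)(\Thetabpol(t))$ on each factor $[a_j]^*(s_0)$ and the action of the combined operator $\Thetabpoldij(t)$ on the full product section $s$, which is built into the construction of $\eaj$ in Corollary \ref{cor:genmum}.
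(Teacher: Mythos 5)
Your proof is correct and matches the paper's intent exactly: the paper provides no explicit argument, merely prefacing the corollary with ``We also deduce immediately,'' and your two steps — the tautological observation that a tensor product of pullback sections lies in $\Gamma(B, \otimesbpol)$, followed by substituting $g = \Thetabpol(t)$ into the action-compatibility formula of Corollary \ref{cor:genmum} — are precisely that immediate deduction.
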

\begin{lemma}\label{lem:tensorquo}
    Let $(A, \pol)$ and $(B, \bpol)$ be abelian varieties and suppose that there exists an
    isogeny $f: A \rightarrow B$ with kernel $K$ isotropic for $e_\pol$. Suppose that
    $\tildeK$ is a level subgroup which is the descent data of $\pol$ to $\bpol$ then
    $\eaj(\tildeK)$ is the descent data of 
    $\polaj$ to $\bpolaj$. Let $\psi: f^*(\bpol) \rightarrow \pol$ the isomorphism
    associated to $\tildeK$ then $\otimes_{j=1}^r [a_j]^* \psi: f^*((\otimesbpol))
    \rightarrow \otimespol$ is the isomorphism associated to $\eaj(\tildeK)$.
    In particular, if $s \in \Gamma(B,\bpol)$, we have 
    \begin{equation}
    \psi (f^*(\otimes_{j=1}^r [a_j]^*(s)))= \otimes_{j=1}^r [a_j]^*(f^*(s)).
    \end{equation}
\end{lemma}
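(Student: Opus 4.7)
The plan is to first canonically identify the line bundles $f^*(\otimesbpol)$ and $\otimespol$, then to check the descent-data diagram (\ref{eq:descentdata}) for $\eaj(\tildeK)$, and finally to deduce the formula on sections by functoriality. Since $f$ is a group homomorphism, $[a_j]\circ f = f\circ [a_j]$, so for each $j$ we have a canonical identification $f^*([a_j]^*\bpol) = [a_j]^*(f^*\bpol) = [a_j]^*\pol$, whence $f^*(\otimesbpol) = \otimespol$. Tensoring the iso $\psi : f^*\bpol\to\pol$ produces a candidate descent isomorphism $\Psi \defby \otimes_{j=1}^r [a_j]^*\psi : f^*(\otimesbpol) \to \otimespol$.

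Next I verify that for every $(\tau_x,\psi_x)\in\tildeK$, the pair $\eaj((\tau_x,\psi_x)) = (\tau_x, \otimes_j [a_j]^*\psi_{a_jx})$ fits into the analogue of diagram (\ref{eq:descentdata}) for $\Psi$. The key observation is that $\tildeK$ is a level subgroup, hence a group, so $a_j(\tau_x,\psi_x) = (\tau_{a_jx},\psi_{a_jx})$ also belongs to $\tildeK$; in particular $\psi_{a_jx}$ satisfies the descent compatibility $\tau_{a_jx}^*\psi = \psi_{a_jx}\circ\psi$. Using $[a_j]\circ\tau_x = \tau_{a_jx}\circ[a_j]$, I compute on one hand
\begin{equation*}
\tau_x^*\Psi \;=\; \bigotimes_{j=1}^r \tau_x^*\bigl([a_j]^*\psi\bigr) \;=\; \bigotimes_{j=1}^r [a_j]^*\bigl(\tau_{a_jx}^*\psi\bigr),
\end{equation*}
and on the other hand
\begin{equation*}
\Bigl(\bigotimes_{j=1}^r [a_j]^*\psi_{a_jx}\Bigr)\circ\Psi \;=\; \bigotimes_{j=1}^r [a_j]^*\bigl(\psi_{a_jx}\circ\psi\bigr) \;=\; \bigotimes_{j=1}^r [a_j]^*\bigl(\tau_{a_jx}^*\psi\bigr).
\end{equation*}
These agree, which is exactly the assertion that $\eaj((\tau_x,\psi_x))$ is the unique element of $G(\otimespol)$ over $x$ realising the descent of $\otimespol$ to $\otimesbpol$ via $\Psi$. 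Since $\eaj$ is an injective morphism of theta groups (Lemma in the excerpt leading to Corollary~\ref{cor:genmum}) and $\tildeK$ projects isomorphically to $K$, the image $\eaj(\tildeK)$ projects isomorphically to $K$ as well, hence is a level subgroup over $K$. Combined with the above diagram check, this identifies $\eaj(\tildeK)$ with the descent data of $\otimespol$ to $\otimesbpol$, and $\Psi$ with the associated isomorphism.

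For the concluding formula on sections, apply $\Psi\circ f^*$ to $\otimes_{j=1}^r [a_j]^*(s)$: by functoriality of pullback, $f^*\circ[a_j]^* = [a_j]^*\circ f^*$ on sections, so
\begin{equation*}
\Psi\bigl(f^*(\otimes_j [a_j]^*(s))\bigr) \;=\; \bigotimes_{j=1}^r [a_j]^*\psi\bigl([a_j]^*(f^*s)\bigr) \;=\; \bigotimes_{j=1}^r [a_j]^*(f^*s),
\end{equation*}
where the second equality uses that $\psi$ is the descent isomorphism and hence acts as the identity under the implicit trivialisation identifying $f^*\bpol$ with $\pol$ once $\Psi$ is applied. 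The main obstacle is purely bookkeeping: keeping straight the interaction between the tensor factors, the pullback by $[a_j]$, and the translation action, while invoking at the crucial step that $\tildeK$ being a subgroup allows us to lift the descent compatibility from $x$ to every $a_jx$ simultaneously; no deep new ingredient is required beyond Corollary~\ref{cor:genmum} and the definition of descent data recalled in diagram (\ref{eq:descentdata}).
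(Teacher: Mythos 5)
Your proof is correct and follows essentially the same route as the paper: both verify the descent-data diagram for $\eaj(\tildeK)$ by pulling back the compatibility relation $\psi_{a_jx}=\tau^*_{a_jx}(\psi)\circ\psi^{-1}$ along $[a_j]$ (using $[a_j]\circ\tau_x=\tau_{a_jx}\circ[a_j]$) and then taking the tensor product. Your write-up is somewhat more explicit about the preliminary identification $f^*(\otimesbpol)=\otimespol$, the level-subgroup check via injectivity of $\eaj$, and the final formula on sections, but no new idea is introduced.
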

\begin{proof}
    For $x \in K$, by definition of $\psi$ (see Diagram (\ref{eq:descentdata})) we have
    $(\tau_{a_j x}, \psi_{a_jx}) \in \tildeK$ if and only if $\psi_{a_j x}
    =\tau^*_{a_jx}(\psi) \circ \psi^{-1}$. But then $[a_j]^*(\psi_{a_jx})= [a_j]^* (\tau^*_{a_j
    x}(\psi) \circ
    \psi^{-1})=\tau^*_{x}([a_j]^*(\psi)) \circ [a_j]^*(\psi)^{-1}$. By taking the tensor
    product, we obtain
    $\otimes_{j=1}^r [a_j]^* (\psi_{a_j x})=
    \otimes_{j=1}^r [a_j]^*(\tau^*_{a_j x}(\psi) \circ \psi^{-1})=
    \otimes_{j=1}^r \tau^*_x([a_j]^*(\psi)) \circ [a_j]^*(\psi)^{-1}=
    \tau^*_x(\otimes_{j=1}^r [a_j]^*(\psi)) \circ (\otimes_{j=1}^r [a_j]^*(\psi)^{-1})$.
\end{proof}
Now let $(B, \bpol, \Thetabpol)$ and $(A, \pol, \Thetabpol)$ be isog-$f$-compatible
marked abelian varieties
of respective types $K(m)$ and $K(n)$. Let $K$ be the kernel of $f:A \rightarrow B$ and
let $\tildeK$ be the level subgroup above $K$ which is the descent data of
$f^*(\bpol)=\pol$ to $\bpol$. Then it is clear that $\eaj(\tildeK)$ is the descent data of
$\polaj$ to $\bpolaj$. This allows us to put the Definition:
\begin{definition}\label{def:fcompatibleC}
    Let $m,n,d > 1$ be integers such that $n=md$ and $d|m$.
    Let $a_j$ for $j=1, \ldots, r$ be positive integers such that $d=\sum_{j=1}^r a_j^2$
    and we suppose that for $j=1, \ldots, r$, $\gcd(a_j, n)=1$.

    We keep the same notation as Definition \ref{def:fcompatibleB} for $E'_d(n)$. Let
$\Thetaubpol: \overk^* \times Z(m) \rightarrow G(\bpol)$ and $\Thetaubpoldij
    : \overk^* \times \Zn \rightarrow G(\otimesbpol)$ be 
    partial symmetric theta structures.

    We say that they are $f$-compatible if there
    exists $(A, \pol, \Thetapol)$ an isog-$f$-compatible marked abelian variety such that we
    have the equality of maps $\overk^* \times \Zn \rightarrow G(\bpol^d)$:
    \begin{equation}
	\Thetaubpoldij \circ E'_d(n) = \fsharp(\polaj) \circ \eaj(\pol) \circ
	\Thetaupol,
    \end{equation}
    where $f^{\sharp}(\polaj): G(\polaj)^* \rightarrow G(\bpolaj)$ is the map defined by
    $\eaj(\tildeK)$ in Definition \ref{def:fsharp}.
\end{definition}
\begin{remark}\label{rk:extension}
One verifies that the results proved for $f$-compatible theta structures following
Definition \ref{def:fcompatibleB} extend
immediately for $f$-compatible theta structure in the sense of Definition
\ref{def:fcompatibleC}. In particular, Proposition \ref{prop:unicity} applies mutatis
mutandis for $f$-compatible theta structure in general.
\end{remark}

Let $(B, \bpol, \Thetabpol)$ be a marked abelian variety of type $K(m)$.
The following Proposition explains how we can use the computation with affine lifts to
compute the action of $G(n)$ via a theta structure of type $K(n)$ on sections of
$\bpol^d$. This is a key ingredient for the main theorems of this section.
\begin{proposition}\label{prop:thetadaction}
    Let $m,n,d >1$ be integers such that $n=md$ and $d|m$. Suppose that there exists
    $(a_j)_{j=1, \ldots, r}$ positive integers such that $d= \sum_{j=1}^r a_j^2$ and $gcd(a_j, n)=1$.

    Let $(B, \bpol, \Thetabpol)$ be a marked abelian variety of type $K(m)$.
        We suppose given $G_1=\{ g_1(i), i \in \Zn\}$ (resp. $G_2=\{ g_2(i), i \in \dZ(n)\}$) a
    subgroup of $B[n]$ isomorphic to $Z(n)$, isotropic for $e_{B,n}$ such that $G_1
    \subset \Thetabpolbar(Z(m) \times \{ 0 \})$ (resp. $G_2 \subset \Thetabpolbar(\{ 0 \}
    \times \dZ(m))$) and such that for all $x \in G_1$ (resp. $x \in G_2$), $x$ is
    symmetric compatible with $\Thetabpol(\{ 1 \} \times Z(m) \times \{0 \} )$ (resp.
    $\Thetabpol(\{ 1 \} \times \{ 0 \} \times \dZ(m))$). 

    For $\nu=1, 2$, fix a good lift $\tildeG_\nu$ of $G_\nu$.
    Let $x \in B(\overk)$, fix an affine lift $\tildexthetabpol$, fix good lifts
    $\widetilde{x+g_2(j_2)}^\Thetabpol$ 
    for $j_2 \in \dZ(n)$ with respect to $0_\Thetabpol$
    and $\tildeG_2$ and then for all $(j_1, j_2) \in \Zn \times \dZ(n)$, fix good lifts
    $\widetilde{x+g_2(j_2)+g_1(j_1)}^\Thetabpol$ with respect to $0_\Thetabpol$ and $\tildeG_1$.
    
    Let $U$ be an affine open subset of $B$ containing $G_1 + G_2$,
    $0_{\Thetabpol}$, $\lambda x+G_1 + G_2$ for $\lambda =1, \ldots, d$, and choose an
    isomorphism $\bpol(U) \simeq \stsheaf_B(U)$ so that for all $s \in \Gamma(B,\bpol)$ and
    all $x \in U(\overk)$ we can evaluate $s$ in $x$: we denote by $s(x) \in \overk$ the
    evaluation.

    Let 
    Let $(j_1, j_2) \in \Zn \times \dZ(n)$. Write $j_1 = \mu_{m,n}(j_{1,m}) + j_{1,n}$ with
    $j_{1,m} \in Z(m)$ and $j_{1,n} \in \Zn$. 
Then there exists a theta structure $\Thetabpoldij$ of type $K(n)$ for $\otimes_{j=1}^r
    [a_j]^*\bpol\simeq \bpol^d$ which is $f$-compatible with $\Thetabpol$ and such that
    for all $i \in \Zn$, $\Thetabpoldbar((i,0))=g_1(i)$, and for all $i \in \dZ(n)$,
    $\Thetabpoldijbar((0,i))=g_2(i)$ so that for $\alpha \in Z(m)$,
    there exists a constant $C \in \overk$ independent of $\alpha, j_1, j_2$ such that:
	    \begin{align}
		\prod_{j=1}^r(a_j(\widetilde{x+g_1(j_1)+g_2(j_2)}^\Thetabpol))_\alpha & =C
		j_2(\mu_{m,n}(j_{1,m}))\Thetabpoldij((1,j_{1,n},j_2))(\prod_{j=1}^r
		[a_j]^*(\theta^\Thetabpol_{a_j j_{1,m} +\alpha}))(x). \label{prop26:eq12}
	    \end{align}
	    In the previous equation, for $(j_1, j_2) \in \Zn \times \dZ(n)$, 
	    $$a_j(\widetilde{x+g_1(j_1)+g_2(j_2)}^\Thetabpol)= \scalmult(a_j, \widetilde{x+g_1(j_1)+g_2(j_2)}^\Thetabpol,
	    \widetilde{x+g_1(j_1)+g_2(j_2)}^\Thetabpol, \tildenullbpol, \tildenullbpol).$$
	    Recall from Definition \ref{def:affinepoint} that
	    $(\widetilde{x+g_1(j_1)+g_2(j_2)}^\Thetabpol)_\alpha$
	    in the previous Equation is the $\alpha^{th}$-coordinate of the affine point
	    $\widetilde{x+g_1(j_1)+g_2(j_2)}^\Thetabpol$.
\end{proposition}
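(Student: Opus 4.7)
The strategy is to identify each factor $(a_j \widetilde{x+g_1(j_1)+g_2(j_2)}^\Thetabpol)_\alpha$ as (a scalar multiple of) an evaluation of $[a_j]^*(\theta^\Thetabpol_{\alpha'})$ at some point, and then recognize the product as the value of an element of $\Gamma(B,\otimespol)$ of the form provided by Corollary \ref{cor:productsection}, shifted by an element of $\Thetabpoldij$. The phase $j_2(\mu_{m,n}(j_{1,m}))$ will fall out of the commutator relation in the Heisenberg group when one splits $j_1$ into its $Z(m)$-part and its $Z(n)$-part.

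First, I would explain how to interpret $a_j \tildey$ for $\tildey$ a good lift of $y = x + g_1(j_1) + g_2(j_2)$. Since $[a_j]^*(\bpol) \simeq \bpol^{a_j^2}$ (because $\bpol$ is symmetric and $\gcd(a_j,m)=1$), the $\scalmult$ by $a_j$ of an affine lift of $y$ produces, in the basis associated to $\Thetabpol^{a_j^2}$, the affine coordinates of $a_j y$. Composing this with the canonical isomorphism $[a_j]^*\bpol \simeq \bpol^{a_j^2}$ and invoking Lemma \ref{lem:tildefscalmult} (applied to the isogeny $[a_j]$, which is $[a_j]$-compatible in the sense discussed) identifies the $\alpha$-coordinate of $a_j \tildey$ with $[a_j]^*(\theta^\Thetabpol_\alpha)(y)$ up to a global factor depending on the chosen rigidification and on $a_j$, but not on $\alpha, j_1, j_2$, because the good-lift property in Definitions \ref{def:good} and \ref{def:goodz} pins these factors down uniformly.

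Next, taking the product over $j=1,\dots,r$ and using $\otimespol \simeq \bpol^d$, one obtains
\[
\prod_{j=1}^r (a_j \tildey)_\alpha \;=\; C_0 \cdot \Bigl(\prod_{j=1}^r [a_j]^*(\theta^\Thetabpol_\alpha)\Bigr)(y),
\]
for some constant $C_0\in\overk$ independent of $y$, $\alpha$ (this is where Corollary \ref{cor:productsection} is crucial). Now $y = x + g_1(j_1)+g_2(j_2)$ and by the definition of $\Thetabpoldij$, translation by $g_1(j_1)+g_2(j_2)$ acts on sections of $\otimespol$ exactly as $\Thetabpoldij((1,j_1,j_2))$, up to the normalization forced by our chosen rigidifications. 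Thus
\[
\Bigl(\prod_{j=1}^r [a_j]^*(\theta^\Thetabpol_\alpha)\Bigr)(y) \;=\; C_1\cdot \Thetabpoldij((1,j_1,j_2))\Bigl(\prod_{j=1}^r [a_j]^*(\theta^\Thetabpol_\alpha)\Bigr)(x),
\]
again with $C_1$ independent of $\alpha,j_1,j_2$ (the good-lift hypothesis is exactly what removes any residual affine ambiguity here).

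The last step is to rewrite $\Thetabpoldij((1,j_1,j_2))$ in terms of $j_{1,m}\in Z(m)$ and $j_{1,n}\in Z(n)$. Decomposing $j_1 = \mu_{m,n}(j_{1,m}) + j_{1,n}$ and using the Heisenberg group law,
\[
\Thetabpoldij((1,j_1,j_2)) \;=\; j_2(-\mu_{m,n}(j_{1,m}))^{-1}\,\Thetabpoldij((1,j_{1,n},j_2)) \circ \Thetabpoldij((1,\mu_{m,n}(j_{1,m}),0)),
\]
and the commutator formula \eqref{eq:commutator} produces the scalar $j_2(\mu_{m,n}(j_{1,m}))$. The remaining factor $\Thetabpoldij((1,\mu_{m,n}(j_{1,m}),0))$ acts on $\prod_j [a_j]^*(\theta^\Thetabpol_\alpha)$: by the compatibility with the generalized $\epsilon$-map (Corollary \ref{cor:productsection} and Corollary \ref{cor:genmum}), its action factors through $\Thetabpol$ applied to each tensor factor, shifting the index $\alpha$ to $a_j j_{1,m}+\alpha$ in the $j$-th factor because $[a_j]^*$ multiplies the translation by $a_j$. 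Collecting the scalars into a single constant $C\in\overk$ that does not depend on $(\alpha,j_1,j_2)$ yields \eqref{prop26:eq12}.

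The main obstacle will be checking that all the scalar ambiguities (from the rigidifications, from the good-lift conditions on $\tildeG_1,\tildeG_2$, from the choice of $\widetilde{x+g_1(j_1)+g_2(j_2)}^\Thetabpol$, and from the identifications $[a_j]^*\bpol\simeq\bpol^{a_j^2}$) actually combine into a single constant $C$ that is independent of $\alpha$, $j_1$, $j_2$. The isotropy of $G_1,G_2$ and the symmetric-compatibility assumption guarantee that the sequence of good lifts is well-defined in the sense of Proposition \ref{sec2:proprecons}, and the unicity statement of Corollary \ref{cor:unicity} (extended as in Remark \ref{rk:extension}) ensures that the partial theta structure $\Thetabpoldij$ is uniquely pinned down by the data $g_1,g_2$, so no spurious index-dependent factor can appear.
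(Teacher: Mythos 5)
The student's proposal is essentially a restatement of the conclusion, and the hard content of the proposition is asserted rather than proved. The pivotal claim in the proposal is that
\[
\Bigl(\prod_{j=1}^r [a_j]^*(\theta^\Thetabpol_\alpha)\Bigr)(x+g_1(j_1)+g_2(j_2)) \;=\; C_1\cdot \Thetabpoldij((1,j_1,j_2))\Bigl(\prod_{j=1}^r [a_j]^*(\theta^\Thetabpol_\alpha)\Bigr)(x)
\]
with $C_1$ independent of $\alpha, j_1, j_2$, justified by the one-liner ``the good-lift hypothesis is exactly what removes any residual affine ambiguity.'' But establishing precisely this independence is the whole burden of the proposition; the good-lift hypotheses for $\tildeG_1$ and $\tildeG_2$ are each normalisations along one isotropic leg, and one must show that they patch across the cross terms $j_1$ and $j_2$. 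This is not formal.

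The proposal also treats $G_1$ and $G_2$ symmetrically, but they play asymmetric roles in the affine-coordinate picture: $G_1 \subset \Thetabpolbar(Z(m)\times\{0\})$ acts on the coordinate index by a shift (a direct application of the $f$-compatible picture via $\tildef$), whereas $G_2 \subset \Thetabpolbar(\{0\}\times\dZ(m))$ sits on the dual side, and the good-lift normalisation of $\tildeG_2$ interacts with the affine coordinates through a discrete Fourier transform. In the paper's proof this is handled by introducing the swap $\Delta(m)$, the twisted theta structure $\Thetabpoldel=\Thetabpol\circ\Delta(m)$, the linear map $\tildeH_{\Delta(m)}$ and its equivariance with respect to $\diffadd$, $\inv$ and the theta-group action, and then working with a dual-$f'$-compatible abelian variety $(A',\polp,\Thetapolp)$ rather than an $f$-compatible one. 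The proof of the $j_1=0$ case is genuinely different from the $j_2=0$ case — it involves $\tildef'$ which \emph{sums} over fibers of $\rho_{n,m}$ rather than projecting, and then an untwisting step (Equations labelled \eqref{eq:prop16:tech312}–\eqref{eq:prop16:tech314} in the paper) recovering $\theta^{\Thetapolp}_{\mu_{m,n}(\alpha)}$ from a character-weighted sum over $\theta^{\Thetapolpdel}_\nu$. None of this appears in the proposal; the proposal never explains how the $G_2$-direction produces a single constant compatible with the one coming from the $G_1$-direction.

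Finally, the suggested invocation of Lemma \ref{lem:tildefscalmult} ``applied to the isogeny $[a_j]$'' does not match the mechanism in the paper. That lemma is invoked for the $f$-compatible (resp.\ dual-$f'$-compatible) isogeny $f : A\to B$ (resp.\ $f' : A'\to B$) coming from the good lift, and the role of $[a_j]$ enters through Lemma \ref{lem:prodiso} and Corollaries \ref{cor:genmum}, \ref{cor:productsection}. Your sketch identifies the correct target formula and correctly explains how the index shift $a_j j_{1,m}+\alpha$ and the phase $j_2(\mu_{m,n}(j_{1,m}))$ arise from the Heisenberg group law, but the chain of reasoning connecting affine coordinates of good lifts to evaluations of sections is missing — both for the $j_2=0$ case (where Theorem \ref{th:main1} furnishes the $f$-compatible $(A,\pol,\Thetapol)$) and especially for the $j_1=0$ case (the $\Delta(m)$-twist), which the proposal does not address at all.
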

\begin{proof}
    We first prove Equation (\ref{prop26:eq12}) in the case $j_2=0$.
    By Theorem \ref{th:main1} as we have chosen $\tildeG_1$ a good lift of $G_1$ with
    respect to $\tildenullbpol$, we have fixed
    $(A, \pol, \Thetapol)$ $f$-compatible with $(B, \bpol, \Thetabpol)$. 
Let $K$ be the kernel of $f$ and $\tildeK$ be the level subgroup above $K$ defined by the
    descent data $f^*(\bpol)=\pol$.
    Then, by
    definition of $f$-compatible, when considered as maps $\overk^* \times \Zn \subset G(n)
    \rightarrow G(\otimes_{j=1}^r [a_j]^*(\bpol))$, we have the equality:
    \begin{equation}
    \Thetabpoldij \circ E'_d(n)=  \fsharp(\polaj) \circ \eaj(\pol) \circ
	\Thetapol,
    \end{equation}
    where $\fsharp(\polaj)$ is defined by $\eaj(\pol)(\tildeK)$ following Lemma
    \ref{lem:tensorquo}.

    Let $y \in A(\overk)$ be such that $f(y)=x$. Let $\tildey$ be an affine lift of $y$
	such that $\tildef(\tildey)= C*\tildex$ for $C \in \overk^*$. Then, for $j_1 \in
	Z(n)$, $\tildef(
	\Thetapol((1, j_1,
	0))\tildey)= C* (\widetilde{x + g_1(j_1)})^\Thetabpol$ by Definition \ref{def:excelz} of an excellent lift. We
	thus have:
	\begin{gather}\label{eq:prop26:tech1}
	    \begin{aligned}
		(a_j (\widetilde{x + g_1(j_1)}^\Thetabpol))_\alpha
		& = C a_j \tildef(\Thetapol((1,j_1, 0)) \tildey)_\alpha & \\
		& = C \tildef(a_j (\Thetapol((1,j_1, 0)) \tildey))_\alpha & (\text{because of Lemma
		\ref{lem:tildefscalmult}}) \\
		& = C (a_j (\Thetapol((1,j_1, 0)) \tildey))_{\mu_{m,n}(\alpha)} & (\text{by
		definition of $\tildef$}) \\
	& = C ((a_j \Thetapol((1,j_1, 0))) (a_j \tildey))_{\mu_{m,n}(\alpha)} &
	(\text{because of Corollary \ref{cor:diffaddaction}}) \\
		& = C [a_j]^* (\Thetapol((1,a_j (\mu_{m,n}(j_{1,m})+j_{1,n}), 0)) (\theta^{\Thetapol}_{\mu_{m,n}(\alpha)}))(y) &
		(\text{by definition})\\
		& = C \eajs(\pol)(\Thetapol((1, j_{1,n},0)))
		[a_j]^*(\theta^{\Thetapol}_{\mu_{m,n}(a_j j_{1,m}+\alpha)})(y). &
		(\text{because of Lemma \ref{lem:prodiso}})
	    \end{aligned}
	\end{gather}
For the last equality, we used the fact that $a_j \Thetapol((1, j_{1,n}, 0))= \Thetapol((1,
    a_j j_{1,n},
    0))$. As a consequence, we have:
    \begin{gather}\label{eq:prop26eq16}
	\begin{aligned}
	    \prod_{j=1}^r(a_j(\widetilde{x+g_1(j_1)}^\Thetabpol))_\alpha &= C^r
	    \prod_{j=1}^r \eajs(\pol)(\Thetapol((1, j_{1,n},0)))
	    [a_j]^*(\theta^{\Thetapol}_{\mu_{m,n}(a_j j_{1,m}+\alpha)})(y) & (\text{because of Equation
		(\ref{eq:prop26:tech1})}) \\
		&= C^r \Thetapoldij((1,j_{1,n},0))(\prod_{j=1}^r
		[a_j]^*(\theta^\Thetapol_{\mu_{m,n}(a_j j_{1,m}+\alpha)}))(y) & (\text{because of Corollary
		\ref{cor:productsection}})\\
		&= C^r \Thetabpoldij((1,j_{1,n},0))(\prod_{j=1}^r
		[a_j]^*(\theta^\Thetabpol_{a_j j_{1,m}+\alpha}))(x).
		& \\
	\end{aligned}
    \end{gather}
    We get the last equality by applying Lemma \ref{lem:tensorquo} and Corollary
		\ref{cor:canonical}.

Next, we prove Equation (\ref{prop26:eq12}) in the case $j_1=0$. For this, let $(e_i,
    \de_i)_{i=1,\dots,g} \in (Z(m) \times \dZ(m))^g$ be the canonical symplectic basis of $K(m)$ and
    consider the automorphism $\Delta(m): G(m) \rightarrow G(m)$ (corresponding to $H_g$
    of Section \ref{sec:trans}) such that for all $i=1,
    \ldots, g$, $\Delta(m)((1, e_i,
    0))= (1, 0, -\de_i)$ and $\Delta(m)((1, 0, \de_i))=(1, e_i, 0)$. We then consider the
    theta structure $\Thetabpoldel = \Thetabpol \circ \Delta(m)$. It is clear that
    $\Delta(m)$
    permutes the role of $Z(m)$ and $\dZ(m)$ in the theta structure so that we can do
    again the same thing as before by replacing $G_1$ by $G_2$. We denote by $\Deltambar:
    K(m) \rightarrow K(m)$ the map induced by $\Deltam$ on $K(m)$.
    Note that by Proposition \ref{prop:base}, we have for $\nu \in Z(m)$ and $C \in
    \overk^*$,
    \begin{equation}
	\theta^{\Thetabpoldel}_{\nu} = C \frac{1}{m^g}\sum_{\nu' \in Z(m)} \Deltambar(\nu)(\nu')
	\theta^{\Thetabpol}_{\nu'}.
    \end{equation}
Consider the morphism of the affine spaces:
\begin{gather}
\begin{aligned}
\tildeH_{\Delta(m)}: \Aff^{Z(m)} & \rightarrow \Aff^{Z(m)} \\
(\alpha_\nu)_{\nu\in Z(m)} & \rightarrow \frac{1}{m^g} (\sum_{\nu' \in Z(m)} \Deltambar(\nu)(\nu') \alpha_{\nu'})_{\nu\in Z(m)}.
\end{aligned}
\end{gather}
We denote by $H_{\Delta(m)}: \proj^{Z(m)} \rightarrow \proj^{Z(m)}$ the projective
morphism induced by $\tildeH_{\Delta(m)}$. It is clear that $H_{\Delta(m)}$ maps points of
$e_{\Thetabpol}(B)$ to points of $e_{\Thetabpoldel}(B)$. In particular, we have
$0_{\Thetabpoldel}=H_{\Delta(m)}(0_{\Thetabpol})$. So we can suppose
that $\widetilde{0}_{\Thetabpoldel}=\tildeH_{\Delta(m)}(\tildenullbpol)$. Then it is
true that
$\tildeG_2^\Thetabpoldel$ (resp. $\tildeG_2^\Thetabpol$) is a good lift of $G_2$ with
respect to $\widetilde{0}_{\Thetabpoldel}$ (resp. $\tildenullbpol$) if and only if
$\tildeG_2^\Thetabpoldel= \tildeH_{\Delta(m)}(\tildeG_2^\Thetabpol)$. This is an immediate
consequence of Definition \ref{def:good} of a good lift and the facts:
\begin{enumerate}
    \item if $\tildex, \tildey,
	\tildexmy \in \Aff^{Z(m)}$ are affine points for
	$\tildenullbpol$ (lift of projective points of $e_{\Thetabpol}(B)$) then 
	\begin{equation}
	    \tildeH_{\Delta(m)}(\diffadd ( \tildex,   \tildey, \tildexmy,
	    \tildenullbpol))=
	    \diffadd (  \tildeH_{\Delta(m)}(\tildex),  \tildeH_{\Delta(m)}(\tildey),
	    \tildeH_{\Delta(m)}(\tildexmy), \tildeH_{\Delta(m)}(\tildenullbpol)) \in \Aff^{\dZ(m)};
	\end{equation}
    \item $\inv(\tildeH_{\Delta(m)}(\tildex))=\tildeH_{\Delta(m)}(\inv(\tildex))$;
    \item if $t \in G(n)$, $\tildeH_{\Delta(m)}(t.\tildex)= \Delta(m)(t).
	\tildeH_{\Delta(m)}(\tildex)$. 
\end{enumerate}
Indeed, fact (1) is exactly Lemma \ref{lem:tildefscalmult} with $\tildef = \tildeH_{\Deltam}$. Fact (2) comes from the fact that $\Delta(m)(-\nu)(\nu')=\Delta(m)(\nu)(-\nu')$.
Finally, fact (3), for $t \in G(n)$, we remark that $\Thetabpol(t) =
\Thetabpoldel(\Delta(m)(t))$. Thus
$\tildeH_{\Delta(m)}(t.\tildex)=\tildeH_{\Delta(m)}(\Thetabpol(t)
\tildex)=\Thetabpoldel(\Delta(m)(t)) \tildeH_{\Delta(m)}(\tildex)=\Delta(m)(t).
\tildeH_{\Delta(m)}$.

By Theorem \ref{th:main1}, as we have chosen $\tildeG_2^\Thetabpoldel \subset \Aff^{\dZ(m)}$ a good lift of $G_2$ with
    respect to $\widetilde{0}_{\Thetabpoldel}$, we have fixed
    $(A', \polp, \Thetapolp)$ $f'$-compatible with $(B, \bpol, \Thetabpol)$,
    $f'$ being the contragredient isogeny of $\hatf':
    B \rightarrow A'$, where $\hatf'$ is defined by its kernel $\Thetabpolbar( \{ 0 \}
    \times \dnu_{d,m}(\dZ(d)))$.
Let $K'$ be the kernel of $f'$ and $\tildeK'$ be the level subgroup above $K'$ defined by the
    descent data $f'^*(\bpol)=\polp$.
    Then, 
    setting $\Thetadpolp = \Thetapolp \circ \Delta(n)$ and $\Thetabpoldijdel =\Thetabpoldij \circ \Delta(n)$, 
    by
    definition of $f'$-compatible, when considered as maps $\overk^* \times \dZ(n) \subset G(n)
    \rightarrow G(\otimes_{j=1}^r [a_j]^*(\bpol))$, we have the equality:
    \begin{equation}
    \Thetabpoldijdel \circ E'_d(n) = f'^\sharp(\polajp) \circ \eaj(\polp) \circ
	\Thetadpolp,
    \end{equation}
    where $f'^\sharp(\polajp)$ is defined by $\eajs(\polp)(\tildeK')$ following Lemma
    \ref{lem:tensorquo}.

    Let $y \in A'(\overk)$ be such that $f(y)=x$. Let $\tildey$ be an affine lift of $y$
	such that $\tildef'(\tildey)= C'*\tildex$ for $C' \in \overk$. Then, for $j_2 \in
	\dZ(n)$, $\tildef'(
	\Thetapolpdel((1, 0,
	j_2))\tildey)= C'* \widetilde{x + g_2(j_2)}^{\Thetabpoldel}$ by Definition \ref{def:excelz} of an
	excellent lift. 

	Note that as $(A', \pol', \Thetapolp)$ and $(B, \bpol, \Thetabpol)$ are
	$f'$-compatible by construction, it means that $(A', \pol', \Thetapolpdel)$ and $(B, \bpol,
	\Thetabpoldel)$ are dual-$f'$-compatible.
	Recall that $\rho_{n,m}: \Zn \rightarrow Z(m) \simeq \Zn/ \mu_{d,n}(Z(d))$ is
	the canonical projection.
	Doing the same computations as in Equation (\ref{eq:prop26:tech1}), we obtain
	for $\alpha \in Z(m)$ and $C \in \overk$:
	\begin{gather}\label{eq:prop26:tech2}
	    \begin{aligned}
		(a_j (\widetilde{x + g_2(j_2)}^\Thetabpoldel))_\alpha
		& = C' a_j \tildef'(\Thetapolpdel((1,0, j_2)) \tildey)_\alpha & \\
		& = C' \tildef'(a_j (\Thetapolpdel((1,0, j_2)) \tildey))_\alpha & (\text{by Lemma
		\ref{lem:tildefscalmult}}) \\
		& = C' \sum_{\nu \in \rho_{n,m}^{-1}(\alpha)} (a_j (\Thetapolpdel((1,0, j_2))
		\tildey))_{\nu} & (\text{by 
		definition of $\tildef'$}) \\
        & = C' \sum_{\nu \in \rho_{n,m}^{-1}(\alpha)} ((a_j
\Thetapolpdel((1,0, j_2))) (a_j \tildey))_{\nu} &
(\text{by Corollary \ref{cor:diffaddaction}}) \\
        & = C' \sum_{\nu \in
        \rho_{n,m}^{-1}(\alpha)} [a_j]^* (\Thetapolpdel((1,0, a_j j_2)) (\theta^{\Thetapolpdel}_{\nu}))(y) &
(\text{by definition})\\
        & = C'\sum_{\nu\in\rho_{n,m}^{-1}(\alpha)}
    \eajs(\pol)(\Thetapolpdel((1,0,j_2)))
    [a_j]^*(\theta^{\Thetapolpdel}_{\nu})(y). &
		(\text{by Lemma \ref{lem:prodiso}})
	    \end{aligned}
	\end{gather}

	Thus, we have for $\alpha \in Z(m)$,
	\begin{gather}\label{eq:prop16:tech312}
	    \begin{aligned}
		(a_j (\widetilde{x + g_2(j_2)}^\Thetabpol))_\alpha & =
		(\tildeH^{-1}_{\Delta(m)}(a_j (\widetilde{x + g_2(j_2)}^{\Thetabpoldel})))_{\alpha}
		\\
		 & = -\frac{C}{m^g}\sum_{\dalpha \in Z(m)}  \Deltambar(\alpha)(\dalpha) 
		(a_j (\widetilde{x +
		g_2(j_2)}^\Thetabpoldel))_{\dalpha} \\
		& = -\frac{CC'}{m^g}\sum_{\dalpha \in Z(m)}  \Deltambar(\alpha)(\dalpha)
		\sum_{\nu \in \rho_{n,m}^{-1}(\dalpha)}
		\eajs(\polp)(\Thetapolpdel((1,0 ,j_2))) 
		[a_j]^*(\theta^{\Thetapolpdel}_{\nu})(y) \\
		& = - \frac{CC'}{m^g}\eajs(\polp)(\Thetapolpdel((1,0 ,j_2))) 
		[a_j]^*(\sum_{\dalpha \in Z(m)}  \Deltambar(\alpha)(\dalpha)
		\sum_{\nu \in \rho_{n,m}^{-1}(\dalpha)}
		\theta^{\Thetapolpdel}_{\nu})(y).
	    \end{aligned}
	\end{gather}
We check easily that:
\begin{gather}\label{eq:prop16:tech313}
    \begin{aligned}
        C'\sum_{\dalpha \in Z(m)} \sum_{\nu \in \rho_{n,m}^{-1}(\dalpha)}  \Deltambar(\alpha)(\dalpha)
\theta^{\Thetapolpdel}_{\nu} &= C'\sum_{\dalpha \in \Zn}
 \Deltanbar(\mu_{m,n}(\alpha))(\dalpha)
 \theta^{\Thetapolpdel}_{\dalpha} \\
&= n^g \theta^{\Thetapolp}_{\mu_{m,n}(\alpha)}.
    \end{aligned}
\end{gather}
Then gathering Equations (\ref{eq:prop16:tech312}) and (\ref{eq:prop16:tech313}), we
obtain that for $C \in \overk$:
\begin{equation}\label{eq:prop16:tech314}
(a_j (\widetilde{x + g_2(j_2)}^\Thetabpol))_\alpha  =-\frac{Cn^g}{m^g} \eajs(\polp)(\Thetapolpdel((1,0 ,j_2)))
[a_j]^*( \theta^{\Thetapolp}_{\mu_{m,n}(\alpha)})(y).
\end{equation}
Using Equation (\ref{eq:prop16:tech314}) in the same computations as in Equation
(\ref{eq:prop26eq16}), using the fact that $\Delta(n) \circ \Delta(n)=-1$, we finally obtain:
	\begin{equation}
	    \prod_{j=1}^r(a_j(\widetilde{x+g_2(j_2)}^\Thetabpol))_\alpha 
		= C^r \Thetabpoldij((1,0,j_2))(\prod_{j=1}^r
		[a_j]^*(\theta^\Thetabpol_{\alpha}))(x).
    \end{equation}

Now, we prove Equation (\ref{prop26:eq12}) in full generality.
    Note that Equation (\ref{prop26:eq12}) means in particular that for all $\alpha \in
    Z(m)$ and $j_2 \in \Zn$, there exists a constant $C_1 \in \overk$ independent of $\alpha$ and $j_2$ such that:
    \begin{equation}\label{prop26:eq13}
        (\prod_{j=1}^r
        [a_j]^*(\theta^\Thetabpol_\alpha))(x+g_2(j_2))=C_1\Thetabpoldij((1,0,j_2))(\prod_{j=1}^r [a_j]^*(\theta^\Thetabpol_\alpha))(x).
    \end{equation}
So we have:
\begin{gather}
\begin{aligned}
\prod_{j=1}^r(a_j(\widetilde{x+g_1(j_1)+g_2(j_2)}^\Thetabpol))_\alpha & =
C_2\Thetabpoldij((1,j^n_1,0))(\prod_{j=1}^r
[a_j]^*(\theta^\Thetabpol_{a_j j_{1,m} +\alpha}))(x+g_2(j_2)) \\
								      & = C_3 j_2(\mu_{m,n}(j_{1,m}))\Thetabpoldij((1,j^n_1,j_2))(\prod_{j=1}^r
								      [a_j]^*(\theta^\Thetabpol_{a_j j_{1,m} +\alpha}))(x),
\end{aligned}
\end{gather}
where $C_2, C_3\in\kbar$ are constants independent of $\alpha, j_1,j_2$, and where the first equation is obtained by applying Equation (\ref{prop26:eq12}) for $j_2=0$
and the second by applying Equation (\ref{prop26:eq13}).
\end{proof}

Being able to act on sections of $\bpol^d$ by $G(n)$ via a theta structure 
$\Thetabpoldij$
of type
$K(n)$ allows to recover the unique theta basis defined by $\Thetabpoldij$. In this way,
we obtain a change of level theorem (which should be compared to \cite{lubicz:hal-03738315}):
\begin{theorem}\label{th:main7}
    Let $m,n,d>1$ be positive integers such that $n=md$ and $d|m$.
    Let $(B, \bpol, \Thetabpol)$ be a marked abelian variety of type $K(m)$
    given by its (affine) theta null point $\tildenullbpol$. Suppose given
    a decomposition $G_1 \times G_2$ of $B[n]$ into subgroups isomorphic to $Z(n)$,
    isotropic for the Weil pairing $e_{B,n}$, such that $\Thetabpolbar(Z(m)
    \times \{0\}) \subset G_1=\{ g_1(i), i \in \Zn\}$ and $\Thetabpolbar(\{ 0 \}
    \times \dZ(m)) \subset G_2=\{ g_2(i), i \in \dZ(n)\}$. We suppose moreover that for all $x \in G_1$ (resp. for
    all $x \in G_2$), $x$ is symmetric compatible with $\Thetabpol( \{1 \} \times Z(m)
    \times \{ 0 \})$ (resp. with $\Thetabpol( \{1 \} \times \{0\}
    \times \dZ(m))$). 

    Suppose that there exists $(a_j)_{j=1, \ldots, r}$ positive integers such that $d=
    \sum_{j=1}^r a_j^2$ and $gcd(a_j, n)=1$. Fix good lifts $\tildeG_1$ and
    $\tildeG_2$ of respectively $G_1$ and $G_2$ with respect to $\tildenullbpol$. For $x \in B(\overk)$ and all $(P,Q) \in G_1
    \times G_2$, fix an affine lift
    $\tildex$, good lifts $\widetilde{x + Q}$ with respect to $\tildex$ and $\tildeG_2$
    and good lifts $\widetilde{x + P + Q}$ with respect to $\widetilde{x + Q}$ and
    $\tildeG_1$.
    Compute $a_j (\widetilde{x + P + Q})$ using $\scalmult$.

Let $U$ be an affine open subset of $B$ containing $G_1+G_2$,
    $\lambda x+G_1+G_2$ for $\lambda =1, \ldots, d$ and choose an
    isomorphism $\bpol(U) \simeq \stsheaf_B(U)$ so that for all $s \in \Gamma(B,\bpol)$ and
    all $x \in U(\overk)$ we can evaluate $s$ in $x$: we denote by $s(x) \in \overk$ the
    evaluation. Then, there exists theta structure $\Thetabpoldij$ of type $K(n)$ for $\otimes_{j=1}^r
    [a_j]^*\bpol\simeq \bpol^d$ 
    $f$-compatible with $\Thetabpol$ such that for all $i \in \Zn$,
    $\Thetabpoldbar((i,0))=g_1(i)$, and for all $i \in \dZ(n)$,
    $\Thetabpoldijbar((0,i))=g_2(i)$ and 
    for $\alpha \in Z(m)$, there exists a constant $C \in \overk$ so that:
    \begin{equation}\label{eq:th7:eq1}
	\theta_0^{\Thetabpoldij}(x)=C \sum_{\tildeQ \in \tildeG_2} \prod_{i=1}^r (a_i
	(\widetilde{x + Q}))_\alpha,
    \end{equation}
    and if $j \in \Zn$, by choosing $j_0 \in Z(m)$ and setting $P=g_1(j-\mu_{m,n}(j_0))$, we
    have:
    \begin{equation}\label{eq:th7:eq2}
	\theta^{\Thetabpoldij}_j(x)= C \sum_{\tildeQ \in \tildeG_2} \prod_{i=1}^r
	(a_i(\widetilde{x + P + Q}))_{a_i j_0+\alpha}.
    \end{equation}
\end{theorem}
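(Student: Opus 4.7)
The plan is to combine the characterisation of the canonical theta basis as the image of an explicit projector from Proposition \ref{prop:base} with the computational formula for the action of $G(n)$ on sections of $\otimes_{i=1}^r[a_i]^*\bpol\simeq\bpol^d$ provided by Proposition \ref{prop:thetadaction}. All the geometric content is contained in earlier results; what remains is a careful translation into formulas on affine lifts.

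\emph{Step 1 (Existence and uniqueness of $\Thetabpoldij$).} The hypotheses on $G_1$ and $G_2$ are exactly those required by Proposition \ref{prop:cond}, which yields the existence of a marked abelian variety $(A,\pol,\Thetapol)$ $f$-compatible with $(B,\bpol,\Thetabpol)$ such that $f(\Thetapolbar(Z(n)\times\{0\}))=G_1$. Corollary \ref{cor:unicity}, extended from $\bpol^d$ to $\otimes_{i=1}^r[a_i]^*\bpol$ via Remark \ref{rk:extension}, then produces the unique theta structure $\Thetabpoldij$ that is $f$-compatible with $\Thetabpol$ and whose induced symplectic structure satisfies $\Thetabpoldijbar((i,0))=g_1(i)$ for $i\in Z(n)$ and $\Thetabpoldijbar((0,i))=g_2(i)$ for $i\in\dZ(n)$.

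\emph{Step 2 (Formula \eqref{eq:th7:eq1}).} Fix $\alpha\in Z(m)$ and consider the section
\[s \;:=\; \prod_{i=1}^r [a_i]^*(\theta_\alpha^{\Thetabpol}) \;\in\; \Gamma\bigl(B,\otimes_{i=1}^r[a_i]^*\bpol\bigr)\simeq\Gamma(B,\bpol^d).\]
By Proposition \ref{prop:base}, the projector
\[\pi_{\Thetabpoldij}(s)\;=\;\sum_{j_2\in\dZ(n)}\Thetabpoldij((1,0,j_2))\cdot s\]
is a nonzero scalar multiple of $\theta_0^{\Thetabpoldij}$ (the scalar being nonzero because $s$ is generic). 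Evaluating at $x$ and invoking Proposition \ref{prop:thetadaction} with $j_1=0$ converts each summand $\Thetabpoldij((1,0,j_2))(s)(x)$ into $\prod_{i=1}^r(a_i(\widetilde{x+g_2(j_2)}))_\alpha$ up to a multiplicative constant independent of $j_2$. Summing over $j_2\in\dZ(n)$, equivalently over the affine lifts $\tildeQ\in\tildeG_2$, yields \eqref{eq:th7:eq1}.

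\emph{Step 3 (Formula \eqref{eq:th7:eq2}).} For $j\in Z(n)$, Proposition \ref{prop:base} gives $\theta_j^{\Thetabpoldij}=\Thetabpoldij((1,j,0))\cdot\theta_0^{\Thetabpoldij}$. Decompose $j=\mu_{m,n}(j_0)+j_{1,n}$ with $j_0\in Z(m)$, and set $P:=g_1(j_{1,n})$. Using the Heisenberg composition law \eqref{eq:commutator},
\[\Thetabpoldij((1,j,0))\,\Thetabpoldij((1,0,j_2))\;=\;j_2(\mu_{m,n}(j_0))^{-1}\,\Thetabpoldij((1,\mu_{m,n}(j_0),0))\,\Thetabpoldij((1,j_{1,n},j_2)),\]
so applying this to $s$, evaluating at $x$, and using Proposition \ref{prop:thetadaction} (with $j_{1,m}=j_0$ and the prescribed $j_{1,n}$) converts $\Thetabpoldij((1,j_{1,n},j_2))(s)(x)$ into $\prod_{i=1}^r(a_i(\widetilde{x+P+g_2(j_2)}))_{a_ij_0+\alpha}$ up to the cocycle factor $j_2(\mu_{m,n}(j_0))$, which cancels the one produced above. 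The index shift from $\alpha$ to $a_ij_0+\alpha$ is precisely the effect of $\eajs(\bpol)(\Thetabpol((1,j_0,0)))$ acting on $[a_i]^*(\theta_\alpha^{\Thetabpol})$, as in Lemma \ref{lem:prodiso}. Summing over $\tildeQ\in\tildeG_2$ produces \eqref{eq:th7:eq2} with the same constant $C$ as in Step 2.

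\emph{Main obstacle.} The substantive content of the theorem is already proved in Propositions \ref{prop:cond}, \ref{prop:thetadaction} and Corollary \ref{cor:unicity}; what remains is bookkeeping. The one genuinely delicate point is verifying that the two cocycle factors--the $j_2(\mu_{m,n}(j_{1,m}))$ appearing in Proposition \ref{prop:thetadaction} and the Heisenberg commutator produced when separating the $Z(n)$- and $\dZ(n)$-actions in Step 3--cancel each other exactly, so that a single constant $C=C(\alpha)$ serves simultaneously in \eqref{eq:th7:eq1} and \eqref{eq:th7:eq2}. This is a direct computation with \eqref{eq:commutator} and \eqref{eq:trans}.
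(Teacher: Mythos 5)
Your proposal follows essentially the same route as the paper: existence and uniqueness of $\Thetabpoldij$ via Corollary~\ref{cor:unicity} and Remark~\ref{rk:extension}, then both formulas obtained by applying the projector of Proposition~\ref{prop:base} to the section $\prod_{i=1}^r[a_i]^*(\theta_\alpha^{\Thetabpol})$ and converting to affine-lift language with Proposition~\ref{prop:thetadaction}. The paper's own proof stops at saying that each of \eqref{eq:th7:eq1} and \eqref{eq:th7:eq2} is an immediate consequence of Proposition~\ref{prop:base} and Equation~\eqref{prop26:eq12}, so in Step~3 you are making explicit what the paper leaves implicit --- and this is where you slip. Your displayed Heisenberg identity is incorrect: using the law $(\alpha_1,x_1,y_1).(\alpha_2,x_2,y_2)=(\alpha_1\alpha_2\,y_2(x_1),x_1+x_2,y_1+y_2)$, one finds
$(1,j,0)(1,0,j_2)=(j_2(j),j,j_2)$ while $(1,\mu_{m,n}(j_0),0)(1,j_{1,n},j_2)=(j_2(\mu_{m,n}(j_0)),j,j_2)$, so the correct cocycle factor is $j_2(j)\,j_2(\mu_{m,n}(j_0))^{-1}=j_2(j_{1,n})$, not $j_2(\mu_{m,n}(j_0))^{-1}$ as you wrote. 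Consequently the claimed cancellation with the $j_2(\mu_{m,n}(j_{1,m}))$ of Proposition~\ref{prop:thetadaction} does not come out the way you state it; you would need to track more carefully how the good-lift normalisation of $\widetilde{x+P+Q}$ (which is built relative to $\widetilde{x+Q}$ and $\tildeG_1$, not relative to $\tildex$ directly) absorbs the remaining $j_2(j_{1,n})$. There is also a small imprecision in Step~3: Proposition~\ref{prop:thetadaction} relates $\Thetabpoldij((1,j_{1,n},j_2))$ acting on the \emph{shifted} section $\prod_i[a_i]^*(\theta_{a_ij_0+\alpha}^{\Thetabpol})$ to the affine product, not the unshifted $s$, so you should first commute $\Thetabpoldij((1,\mu_{m,n}(j_0),0))$ past $\Thetabpoldij((1,j_{1,n},j_2))$ and absorb it into the theta index via Corollary~\ref{cor:genmum}/Lemma~\ref{lem:prodiso} before invoking Equation~\eqref{prop26:eq12}. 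None of this changes the overall plan, which is sound and is the paper's, but the explicit cocycle bookkeeping needs to be redone.
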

\begin{proof}
    If $s \in \Gamma(B, \otimesbpol)$, we know by Proposition \ref{prop:base} that there exists a constant
    $C \in \overk$ such that:
    \begin{equation}
	\sum_{\dnu\in \dZ(n)} \Thetabpoldij((1, 0,\dnu))(s) = C \theta_0^{\Thetabpoldij}.
    \end{equation}
    Applying that on $s = \prod_{j=1}^r [a_j]^* (\theta_\alpha^\Thetabpol)$, we get:
    \begin{equation}\label{eq:th7:tech1}
	\sum_{\dnu\in \dZ(n)} \Thetabpoldij((1, 0, \dnu))(\prod_{j=1}^r [a_j]^*
	(\theta_\alpha^\Thetabpol))= C \theta_0^{\Thetabpoldij}.
   \end{equation}
    Then Equation (\ref{eq:th7:eq1}), is an immediate consequence of Equation (\ref{eq:th7:tech1}) and Equation
    \ref{prop26:eq12} of Proposition \ref{prop:thetadaction}.

    Still by Proposition \ref{prop:base}, we have, for all $j \in \Zn$:
    \begin{equation}
	\theta_j^{\Thetabpoldij}= \Thetabpoldij((1, j, 0)) \theta_0^{\Thetabpoldij}.
    \end{equation}
    Again Equation (\ref{eq:th7:eq2}), can be deduced from this last equation and Equation
    \ref{prop26:eq12} of Proposition \ref{prop:thetadaction}.
\end{proof}
\begin{remark}\label{rk:followmain7}
    Note that from the knowledge of $(B, \bpol^d, \Thetabpold)$, one recover immediately
    $B[n]=K(\bpol^d)$ using the action of the theta group on $0_{\Thetabpold}$. Thus in
    the course of the computation from $(B, \bpol, \Thetabpol)$ to $(B, \bpol^d,
    \Thetabpold)$, one necessarily have to compute $B[n]$ from the knowledge of $(B,
    \bpol, \Thetabpol)$. So the hypothesis, made in the theorem, that $B[n]$ is given is
    quite natural. 

    In theta coordinates, a way to do that is to solve the algebraic
    system in $k[x_i, i \in Z(m) ]$ made of the equations:
    \begin{itemize}
        \item defining the embedding of $B$ into $\proj^{Z(m)}$, which is given by
        $O(m^g-g)$ quadratic Riemann equations parametrized by $0_{\Thetabpol}$;
        \item $\scalmult(d, P, P, 0_{\Thetabpol}, 0_{\Thetabpol})= \Thetabpol((1,
        i)).0_{\Thetabpol}$ for $i \in K(m)$, where
        $P$ is a generic projective point.
    \end{itemize}
    One of these systems is a $0$-dimensional algebraic system in $m^g$-variables of degree
    at most $O(5^{\log(d)})$. It can be solved by computing a triangular system,
    which can be obtained by computing the reduced Groebner basis for the lexicographic order.
    An efficient way to do so is to first compute a Groebner basis for the degree-reverse-lexicographical
    ordering, and then change the monomial order to the lexicographical one using \cite{FAUGERE1993329}.
    This Groebner basis step can be performed in time $O(5^{m^g\log(d)})$, %Thm 6.2, version nb finis de 0 à l'infini
    we refer to \cite{DRisogenykernel} for the use of the triangular system.
    The Theorem \ref{th:main7} gives, once we have solved theses $m^g$ linear systems, or obtain $B[n]$ by any
    other mean, an efficient algorithm to compute $(B, \bpol^d, \Thetabpold)$.
\end{remark}
From the Theorem \ref{th:main7}, we deduce immediately the change of level algorithm Algorithm
\ref{algo:changelevel} as well as the following Corollary:
\begin{corollary}\label{cor:changeoflevel}
    Let $m,n,d >1$ be integers such that $n=md$ and $d|m$.
    There exists a deterministic algorithm that takes as input
    the theta null point $0_{\Thetabpol}$ of a $g$-dimensional marked abelian variety $(B,
	    \bpol, \Thetabpol)$ of type $K(m)$, a basis of $B[n]$,
	    $(\theta_i^\Thetabpol(x))_{i\in Z(m)}$ for $x \in B(\overk)$ and outputs
	    $(\theta_i^\Thetabpold(x))_{i \in \Zn}$ where $\Thetabpold$ is a theta
	    structure of type $K(n)$ in time $O(n^{2g} \log(d))$ operations
	    in the base field of $B$.
\end{corollary}

\begin{algorithm}
\SetKwInOut{Input}{input}\SetKwInOut{Output}{output}

\SetKwComment{Comment}{/* }{ */}
\Input{
    \begin{itemize}
	\item $m,n,d>1$ integers such that $n=md$ and $d|m$;
	\item $(a_i)_{i=1,\dots,r} \in \N^r$ such that $d= \sum_{i=1}^r a_i^2$ and $gcd(a_i, n)=1$;
	\item the marked abelian variety $(B, \bpol, \Thetabpol)$ of type $K(m)$ given by its theta null
	    point $0_{\Thetabpol}$;
	\item $B[n]$ given by a basis $(e_i)_{i=1, \ldots, 2g}$;
	\item $(\theta^{\Thetabpol}_i(x))_{i \in Z(m)}$, for  $x \in B(\overk)$;
	\item $j \in \Zn$.
    \end{itemize}
}
\Output{
    \begin{itemize}
	\item $(\theta^{\Thetabpoldij}_{j}(x))$ for $j \in \Zn$.
    \end{itemize}
}
\BlankLine
Call Algorithm \ref{algo:changesym2} to obtain a symplectic basis $(e_i, e_{i+g})$ of
$B[n]$ such that for all $i=1, \ldots, g$ (resp. for all $i=g+1, \ldots, 2g$), $e_i$
    is symmetric compatible with $\Thetabpol(\{1\}\times Z(m)\times\{0\})$ (resp. with
    $\Thetabpol(\{1\}\times\{0\}\times\dZ(m))$)\;
Let $G_1= (e_i)_{i=1, \ldots, g}$ and $G_2=(e_i)_{i=g+1, \ldots, 2g}$, choose a numbering $G_1=\{ g_1(i), i \in \Zn \}$ such that for $i \in
    Z(m)$, $g_1(\mu_{m,n}(i))= \Thetabpolbar((i,0))$\;
    Call Algorithm \ref{algo:goodliftg} to compute good lifts $\tildeG_1$, $\tildeG_2$, $\widetilde{x + G_1}$,
    $\widetilde{x + P + G_2}$ for $P \in G_1$\;
    Set $P=g_1(j-\mu_{m,n}(j_0))$ for some $j_0 \in Z(m)$\;
    \Return 	$\theta^{\Thetabpoldij}_j(x)= \sum_{\tildeQ \in \tildeG_2} \prod_{i=1}^r
	(a_i(\widetilde{x + P + Q}))_{0}$.
    \caption{Change of level algorithm.}
  \label{algo:changelevel}
\end{algorithm}

\begin{theorem}\label{th:mainisog}
    Let $m,n,d>1$ be integers such that $n=md$ and $d|m$.
Suppose that there exists $(a_j)_{j=1, \ldots, r}$
positive integers such that $d= \sum_{j=1}^r a_j^2$ and $gcd(a_j, n)=1$.

    Let $(B, \bpol, \Thetabpol)$ be a
    marked abelian variety of type $K(m)$ given by its (affine) theta null point
    $\tildenullbpol$. Let $K=\Thetabpolbar(\mu_{d,m}(Z(d)) \times \{ 0 \})$.

    Let $G_1$ be a subgroup of $B[n]$
    isomorphic to $Z(n)$ isotropic for the Weil pairing $e_{B,n}$ and such that
    $\Thetabpolbar(Z(m)
    \times \{0\}) \subset G_1$. We suppose moreover that for all $x \in G_1$, $x$ is symmetric compatible with $\Thetabpol( \{1 \} \times Z(m)
    \times \{ 0 \})$. We fix a numbering $G_1=\{ g_1(i), i \in \Zn\}$ such that for all
    $i \in Z(m)$, $g_1(\mu_{m,n}(i))=\Thetabpolbar((i,0))$. We fix a good lift
    $\tildeG_1 = \{\tildeg_1(i), i \in \Zn\}$ of $G_1$ with respect to
    $\tildenullbpol$.

    Note that in particular, for $i
    \in Z(d)$, we have $\tildeg_1(i)= \{ \Thetabpol((1, i, 0)).\tildenullbpol\}$.
    Let $\tildeK = \{ \tildeg_1(i), i \in \mu_{d,n}(Z(d)) \}$ be the affine lift of $K$. 
    By abuse of notation, we also denote by 
    $\tildeK= \Thetabpol(\{ 1 \} \times \mu_{d,m}(Z(d)) \times \{ 0 \})$ the level
    subgroup above $K$. Let $A=B/K$ and $f: B\rightarrow
    A$ be the isogeny. Let $\pol
    =\bpol^d/\tildeK$. Denote by $\rho_{n,m}: \Zn
    \rightarrow Z(m)\simeq \Zn/\mu_{d,n}(Z(d))$ the canonical projection.

    Let $x \in B(\overk)$ and let $\tildex$ be an affine lift of $x$. For $P \in G_1$, let
    $\widetilde{x+P}$ be an affine lift of $x+P$ with respect to $\tildeG_1$.
       Let $U$ be an affine open subset of $B$ containing $G_1$,
    $0_{\Thetabpol}$, $\lambda x+G_1$ for $\lambda =1, \ldots, d$, and choose an
    isomorphism $\bpol(U) \simeq \stsheaf_B(U)$ so that for all $s \in \Gamma(B,\bpol)$ and
    all $x \in U(\overk)$ we can evaluate $s$ in $x$: we denote by $s(x) \in \overk$ the
    evaluation.

       There exists a theta structure $\Thetapol$ for $(A, \pol)$ of type $K(m)$ and a constant $C \in
       \overk$ such that for $j_0 \in Z(m)$, if we choose
$j_1 \in \Zn$ and $j_2 \in Z(m)$ such that $\rho_{n,m}(j_1 + \mu_{m,n}(j_2))=j_0$,
       we have:
    \begin{equation}\label{eq:mainisog}
	\theta_{j_0}^{\Thetapol}(f(x))= C  \sum_{P \in \tildeK} \prod_{i=1}^r (a_i
	(\widetilde{x+P+ g_1(j_1)}))_{j_2}.
    \end{equation}
\end{theorem}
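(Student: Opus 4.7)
The strategy is to lift the isogeny computation to a change-of-level computation on $(B, \bpol^d)$ and then descend to $(A, \pol)$. First use Algorithm \ref{algo:changesym2} to produce an auxiliary subgroup $G_2 \subset B[n]$ isomorphic to $\dZ(n)$, containing $\Thetabpolbar(\{0\}\times\dZ(m))$, isotropic for $e_{B,n}$, and symmetric compatible with $\Thetabpol(\{1\}\times\{0\}\times\dZ(m))$. By Corollary \ref{cor:unicity} extended via Remark \ref{rk:extension}, there is a unique theta structure $\Thetabpold$ of type $K(n)$ on $\otimes_{j=1}^r[a_j]^*\bpol \simeq \bpol^d$ which is $f$-compatible with $\Thetabpol$ and whose projections on $Z(n)$ and $\dZ(n)$ are $G_1$ and $G_2$ respectively. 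Since $\tildeK = \Thetabpold(\{1\}\times\mu_{d,n}(Z(d))\times\{0\})$ is a sub-level-subgroup of the maximal $Z(n)$-level subgroup of $\Thetabpold$, the descent $\pol = \bpol^d/\tildeK$ inherits, via $\fsharp(\bpol^d)$ and the isomorphism $G^{*}(\bpol^d)/\tildeK \simeq G(\pol)$ of Proposition \ref{prop:compat}, a theta structure $\Thetapol$ of type $K(m)$ such that $(B, \bpol^d, \Thetabpold)$ and $(A, \pol, \Thetapol)$ are dual-$f$-compatible in the sense of Definition \ref{sec1:def1}; this is the $\Thetapol$ appearing in the statement.

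Applying Proposition \ref{sec1:prop1} to this dual-$f$-compatible pair gives, for some $\lambda \in \overk^{*}$,
\begin{equation*}
\theta_{j_0}^\Thetapol(f(x)) = \lambda \sum_{j\in\pi_{n,m}^{-1}(j_0)}\theta_j^\Thetabpold(x) = \lambda\, \Thetabpold((1,j_1,0))\Thetabpold((1,\mu_{m,n}(j_2),0))\Bigl(\sum_{\tilde{k}\in\tildeK}\tilde{k}\Bigr)\theta_0^\Thetabpold(x),
\end{equation*}
using $j^{*} := j_1 + \mu_{m,n}(j_2)$ as representative of the fiber $\pi_{n,m}^{-1}(j_0)$, the relation $\theta_j^\Thetabpold = \Thetabpold((1,j,0))\theta_0^\Thetabpold$, and the parametrization $\pi_{n,m}^{-1}(j_0) = j^{*} + \mu_{d,n}(Z(d))$. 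Applying Proposition \ref{prop:thetadaction} with $j_2^{\mathrm{prop}} = 0$ (so $G_2$ drops from the statement), $\alpha = j_2$, and the trivial decomposition $j_1 + \mu_{d,n}(k) = \mu_{m,n}(0) + (j_1+\mu_{d,n}(k))$ yields, for each $k \in Z(d)$ and $P_k = \tildeg_1(\mu_{d,n}(k)) \in \tildeK$,
\begin{equation*}
\prod_{i=1}^r(a_i(\widetilde{x + P_k + g_1(j_1)}))_{j_2} = C_1\, \Thetabpold((1, j_1 + \mu_{d,n}(k),0))\Bigl(\prod_{i=1}^r[a_i]^{*}(\theta_{j_2}^\Thetabpol)\Bigr)(x),
\end{equation*}
with $C_1$ independent of $k$. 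Summing over $k \in Z(d)$ and factoring $\Thetabpold((1,j_1,0))$ out of the sum turns the right-hand side into $C_1\,\Thetabpold((1,j_1,0))\bigl(\sum_{\tilde{k}}\tilde{k}\bigr)\bigl(\prod_i[a_i]^{*}\theta_{j_2}^\Thetabpol\bigr)(x)$.

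Comparing the two expressions, both factor through $\Thetabpold((1,j_1,0))$, so it remains to identify $\bigl(\sum_{\tilde{k}}\tilde{k}\bigr)\bigl(\prod_i[a_i]^{*}\theta_{j_2}^\Thetabpol\bigr)$ with $C_2 \cdot \Thetabpold((1,\mu_{m,n}(j_2),0))\bigl(\sum_{\tilde{k}}\tilde{k}\bigr)\theta_0^\Thetabpold$ for a constant $C_2$ independent of $x$, $j_1$, $j_2$. Both sections are $\tildeK$-invariant and hence $f^{*}$-pullbacks of sections of $\pol$; the identification is established by using Corollary \ref{cor:productsection} to express the action of $\Thetabpold$ on the product section $\prod_i[a_i]^{*}\theta_{j_2}^\Thetabpol$ in terms of the action of $\Thetabpol$ on $\theta_{j_2}^\Thetabpol$, and then invoking the construction of $\theta_0^\Thetabpold$ via Proposition \ref{prop:base} together with the $f$-compatibility of $\Thetabpold$ with $\Thetabpol$. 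The main obstacle is this last identification: carefully tracking the rigidifications of $\pol$ and $\bpol^d$ and the normalization constants produced by $f^{*}$-pullback (Proposition \ref{sec1:prop1}), by $\tildeK$-averaging, and by Proposition \ref{prop:thetadaction}, and showing they amalgamate into the single global constant $C$ of the statement, independent of $x$, $j_0$, $j_1$, $j_2$.
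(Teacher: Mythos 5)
Your high-level strategy — pass to a theta structure of type $K(n)$ on $(B,\otimesbpol\simeq\bpol^d)$ and descend by $\tildeK$ — is in the same spirit as the paper's, but the implementation differs in two places, and both create problems.

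First, you build a \emph{full} theta structure $\Thetabpoldij$ from an auxiliary subgroup $G_2\subset B[n]$ that is not part of the theorem's data. The paper avoids $G_2$ entirely: it works with the \emph{partial} theta structure $\Thetabpoldij:G^*(n)\rightarrow G(\otimesbpol)$, where $G^*(n)=\pi_{G(n)}^{-1}(\overk^*\times Z(n)\times\dnu_{m,n}(\dZ(m)))$, and defines the $\dZ(m)$-component directly as $\Thetadbpoldij((1,e))=\eaj(\bpol)\circ\Thetabpol((1,0,\dnu_{m,n}(e)))$, then shows that $\fsharp(\otimesbpol)\circ\Thetabpoldij$ factors through a genuine theta structure $\Thetapol$ on $(A,\pol)$. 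This is not merely more economical: it sidesteps the issue that different choices of $G_2$ a priori give different $\Thetabpoldij$, and that your $\Thetapol$ must be shown not to depend on the $\dZ(n)$-part of $\Thetabpoldij$ except through its restriction to $\dnu_{m,n}(\dZ(m))$. You assert existence of a suitable $G_2$ via Algorithm \ref{algo:changesym2}, but that algorithm modifies both halves of a symplectic basis of $B[n]$, which is not supplied as input, and it is not checked that the modification leaves your given $G_1$ untouched.

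Second, and more seriously, your closing claim is an unproved assertion that carries the whole weight of the theorem. You reduce the proof to showing $\bigl(\sum_{\tilde{k}}\tilde{k}\bigr)\bigl(\prod_i[a_i]^*\theta_{j_2}^\Thetabpol\bigr)=C_2\,\Thetabpoldij((1,\mu_{m,n}(j_2),0))\bigl(\sum_{\tilde{k}}\tilde{k}\bigr)\theta_0^{\Thetabpoldij}$ with $C_2$ independent of $j_2$, and then say "the identification is established by using Corollary \ref{cor:productsection}\ldots{} and then invoking the construction of $\theta_0^{\Thetabpoldij}$\ldots". This is precisely the step the paper carries out by an explicit computation: it writes $\theta_0^\Thetapol$ as a double sum over $\tildeK$ and $\dZ(m)$, exchanges the sums, and computes the inner $\dZ(m)$-average of $\prod_j[a_j]^*\theta_\alpha^\Thetabpol$, finding it vanishes unless $\alpha=0$. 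That vanishing is what pins down which invariant section one gets, and it does not come for free from the tools you cite. Indeed, if one instead acts on the paper's $\alpha=0$ identity $\bigl(\sum_{\tilde{k}}\tilde{k}\bigr)\bigl(\prod_i[a_i]^*\theta_0^\Thetabpol\bigr)=c\,f^*(\theta_0^\Thetapol)$ by $\Thetabpoldij((1,\mu_{m,n}(j_2),0))$ and applies Corollary \ref{cor:productsection}, one obtains $\bigl(\sum_{\tilde{k}}\tilde{k}\bigr)\bigl(\prod_i[a_i]^*\theta_{a_ij_2}^\Thetabpol\bigr)=c\,f^*(\theta_{j_2}^\Thetapol)$ — with index $a_ij_2$, not $j_2$. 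Reconciling $a_ij_2$ with $j_2$ in your claim (they agree only when all $a_i\equiv 1$ modulo the order of $j_2$) requires additional argument that you have not given, and which the cited propositions do not supply by themselves. Until that identity is actually established, the proof is incomplete.
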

\begin{proof}
    By Proposition \ref{prop:unicity} and Remark \ref{rk:extension}, there exists a unique partial theta
    structure $\Thetaubpoldij: \Zn
    \rightarrow G(\otimesbpol)$ $f$-compatible with $\Thetabpol$ and such that
for all $i \in \Zn$,
    $\Thetaubpoldijbar((i, 0))=g_1(i)$.
    Consider the
    partial theta structure $\Thetadbpoldij: \overk^* \times \dZ(m) \rightarrow G(\bpol^d)$
    defined by
    \begin{equation}\label{proof:thm8:theta2}
    \Thetadbpoldij((1, e))=\eaj(\bpol) \circ \Thetabpol((1, 0 ,\dnu_{m,n}(e))),
    \end{equation}
for all $e \in Z(m)$.

    Let $\fsharp(\otimesbpol): G^*(\otimesbpol) \rightarrow \Gpol$ be the map defined by
    $\eaj(\bpol)(\tildeK)$ and Definition \ref{def:fsharp}.
    Let $\pi_{G(\otimesbpol)}: G(\otimesbpol) \rightarrow K(\otimesbpol)$ be the canonical projection. Remark that the centralizer 
    $G^*(\otimesbpol)$ of $\eaj(\tildeK)$ in $G(\otimesbpol)$ is
    $$\pi_{G(\otimesbpol)}^{-1}(\Thetabpoldijbar(Z(n) \times
    \dZ(m))).$$ Recall that $\pi_{G(n)}: G(n) \rightarrow K(n)$ is the canonical projection and
    let $G^*(n)= \pi_{G(n)}^{-1}(Z(n) \times \dnu_{m,n}(\dZ(m)))$. 
   Denote by $\Thetabpoldij: G^*(n) \rightarrow G(\otimesbpol)$ the partial theta structure
    such that for all $\nu \in \Zn$, $\Thetabpoldij((1, \nu, 0)) = \Thetaubpoldij((1,
    \nu))$ and for all $\dnu \in \dZ(m)$, $\Thetabpoldij((1, 0, \nu_{m,n}(\dnu)))=
    \Thetadbpoldij((1, \dnu))$.

    As $\Thetabpoldij(\{1\}\times\mu_{d,n}(Z(d))\times\{0\})$ is the kernel of
    $f^{\sharp}(\otimesbpol)$ and moreover $$G(m)\simeq G^*(n) / (\{1\}\times\mu_{d,n}(Z(d))\times\{0\}),$$
    the map $\fsharp(\otimesbpol) \circ \Thetabpoldij: G^*(n) \rightarrow \Gpol$ 
    factors through a map $\Thetapol: G(m) \rightarrow \Gpol$ which is an
    isomorphism of Heisenberg groups, and thus is a theta structure for $(A, \pol)$.

    In order to compute an embedding of $(A, \pol)$, we first have to compute elements of
    $\Gamma(A,\pol)$. But as $f^*(\pol)\simeq\otimesbpol$ there is a bijection between elements of
    $H^*(\pol)$ and section of $H^*(\otimesbpol)$ which are invariant by the action of
    $\tildeK$. Consider the map:
    \begin{gather}
    \begin{aligned}
	\pi_{\tildeK}: \Gamma(B, \otimesbpol) & \rightarrow \Gamma(B, \otimesbpol) \\
	s  &\mapsto \sum_{\nu \in Z(d)} \Thetabpoldij((1, \mu_{d,n}(\nu), 0))(s).
    \end{aligned}
    \end{gather}
	It is clear that the image of $\pi_{\tildeK}$ is invariant by the action of $\tildeK$, so the
	image of $\pi_{\tildeK}$ is contained in $\Gamma(B, f^*(\pol))=\Gamma(A,\pol)$.

	Once we have a section $s \in \Gamma(A,\pol)$, we can use Proposition \ref{prop:base}
	which tells that there exists a constant $C_0$ such that:
	\begin{equation}
	    \theta_0^{\Thetapol}=C_0 \sum_{\dnu \in \dZ(m)} \Thetapol((1,0,\dnu))(s).
	\end{equation}
	Let $s_0 = \sum_{j=1}^r [a_j]^*(\theta^\Thetabpol_\alpha) \in \otimesbpol$ for $\alpha \in Z(m)$.
    By taking $s=\pi_{\tildeK}(s_0)$ and using the facts that for all $\dnu \in \dnu_{m,n}(Z(m))$ and for all $\nu \in \mu_{d,n}(Z(d))$,
	    $e_n((1, \nu,0), (1, 0, \dnu))=1$ (because $(1, 0, \dnu)$  is in the
	    centralizer of $\mu_{d, n}(Z(d))$), we get that there exists $C_1 \in
	    \overk$ a constant such that:
	\begin{gather}
	    \begin{aligned}
		\theta_0^{\Thetapol}&=C_0 \sum_{\dnu \in \dZ(m)} \Thetapol((1, 0, \dnu))(s)
		& \\
		& = C_1 \sum_{\dnu \in \dZ(m)} \Thetapol((1, 0, \dnu))
		\sum_{\nu \in \mu_{d,n}(Z(d))} \Thetabpoldij((1, \nu, 0))(s_0) & \\
		& = C_1 \sum_{\dnu \in \dnu_{m,n}(\dZ(m))} \Thetabpoldij((1, 0, \dnu))
		\sum_{\nu \in \mu_{d,n}(Z(d))} \Thetabpoldij((1, \nu, 0))(s_0) & \text{(explanation below)}\\
		& = C_1 \sum_{\nu \in \mu_{d,n}(Z(d))} \Thetabpoldij((1, \nu, 0))
		\sum_{\dnu \in \dnu_{m,n}(\dZ(m))} 
		\Thetabpoldij((1, 0, \dnu))(s_0).
	    \end{aligned}
	\end{gather}
To get the third equality, we have used the fact that $f^{\sharp}(\Thetabpoldij((1, 0,
\dnu_{m,n}(\dnu))))=\Thetapol((1, 0, \dnu))$ and Corollary \ref{cor:canonical}.

In the following computation of the inner sum, we obtain the first equality by using
the definition of $\Thetabpoldij$ (Equation \ref{proof:thm8:theta2}) and then Corollary
\ref{cor:genmum} for the second equality:
\begin{gather}
    \begin{aligned}
	\sum_{\dnu \in  \dZ(m)} \Thetabpoldij((1, 0, \dnu_{m,n}(\dnu))) (s_0)  &= \sum_{\dnu \in \dZ(m)}
	\eaj(\bpol) \circ \Thetabpol((1, 0 ,\dnu_{m,n}(\dnu))) 
	(s_0) \\
	& =  \sum_{\dnu \in \dZ(m)} (\prod_{j=1}^r
	[a_j]^* (\Thetabpol(1, 0, a_j\dnu))(\theta_\alpha^{\Thetabpol}))  \\
	& = \sum_{\dnu \in \dZ(m)} \dnu(-\alpha \sum_{j=1}^r a_j)
	\prod_{j=1}^r [a_j]^*(\theta_\alpha^{\Thetabpol}).
    \end{aligned}
\end{gather}
This last expression is always $0$ unless either $\sum_{j=1}^r a_j=0$ or $\alpha=0$ and in
this case it is equal to $m^g \prod_{j=1}^r [a_j]^*(\theta_\alpha^\Thetabpol)$.
We have obtained that for $C_2 \in \overk$,
\begin{equation}
    \theta_0^\Thetapol = C_2 \sum_{\nu \in \mu_{d,n}(Z(d))} \Thetabpoldij(1, \nu, 0)
    (\prod_{j=1}^r [a_j]^*(\theta_0^{\Thetabpol})).
\end{equation}
Now, for $j_0 \in Z(m)$, following Proposition \ref{prop:base}, we have
$\theta_j^{\Thetapol}=\Thetapol((1, j, 0))(\theta_0^{\Thetapol})$. Let $j_1 \in \Zn$ and
$j_2 \in Z(m)$ be
such that $\rho_{n,m}(j_1 + \mu_{m,n}(j_2))=j_0$, note that $f^{\sharp}(\Thetabpoldij((1, j_1,
0)))=\Thetapol((1, j_0, 0))$ so that:
\begin{equation}
    \theta_{j_0}^{\Thetapol} = C_2 \sum_{\nu \in \mu_{d,n}(Z(d))} \Thetabpoldij((1, \nu +
    {j_1}+ \mu_{m,n}(j_2), 0)) (\prod_{j=1}^r [a_j]^*(\theta_0^{\Thetabpol})).
\end{equation}
We obtain Equation (\ref{eq:mainisog}) from a direct application of Proposition \ref{prop:thetadaction} to this last
equation.
\end{proof}
From Theorem \ref{th:mainisog}, we deduce Algorithm \ref{algo:isogcomp} to compute an isogeny as well as the
following Corollary:
\begin{corollary}\label{cor:isogcomp}
    Let $m,n,d >1$ be integers such that $n=md$ and $d|m$.
    There exists a deterministic algorithm that takes as input
    the theta null point $0_{\Thetabpol}$ of a $g$-dimensional marked abelian variety $(B,
	    \bpol, \Thetabpol)$ of type $K(m)$, a basis of $B[n]$, a subgroup $K$ of
	    $B[d]$ isomorphic to $Z(d)$ and isotropic for the Weil pairing $e_{B,n}$
	    defining the isogeny $f:B \rightarrow A=B/K$,
	    $(\theta_i^\Thetabpol(x))_{i\in Z(m)}$ for $x \in B(\overk)$ and outputs
	    $(\theta_i^\Thetapol(x))_{i \in Z(m)}$ where $(A, \pol, \Thetapol)$ is a
	    marked abelian variety of type $K(m)$ in time $O(n^g \log(d))$ operation
	    in the base field of $B$.
\end{corollary}

\begin{algorithm}
\SetKwInOut{Input}{input}\SetKwInOut{Output}{output}

\SetKwComment{Comment}{/* }{ */}
\Input{
    \begin{itemize}
	\item $m,n,d>1$ integers such that $n=md$ and $d|m$;
	\item $(a_i)_{i=1,\dots,r} \in \N^r$ such that $d= \sum_{i=1}^r a_i^2$ and $gcd(a_i, n)=1$;
	\item the marked abelian variety $(B, \bpol, \Thetabpolp)$ of type $K(m)$ given by its theta null
	    point $0_{\Thetabpolp}$;
	\item $K \subset B[d]$ isotropic for $e_{\bpol}$ the kernel of the isogeny $f: B \rightarrow A$;
	\item $G_1 \subset B[n]$ isomorphic to $Z(n)$ such that $K \subset G_1$ and
	    $G_1 /
	    K \simeq Z(m)$, denote by $\pi_{G_1}: G_1 \rightarrow Z(m)$ the canonical
	    projection;
	\item A numbering $G_1 =\{ g_1(i), i \in \Zn \}$;
	\item $(\theta^{\Thetabpol}_i(x))_{i \in Z(m)}$, for  $x \in B(\overk)$;
	\item $j_0 \in Z(m)$.
    \end{itemize}
}
\Output{
    \begin{itemize}
	\item the theta null point $0_{\Thetabpol}$ of $(A, \pol, \Thetapol)$;
	\item $(\theta^{\Thetapol}_{j_0}(f(x)))$.
    \end{itemize}
}
\BlankLine
Compute a symplectic matrix $M \in \Sp_{2g}(\Z / m \Z)$ such that $K \subset M
(\Thetabpolpbar(( e_i, 0)))_{i=1, \ldots,g}$ where $(e_i)_{i=1, \ldots,g}$ is a basis of $Z(m)$\;
    Call Algorithm \ref{algo:trans} to compute $(B, \bpol, \Thetabpol)$ given by its theta null
    point $0_{\Thetabpol}$ such that $K \subset \Thetabpolbar(Z(m) \times \{ 0 \})$\;
    Call Algorithm \ref{algo:changesym1} to compute $(B, \bpol, \Thetabpol)$ given by its theta null
    point $0_{\Thetabpol}$ such that for all $x \in G_1$, $x$ is symmetric compatible with
    $\Thetabpol(\{1\}\times Z(m)\times\{0\})$\;
    Call Algorithm \ref{algo:goodliftg} to compute good lifts $\tildeG_1$ of $G_1$ with respect to
    $0_\Thetabpol$ and $\widetilde{x+G_1}$ of $G_1+x$ with respect to $\tildeG_1$\;
    Let $j_1 \in \Zn$ and $j_2 \in Z(m)$ such that $\rho_{n,m}(j_1 + \mu_{m,n}(j_2))=j_0$\;
   \Return  $\theta_{j_0}^{\Thetapol}(f(x))= C  \sum_{P \in \tildeK} \prod_{i=1}^r (a_i
	(\widetilde{x+P+ g_1(j_1)}))_{j_2}$.
    \caption{Isogeny computation algorithm.}
  \label{algo:isogcomp}
\end{algorithm}

\begin{remark}
    Theorem \ref{th:mainisog} gives us an efficient algorithm to compute the isogeny $f: B
    \rightarrow A$ from the knowledge of $(B, \bpol, \Thetabpol)$, $K$ and $B[n]$. If the
    data of $(B, \bpol^d, \Thetabpol)$, which is a representation of $B$ and $K$ the kernel
    of $f$ are expected to compute $f$, $B[n]$ appears like an extra-data that we need in
    order to be able to recover the theta structure $\Thetapol$ of $(A, \pol)$. Actually, 
    we need to have a partial theta structure of respective types $Z(n)$ and $\dZ(m)$ to
    be able to recover using $\fsharp(\bpol^d)$ a theta structure of type $K(m)$ on $A$.
    So what we actually need is to be able to compute a subgroup $G_1$ of $B[n]$ isomorphic to
    $Z(n)$. For this one can use a generic Groebner basis algorithm as suggested in Remark
    \ref{rk:followmain7} but there might be some more clever mean to compute $G_1$ using in particular
    the knowledge of $\Thetabpol$. Anyway, if we consider that an isogeny computation 
    algorithm takes as input $B$ and $K$, then the computation of $B[n]$ has to be included
    in the time complexity of the algorithm and so becomes the most time consuming step.
\end{remark}

\printbibliography
\end{document}